\def\BibTeX{{\rm B\kern-.05em{\sc i\kern-.025em b}\kern-.08em
    T\kern-.1667em\lower.7ex\hbox{E}\kern-.125emX}}
\newcommand{\E}{\mathbb{E}} 
\renewcommand{\P}{\mathbb{P}} 
\newcommand{\R}{\mathbb{R}} 
\newcommand{\Z}{\mathbb{Z}} 
\newcommand{\N}{\mathbb{N}} 
\newcommand{\trans}{\mathsf{T}}
\newcommand{\Var}{\textrm{Var}}
\newcommand{\Cov}{\textrm{Cov}}
\newcommand{\D}{\mathcal{D}}
\newcommand{\T}{\mathcal{T}}
\renewcommand{\i}{\mathbf{i}}
\newtheorem{theorem}{Theorem}
\newtheorem{definition}{Definition}
\newtheorem{lemma}{Lemma}
\newtheorem{corollary}{Corollary}
\newtheorem{proposition}{Proposition}
\DeclareMathOperator*{\argmin}{arg\,min}
\newcommand{\new}[1]{{\color{black}#1}}
\newcommand{\new}[1]{{\color{blue}#1}}
\begin{document}
\title{Local non-Bayesian social learning with stubborn agents}
\author{Daniel Vial, Vijay Subramanian, \IEEEmembership{Senior Member, IEEE}
\thanks{We are grateful for financial support from the NSF (grants  EPCN:1603861 and CIF:AF:2008130) and LGE Inc.\ via Mcity. D.\ Vial is with the University of Texas, Austin, TX (email: dvial@utexas.edu). V.\ Subramanian is with the University of Michigan, Ann Arbor, MI (email: vsubram@umich.edu).}}

\maketitle

\begin{abstract}
We study a social learning model in which agents iteratively update their beliefs about the true state of the world using private signals and the beliefs of other agents in a non-Bayesian manner. Some agents are stubborn, meaning they attempt to convince others of an erroneous true state (modeling fake news). We show that while agents learn the true state on short timescales, they ``forget'' it and believe the erroneous state to be true on longer timescales. Using these results, we devise strategies for seeding stubborn agents so as to disrupt learning, which outperform intuitive heuristics and give novel insights regarding vulnerabilities in social learning.
\end{abstract}

\section{Introduction} \label{secIntro}

With the rise of social networks, people increasingly receive news through non-traditional sources. One recent study shows that two-thirds of American adults have gotten news through social media \cite{shearer2017news}. Such news sources are fundamentally different than traditional ones like print media and television, in the sense that social media users read and discuss news on the same platform. As a consequence, users turning to these platforms for news receive information not only from major publications but from others users as well; in the words of \cite{allcott2017social}, a user ``with no track record or reputation can in some cases reach as many readers as Fox News, CNN, or the New York Times.'' This phenomenon famously reared its head during the 2016 United States presidential election when fake news stories were shared tens of millions of times \cite{allcott2017social}, and it has remained a critical issue in 2020 \cite{vanity}.

In this paper, we study a mathematical model describing this situation. The model includes a set \new{of} agents attempting to learn the true state of the world (e.g.\ which of two candidates is better suited for office). \new{Each agent iteratively updates its belief (i.e.\ its distribution over possible states) in a manner similar to the non-Bayesian social learning model of \cite{jadbabaie2012non} using information from three sources.} First, each agent receives noisy observations of the true state, modeling e.g.\ news stories. Second, each agent observes the beliefs of a subset of other agents, modeling e.g.\ discussions with other social media users. Third, each agent may observe the beliefs of \textit{stubborn agents} or \textit{bots} who aim to persuade others of an erroneous true state, modeling e.g.\ users spreading fake news.\footnote{The term \textit{stubborn agents} has been used in the literature to describe such agents; the term \textit{bots} is used in reference to automated social media accounts spreading fake news while masquerading as real users \cite{shao2018spread}.} \new{This process continues iteratively until a finite learning horizon.}

Under this model, two competing forces emerge as the learning horizon grows. On the one hand, agents receive more observations of the true state, which help them learn. On the other hand, the beliefs of the bots gradually propagate through the system, suggesting that agents become increasingly exposed to bots and thus less likely to learn. Hence, while the horizon clearly affects the learning outcome, the nature of this effect -- namely, whether learning becomes more or less likely as the horizon grows -- is less clear.

This effect of the learning horizon has often been ignored in works with models similar to ours. For example, our model is nearly identical to that in the empirical work \cite{azzimonti2018social}, in which the authors show that polarized opinions can arise when there are two types of bots with diametrically opposed viewpoints. However, the experiments in \cite{azzimonti2018social} simply fix a large learning horizon and do not consider the effect of varying it. Models similar to ours have also been treated analytically \new{in e.g.\ \cite{jadbabaie2012non,golub2010naive,acemoglu2010spread,lalitha2014social}, but these works consider infinite horizons and/or cooperative settings (i.e.\ no stubborn agents). See Section \ref{secRelated} for details on these (and other) works.}

In the first part of the paper (see Section \ref{secResults}), we argue that \new{\textit{the learning horizon plays a prominent role when stubborn agents are present and should not be ignored}}. In particular, we show that the learning outcome depends on the relationship between the horizon $T_n$ and a quantity $p_n$ that describes the ``density'' of bots in the system, where both quantities may vary with the number of agents $n$. Mathematically, letting $\theta \in (0,1)$ denote the true state and $\theta_{T_n}(i^*)$ \new{the mean of the belief (hereafter, the \textit{estimate})} for a uniformly random agent $i^*$ at the horizon $T_n$, we show (see Theorem \ref{thmMain})\footnote{The theorem also addresses the case $\lim_{n \rightarrow \infty} T_n(1-p_n) \in (0,\infty)$.}
\begin{equation} \label{eqResultPreview}
\theta_{T_n}(i^*) \xrightarrow[n \rightarrow \infty]{\P} \begin{cases} \theta , & T_n(1-p_n) \xrightarrow[n \rightarrow \infty]{} 0 \\  
0 , & T_n(1-p_n) \xrightarrow[n \rightarrow \infty]{} \infty \end{cases} .
\end{equation}
Here $p_n$ is smaller when more bots are present and 0 is the erroneous true state promoted by the bots. Hence, in words, \eqref{eqResultPreview} says the following: if there are sufficiently few bots, in the sense that $T_n(1-p_n) \rightarrow 0$, $i^*$ learns the true state; if there are sufficiently many bots, in the sense that $T_n(1-p_n) \rightarrow \infty$, $i^*$ adopts the extreme estimate 0 promoted by the bots. \new{Additionally, we show the belief of $i^*$ converges to a Dirac measure in a certain sense (see Corollary \ref{corBeliefs}).}

We note the result in \eqref{eqResultPreview} assumes a particular random graph model for the social network connecting agents and bots (a modification of the so-called \textit{directed configuration model}). For such models, \textit{phase transitions} -- wherein small changes to model parameters lead to starkly different behaviors -- are often observed. In this case, assuming $T_n = (1-p_n)^{-k}$ for some $k > 0$, and also assuming $p_n \rightarrow 1$, the learning outcome suddenly drops from $\theta$ to $0$ as $k$ changes from e.g. $0.99$ to $1.01$. Put differently, agents initially (at time $(1-p_n)^{-0.99}$) learn the true state, then suddenly (at time $(1-p_n)^{-1.01}$) ``forget'' the true state and adopt the extreme estimate $0$. Hence, we show the horizon can lead to drastically different outcomes. We also note proving \eqref{eqResultPreview} involves analyzing hitting probabilities for random walks on random graphs with absorbing states (bots in our setting), which may be of independent interest.

In the second part of the paper (see Section \ref{secAdversarial}), we study a setting in which an adversary chooses how many bots to connect to each agent, subject to a budget constraint. The adversary would like to minimize $\theta_{T_n}(i^*)$ (i.e.\ to convince agents of the erroneous state $0$), but this quantity depends on the graph topology, which is not publicly available for social networks like Twitter. Hence, motivated by \eqref{eqResultPreview}, we formulate the adversary's problem as minimizing $p_n$, which only depends on the \textit{degrees} in the graph -- e.g.\ number of followers on Twitter, which is publicly available. We clarify that $\theta_{T_n}(i^*)$ is monotone in $p_n$ only as $n \rightarrow \infty$ for the random graph of Section \ref{secResults} (see Theorem \ref{thmMain}). Thus, we use $p_n$ as a tractable (albeit nonrigorous) surrogate for the true objective function $\theta_{T_n}(i^*)$, and we show empirically that these quantities are closely correlated for real social networks (see Figure \ref{fig_objVsFinBel}). Alternatively, given a target $\theta_{T_n}(i^*)$, we can minimize the horizon $T_n$ when this target estimate is reached. However, we view $T_n$ as fixed and thus do not pursue this dual problem.

Minimizing $p_n$ amounts to solving an integer program, which can be done in polynomial time owing to the structure of $p_n$. However, the computational complexity is $\Omega(n^2)$, which is infeasible for social networks like Twitter. Thus, we propose a randomized approximation algorithm that runs in time $n \log n$ and that produces a constant-fraction approximation of the optimal solution with high probability (see Theorem \ref{thmConstApprox}). Moreover, whereas the logic of the optimal solution is somewhat opaque, the form of our approximate solution offers the interpretation that \textit{successful adversaries carefully balance agents' influence and susceptibility to influence}. For a social network like Twitter, this means targeting users with many followers (i.e.\ influential users) who follow very few users themselves, so that fake news will occupy a greater portion of the targeted users' feeds. While somewhat intuitive, the precise form of the randomized scheme is far from obvious. Furthermore, empirical results show that our scheme disrupts learning to a larger extent than schemes that more obviously balance influence and susceptibility. Thus, we believe our analysis provides new insights into vulnerabilities of news sharing platforms and non-Bayesian social learning models.

The paper is organized as follows. In Section \ref{secLearnModel}, we define our learning model. Sections \ref{secResults} and \ref{secAdversarial} follow the outline above. We discuss related work in Section \ref{secRelated}. %
\iftoggle{arxiv}{%
}{%
Proof details are deferred to the full version of the paper, \cite{vial2019local_arxiv}. Preliminary versions of the paper appeared in abstracts \cite{vial2019local_ec,vial2019local_allerton}. %
}%

\textit{Notational conventions:} The following notation is used frequently. For $k \in \N$, we let $[k] = \{1,\ldots,k\}$, and for $k, k' \in \N$ we let $[k] + k' = k' + [k] = \{1+k',\ldots,k+k'\}$. All vectors are treated as row vectors. We let $e_i$ denote the vector with 1 in the $i$-th position and 0 elsewhere. We denote the set of nonnegative integers by $\N_0 = \N \cup \{0\}$. We use $1(A)$ for the indicator function, i.e.\ $1(A) = 1$ if $A$ is true and 0 otherwise. All random variables are defined on a common probability space $(\Omega,\mathcal{F},\P)$, with $\E[\cdot] = \int_{\Omega} \cdot\ d \P$ denoting expectation, $\xrightarrow{\P}$ denoting convergence in probability, and $a.s.$ meaning $\P-$almost surely.

\section{Learning model} \label{secLearnModel}

We begin by defining the model of social learning studied throughout the paper. The basic ingredients are (1) a true state of the world, (2) a social network connecting two sets of nodes, some who aim to learn the true state and some who wish to persuade others of an erroneous true state, and (3) a learning horizon. We discuss each in turn.

The true state of the world is a constant $\theta \in (0,1)$. For example, in an election between candidates representing two political parties (say, Party 1 and Party 2), \new{$\theta \approx 0$ and $\theta \approx 1$ means the Party 1 and 2 candidates are superior, respectively}. We emphasize that $\theta$ is a deterministic constant and depends neither on time, nor on the number of nodes in the system.

A directed graph $G = (A \cup B,E)$ connects disjoint sets of nodes $A$ and $B$. We refer to elements of $A$ as \textit{regular agents}, or simply \textit{agents}, and elements of $B$ as \textit{stubborn agents} or \textit{bots}. While agents attempt to learn the true state $\theta$, bots aim to disrupt this learning and convince agents that the true state is instead 0. In the election example, agents represent voters who study the two candidates to learn which is superior, while bots are loyal to Party 1 and aim to convince agents that the corresponding candidate is superior (despite possible evidence to the contrary). Edges in the graph represent connections in a social network over which nodes share beleifs in a manner that will be described shortly. \new{An edge $j \rightarrow i$ means that $i$ observes $j$'s belief. Let $N_{in}(i) = \{ j \in A \cup B : j \rightarrow i \in E \}$ and $d_{in}(i) = |N_{in}(i)|$; we assume $N_{in}(i) \neq \emptyset$.}

Agents and bots share beliefs until a learning horizon $T \in \N$. We will allow the horizon to depend on the number of agents $n \triangleq |A|$ and will thus denote it by $T_n$ at times. In the election example, $T$ represents the duration of the election, i.e.\ the number of time units that agents can learn about the candidates and that bots can attempt to convince agents of the superiority of the Party 1 candidate.

\new{Given these basic ingredients, we can define the learning process. At time $t = 0$, agent $i \in A$ has a $\text{Beta}(\alpha_0(i),\beta_0(i))$ belief, where $\alpha_0(i) \in ( 0 , \bar{\alpha} ]$ and $\beta_0(i) \in ( 0 , \bar{\beta} ]$ for some $\bar{\alpha},\bar{\beta} \in (0,\infty)$ that do not depend on $n$. For each $t \in [T]$, $i$ receives the signal $s_t(i) \sim \text{Bernoulli}(\theta)$. In the absence of a network, the Bayesian approach dictates that $i$ update its parameters to $\alpha_t(i) = \alpha_{t-1}(i) + s_t(i)$ and $\beta_t(i) = \beta_{t-1}(i) + (1-s_t(i))$ and its belief to $\mu_t(i) = \text{Beta}(\alpha_t(i),\beta_t(i))$, namely, for any (measurable) $\mathcal{A} \subset [0,1]$,
\begin{equation}
\mu_t(i) ( \mathcal{A} ) \propto \int_{ x \in \mathcal{A} } x^{\alpha_t(i)-1} (1-x)^{\beta_t(i)-1} dx .
\end{equation}
In our running example, $\alpha_t(i)$ and $\beta_t(i)$ represent the number of news stories favorable to respective parties that $i$ has read during the election, plus some prior parameters $\alpha_0(i)$ and $\beta_0(i)$ that account for $i$'s biases from before the election. As $t$ grows, the belief $\mu_t(i)$ converges to a Dirac measure on its mean $\theta_t(i) = \alpha_t(i) / (\alpha_t(i) + \beta_t(i))$; intuitively, $i$ becomes increasingly confident that the true state is the fraction of stories favorable to a certain party.

In the presence of a network, we proceed in the same manner, except the parameters are updated as follows:
\begin{gather} \label{eqParamUpdateInitial} 
\alpha_t(i) = (1-\eta) ( \alpha_{t-1}(i) + s_t(i) ) +  \sum_{j \in N_{in}(i)} \frac{\eta \alpha_{t-1}(j) }{d_{in}(i)} , \\
\beta_t(i) = (1-\eta) ( \beta_{t-1}(i) + 1-s_t(i) ) + \sum_{j \in N_{in}(i)} \frac{\eta \beta_{t-1}(j) }{d_{in}(i)}  ,
\end{gather}
where $\eta \in (0,1)$. Intuitively, $i$ reads the news and calculates its favorability of the parties as before, then discusses with its neighbors to update its favoribility. Mathematically, $i$ performs a Bayesian parameter update and then averages parameters. \cite{azzimonti2018social} uses the same update, whereas agents in \cite{jadbabaie2012non} do Bayesian \textit{belief} updates and then average \textit{beliefs}. Our update also resembles the deGroot model \cite{degroot1974reaching}, where there are no signals and estimates are averaged across neighbors. See Section \ref{secRelated}.

Finally, we specify bot behavior. For $i \in B$, we set $N_{in}(i) = \{ i \}$, $\alpha_0(i) = 0$, $\beta_0(i) = \bar{\beta}$, and $s_t(i) = 0\ \forall\ t \in [T]$, then iteratively define $\{ \alpha_t(i), \beta_t(i) \}_{t=1}^T$ via \eqref{eqParamUpdateInitial}. More explicitly, a simple inductive proof shows
\begin{equation} \label{eqParamBots}
\alpha_t(i) = 0, \quad \beta_t(i) = \bar{\beta} + (1-\eta)t \quad \forall\ t \in [T] .
\end{equation}
In our running example, $\alpha_0(i) = 0$, $\beta_0(i) = \bar{\beta}$, and $s_t(i) = 0$ means $i$'s prior parameters and signals are maximally biased toward Party 1. Furthermore, we can interpret $N_{in}(i) = \{ i \}$ as bots being ``echo chambers'' who only listen to themselves. Finally, note that since all bots $i \in B$ have the same behavior, we assume (without loss of generality) that the outgoing neighbor set of $i \in B$ is $N_{out}(i) = \{ i, i' \}$ for some $i' \in A$, i.e.\ in addition to its self-loop, each bot has a single outgoing neighbor from the agent set.}

\section{Learning outcome} \label{secResults}

\new{To begin our analysis of the learning outcome, we show when all agents are (pathwise) connected to bots, their beliefs converge to those of the bots. Formally, for $p \geq 1$, let
\begin{equation}
  W_p(\mu,\nu) =  \inf_{(X,Y): X \sim \mu , Y \sim \nu} ( \E |X-Y|^p )^{1/p}
\end{equation}
denote the $p$-Wasserstein distance for probability measures $\mu$ and $\nu$, where $X \sim \mu , Y \sim \nu$ means $X$ and $Y$ have respective marginals $\mu$ and $\nu$. For $x \in [0,1]$, let $\delta_x$ denote the Dirac measure $\delta_x(\mathcal{A}) = 1(x \in \mathcal{A})$ for measurable $\mathcal{A} \subset [0,1]$. We then have the following (see %
\iftoggle{arxiv}{%
Appendix \ref{appProofBeliefs} %
}%
{%
\cite[Appendix V]{vial2019local_arxiv} %
}%
for a proof).
\begin{proposition} \label{propBeliefToZero}
Suppose that for any $i \in A$, there exists $l \in \N$ and $( i_\tau )_{\tau=0}^{l} \in (A \cup B)^{l+1}$ such that $i_0 = i$, $i_{\tau-1} \rightarrow i_\tau\in E\ \forall\ \tau \in [l]$, and $i_l \in B$. Then for any $i \in A$ and $p \geq 1$,  $\lim_{t \rightarrow \infty} \theta_t(i) = \lim_{t \rightarrow \infty} W_p ( \mu_t(i) , \delta_0 ) = 0\ a.s.$
\end{proposition}

Hence, for a large enough horizon, estimates and beliefs become arbitrarily close to zero. A natural follow-up question is how such a horizon scales -- and in which graph parameters -- for a sequence of graphs $\{ G_n \}_{n=1}^\infty$. In this section, we address this question for a particular random graph model, succinctly described as the \textit{directed configuration model} (DCM) plus bots. The DCM constructs a graph with prespecified degrees, which, conditioned on being simple (i.e.\ having no self-loops or multi-edges), is uniformly distributed among (simple) graphs of those degrees \cite[Proposition 7.15]{van2016random}. This is an appealing property for deGroot-like learning models such as ours, because in the deGroot model for undirected graphs, asymptotic estimates depend only on the degrees and the initial beliefs. Thus, loosely speaking, our analysis is ``average-case'' over relevant graphs. Furthermore, we will show the graph parameters that dictate learning for the DCM are tractable, which we exploit in Section \ref{secAdversarial} for general graphs.

Having motivated our study of the DCM, we define it in Section \ref{secGraphModel}, present our main result for the DCM in Section \ref{secMainResult}, and discuss our assumptions in Section \ref{secMainDiscuss}.
}

\subsection{Graph model} \label{secGraphModel}

To begin, we realize a sequence $\{ d_{out}(i) , d_{in}^A(i) , d_{in}^B(i) \}_{i \in A}$ called the \textit{degree sequence} from some distribution; here we let $A = [n]$. In the construction described next, $i \in A$ will have $d_{out}(i)$ outgoing neighbors ($i$ will be observed by $d_{out}(i)$ other agents), $d_{in}^A(i)$ incoming neighbors from the $A$ ($i$ will observe $d_{in}^A(i)$ agents), and $d_{in}^B(i)$ incoming neighbors from $B$ ($i$ will observe $d_{in}^B(i)$ bots). Here the total in-degree of $i$ is $d_{in}(i) = d_{in}^A(i) + d_{in}^B(i)$ (as used in \eqref{eqParamBots}). We assume
\begin{gather} \label{eqDegSeqBasicAss}
d_{out}(i) , d_{in}^A(i) \in \N, \quad d_{in}^B(i) \in \N_0\ \quad \forall\ i \in A , \\ 
 \sum_{i \in A} d_{out}(i) = \sum_{i \in A} d_{in}^A(i) .
\end{gather}
In words, the first condition says $i$ is observed by and observes at least one agent, and may observe one or more bots. The second condition says sum out-degree must equal sum in-degree in the agent sub-graph; this will be necessary to construct a graph with the given degrees. Finally, it will be convenient to define the degree vector of $i \in A$ as
\begin{equation} \label{eqDegVector}
d(i) = ( d_{out}(i) , d_{in}^A(i) , d_{in}^B(i) ) .
\end{equation}

After realizing the degree sequence, we begin the graph construction.\footnote{This construction is presented more formally in %
\iftoggle{arxiv}{%
Appendix \ref{appBranchApproxProofOutline}%
}%
{%
\cite[Appendix II-A]{vial2019local_arxiv}%
}
.} %
First, we attach $d_{out}(i)$ outgoing half-edges, $d_{in}^A(i)$ incoming half-edges labeled $A$, and $d_{in}^B(i)$ incoming half-edges labeled $B$, to each $i \in A$; we will refer to these half-edges as \textit{outstubs}, \textit{$A$-instubs}, and \textit{$B$-instubs}, respectively. \new{Let $O_A$ denote the set of all agents' outstubs.} We then pair each outstub in $O_A$ with an $A$-instub to form edges between agents in a breadth-first-search fashion that proceeds as follows:
\begin{itemize}[leftmargin=10pt,itemsep=0pt,topsep=0pt]
\item Sample $i^*$ from $A$ uniformly. For each the $d_{in}^A(i^*)$ $A$-instubs attached to $i^*$, sample an outstub uniformly from $O_A$ (resampling if the sampled outstub has already been paired), and connect the instub and outstub to form an edge from some agent to $i^*$.
\item Let $A_1 = \{ i \in A \setminus \{ i^* \} : \textrm{an outstub of $i$ was paired with}$ $\textrm{an $A$-instub of $i^*$} \}$. For each $i \in A_1$, pair the $d_{in}^A(i)$ $A$-instubs attached to $i$ in the same manner the $A$-instubs of $i^*$ were paired in the previous step.
\item Continue iteratively until all $A$-instubs have been paired. In particular, during the $l$-th iteration, we pair all $A$-instubs attached to $A_l$, the agents at geodesic distance $l$ from $i^*$.
\end{itemize}
\new{The procedure above yields the standard DCM, plus} unpaired $B$-instubs attached to some agents. \new{To pair these instubs}, we define $B = n + \big[ \sum_{i \in A} d_{in}^B(i) \big]$ to be the set of bots (hence, the node set is $A \cup B = \big[ n + \sum_{i \in A} d_{in}^B(i) \big]$). 
To each $i \in B$ we add a single self-loop and a single unpaired outstub (as described at the end of Section \ref{secLearnModel}). This yields $\sum_{i \in A} d_{in}^B(i)$ unpaired outstubs attached to bots. Finally, we pair these outstubs arbitrarily with the $\sum_{i \in A} d_{in}^B(i)$ unpaired $B$-instubs from above to form edges from bots to agents (the pairing can be arbitrary since all bots behave the same).

We note that the pairing of $A$-instubs with outstubs from $O_A$ did not prohibit \new{multi-edges, so} the set of edges $E$ formed will in general be a multi-set. For this reason, we replace the summation in the $\alpha_t(i)$ update \eqref{eqParamUpdateInitial} with
\begin{gather} 
\sum_{j \in A \cup B} \frac{\eta | \{ j' \rightarrow i' \in E : j' = j, i' = i \} | \alpha_{t-1}(j) }{d_{in}(i)} , 
\end{gather}
and analogously for the $\beta_t(i)$ update, i.e.\ we weigh the parameters of $i$'s neighbors proportional to the number of edges pointing to $i$. We also note that if $d_{in}^B(i) = 0\ \forall\ i \in A$, the construction above reduces to the standard DCM.

Our results will require assumptions on the degree sequence $\{d(i) \}_{i \in A}$, where (we recall) $d(i)$ is the degree vector of $i$ (see \eqref{eqDegVector}). First, we define $f_n^* , f_n : \N \times \N \times \N_0 \rightarrow [0,1]$ by
\begin{align} \label{eqEmpDist}
f_n^*(i,j,k) &  =  \sum_{a = 1}^n \frac{1 ( d(a) = (i,j,k) )}{n} , \\
f_n(i,j,k) &  = \sum_{a = 1}^n \frac{ d_{out}(a) 1 ( d(a) = (i,j,k) )  }{ \sum_{a' = 1}^n d_{out}(a') }   .
\end{align}
In words, $f_n^*$ and $f_n$ are the degree distributions of agents sampled uniformly and sampled proportional to out-degree, respectively. Note that, since the first agent $i^*$ added to the graph is sampled uniformly from $A$, the degrees of $i^*$ are distributed as $f_n^*$. Furthermore, recall that, to pair $A$-instubs, we sample outstubs uniformly from $O_A$, resampling if the sampled outstub is already paired. It follows that, each time we add a new agent to the graph (besides $i^*$), its degrees are distributed as $f_n$. We also note that, because the degree sequence is random, these distributions are random as well. From these random distributions, we define the random variables
\begin{align} \label{eqWalkRVsMain}
\tilde{p}_n^* &  = \sum_{j \in \N, k \in \N_0} \frac{j}{j+k} \sum_{i \in \N} f_n^*(i,j,k) , \\
 \tilde{p}_n &  = \sum_{j \in \N, k \in \N_0} \frac{j}{j+k} \sum_{i \in \N} f_n(i,j,k)  , \\
 \tilde{q}_n &  = \sum_{j \in \N, k \in \N_0} \frac{j}{j+k} \frac{1}{j+k} \sum_{i \in \N} f_n(i,j,k) . 
\end{align}
Following the discussion above, $\tilde{p}_n^*$ is the expected value (conditioned on the degree sequence) of the ratio of $A$-instubs to total instubs for $i^*$; $\tilde{p}_n$ is the expected value of this same ratio, but for new agents added to the graph. The interpretation of $\tilde{q}_n$ is similar. At the end of Section \ref{secMainResult}, we discuss in more detail why these random variables arise in our analysis.

We now state four assumptions, which we discuss in detail in Section~\ref{secMainDiscuss}. Two of these require the degree sequence to be well-behaved (with high probability) -- specifically, \ref{assGraphDegSeq} requires certain moments of the degree sequence to be finite, while \ref{assBranchDegSeq} requires $\{ \tilde{p}_n \}_{n \in \N}$ to be close to a deterministic sequence $\{ p_n \}_{n \in \N}$. The other assumptions, \ref{assGraphHorizon} and \ref{assBranchHorizon}, impose maximum and minimum rates of growth for the learning horizon $T_n$. In particular, $T_n$ must be finite for each finite $n$ but grow to infinity with $n$.
\begin{enumerate}[label=A\arabic*]
\item  \label{assGraphDegSeq} $\lim_{n \rightarrow \infty} \P(\Omega_{n,1}) = 1$, where, for some $\nu_1, \nu_2, \nu_3 , \gamma > 0$ independent of $n$ such that $\nu_3 > \nu_1$,\footnote{The assumption $\nu_3 > \nu_1$ only eliminates the trivial case of a line graph; see Section \ref{secMainDiscuss} for details.}
\begin{align}
\Omega_{n,1} & = \Big\{ \Big| \frac{ \sum_{i=1}^n  d_{out}(i) }{n} - \nu_1 \Big| < n^{-\gamma} \Big\}  \\
& \quad\quad \cap \Big\{ \Big| \frac{ \sum_{i=1}^n  d_{out}(i)^2 }{n} - \nu_2 \Big| < n^{-\gamma} \Big\} \\
& \quad\quad \cap \Big\{ \Big| \frac{ \sum_{i=1}^n  d_{out}(i) d_{in}^A(i) }{n} - \nu_3 \Big| < n^{-\gamma} \Big\} .
\end{align}
\item  \label{assGraphHorizon} $\exists\ N \in \N$ and $\zeta \in (0,1/2) $ independent of $n$ s.t.\ $T_n \leq  \zeta \log ( n ) /  \log ( \nu_3 / \nu_1  )\ \forall\ n \geq N$.
\item \label{assBranchDegSeq} $\lim_{n \rightarrow \infty} \P(\Omega_{n,2}) = 1$, where, for some $p_n \in [0,1]$ s.t.\ $\lim_{n \rightarrow \infty} p_n = p \in [0,1]$, some $0 \leq \delta_n = o(1/T_n)$, and some $\xi \in (0,1)$ independent of $n$,
\begin{equation}
\Omega_{n,2} = \left\{ | p_n - \tilde{p}_n | < \delta_n ,  \tilde{p}_n^* \geq \tilde{p}_n , \tilde{q}_n < 1 - \xi \right\} .
\end{equation} 
\item \label{assBranchHorizon} $\lim_{n \rightarrow \infty} T_n = \infty$. 
\end{enumerate}

\subsection{Main result} \label{secMainResult}

We can now present Theorem \ref{thmMain}. The theorem states that the estimate at time $T_n$ of a uniformly random agent converges in probability as $n \rightarrow \infty$. As discussed in the introduction, the limit depends on the relative asymptotics of the time horizon $T_n$ and the quantity $p_n$ defined in \ref{assBranchDegSeq}. For example, this limit is $\theta$ when $T_n(1-p_n) \rightarrow 0$; note that $T_n(1-p_n) \rightarrow 0$ requires $p_n$ to quickly approach 1 (since $T_n \rightarrow \infty$ by \ref{assBranchHorizon}), which by \ref{assBranchDegSeq} and \eqref{eqWalkRVsMain} suggests the number of bots is small. Hence, $i^*$ learns the true state when there are sufficiently few bots. (The other cases can be interpreted similarly.)
\begin{theorem} \label{thmMain}
Assume that $G$ is the DCM and that \ref{assGraphDegSeq}, \ref{assGraphHorizon}, \ref{assBranchDegSeq}, and \ref{assBranchHorizon} hold. Then for $i^* \sim A$ uniformly,
\begin{equation} 
\theta_{T_n}(i^*) \xrightarrow[n \rightarrow \infty]{\P} \begin{cases} \theta , & T_n(1-p_n) \rightarrow 0 \\  \frac{\theta (1-e^{-c \eta}) }{c \eta} , & T_n(1-p_n) \rightarrow c \in (0,\infty) \\ 0 , & T_n(1-p_n) \rightarrow \infty \end{cases} .
\end{equation}
\end{theorem}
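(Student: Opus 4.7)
The plan is to reduce $\theta_{T_n}(i^*)$ to a Ces\`aro average of survival probabilities of an absorbed random walk on $G$, then analyze those probabilities via a tree-like branching approximation enabled by the DCM. First, I linearize \eqref{eqParamUpdateInitial} by introducing the lazy walk $W_t$ that stays in place with probability $1-\eta$ and otherwise jumps uniformly to an incoming neighbor, treating each bot as absorbing (consistent with the self-loop convention in Section \ref{secLearnModel}). Iterating the recursion gives
\begin{equation}
\alpha_t(i) = \sum_j \P_i(W_t=j)\alpha_0(j) + (1-\eta)\sum_{\tau=1}^{t}\sum_j \P_i(W_{t-\tau}=j)\,s_\tau(j),
\end{equation}
and the analogous identity for $\beta_t(i)$. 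Since $\alpha_0(j)+\beta_0(j)=O(1)$ uniformly and $s_\tau(j)+(1-s_\tau(j))=1$, summing the two identities yields $\alpha_t(i)+\beta_t(i)=(1-\eta)t+O(1)$ deterministically given $G$.

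Taking expectation over only the signals, $\E[\alpha_t(i)\mid G] = O(1) + \theta(1-\eta)R_t(i)$, where $R_t(i):=\sum_{s=0}^{t-1}\P_i(W_s\in A)$ is the expected residence time in the agent set. The fluctuation of $\alpha_t(i)$ around this conditional mean is a weighted sum of independent Bernoulli signals whose weights are bounded by $1$ and whose total weight is at most $t$, so Azuma-Hoeffding gives $|\alpha_t(i)-\E[\alpha_t(i)\mid G]|=O(\sqrt{t\log t})$ with high probability. Combining with the deterministic denominator,
\begin{equation}
\theta_{T_n}(i^*) = \theta\,\frac{R_{T_n}(i^*)}{T_n} + o_{\P}(1),
\end{equation}
so the theorem reduces to identifying the in-probability limit of $R_{T_n}(i^*)/T_n$.

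The crux is showing $\P_{i^*}(W_s\in A)\approx (1-\eta(1-p_n))^s$ for $s\leq T_n$ on a high-probability event. I would reveal the DCM on the fly along the BFS procedure of Section \ref{secGraphModel}: each time the walk traverses a previously unrevealed edge it discovers a new agent whose degree vector is drawn from $f_n$ and therefore lies in $A$ with conditional probability $\tilde{p}_n = p_n + o(1/T_n)$ by \ref{assBranchDegSeq}. Assumptions \ref{assGraphDegSeq} and \ref{assGraphHorizon} limit the number of agents explored in $T_n$ steps to $(\nu_3/\nu_1)^{T_n}\leq n^\zeta = o(\sqrt{n})$, so a birthday-style argument rules out self-intersections and short cycles; the bound $\tilde{q}_n<1-\xi$ excludes low-degree configurations that would force immediate backtracking; and $\tilde{p}_n^*\geq \tilde{p}_n$ ensures the uniformly chosen root $i^*$ is at least as favorable as a typical later vertex. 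On this tree-like event the move steps of $W$ are essentially i.i.d.\ Bernoulli($\eta$), each surviving independently with probability $\tilde{p}_n$, yielding $\P_{i^*}(W_s\in A)\sim e^{-\eta s(1-p_n)}$.

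Substituting, $R_{T_n}(i^*)/T_n\approx \frac{1}{T_n}\sum_{s=0}^{T_n-1} e^{-\eta s(1-p_n)}$, and a Riemann-sum computation produces the three regimes: when $T_n(1-p_n)\to 0$ the integrand is uniformly near $1$, giving the $\theta$ limit; when $T_n(1-p_n)\to c$, substituting $u=s/T_n$ yields $\int_0^1 e^{-\eta c u}\,du=(1-e^{-c\eta})/(c\eta)$; and when $T_n(1-p_n)\to\infty$ the exponential decay is so fast that the Ces\`aro average vanishes. The hardest step is the branching coupling, since one must simultaneously control (i) the Geometric holding times induced by the laziness of $W$, (ii) the joint randomness of the BFS-revealed graph and the walk exploring it, (iii) the asymmetry between the biased starting law $f_n^*$ and the stepwise law $f_n$, and (iv) the cumulative error $T_n\delta_n=o(1)$ needed to replace the random $\tilde{p}_n$ by the deterministic sequence $p_n$; all four assumptions \ref{assGraphDegSeq}--\ref{assBranchHorizon} are consumed at this stage.
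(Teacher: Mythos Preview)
Your overall architecture matches the paper's: linearize the recursion, approximate $\theta_{T_n}(i^*)$ by a Ces\`aro average involving lazy-walk survival probabilities, then use the local tree structure of the DCM to evaluate those probabilities. The signal-fluctuation step via Azuma--Hoeffding is fine (the paper uses a cruder variance bound there), and your Riemann-sum computation of the three regimes is exactly right.

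The genuine gap is in the sentence ``On this tree-like event the move steps of $W$ are essentially i.i.d.\ Bernoulli($\eta$), each surviving independently with probability $\tilde p_n$, yielding $\P_{i^*}(W_s\in A)\sim e^{-\eta s(1-p_n)}$.'' The reveal-on-the-fly argument you describe computes the \emph{annealed} survival probability, i.e.\ $\E_G[\P_{i^*}(W_s\in A)]$; it does not show that the \emph{quenched} quantity $\P_{i^*}(W_s\in A\mid G)$ is close to this value on a high-probability event. Even after conditioning on the $T_n$-neighborhood being a tree, the degrees along every root-to-leaf path are still random, so $R_{T_n}(i^*)/T_n$ remains a nondegenerate random variable in $G$. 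Your reduction $\theta_{T_n}(i^*)=\theta\,R_{T_n}(i^*)/T_n+o_{\P}(1)$ therefore does not finish the proof without a separate concentration argument for $R_{T_n}(i^*)/T_n$ over the graph randomness. The paper handles this by Chebyshev: it computes $\Var_n(\hat\vartheta_{T_n}(\phi))$ via a \emph{two-walk} coupling (Lemmas \ref{lemSecondMomBelief} and \ref{lemHitProbTwoWalks}), where one tracks the probability that two independent walks on the same random tree both remain in $\hat A$ and whether they have separated.

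This also explains the role of $\tilde q_n<1-\xi$, which you misidentify. It has nothing to do with ``immediate backtracking'' (on the directed tree there is no backtracking). Rather, $\tilde q_n$ is the probability that two walks at the same agent take the \emph{same} next step into $\hat A$; the bound $\tilde q_n<1-\xi$ forces the two walks to separate geometrically fast, which is what makes the covariance terms $\Cov_n(Y_l,Y_{l'})$ summable and drives the variance to zero. Without this second-moment step your argument establishes only $\E[\theta_{T_n}(i^*)]\to L(p_n)$, not convergence in probability.
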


Before discussing the proof, we make several observations:
\begin{itemize}[leftmargin=10pt,itemsep=0pt,topsep=0pt]
\item Suppose $p_n$ is fixed and consider varying $T_n$. To be concrete, let $p_n = 1 - ( \log n )^{-1/2}$ and define $T_{n,1} = ( \log n )^{1/4}$ and $T_{n,2} = ( \log n )^{3/4}$ (note $T_{n,1}, T_{n,2}$ satisfy \ref{assGraphHorizon}, \ref{assBranchHorizon}). Then $T_{n,1}(1-p_n) \rightarrow 0$ and $T_{n,2}(1-p_n) \rightarrow \infty$, so by Theorem \ref{thmMain}, the estimate of $i^*$ converges to $\theta$ at time $T_{n,1}$ and to 0 at time $T_{n,2}$. In words, $i^*$ initially (at time $(\log n)^{1/4}$) learns the state of the world, then later (at time $(\log n)^{3/4}$) forgets it and adopts the bot estimates.
\item Alternatively, suppose $T_n$ is fixed and consider varying $p_n$. For example, let $p_n = 1 - c / T_n$ for some $c \in (0,\infty)$. Here smaller $c$ implies fewer bots, and Theorem \ref{thmMain} says the limiting estimate of $i^*$ is a decreasing convex function of $c$. One interpretation is that, if an adversary deploys bots in hopes of driving agent estimates to 0, the marginal benefit of deploying additional bots is smaller when $c$ is larger, i.e.\ the adversary experiences ``diminishing returns''. It is also worth noting that, since $(1-e^{-c \eta})/(c \eta) \rightarrow 1$ as $c \rightarrow 0$ and $(1-e^{-c \eta})/(c \eta) \rightarrow 0$ as $c \rightarrow \infty$, the limiting estimate of $i^*$ is continuous as a function of $c$.
\item If $T_n(1-p_n) \rightarrow c \in (0,\infty)$, consider the limiting estimate of $i^*$ as a function of $\eta$. By Theorem \ref{thmMain}, this estimate tends to $\theta$ as $\eta \rightarrow 0$ and tends to $(1-e^{-c})/c$ as $\eta \rightarrow 1$. This is expected from \eqref{eqParamUpdateInitial}: when $\eta = 0$, agents ignore the network (and thus avoid exposure to biased bot beliefs) and form estimates based only on unbiased signals; when $\eta = 1$, the opposite is true.
\item If $p_n \rightarrow p < 1$, we must have $T_n(1-p_n) \rightarrow \infty$ (since $T_n \rightarrow \infty$ by \ref{assBranchHorizon}), and the estimate of $i^*$ tends to 0 by Theorem \ref{thmMain}. Loosely speaking, this says that a necessary condition for learning is that the bots vanish asymptotically (in the sense that $p_n \rightarrow 1$).
\item In fact, in the case $p_n \not \rightarrow 1$, a stronger result holds: the set of agents $i$ for which $\theta_{T_n}(i) \not \rightarrow 0$ vanishes relative to $n$. See %
\iftoggle{arxiv}{%
Appendix \ref{secSecondResult} %
}{%
\cite[Appendix I]{vial2019local_arxiv} %
}%
for details.
\end{itemize}

The proof of Theorem \ref{thmMain} is lengthy and deferred to %
\iftoggle{arxiv}{%
Appendices \ref{appMainProofOutline} and \ref{appMainProofDetails}, where Appendix~\ref{appMainProofOutline} lays out the structure of the proof. %
}{%
\cite[Appendices II and IV]{vial2019local_arxiv},  where \cite[Appendix II]{vial2019local_arxiv} lays out the structure of the proof. %
}%
However, we next present a short argument to illustrate the fundamental reason why the three cases of the limiting estimate arise in Theorem \ref{thmMain}.

At a high level, these three cases arise as follows. First, when $T_n(1-p_n) \rightarrow 0$, the ``density'' of bots within the $T_n$-step incoming neighborhood of $i^*$ is small. As a consequence, $i^*$ is not exposed to the biased beliefs of bots by time $T_n$ and is able to learn the true state ($\theta_{T_n}(i^*) \rightarrow \theta$). On the other hand, when $T_n(1-p_n) \rightarrow \infty$, this ``density'' is large; $i^*$ is exposed to bot beliefs and thus adopts them. Finally, when $T_n(1-p_n) \rightarrow c \in (0,\infty)$, the ``density'' is moderate; $i^*$ does not fully learn, nor does $i^*$ fully adopt bot beliefs.

This explanation is not at all surprising; what is more subtle is what precisely \textit{density of bots within the $T_n$-step incoming neighborhood of $i^*$} means. It turns out that the relevant quantity is the probability that a random walker exploring this neighborhood reaches the set of bots. To illustrate this, consider a random walk $\{X_l\}_{l \in \N}$ that begins at $X_0 = i^*$ and, for $l \geq 0$, chooses $X_{l+1}$ uniformly from all incoming neighbors of $X_{l}$ (agents and bots); note here that the walk follows edges in the direction opposite to their polarity in the graph. For this walk, it is easy to see that, conditioned on the event $X_{l} \in A$, the event $X_{l+1} \in A$ occurs with probability
\begin{equation} \label{eqIllustrateProofProb}
 \frac{d_{in}^A(X_{l}) }{( d_{in}^A(X_{l})+d_{in}^B(X_{l}) } .
\end{equation}
Crucially, we sample this walk and construct the graph simultaneously, by choosing which instub of $X_{l-1}$ to follow \textit{before} actually pairing these instubs. Assuming they are later paired with agent outstubs chosen uniformly at random, and hence connected to agents chosen proportional to out-degree, we can average \eqref{eqIllustrateProofProb} over the out-degree distribution to obtain that $X_{l+1} \in A$ occurs with probability
\begin{equation} \label{eqIllustrateProofProbAvgDeg}
\sum_{a \in A} \frac{d_{in}^A(a)}{d_{in}^A(a)+d_{in}^B(a)} \frac{d_{out}(a)}{\sum_{a' \in A} d_{out}(a') } = \tilde{p}_n .
\end{equation}
Now since bots have a self-loop and no other incoming edges, they are absorbing states on this walk. It follows that $X_{T_n} \in A$ if and only if $X_l \in A\ \forall\ l \in [T_n]$; by the argument above, this latter event occurs with probability $\tilde{p}_n^{T_n}$. Since $\tilde{p}_n \approx p_n$ by \ref{assBranchDegSeq}, we thus obtain that $X_{T_n} \in A$ with probability
\begin{equation}
\tilde{p}_n^{T_n} \approx p_n^{T_n} = \left( 1 - \frac{T_n (1-p_n) }{T_n} \right)^{T_n} \approx e^{-\lim_{n \rightarrow \infty} T_n (1-p_n) } .
\end{equation}
From this final expression, Theorem \ref{thmMain} emerges: when $T_n(1-p_n) \rightarrow 0$, the random walker remains in the agent set with probability $\approx 1$; this corresponds to $i^*$ avoiding exposure to bot beliefs and learning the true state. Similarly, $T_n(1-p_n) \rightarrow \infty$ means the walker is absorbed into the bot set with probability $\approx 1$, corresponding to $i^*$ adopting bot estimates. Finally, $T_n(1-p_n) \rightarrow c \in (0,\infty)$ means the walker stays in the agent set with probability $\approx e^{-c} \in (0,1)$, corresponding to $i^*$ not fully learning nor fully adopting bot estimates.

We note that the actual proof of Theorem \ref{thmMain} does not precisely follow the foregoing argument. Instead, we locally approximate the graph construction with a certain branching process; we then study random walks on the tree resulting from this branching process.\footnote{This is necessary because the argument leading to \eqref{eqIllustrateProofProbAvgDeg} assumes instubs are paired with outstubs chosen uniformly at random, which is not true if resampling of outstubs occurs in the construction from Section \ref{secGraphModel}.} However, the foregoing argument illustrates the basic reason why the three distinct cases of Theorem \ref{thmMain} arise. We also observe that the argument leading to \eqref{eqIllustrateProofProbAvgDeg} shows why $\tilde{p}_n$ enters into our analysis. The other random variables defined in \eqref{eqWalkRVsMain} enter similarly. Specifically, $\tilde{p}_n^*$ arises in almost the same manner, but pertains only to the first step of the walk; this distinction arises since the walk starts at $i^*$, the degrees of which relate to $\tilde{p}_n^*$. On the other hand, $\tilde{q}_n$ arises when we analyze the variance of agent estimates. This is because analyzing the variance involves studying \textit{two} random walks; by an argument similar to \eqref{eqIllustrateProofProbAvgDeg}, the probability of both walks visiting the same agent is
\begin{equation}
 \sum_{a \in A} \frac{d_{in}^A(a)}{ (d_{in}^A(a)+d_{in}^B(a))^2 }  \frac{d_{out}(a)}{\sum_{a' \in A} d_{out}(a') } = \tilde{q}_n .
\end{equation}

\new{Finally, we note that the proof of Theorem \ref{thmMain} reveals that the variance of each agent's belief vanishes, so beliefs converge to Dirac measures. Combined with the theorem, this yields the following corollary. See %
\iftoggle{arxiv}{%
Appendix \ref{appProofBeliefs} %
}%
{%
\cite[Appendix V]{vial2019local_arxiv} %
}%
for a proof.

\begin{corollary} \label{corBeliefs}
Assume $G$ is the DCM and \ref{assGraphDegSeq}, \ref{assGraphHorizon}, \ref{assBranchDegSeq}, and \ref{assBranchHorizon} hold. Let $L(p_n) = L( \{ p_n \}_{n=1}^\infty , T_n )$ denote the limit (in probability) of $\theta_{T_n}(i^*)$ from Theorem \ref{thmMain}. Then for any $p\geq 1$ and for $i^* \sim A$ uniformly, $W_p ( \mu_{T_n}(i^*) , \delta_{ L(p_n) } ) \xrightarrow[n \rightarrow \infty]{\P}  0$.
\end{corollary}
}

\subsection{Comments on assumptions} \label{secMainDiscuss}

We now return to comment on the assumptions needed to prove our results. First, \ref{assGraphDegSeq} states that certain empirical moments of the degree distribution -- namely, for $i^* \sim A$ uniformly, the first two moments of $d_{out}(i^*)$ and the correlation between $d_{out}(i^*)$ and $d_{in}^A(i^*)$ -- converge to finite limits. Roughly speaking, this says our graph lies in a sparse regime, where typical node degrees do not grow with the number of nodes.\footnote{This is analogous to e.g.\ an Erd\H{o}s-R\'enyi model with edge probability $\lambda / n$ for constant $\lambda > 0$, where degrees converge to $\textrm{Poisson}(\lambda)$ random variables.} We also note $\nu_3 > \nu_1$ in \ref{assGraphDegSeq} is minor and simply eliminates an uninteresting case. To see this, first note that when $\Omega_{n,1}$ holds, we have (roughly)
\begin{equation} \label{eqApproxNu3Nu1}
 \frac{\nu_3}{\nu_1} \approx \sum_{i=1}^n \frac{d_{out}(i)}{ \sum_{i'=1}^n d_{out}(i') } d_{in}^A(i) \geq 1,
\end{equation}
where we have used the assumed inequality $d_{in}^A(i) \geq 1\ \forall\ i \in [n]$. Hence, $\nu_3 < \nu_1$ cannot occur, so assuming $\nu_3 > \nu_1$ only prohibits $\nu_3 = \nu_1$. This remaining case is uninteresting because $\nu_3 / \nu_1$ is the limiting number of offspring for each node in the branching process we analyze; thus, if $\nu_3 = \nu_1$, the tree resulting from this process is simply a line graph.

Next, \ref{assGraphHorizon} states $T_n = O(\log n)$. Together with \ref{assGraphDegSeq}, these assumptions are standard given our analysis approach, which, as discussed previously, locally approximates the graph construction with a branching process. We also note that, with the interpretation of $\nu_3 / \nu_1$ above, it follows that the number of agents within the $T_n$-step neighborhood of $i^*$ is roughly
\begin{equation}
( \nu_3 / \nu_1 )^{T_n} = O \left( (\nu_3/\nu_1)^{ \zeta \log_{\nu_3/\nu_1} ( n ) } \right) = O \left( n^{\zeta} \right) = o ( n ) .
\end{equation}
In words, the size of the aforementioned neighborhood vanishes relative to $n$. This is why our title refers to the learning as ``local'': only a vanishing fraction of other agents (those within this neighborhood) affect the estimate of $i^*$.

The remaining statements are needed to establish estimate convergence on the tree resulting from the branching process. \ref{assBranchHorizon} states $T_n \rightarrow \infty$ with $n$, which is an obvious requirement for convergence. \ref{assBranchDegSeq} essentially says that three events occur with high probability. First, $\tilde{p}_n$ should be close to a convergent, deterministic sequence $p_n$; this is necessary since the asymptotics of $p_n$ play a prominent role in Theorem \ref{thmMain}. Second, $\tilde{p}_n^* \geq \tilde{p}_n$ essentially says that bots prefer to attach to agents with higher out-degrees, i.e.\ more influential agents; this is the behavior one would intuitively expect from bots aiming to disrupt learning. Third, $\tilde{q}_n < 1 - \xi \in (0,1)$ is satisfied if, for example, all agents have total in-degree at least two.

\new{Finally, while we focused on the DCM in this section, our analytical approach is more general. At a high level, the key properties of the DCM we used are that most nodes' $O(\log n)$-step neighborhoods are treelike and ``statistically similar,'' which allows for a branching process coupling. Such couplings exist more generally, though this $O(\log n)$ scaling will be smaller for denser graphs, which makes $T_n$ smaller as well.
}

\section{Adversarial setting} \label{secAdversarial}

We next formalize the adversarial problem introduced in Section \ref{secIntro}. We begin with some notation. Let $m_n = \sum_{i=1}^n d_{out}(i)$, and (with slight abuse of notation to the previous section), define the function $\tilde{p}_n : \N_0^n \rightarrow [0,1]$ by
\begin{equation}  \label{eqTildePnFunction}
 \tilde{p}_n(d) = \sum_{i=1}^n \frac{ d_{in}^A(i) }{ d_{in}^A(i) + d(i) } \frac{ d_{out}(i) }{ m_n }\ \forall\ d \in \N_0^n ,
\end{equation} 
which is simply $\tilde{p}_n$, as defined in \eqref{eqWalkRVsMain}, viewed as a function of the bot in-degrees $d(i) \triangleq d_{in}^B(i)$\footnote{We suppress the sub- and super-scripts to avoid cumbersome notation.}. Given a budget $b_n \in \N$, the adversary's problem is then as follows:
\begin{equation} \label{eqOriginalOpt}
  \min_{d \in \N_0^n} \tilde{p}_n(d)\ s.t.\ \sum_{i=1}^n d(i) \leq b_n .
\end{equation}
Thus, the adversary's objective function only depends on the agent degrees $\{ d_{out}(i) , d_{in}^A(i) \}_{i \in [n]}$ (e.g.\ numbers of followers and followees on Twitter), and not the topology of the agent sub-graph. Consequently, the topology will play no role in this section, \new{i.e.\ we do not require the DCM assumption}. We reiterate that, by Theorem \ref{thmMain}, solving \eqref{eqOriginalOpt} is equivalent to minimizing estimates asymptotically for the DCM.\footnote{More precisely, this only holds if the solution of \eqref{eqOriginalOpt} converges in the sense of \ref{assBranchDegSeq}. We are unsure if this holds, but we view it as a minor technical point and leave it as an open problem.} For general graph topologies, we treat \eqref{eqOriginalOpt} as a nonrigorous but tractable surrogate for estimate minimization, and we will soon show empirically that this is a reasonable choice.

\subsection{Exact solution} \label{secExactSolution}

\begin{algorithm}[t]
\DontPrintSemicolon
\caption{Exact solution of \eqref{eqOriginalOpt}} \label{algExact}

Let $d \in dom( \hat{p}_n )$, compute $\hat{p}_n(d)$

\While{not terminated}{

Compute $\hat{p}_n(d-e_i+e_j)\ \forall\ i,j \in [n]$ s.t.\ $i \neq j$

Let $(i^*,j^*) \in \arg \min_{ (i,j) \in [n]^2 : i \neq j }  \hat{p}_n(d-e_i+e_j)$

\lIf{$\hat{p}_n(d) \leq \hat{p}_n ( d - e_{i^*} + e_{j^*} )$}{terminate}

\lElse{Set $d = d - e_{i^*} + e_{j^*}$}

}

\end{algorithm}

\new{First, we let $dom(\hat{p}_n) = \{ d \in \N_0^n : \sum_{i=1}^n d(i) = b_n \}$ and rewrite \eqref{eqOriginalOpt} as $\min_{d \in \Z^n} \hat{p}_n(d)$, where}
\begin{gather} \label{eqRewriteOpt}
\new{\hat{p}_n(d) = \begin{cases} \tilde{p}_n(d) , & d \in dom(\hat{p}_n) \\ \infty , & \textrm{otherwise} \end{cases} .}
\end{gather}
In words, we incorporated the constraints from \eqref{eqOriginalOpt} into the objective; we also used the (obvious) fact that the solution of \eqref{eqOriginalOpt} satisfies the budget constraint with equality. \new{The new objective $\hat{p}_n$ satisfies a certain discrete convexity property, which implies that $d$ minimizes $\hat{p}_n$ if and only if $\hat{p}_n(d) \leq \hat{p}_n(d+e_i-e_j)$ for any $i,j$ pair. Hence, we can find the minimizer by iteratively replacing $d$ with $d+e_i-e_j$ until the objective stops decreasing. This approach is known as \textit{steepest descent} \cite[Section 10.1.1]{murota2003discrete} and is provided in Algorithm \ref{algExact}. In
\iftoggle{arxiv}{%
Appendix \ref{appAlgDetails}, %
}{%
\cite[Appendix III-E]{vial2019local_arxiv}, %
}%
we show its runtime is $\Theta(n^2)$ in the best case and $O(n^2 b_n)$ in the general case.
}

\subsection{Approximation algorithm} \label{secApproxSolution}

\begin{algorithm}[t]
\DontPrintSemicolon
\caption{Approximate solution of \eqref{eqOriginalOpt}} \label{algApprox}

Compute $d_n^{rel}(i)$ as in \eqref{eqRelaxedSolnMain} and set $d_n^{rand}(i) = 0\ \forall\ i \in [n]$

\For{$j=1$ \KwTo $b_n$}{

Sample $W_j$ from the distribution $d_n^{rel} / \sum_{k = 1}^n d_n^{rel}(k)$, i.e.\ $\P ( W_j = i ) = d_n^{rel}(i) /  \sum_{k = 1}^n d_n^{rel}(k)\ \forall\ i \in [n]$

}

Set $d_n^{rand}(i) = \sum_{j=1}^{b_n} 1 ( W_j = i )\ \forall\ i \in [n]$

\end{algorithm}

Algorithm \ref{algExact}'s $\Omega ( n^2 )$ runtime is prohibitive for massive networks like Twitter, which motivates our approximation scheme. The idea is to first solve the relaxed problem
\begin{equation} \label{eqRelaxedOpt}
 \min_{d \in \R_+^n} \tilde{p}_n(d)\ s.t.\ \sum_{i=1}^n d(i) \leq b_n ,
\end{equation}
and then to sample bot locations in proportion to the relaxed solution. More formally, our approximate solution $d_n^{rand}$ is constructed via Algorithm \ref{algApprox}. We note that by definition, the budget constraint holds with equality for Algorithm \ref{algApprox}. Also, as shown in %
\iftoggle{arxiv}{%
Appendix \ref{proofLemRelaxedProblem}%
}{%
\cite[Appendix III-A]{vial2019local_arxiv}%
}%
, the solution of \eqref{eqRelaxedOpt} is
\begin{equation} \label{eqRelaxedSolnMain}
  d_n^{rel}(i) = d_{in}^A(i) \left( \frac{ \sqrt{r(i)} }{ h^* } - 1 \right)_+\ \forall\ i \in [n] ,
\end{equation}
where $x_+ = x 1 ( x > 0)$, $r(i) = d_{out}(i) / d_{in}^A(i)\ \forall\ i \in [n]$, $h^* = \max_{x \in \R_+} h(x)$, and
\begin{equation} \label{eqHmain}
h(x) = \frac{ \sum_{i \in [n] : r(i) \geq x^2} \sqrt{ d_{out}(i) d_{in}^A(i) } }{ b_n + \sum_{i \in [n] : r(i) \geq x^2} d_{in}^A(i) }\ \forall\ x \in \R_+ .
\end{equation}

This randomized scheme yields useful insights, in contrast to the optimal algorithm. In particular, the randomized and relaxed solutions $d_n^{rand}$ and $d_n^{rel}$ are equal in expectation, and the relaxed solution $d_n^{rel}$ satisfies some intuitive properties:
\begin{itemize}[leftmargin=10pt,itemsep=0pt,topsep=0pt]
\item $d_n^{rel}(i)$ grows with $r(i) = d_{out}(i) / d_{in}^A(i)$, i.e.\ the adversary targets agents $i$ with large $d_{out}(i)$ and small $d_{in}^A(i)$ under the relaxed solution. Here large $d_{out}(i)$ means $i$ is \textit{influential} (e.g.\ $i$ has many Twitter followers), while small $d_{in}^A(i)$ means $i$ is \textit{susceptible to influence} (e.g.\ $i$ has few Twitter followees, so bot tweets will appear prominently in $i$'s Twitter feed). 
\item If $r(i) < ( h^* )^2$, then $d_n^{rel}(i) = d_n^{rand}(i) = 0$. Hence, if $i$ is sufficiently non-influential, and/or sufficiently non-susceptible, targeting $i$ gives no value to the adversary.
\item If $r(i) = r(j) > ( h^* )^2$, the relaxed solution yields
\begin{equation}
 \frac{ d_{in}^A(i) }{ d_{in}^A(i) + d_n^{rel}(i) } = \frac{ d_{in}^A(j) }{ d_{in}^A(j) + d_n^{rel}(j) } .
\end{equation}
This can be interpreted as follows: the adversary strives for a similar proportion of fake news in the feeds of users with similar ratios of influence to susceptibility.
\end{itemize}
In short, our approximate solution strives to balance influence and susceptibility. While somewhat intuitive, the precise manner in which this balance occurs (in particular, the form of \eqref{eqRelaxedSolnMain}-\eqref{eqHmain}) is far from obvious.

In %
\iftoggle{arxiv}{%
Appendix \ref{appAlgDetails} %
}{%
\cite[Appendix III-E]{vial2019local_arxiv} %
}%
, we show Algorithm \ref{algApprox} has complexity $O(n \log n + b_n )$. In terms of accuracy, we next prove that with high probability, Algorithm \ref{algApprox} is a $(2+\delta)$-approximation algorithm for the constrained problem $\max_{d \in \N_0^n : \sum_ d(i) \leq b_n} ( 1 - \tilde{p}_n(d) )$, which is equivalent to \eqref{eqOriginalOpt}. More precisely, letting $d_n^{opt}$ be any solution of \eqref{eqOriginalOpt}, i.e.\
\begin{equation} \label{eqDefnDnOpt}
 d_n^{opt} \in \argmin_{ d \in \N_0^n : \sum_{i=1}^n d(i) \leq b_n} \tilde{p}_n ( d ) ,
\end{equation}
we have the following result.
\begin{theorem} \label{thmConstApprox}
Let $\delta > 0$ and $c_{\delta} = \frac{ \delta^2 }{ 4(2+\delta)^2}$. Then
\begin{align}
& \P \left(  1 - \tilde{p}_n ( d_n^{rand} ) \leq \frac{ 1 - \tilde{p}_n ( d_n^{opt} ) }{ 2 + \delta } \right) \\
& \quad\quad \leq \exp \left( -  \frac{ c_{\delta} m_n ( 1 - \tilde{p}_n(d_n^{rel}) ) }{ \max_{j \in [n]} r(j) } \right) .
\end{align}
\end{theorem}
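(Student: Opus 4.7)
The argument has three steps: an LP-relaxation reduction, an expected-value bound, and exponential concentration around that expectation.

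\emph{Step 1 (reduction).} Write $F = 1 - \tilde{p}_n(d_n^{rand})$, $F^* = 1 - \tilde{p}_n(d_n^{rel})$, and $F^o = 1 - \tilde{p}_n(d_n^{opt})$. Since $d_n^{opt}$ is feasible for the relaxation \eqref{eqRelaxedOpt}, $F^o \leq F^*$, so the target event $\{F \leq F^o/(2+\delta)\}$ is contained in $\{F \leq F^*/(2+\delta)\}$, and it suffices to bound the probability of the latter.

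\emph{Step 2 (expected-value bound).} Set $w_i = d_{out}(i)/m_n$, $a_i = d_{in}^A(i) \geq 1$, $\mu_i = d_n^{rel}(i)$, and $X_i = d_n^{rand}(i) \sim \text{Binomial}(b_n, \mu_i/b_n)$, so $F = \sum_i w_i X_i/(a_i+X_i)$. The heart of the argument is the pointwise inequality
\[
\E[X/(a+X)] \geq \mu/(2(a+\mu))
\]
for any sum of independent Bernoullis $X$ with $\E X = \mu$ and any $a \geq 1$. Jensen's inequality applied to the concave function $x \mapsto x/(a+x)$ yields only the reverse (upper-bound) direction, so the lower bound requires a direct computation: I would verify it first for Poisson $X$ via the formula $\E[1/(a+X)] = e^{-\mu}\int_0^1 t^{a-1}e^{\mu t}\,dt$ and elementary calculus, then transfer to Binomial by observing that a Binomial is strictly more concentrated than a Poisson of the same mean, so that by concavity $\E[g(X_{\mathrm{Bi}})] \geq \E[g(X_{\mathrm{Po}})]$ for $g(x)=x/(a+x)$. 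Termwise summation and weighting by $w_i$ then yield $\E[F] \geq F^*/2$.

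\emph{Step 3 (concentration and combining).} Decompose $F = \sum_{j=1}^{b_n} T_j$ where $T_j \geq 0$ is the increase in $F$ from placing the $j$-th bot; since $g(k+1) - g(k) = a/[(a+k)(a+k+1)] \leq 1/(a+1)$, we have $T_j \leq c := \max_i r(i)/m_n$ almost surely. Swapping a single $W_j$ changes $F$ by at most $O(T_j)$, so $F$ is self-bounding in the sense that $\sum_j (\Delta_j F)^2 = O(cF)$. A standard concentration bound for self-bounding functions (equivalently, an exponential-supermartingale argument built from the convexity inequality $e^{-\lambda x} \leq 1 - x(1-e^{-\lambda c})/c$ on $[0,c]$) then yields $\P(F \leq (1-\alpha)\E[F]) \leq \exp(-\alpha^2 \E[F]/(2c))$ for $\alpha \in (0,1)$. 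Setting $\alpha = \delta/(2+\delta)$, Step 2 gives $(1-\alpha)\E[F] \geq (2/(2+\delta))(F^*/2) = F^*/(2+\delta)$, so the target event is contained in $\{F \leq (1-\alpha)\E[F]\}$; combining with $\E[F] \geq F^*/2$ and $c = \max_i r(i)/m_n$ produces the bound $\exp(-\alpha^2 m_n F^*/(4\max_i r(i)))$, which matches the statement since $\alpha^2/4 = \delta^2/[4(2+\delta)^2] = c_\delta$.

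The main obstacle is the pointwise lemma in Step 2, because Jensen's inequality points the wrong way for the concave map $x/(a+x)$ and the factor $1/2$ is tight (e.g.\ Bernoulli $X$ with small $\mu$ and $a=1$ makes $\E[X/(1+X)] = \mu/2$ while $\mu/(1+\mu) \approx \mu$). This integrality-gap phenomenon is precisely what forces the factor $2+\delta$, rather than $1+\delta$, in the theorem.
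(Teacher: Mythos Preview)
Your three-step structure (relaxation reduction, factor-$2$ expectation bound, self-bounding concentration) matches the paper exactly, and Steps 1 and 3 are essentially the paper's argument: the paper also rescales by $\bar r=\max_j r(j)$ to obtain a self-bounding function $g_n(W)=(m_n/\bar r)F$ and invokes the standard lower-tail inequality with $t=\frac{\delta}{2+\delta}\E g_n(W)$. The only real divergence is Step 2. You correctly observe that Jensen applied directly to the concave map $x\mapsto x/(a+x)$ gives only an upper bound, and you propose to recover the lower bound via a Poisson computation plus a Binomial--Poisson convex-order comparison. The paper's route is shorter: write $\E[X_i/(a_i+X_i)]=\sum_{j=1}^{b_n}\E\!\left[\tfrac{1(W_j=i)}{a_i+X_i}\right]$, condition on $\{W_k\}_{k\neq j}$ so that the $j$-th term becomes $(\mu_i/b_n)\,\E\!\left[\tfrac{1}{a_i+1+\sum_{k\neq j}1(W_k=i)}\right]$, and then apply Jensen to the \emph{convex} map $y\mapsto 1/(a_i+1+y)$, which now points the right way. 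This gives $\E[X_i/(a_i+X_i)]\ge \mu_i/(a_i+1+\tfrac{b_n-1}{b_n}\mu_i)>\mu_i/(2(a_i+\mu_i))$ directly from $a_i\ge 1$, with no Poisson detour or convex-order lemma needed. Your route would also work (the Binomial--Poisson convex order is a known fact), but it is longer and introduces an auxiliary result; the paper's conditioning trick is the same idea executed more cleanly.
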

\begin{proof}
As mentioned above, %
\iftoggle{arxiv}{%
Appendix \ref{proofLemRelaxedProblem} %
}{%
\cite[Appendix III-A]{vial2019local_arxiv} %
}%
shows \eqref{eqRelaxedSolnMain} solves \eqref{eqRelaxedOpt} (the proof amounts to verifying \textit{KKT conditions}, see e.g.\ \cite[Section 5.5.3]{boyd2004convex}). Hence, by definition \eqref{eqDefnDnOpt}, 
\begin{equation} \label{eqDrelDopt}
\tilde{p}_n ( d_n^{rel} ) \leq \tilde{p}_n ( d_n^{opt} ) .
\end{equation}
We next rewrite $1 - \tilde{p}_n ( d_n^{rand})$ in terms of the random vector $W = ( W_j )_{j=1}^{b_n}$ from Algorithm \ref{algApprox}. Toward this end, let $\bar{r} = \max_{j \in [n]} r(j)$, and for $w = (w_j)_{j=1}^{b_n} \in [n]^{b_n}$ define
\begin{equation}
g_n(w) = \frac{1}{\bar{r}} \sum_{j=1}^{b_n} \frac{ d_{out}(w_j) }{ d_{in}^A(w_j) + \sum_{k=1}^{b_n} 1 ( w_k = w_j ) } .
\end{equation}
Then a simple calculation yields
\begin{equation} \label{eqGnW}
g_n ( W ) = \frac{ m_n }{ \bar{r} } ( 1 - \tilde{p}_n ( d_n^{rand}) ) ,
\end{equation}
and using Jensen's inequality, one can show
\begin{equation} \label{eqEGnW}
\E g_n ( W ) \geq \frac{ m_n }{ 2 \bar{r} } ( 1 - \tilde{p}_n ( d_n^{rel}) )
\end{equation}
(see %
\iftoggle{arxiv}{%
Appendix \ref{appGn} %
}{%
\cite[Appendix III-B]{vial2019local_arxiv} %
}%
for details.) Combining \eqref{eqDrelDopt}-\eqref{eqEGnW},
\begin{equation}
1 - \tilde{p}_n ( d_n^{rand} ) \leq \frac{ 1 - \tilde{p}_n ( d_n^{opt} ) }{ 2 + \delta } \Rightarrow g_n(W) \leq \frac{2 \E g_n(W)}{2+\delta} .
\end{equation}
Also, using \eqref{eqEGnW} and recalling $\bar{r} = \max_{j \in [n]} r(j)$, we have
\begin{equation}
\frac{ c_{\delta} m_n (1-\tilde{p}_n(d_n^{rel} ) ) }{ \max_{j \in [n]} r(j) } \leq 2 c_{\delta} \E g_n(W) .
\end{equation}
By the previous two lines, the following implies the theorem:
\begin{equation} \label{eqGnTail}
\P \left( g_n(W) \leq \frac{2 \E g_n(W)}{2+\delta} \right) \leq \exp \left( - 2 c_{\delta} \E g_n(W)  \right) .
\end{equation}
Such an inequality would follow from a simple Hoeffding bound if $g_n(W)$ was simply $\sum_j W_j$; however, $g_n(W)$ is a much more complicated function. Fortunately, $g_n$ belongs to a special class called \textit{self-bounding functions} \cite[Section 3.3]{boucheron2013concentration}, for which concentration inequalities of the form \eqref{eqGnTail} are known. See %
\iftoggle{arxiv}{%
Appendix \ref{appSelfBounding} %
}{%
\cite[Appendix III-C]{vial2019local_arxiv} %
}%
for details.
\end{proof}

The tail bound in Theorem \ref{thmConstApprox} is opaque, as it relies on $\tilde{p}_n(d_n^{rel})$, which (in general) is difficult to interpret. Under certain assumptions, we can obtain more transparent results. For example, we have the following corollary.
\begin{corollary} \label{corConstApprox}
Let $\bar{r} = \max_{j \in [n]} r(j)$ as above. Assume $\lim_{n \rightarrow \infty} b_n = \infty$ and for some $\epsilon > 0$ independent of $n$,
\begin{equation}  \label{eqCorAssumption}
\lim_{n \rightarrow \infty} | \{ i \in [n] : r(i) \geq \epsilon \bar{r} \} | = \infty .
\end{equation}
Then $\exists\ \{ \delta_n \}_{n \in \N} \subset (0,\infty)$ s.t.\ $\lim_{n \rightarrow \infty} \delta_n = 0$ and
\begin{equation} 
\lim_{n \rightarrow \infty} \P \left(  1 - \tilde{p}_n ( d_n^{rand} ) \leq \frac{ 1 - \tilde{p}_n ( d_n^{opt} ) }{ 2 + \delta_n } \right) = 0 .
\end{equation}
\end{corollary}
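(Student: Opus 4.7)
The plan is to apply Theorem~\ref{thmConstApprox} with a vanishing sequence $\delta_n \to 0$ chosen slowly enough that the exponential bound still decays. Writing $M_n \triangleq m_n(1-\tilde{p}_n(d_n^{rel}))/\bar{r}$ and noting that $c_\delta = \delta^2/(4(2+\delta)^2) \sim \delta^2/16$ as $\delta \to 0$, Theorem~\ref{thmConstApprox} yields a tail bound of order $\exp\bigl(-\Theta(\delta_n^2 M_n)\bigr)$. So the strategy is twofold: (i) show $M_n \to \infty$ under the hypotheses, then (ii) pick $\delta_n$ vanishing slowly enough that $\delta_n^2 M_n$ still diverges (e.g., $\delta_n = M_n^{-1/4}$).

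The main step is (i). I will use that $d_n^{rel}$ solves the relaxed problem \eqref{eqRelaxedOpt}, whose feasible set contains every integer budget-feasible deployment, so $1-\tilde{p}_n(d_n^{rel}) \geq 1-\tilde{p}_n(d)$ for any $d \in \N_0^n$ with $\sum_i d(i) \leq b_n$. I will harness \eqref{eqCorAssumption} by exhibiting a convenient suboptimal $d$: let $S_n = \{i : r(i) \geq \epsilon \bar{r}\}$, $N_n = |S_n|$, fix any subset $S_n' \subseteq S_n$ with $|S_n'| = \min(N_n, b_n)$, and set $d(i) = 1$ on $S_n'$ and $d(i) = 0$ elsewhere. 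By \eqref{eqDegSeqBasicAss} we have $d_{in}^A(i) \geq 1$ (so $d_{in}^A(i)+1 \leq 2 d_{in}^A(i)$), and for $i \in S_n$ the definition of $S_n$ gives $d_{out}(i) \geq \epsilon \bar{r}\, d_{in}^A(i)$. Hence
\begin{equation}
1-\tilde{p}_n(d_n^{rel}) \;\geq\; 1-\tilde{p}_n(d) \;\geq\; \sum_{i \in S_n'} \frac{1}{d_{in}^A(i)+1} \cdot \frac{d_{out}(i)}{m_n} \;\geq\; \frac{|S_n'|\, \epsilon \bar{r}}{2 m_n},
\end{equation}
which yields $M_n \geq \epsilon \min(N_n, b_n)/2$. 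By \eqref{eqCorAssumption} and the assumption $b_n \to \infty$, both $N_n$ and $b_n$ diverge, so $\min(N_n,b_n) \to \infty$ and therefore $M_n \to \infty$.

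For step (ii), set $\delta_n = M_n^{-1/4}$. Then $\delta_n \to 0$ while $c_{\delta_n} M_n \sim \delta_n^2 M_n/16 = M_n^{1/2}/16 \to \infty$, so Theorem~\ref{thmConstApprox} gives
\begin{equation}
\P\!\left(1-\tilde{p}_n(d_n^{rand}) \leq \frac{1-\tilde{p}_n(d_n^{opt})}{2+\delta_n}\right) \leq \exp\bigl(-c_{\delta_n} M_n\bigr) \to 0,
\end{equation}
establishing the corollary. I do not foresee any serious obstacle: the only subtle choice is the suboptimal deployment used to activate \eqref{eqCorAssumption}, and even this is forced by the form of the objective --- putting a single bot at each of many ``high-$r$'' nodes simultaneously exploits both the large-$d_{out}$ numerator and the moderate-$d_{in}^A$ denominator.
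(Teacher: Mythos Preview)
Your proof is correct and follows the same high-level strategy as the paper: invoke Theorem~\ref{thmConstApprox}, replace $\tilde{p}_n(d_n^{rel})$ by $\tilde{p}_n(d)$ for a convenient feasible $d$ to lower-bound $M_n$, show $M_n\to\infty$, and then let $\delta_n$ vanish slowly enough that $c_{\delta_n}M_n\to\infty$. The only substantive difference is the choice of test deployment. The paper spreads the \emph{entire} budget over $\mathcal{I}_n=\{i:r(i)\geq\epsilon\bar r\}$ proportionally to $d_{out}$, i.e.\ $d_n(i)=d_{out}(i)\,1(i\in\mathcal{I}_n)\,b_n/\sum_{j\in\mathcal{I}_n}d_{out}(j)$, and after some algebra arrives at $M_n\geq\bigl(\bar r/\sum_{i\in\mathcal{I}_n}d_{out}(i)+1/(\epsilon b_n)\bigr)^{-1}$. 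Your choice---placing a single bot at each of $\min(|S_n|,b_n)$ high-$r$ nodes---is more elementary and yields the cleaner bound $M_n\geq\tfrac{\epsilon}{2}\min(|S_n|,b_n)$, which makes the dependence on the two hypotheses completely transparent. Both work; yours is a bit shorter.
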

\begin{proof}
Since $d_n^{rel}$ solves \eqref{eqRelaxedOpt}, we can weaken the bound in Theorem \ref{thmConstApprox} by replacing $\tilde{p}_n(d_n^{rel})$ with $\tilde{p}_n(d)$ for any $d \in \R_+^n$ with $\sum_i d(i) \leq b_n$. Thus, the proof chooses a particular $d$ that leads to a more tractable bound, and the assumptions ensure this bound vanishes. See %
\iftoggle{arxiv}{%
Appendix \ref{proofCorConstApprox} %
}{%
\cite[Appendix III-D]{vial2019local_arxiv} %
}%
for details.
\end{proof}

In words, the corollary shows our randomized scheme is (asymptotically) a $2$-approximation algorithm with probability tending to $1$. The assumption \eqref{eqCorAssumption} only precludes the case where only finitely many of the degree ratios $r(i)$ are comparable to the maximum $\bar{r}$. This restriction arises because our self-bounding concentration analysis in Theorem \ref{thmConstApprox} requires normalization by $\bar{r}$ (see %
\iftoggle{arxiv}{%
Appendix \ref{appSelfBounding}%
}{%
\cite[Appendix III-C]{vial2019local_arxiv}%
}%
.)

\subsection{Empirical results} \label{secEmpirical}

A fundamental assumption in our adversary solutions is that $\tilde{p}_n$ and $\theta_{T_n}(i^*)$ are correlated, in the sense that minimizing $\tilde{p}_n$ also minimizes $\theta_{T_n}(i^*)$. While Theorem \ref{thmMain} states this correlation holds for the random graph model of Section \ref{secGraphModel}, it is unclear if this correlation occurs in practice. To conclude this section, we present empirical results suggesting that this indeed occurs. In our experiments, we compare our proposed solutions against some natural heuristics:
\begin{itemize}[leftmargin=10pt,itemsep=0pt,topsep=0pt]
\item A naive baseline, which uses Algorithm \ref{algApprox} but samples each $W_j$ uniformly from $[n]$.
\item Three schemes which similarly use Algorithm \ref{algApprox}, along with the observed degrees: sampling $W_j$ proportional to $d_{out}$ (i.e.\ targeting influential nodes), $d_{in}^A$ (i.e.\ targeting susceptible nodes), and $d_{out} / d_{in}^A$ (i.e.\ naively balancing the two).
\item Sampling $W_j$ proportional to $\textrm{PageRank}(\epsilon)$ \cite{page1999pagerank}, where\footnote{In experiments, we compute the first $\lceil \log ( 0.99 ) / \log (1-\epsilon) \rceil$ summands, which guarantees an $l_1$ error bound of $0.01$.}
\begin{equation} \label{eqPRdefn}
 \textrm{PageRank}(\epsilon) = \frac{\epsilon \mathbf{1}_{n}}{n} \sum_{j=0}^{\infty} (1-\epsilon)^j \left( P_A^{\trans} \right)^j ,
\end{equation}
where $\epsilon \in (0,1)$, $\mathbf{1}_{n}$ is the length-$n$ ones vector, and $P_A$ is the agent sub-graph's column-normalized adjacency matrix, i.e.\ the matrix with $(i,j)$-th element
\begin{equation}
P_A(i,j) = \frac{ 1 ( i \rightarrow j \in E_n ) }{ d_{in}^A(j) }\ \forall\ i,j \in [n] .
\end{equation}
PageRank is a commonly-used measure of influence or centrality for graphs in many domains \cite{gleich2015pagerank} (and a richer such measure than $d_{out}$). 
\end{itemize}

We compare our proposed solutions with these heuristics using four datasets from \cite{snapnets}, described in Table \ref{tabDatasets}. We chose these datasets so we could test our proposed solutions on real social networks of two scales: Gnutella and Wiki-Vote have $n < 10^4$, a scale at which the exact solution Algorithm \ref{algExact} is feasible; Pokec and LiveJournal have $n > 10^6$, a scale that renders Algorithm \ref{algExact} infeasible but that more closely resembles social networks of interest. For the experiments, we set $\theta = 0.5$ (to maximize signal variance), $\eta = 0.9$ (to emphasize the effect of the network), and $T_n = 101$ (to ensure the code had reasonable runtime). We let $b_n = \lceil |E_n| / 400 \rceil$, so that 0.25\% of all agent in-edges are connected to bots. For each graph and each of five experimental trials, we chose $\{ d_{in}^B(i) \}_{i \in [n]}$ as described above, added bots to the original graph accordingly, and simulated the learning process from Section \ref{secLearnModel}. 

\begin{table}
\centering
\caption{Dataset details} \label{tabDatasets}
\begin{tabular}{|l|l|c|c|c|}
\hline 
\bf{Name} & \bf{Description} & \bf{Nodes} & \bf{Edges} \\ \hline 
Gnutella & Peer-to-peer network & 6,301 & 20,777  \\ \hline 
Wiki-Vote & Wiki admin elections & 7,115 & 103,689  \\ \hline 
Pokec & Slovakian social network & 1,632,803 & 30,622,564 \\ \hline
LiveJournal & Blogging platform & 4,847,571 & 68,993,773 \\ \hline
\end{tabular} 
\end{table}

In Figure \ref{fig_timeVsBel}, we plot the mean and standard deviation (across experimental trials) of $\theta_t(i^*)$ as a function of $t$. For all datasets, our proposed solutions outperform all heuristics, in the sense that our solutions yield the lowest average $\theta_t(i^*)$ for most values of $t$. Furthermore, we note the following:
\begin{itemize}[leftmargin=10pt,itemsep=0pt,topsep=0pt]
\item Across all graphs, our solutions outperform $\textrm{PageRank}(\epsilon)$ for all values of $\epsilon$ tested. This is quite surprising, because PageRank uses the entire \textit{graph topology}, whereas our solutions only use \textit{degree information}. Also, as $\epsilon$ becomes increasingly smaller, $\textrm{PageRank}(\epsilon)$ performs increasingly better, but this comes at the cost of higher runtime to estimate $\textrm{PageRank}(\epsilon)$.
\item Among the heuristics using (at most) degree information, $d_{out} / d_{in}^A$ performs best -- but still worse than Algorithm \ref{algApprox} -- across all datasets. Put differently, naively balancing influence and susceptibility is not enough; the non-obvious form of Algorithm \ref{algApprox} yields better performance.
\item For Gnutella and Wiki-Vote, Algorithm \ref{algExact} noticeably outperforms Algorithm \ref{algApprox}. Though the former is an exact solution and the latter is an approximation, this is still surprising, since it is unclear that these schemes are even optimizing the correct objective for real graphs.
\end{itemize}

While Figure \ref{fig_timeVsBel} only considers one choice of $b_n$, we believe our conclusions are robust. In particular, we also tested the cases $b_n = \lceil \tilde{b} |E_n| \rceil$ for each $\tilde{b} \in \{ \frac{1}{1600} ,\frac{1}{800}  , \frac{1}{400} , \frac{1}{200} , \frac{1}{100}  \}$, so that between $\approx 0.0625\%$ and $\approx 1\%$ of edges connected to bots (thus, Figure \ref{fig_timeVsBel} shows the intermediate case $\tilde{b} = \frac{1}{400}$). %
\iftoggle{arxiv}{%
Appendix \ref{appOtherBudgets} %
}{%
\cite[Appendix III-F]{vial2019local_arxiv} %
}%
contains a figure analogous to Figure \ref{fig_timeVsBel} for the other choices of $\tilde{b}$; the plots are qualitatively similar.

\begin{figure}
\centering
\includegraphics[width=3in]{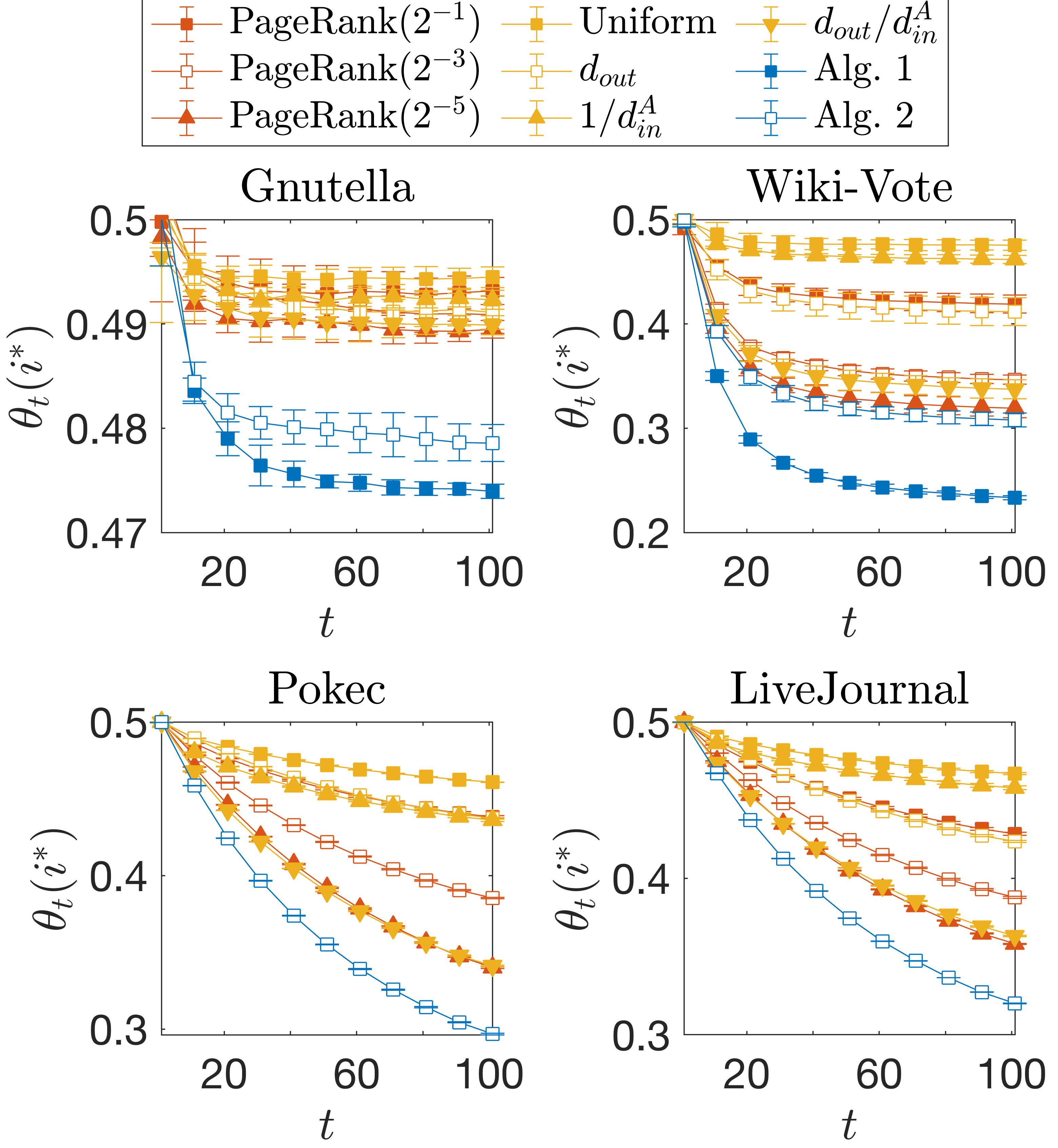}
\caption{Estimates when simulating our learning model on real datasets; Algorithms \ref{algExact} and \ref{algApprox} outperform intuititive heuristics.} \label{fig_timeVsBel}
\end{figure}

We have thus far shown that our solutions outperform heuristics, even those using graph topology. This is quite surprising: our solutions were derived under the fundamental assumption that \textit{minimizing $\theta_{T_n}(i^*)$ amounts to minimizing $\tilde{p}_n$}, but we only verified this assumption asymptotically for a class of random graphs. Thus, our empirical results suggest that \textit{even for real social networks, this assumption holds}. Indeed, in Figure \ref{fig_objVsFinBel} we show scatter plots of $\theta_{T_n}(i^*)$ against $\tilde{p}_n$ (each dot represents one experimental trial). For all datasets, the two quantities are closely correlated.

\begin{figure}
\centering
\includegraphics[width=3.2in]{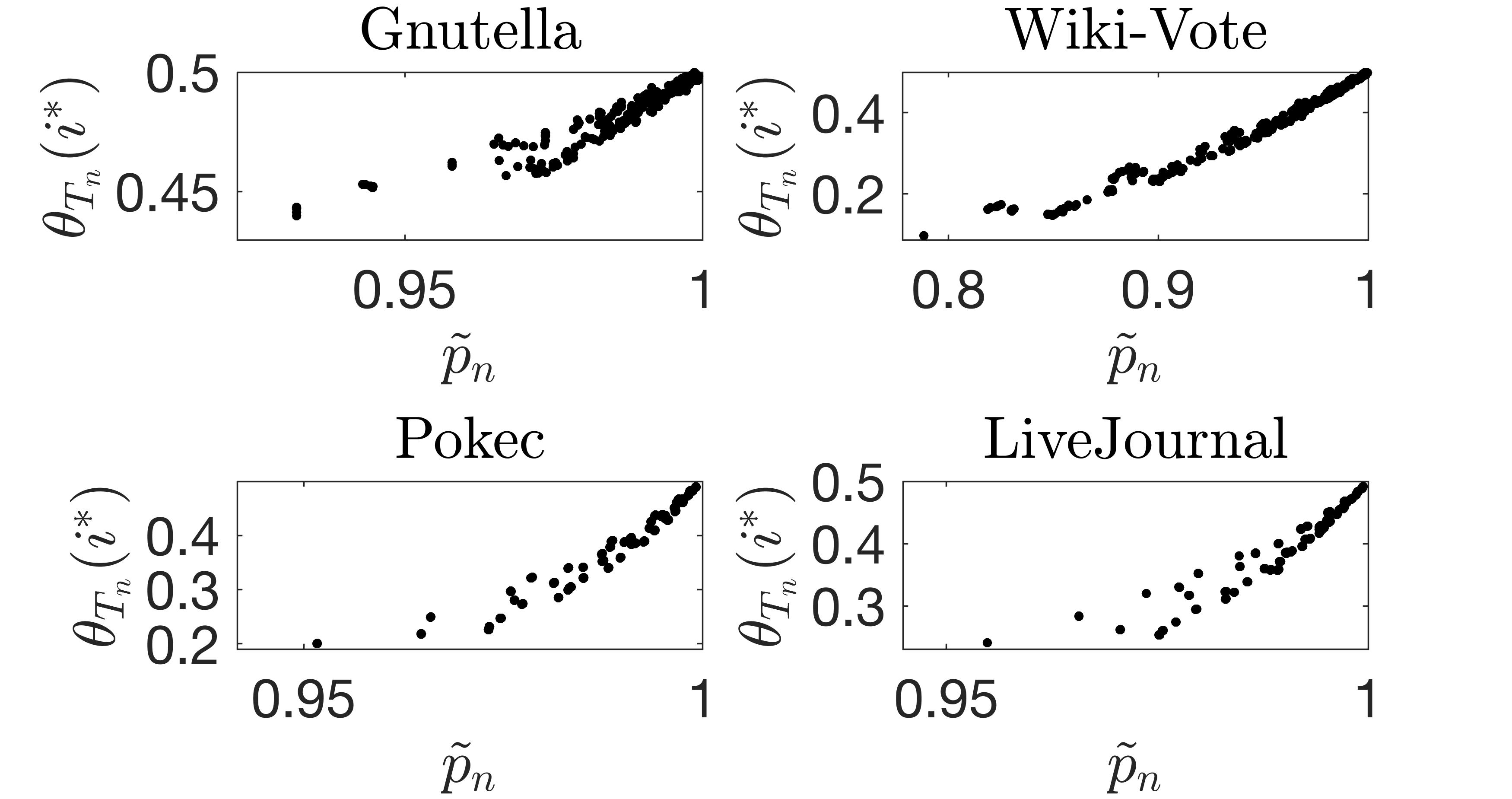}
\caption{As suggested by Figure \ref{fig_timeVsBel}, $\theta_{T_n}(i^*)$ and $\tilde{p}_n$ are closely correlated for real social networks.} \label{fig_objVsFinBel}
\end{figure}

\section{Related work} \label{secRelated}

\new{As discussed in Section \ref{secLearnModel}, \eqref{eqParamUpdateInitial} resembles the non-Bayesian social learning model from \cite{jadbabaie2012non}, which uses belief update
\begin{equation} \label{eqOtherUpdate}
\mu_t(i) = \eta_{ii} \textrm{BU}(\mu_{t-1}(i),\omega_t(i)) +  \sum_{j \in N_{in}(i)} \eta_{ij} \mu_{t-1}(j) ,
\end{equation}
where $\sum_j \eta_{ij} = 1$, $\omega_t(i)$ is a signal, and $\textrm{BU}$ means Bayesian update. Hence, agents perform Bayesian updates and then average in terms of \textit{beliefs} in \cite{jadbabaie2012non} but \textit{parameters} in this work. The main advantage of the latter is that beliefs remain Beta distributions, which simplifies our analysis. This simplification, along with weights $\eta / d_{in}(j)$ instead of \eqref{eqOtherUpdate}, are needed since we consider a finite horizon and a graph which need not be connected, in contrast to \cite{jadbabaie2012non}. Another distinction is that agents in \cite{jadbabaie2012non} cannot learn the true state individually, and need the network for learning. In contrast, agents in our work can learn in isolation (simply by averaging their signals), so the network can either speed up learning or be a detriment. We highlight here the detriment with our model relevant to platforms like Twitter, where users who could have read accurate news in isolation instead of risking exposure to bots.

Our parameter update is also studied in \cite{azzimonti2018social}, which features bots defined in a slightly different manner but in the same spirit. However, \cite{azzimonti2018social} only includes theoretical results in the case $B = \emptyset$; the case $B \neq \emptyset$ is studied empirically. This allowed \cite{azzimonti2018social} to use a slightly richer model, including a time-varying graph and agent-dependent mixture parameters $\sum_{j \in N_{in}(i) \cup \{ i \} } \eta_{ij}$. Notably, the empirical results from \cite{azzimonti2018social} fix a learning horizon and do not investigate the effects of different timescales; in particular, the delicate relationship between timescale and bot prevalence from Theorem \ref{thmMain} is not brought to light. Beyond stubborn agents, \cite{mitra2020new,su2016non} propose different non-Bayesian updates to cope with Byzantine agents with arbitrary behavior.

From an analytical perspective, our approach of analyzing estimates by studying random walks is similar to the deGroot model \cite{degroot1974reaching}. Here the estimate vector $\theta_t = \{ \theta_t(i) \}_i$ is updated as $\theta_t = \theta_{t-1} W$ for some column-stochastic matrix $W$. Hence, $\theta_t = \theta_0 W^t$, so $i$'s belief is determined by the distribution of a $t$-step random walk from $i$. This observation has been exploited in the literature; see the surveys \cite[Section 3]{golub2017learning} and \cite[Section 4]{acemoglu2011opinion}, and the references therein. For example, assuming $W$ is irreducible and aperiodic, and therefore has a well-defined stationary distribution $\pi$, \cite{golub2010naive} establishes conditions for learning using the fact that $\theta_t(i) =  \theta_0 W^t e_i^{\trans} \approx \theta_0 \pi^{\trans}\ \forall\ i$ when $t$ is large. Roughly speaking, our model combines deGroot-like averaging with exogenous unbiased signals. As discussed, the averaging in our case exposes agents to biased beliefs (due to bots); the resulting tension between biased and unbiased information is a key feature in our model not present in deGroot's. Ours is arguably a richer model of platforms like Twitter, where there is a similar tension between legitimate news and bots. Beyond the deGroot model, agents in \cite{rahimian2015learning} perform Bayesian updates using the prior of a randomly-chosen neighbor, which yields a different connection to random walks; assuming strong connectedness, the authors exploit the fact that the walk visits every agent infinitely often (i.o.) to derive conditions for learning.}

Similar to \cite{jadbabaie2012non}, the papers of the previous paragraph typically assume strong connectedness and long learning horizons so as to leverage properties such as stationary distributions and i.o.\ visits. This is a fundamental distinction from our work. Indeed, even if we disregard stubborn agents, the random walk converges to a stationary distribution, but it does \textit{not} converge within our local learning horizon. This is because, as shown in \cite{bordenave2018random}, the DCM we consider has mixing time that exceeds
\begin{equation}
\frac{\log n}{ \sum_{i \in [n]} \log ( d_{in}^A(i) ) \frac{ d_{out}(i)}{ \sum_{i' \in A} d_{out}(i')} } 
\gtrapprox \frac{\log n}{\log(\nu_3/\nu_1)} , 
\end{equation}
where we used Jensen's inequality and \eqref{eqApproxNu3Nu1}. The right side exceeds $T_n$ by \ref{assGraphHorizon}, i.e.\ our learning horizon occurs before the underlying random walk mixes. In fact, \cite{bordenave2018random} shows that the random walk on the DCM exhibits \textit{cutoff}, meaning that the $T_n$-step distribution of this walk can be maximally far from the stationary distribution (i.e.\ the total variation distance between these distributions can be 1 for certain starting locations of the walk). Hence, not only can we not use this stationary distribution, we cannot even use an approximation of it. \new{Again, this means our analysis cannot leverage global properties typically used when relating estimates to random walks. We circument this using the DCM, which has a well-behaved local structure.} We also note that our idea to simultaneously construct the graph and sample the walk is taken from \cite{bordenave2018random}.

Some other works have considered social learning with stubborn agents. For example, \cite{acemoglu2010spread} studies a model in which agents meet and either retain their own estimates, adopt the average of their estimates, or adopt a weighted average; the agent whose estimate has a larger weight is called a ``forceful'' agent. Here the authors show that all agent estimates converge to a common random variable and study its deviation from the true state. A crucial difference between this work and ours is that \cite{acemoglu2010spread} assumes even forceful agents occasionally observe other agents' opinions. This yields an underlying Markov chain that is irreducible (unlike ours); the analysis then relies on this chain having a well-defined stationary distribution.

Stubborn agents have also been considered in the consensus setting \cite{thenis1987consensus}, which asks whether agent estimates converge to a common value, i.e.\ a consensus. For example, \cite{acemoglu2011opinion2} considers a model in which regular agents adopt weighted averages of estimates upon meeting other agents, while stubborn agents always retain their own estimates. This intuitively prohibits a consensus from forming; indeed, it is shown that agent estimates fail to converge, i.e.\ disagreement can persist indefinitely. Another example is \cite{ghaderi2014opinion}, in which an agent's estimate at time $t+1$ is a weighted average of their own estimate at time 0 and their neighbors' estimates at time $t$. In this model, stubborn agents place all weight on their own estimate from time 0 and thus do not update their estimates. The analysis in \cite{ghaderi2014opinion} is similar to ours as it relates agent estimates to hitting probabilities of the stubborn agent set, but it differs as the learning horizon is infinite in \cite{ghaderi2014opinion}. Also in the consensus setting, \cite{rocket2018snowflake} investigates protocols for robust consensus that may lessen the undesirable effects of stubborn agents.

\new{The problem of deploying stubborn agents is studied in \cite{yildiz2013binary,sadler2020influence}, though for the voter model. Both assume knowledge of a matrix describing the graph topology (like $P_A$ from Section \ref{secEmpirical}), and the optimization requires inverting this matrix at complexity $n^3$. Our algorithms overcome both of these issues. We also note this inversion is common in more general influence maximization settings.

Without stubborn agents, \cite{anunrojwong2018naive} considers a non-Bayesian update for infinite horizons, where agents treat neighbors' beliefs as independent. Convergence rates are provided in \cite{lalitha2014social,shahrampour2015distributed,nedic2017fast} for \eqref{eqParamUpdateInitial} or similar Bayesian-plus-aggregation updates. An open question is how these models behave with stubborn agents, particularly for \cite{lalitha2014social,shahrampour2015distributed,nedic2017fast}, where the convergence may be slower than the propagation of stubborn agent bias.}

\bibliographystyle{IEEEtran}
\bibliography{references}  

\iftoggle{arxiv}{

\begin{appendices}

\onecolumn
\allowdisplaybreaks

\section{Special case} \label{secSecondResult}

While Theorem \ref{thmMain} establishes convergence for the estimate of a typical agent, a natural question to ask is how many agents have convergent estimates. Our second result, Theorem \ref{thmSec}, provides a partial answer to this question. To prove the result, we require slightly stronger assumptions than those required for Theorem \ref{thmMain} (we will return shortly to comment on why these are needed). First, we strengthen \ref{assGraphDegSeq} and \ref{assBranchDegSeq} to include particular rates of convergence for the probabilities $\P(\Omega_{n,i}), i \in \{1,2\}$. Second, we strengthen \ref{assBranchHorizon} with a minimum rate at which $T_n \rightarrow \infty$ (specifically, $T_n = \Omega(\log n)$). Third, and perhaps most restrictively, we require $p_n \rightarrow p < 1$ in \ref{assGraphDegSeq}. As a result, Theorem \ref{thmSec} only applies to the case $T_n(1-p_n) \rightarrow \infty$, for which Theorem \ref{thmMain} states the estimate of a uniform agent converges to zero. In this setting, Theorem \ref{thmSec} provides an upper bound on how many agents' estimates do \textit{not} converge to zero. In particular, this bound is $O(n^k)$ for some $k < 1$.

\begin{theorem} \label{thmSec}
Assume $\exists\ \kappa, \mu > 0$ and $N' \in \N$ independent of $n$ s.t.\ the following hold:
\begin{itemize}
\item \ref{assGraphDegSeq}, with $\P(\Omega_{n,1}) = O(n^{-\kappa})$.
\item \ref{assGraphHorizon}.
\item \ref{assBranchDegSeq}, with $\P(\Omega_{n,2}) = O(n^{-\kappa})$ and $p < 1$.
\item \ref{assBranchHorizon}, with $T_n \geq \mu \log n\ \forall\ n \geq N'$.
\end{itemize}
Then for any $\epsilon > 0$, $k > 1 - \min \{ (1/2) - \zeta , \mu ( \epsilon \eta (1-p) / \theta )^2 / 16 , \kappa \}$, and $K > 0$, all independent of $n$,
\begin{equation}
\lim_{n \rightarrow \infty} \P \left( \left| \left\{ i \in [n] : \theta_{T_n}(i) > \epsilon \right\} \right| > K n^k \right) = 0 .
\end{equation}
\end{theorem}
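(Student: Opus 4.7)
The plan is to upgrade the random-walk argument of Theorem \ref{thmMain} from a single random starting vertex to a tail bound uniform in $i\in[n]$, and then collect bad vertices via Markov. The three polynomial error rates appearing in the threshold for $k$ correspond, respectively, to the three strengthenings of hypothesis: $(1/2)-\zeta$ to the branching-process coupling error up to depth $T_n \leq \zeta\log(n)/\log(\nu_3/\nu_1)$, $\mu(\epsilon\eta(1-p)/\theta)^2/16$ to a Hoeffding-type concentration for walk survival over the $T_n \geq \mu\log n$ steps, and $\kappa$ to the failure probability of the degree-sequence events $\Omega_{n,1}, \Omega_{n,2}$.

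First I would derive a deterministic-given-$G$ upper bound on $\theta_{T_n}(i)$ in terms of the survival probabilities $q_s(i) = \P(X_s \in A \mid X_0 = i)$ of the lazy walk driven by $M = (1-\eta)I + \eta P$, where $P$ is the in-neighbor stochastic matrix. Unrolling \eqref{eqParamUpdateInitial}, bots contribute $0$ to $\alpha_{T_n}(i)$ (since $\alpha_0 = 0$ and $s_\tau = 0$ on $B$) while contributing at least $(1-\eta)\sum_{s=1}^{T_n}(1-q_s(i))$ to $\beta_{T_n}(i)$; together with $\alpha_0 \leq \bar\alpha$ and $s_\tau \leq 1$ on $A$, monotonicity of $a/(a+b)$ gives
\begin{equation}
\theta_{T_n}(i) \;\leq\; \frac{\bar\alpha\, q_{T_n}(i) + (1-\eta)\sum_{s=1}^{T_n} q_s(i)}{\bar\alpha\, q_{T_n}(i) + (1-\eta) T_n} .
\end{equation}
Rearranging, $\theta_{T_n}(i) > \epsilon$ forces $\sum_{s=1}^{T_n} q_s(i) \geq \epsilon T_n / C$ for an explicit $C = C(\theta,\eta,1-p)$; the factor $1/C$ eventually reads as $\eta(1-p)/\theta$, producing the squared exponent downstream.

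Next, on $\Omega_{n,1}\cap\Omega_{n,2}$ (probability $\geq 1 - O(n^{-\kappa})$), I would apply the branching-process coupling from Theorem \ref{thmMain} to each starting vertex $i$ separately, showing that its walk's per-step survival probability in $A$ equals $\tilde p_n + O(n^{-(1/2-\zeta)})$ up to depth $T_n$. On the coupled tree, $q_s(i)$ is essentially $(1-\eta(1-\tilde p_n))^s$ plus mean-zero generation-by-generation fluctuations that are bounded and nearly independent, so a Hoeffding inequality yields
\begin{equation}
\P\!\left(\sum_{s=1}^{T_n} q_s(i) \geq \frac{\epsilon T_n}{C}\right) \;\leq\; \exp\!\left(-\frac{T_n}{16}\!\left(\frac{\epsilon\eta(1-p)}{\theta}\right)^{\!2}\right) + O\bigl(n^{-\kappa} + n^{-(1/2-\zeta)}\bigr)
\end{equation}
uniformly in $i$; using $T_n \geq \mu\log n$, the leading term is $n^{-\mu(\epsilon\eta(1-p)/\theta)^2/16}$. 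Writing $N_n = |\{i \in [n] : \theta_{T_n}(i) > \epsilon\}|$ and summing over $i$, Markov's inequality gives $\P(N_n > Kn^k) \leq \E N_n/(Kn^k) \to 0$ whenever $k > 1 - \min\{(1/2)-\zeta,\; \mu(\epsilon\eta(1-p)/\theta)^2/16,\; \kappa\}$, as claimed.

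The main obstacle is the \emph{uniform}-in-$i$ coupling. The proof of Theorem \ref{thmMain} uses the uniform starting vertex $i^*$, so only a typical-neighborhood coupling succeeding with probability $\to 1$ is needed; here, to union-bound over $n$ starting vertices, the coupling must succeed with probability $1 - n^{-\Theta(1)}$ at each $i$ simultaneously. This is precisely why $p<1$ is imposed: it provides a geometric absorption gap at every vertex that decays polynomially fast enough to survive the union bound over the $n$ possible starts, and it also removes the delicate intermediate regime $T_n(1-p_n) \to c \in (0,\infty)$ that would otherwise spoil the Hoeffding step.
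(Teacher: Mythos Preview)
Your proposal has a genuine structural gap: you misidentify the ``main obstacle'' as needing a uniform-in-$i$ coupling that succeeds with probability $1-n^{-\Theta(1)}$ at every vertex simultaneously. No such uniform coupling is required. Because $i^*$ is uniform on $[n]$, the law of total probability gives
\[
\E\bigl|\{i\in[n]:\theta_{T_n}(i)>\epsilon\}\bigr|=\sum_{i\in[n]}\P(\theta_{T_n}(i)>\epsilon)=n\,\P(\theta_{T_n}(i^*)>\epsilon),
\]
so a polynomial rate $\P(\theta_{T_n}(i^*)>\epsilon)=O(n^{-\gamma})$ for the \emph{single} uniformly random starting vertex already yields $\E N_n=O(n^{1-\gamma})$, and Markov finishes. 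The paper proceeds exactly this way: it reuses the Step~1 coupling from Theorem~\ref{thmMain} (Lemma~\ref{lemStep1Main}) for $i^*$ alone, which costs $O(n^{\zeta-1/2})+\P(\Omega_{n,1}^C)$, and then derives the needed polynomial tail for the tree belief $\hat\vartheta_{T_n}(\phi)$.

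Your concentration step is also off target. The quantities $q_s(i)$ are deterministic given the graph, so ``Hoeffding on $\sum_s q_s(i)$'' has no independent summands to work with; the randomness you need to concentrate is twofold---the Bernoulli signals and the random tree---and your proposal conflates them. The paper separates them cleanly: first condition on the tree $\T$ and apply a Hoeffding bound to the i.i.d.\ signals, giving $\P(|\hat\vartheta_{T_n}(\phi)-\E[\hat\vartheta_{T_n}(\phi)\mid\T]|>\epsilon)=O(e^{-2\epsilon^2 T_n})=O(n^{-2\epsilon^2\mu})$; second, control $\P(\E[\hat\vartheta_{T_n}(\phi)\mid\T]>\epsilon)$ via an Azuma-type martingale argument across generations of the branching process, which is where the exponent $\mu(\epsilon\eta(1-p)/\theta)^2$ actually arises. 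The assumption $p<1$ is used in this second step (to make the per-generation contraction strict so the Azuma variance term is $O(T_n)$ rather than $O(T_n^3)$), not to survive a union bound over $n$ starting vertices as you suggest.
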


We reiterate that $\zeta < 1/2$ by \ref{assGraphHorizon} and $\mu , \kappa > 0$ by the theorem statement. Hence, $\min \{ (1/2) - \zeta , \mu ( \epsilon \eta (1-p) / \theta )^2 / 16 , \kappa \} > 0$, so one can choose $k < 1$ in Theorem \ref{thmSec} to show that the size of the non-convergent set of agents vanishes relative to $n$. We suspect that such a result is the best one could hope for; in particular, we suspect that showing \textit{all} agent estimates converge to zero is impossible. This is in part because our assumptions do not preclude the graph from being disconnected. Hence, there may be small connected components composed of agents but no bots; in such components, agent estimates will converge to $\theta$ (not zero). Additionally, while the lower bound for $k$ in Theorem \ref{thmSec} is somewhat unwieldy, certain terms are easily interpretable: the bound sharpens as $\eta$ grows (i.e.\ as agents place less weight on their unbiased signals), as $p$ decays (i.e.\ as the number of bots grows), and as $\theta$ decays (i.e.\ as signals are more likely to be zero, pushing estimates to zero).

As for Theorem \ref{thmMain}, the proof of Theorem \ref{thmSec} is outlined in Appendix \ref{appMainProofOutline} with details provided in Appendix \ref{appMainProofDetails}. The crux of the proof involves obtaining a sufficiently fast rate for the convergence in Theorem \ref{thmMain}; namely, we show that for some $\gamma > 0$, $\P ( \theta_{T_n}(i^*) > \epsilon ) = O(n^{-\gamma})$.\footnote{One may wonder why we derive a separate bound for Theorem \ref{thmSec}, since we have already bounded $\P ( \theta_{T_n}(i^*) > \epsilon )$ to prove Theorem \ref{thmMain}. The reason for this is that the bound for Theorem \ref{thmMain} does not decay quickly enough as $n \rightarrow \infty$ to prove Theorem \ref{thmSec}; on the other hand, the bound for Theorem \ref{thmSec} does not decay at all as $n \rightarrow \infty$ for the case $T_n(1-p_n) \rightarrow [0,\infty)$ and therefore cannot be used for all cases of Theorem \ref{thmMain}. See Appendix \ref{appStep2compare} for details.} At a high level, obtaining such a bound requires bounding three probabilities by $O(n^{-\gamma})$, which also helps explain the stronger assumptions of Theorem \ref{thmSec}:
\begin{itemize}
\item As for Theorem \ref{thmMain}, we first locally approximate the graph construction with a branching process so as to analyze the estimates on a tree. Here strengthening \ref{assGraphDegSeq} with $\P(\Omega_{n,1}) = O(n^{-\kappa})$ is necessary to ensure this approximation fails with probability at most $O(n^{-\gamma})$.
\item To analyze the estimates on a tree, we first condition on the random tree structure and treat the estimate as a weighted sum of i.i.d.\ signals using an approach similar to Hoeffding's inequality. Namely, we obtain the Hoeffding-like tail $O(e^{-2 \epsilon^2 T_n})$; strengthening \ref{assBranchHorizon} with $T_n \geq \mu \log n$ is necessary to show this tail is $O(e^{-2 \epsilon^2 \mu \log n}) = O(n^{-2 \epsilon^2 \mu}) = O(n^{-\gamma})$.
\item Finally, after conditioning on the tree structure, we show this structure is close to its mean. More specifically, letting $\E [ \hat{\vartheta}_{T_n}(\phi) | \T ]$ denote the expected estimate for the root node in the tree conditioned on the random tree structure (see Appendix \ref{appMainProofOutline} for details), we show
\begin{equation}
\P( \E [ \hat{\vartheta}_{T_n}(\phi) | \T ] > \epsilon) = O \left( n^{-\gamma} \right) .
\end{equation}
Note the only source of randomness in $\E [ \hat{\vartheta}_{T_n}(\phi) | \T ]$ is the random tree; because this tree is recursively generated, it has a martingale-like structure that can be analyzed using an approach similar to the Azuma-Hoeffding inequality for bounded-difference martingales. Here we require $\P(\Omega_{n,2}) = O(n^{-\kappa})$ to ensure the degree sequence is ill-behaved with probability at most $O(n^{-\gamma})$; we also require $p_n \rightarrow p < 1$ in this step (and only in this step).
\end{itemize}

We now address the most notable difference between Theorems \ref{thmMain} and \ref{thmSec}; namely, that the latter only applies when $p_n \rightarrow p < 1$. We believe this reflects a fundamental distinction between the cases $p_n \rightarrow p < 1$ and $p_n \rightarrow 1$ and is \textit{not} an artifact of our analysis. An intuitive reason for this is that more bots are present in the former case, so fewer random signals are present (recall we model bot signals as being deterministically zero). As a result, $\theta_{T_n}(i^*)$ is ``less random'', so its concentration around its mean is stronger. Towards a more rigorous explanation, we first note that Appendix \ref{appExtendSecThmToOthers} provides the following condition for extending Theorem \ref{thmSec} to other cases of $p_n$:
\begin{equation} \label{eqSuffCond}
\exists\ \gamma' > 0\ s.t.\ \P( | \E [ \hat{\vartheta}_{T_n}(\phi) | \T ] - L(p_n) |  > \epsilon) = O \left( n^{-\gamma'} \right) ,
\end{equation}
where $L(p_n)$ is the limit from Theorem \ref{thmMain} based on the relative asymptotics of $T_n$ and $p_n$, i.e.\
\begin{equation}
L(p_n) = \begin{cases} \theta , & T_n(1-p_n) \xrightarrow[n \rightarrow \infty]{} 0 \\  \theta (1-e^{-c \eta})/(c \eta) , & T_n(1-p_n) \xrightarrow[n \rightarrow \infty]{} c \in (0,\infty) \\ 0 , & T_n(1-p_n) \xrightarrow[n \rightarrow \infty]{} \infty \end{cases} .
\end{equation}

It is the convergence of $| \E [ \hat{\vartheta}_{T_n}(\phi) | \T ] - L(p_n) |$ in \eqref{eqSuffCond} that we suspect is fundamentally different in the cases $p_n \rightarrow p < 1$ and $p_n \rightarrow 1$. To illustrate this, we provide empirical results in Figure \ref{figSecThmStats}. In the leftmost plot, we show $1-\tilde{p}_n$ versus $T_n$; here the plot is on a log-log scale, so a line with slope $m$ means $(1-\tilde{p}_n) \propto T_n^m$. Hence, we are comparing four cases: $m \approx 0$, so that $p_n \approx p < 1$ (blue circles); $m \approx -0.5$, so that $T_n(1-p_n) \rightarrow \infty$ and $p_n \rightarrow 1$ (orange squares); $m \approx -1$, so that $T_n(1-p_n) \rightarrow 1$ (yellow diamonds); and $m \approx -1.5$, so that $T_n(1-p_n) \rightarrow 0$ (purple triangles). The second plot reflects the corresponding cases of $L(p_n)$: $\E [ \hat{\vartheta}_{T_n}(\phi) | \T ]$ decays to zero in the first two cases, grows towards $\theta = 0.5$ in the fourth case, and approaches an intermediate limit in the third case. The final two plots illustrate the convergence (or lack thereof) in \eqref{eqSuffCond}. Here the empirical mean of the error term $| \E [ \hat{\vartheta}_{T_n}(\phi) | \T ] - L(p_n) |$ decays quickly for the first case but decays more slowly (or is even non-monotonic) in the other cases. More strikingly, the empirical variance of this error term is several orders of magnitude smaller in the first case. This suggests that $\P( | \E [ \hat{\vartheta}_{T_n}(\phi) | \T ] - L(p_n) |  > \epsilon)$ decays much more rapidly in the case $p_n \rightarrow p < 1$, which is why we believe this is the only case for which \eqref{eqSuffCond} is satisfied.

\begin{figure}
\centering
\includegraphics[height=1.9in]{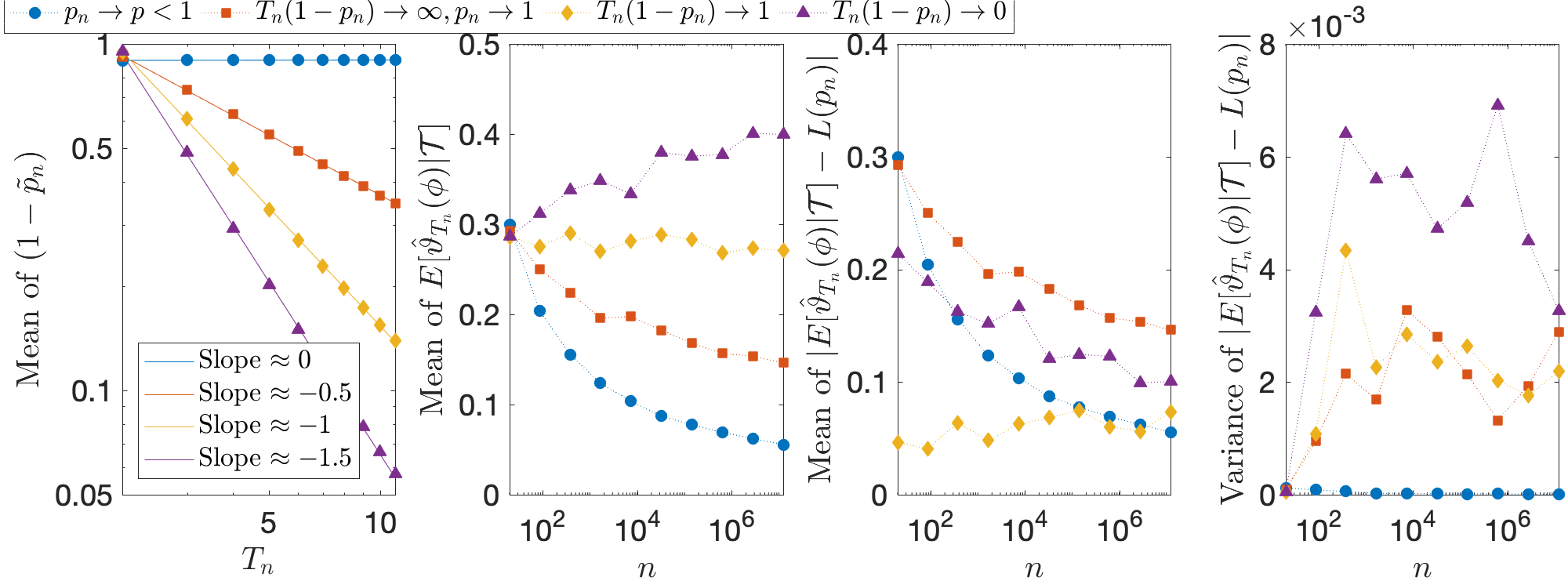}
\caption{Empirical comparison of the cases $p_n \rightarrow p < 1$, $T_n(1-p_n) \rightarrow \infty$ with $p_n \rightarrow 1$, $T_n(1-p_n) \rightarrow 1$, and $T_n(1-p_n) \rightarrow 0$ (leftmost plot). On average, $\E [ \hat{\vartheta}_{T_n}(\phi) | \T ]$ approaches the corresponding limit from Theorem \ref{thmMain} in all cases (second plot from left). However, the error term $| \E [ \hat{\vartheta}_{T_n}(\phi) | \T ] - L(p_n) |$ behaves markedly differently in the case $p_n \rightarrow p < 1$, with a faster decay on average (second plot from right) and a strikingly lower variance (rightmost plot); we believe this is why Theorem \ref{thmSec} only applies in this case.} \label{figSecThmStats}
\end{figure}

In addition to the summary statistics shown in Figure \ref{figSecThmStats}, we also show histograms of error term $| \E [ \hat{\vartheta}_{T_n}(\phi) | \T ] - L(p_n) |$ across the 400 trials in Figure \ref{figSecThmPmfs}. As discussed above, this term must converge to zero (in probability) at a sufficiently fast rate to prove Theorem \ref{thmSec}. In Figure \ref{figSecThmPmfs}, these histograms appear to converge quickly to a point mass at zero in the case $p_n \rightarrow p < 1$; in other cases, such behavior does \textit{not} occur, further suggesting a fundamental difference between the cases.

\begin{figure}
\centering
\includegraphics[width=5in]{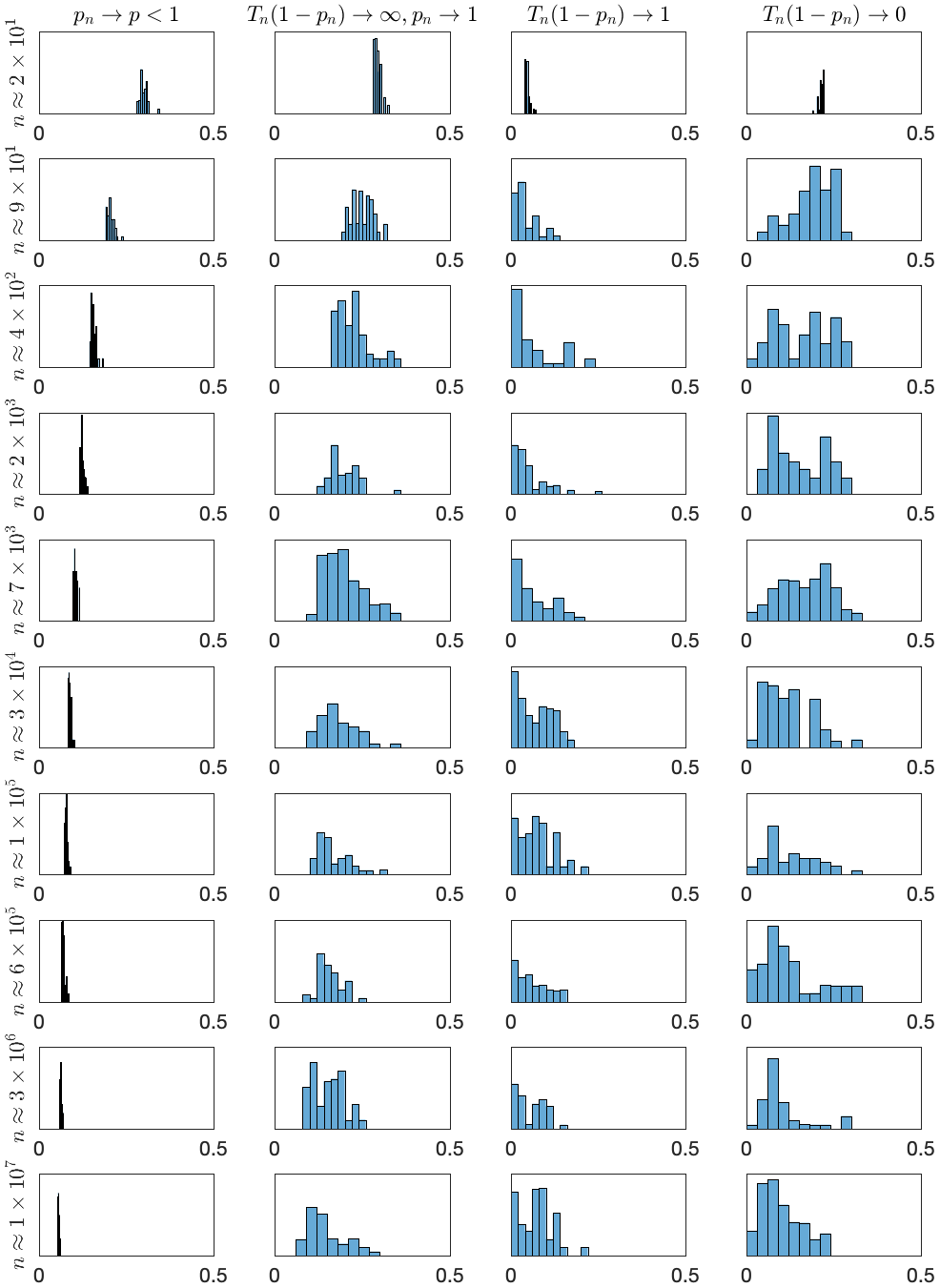}
\caption{Histograms of 400 samples of the error term $| \E [ \hat{\vartheta}_{T_n}(\phi) | \T ] - L(p_n) |$. When $p_n \rightarrow p < 1$, the histogram appears to decay quickly to a point mass on zero; in other cases, this does not occur.} \label{figSecThmPmfs}
\end{figure}

We note here that basic workflow of the experiment above proceeded as follows:
\begin{itemize}
\item Choose a sequence of time horizons $T_n$ that increase linearly, then set $n$ accordingly.
\item Realize the degrees $\{ d_{out}(i) , d_{in}^A(i) , d_{in}^B(i) \}_{i \in [n]}$ after selecting $n$.
\item Define the empirical distributions $f_n, f_n^*$ using the degrees as in \eqref{eqEmpDist}.
\item Evaluate quantity of interest $\E [ \hat{\vartheta}_{T_n}(\phi) | \T ]$ empirically via \eqref{eqMeanTreeBeliefGivenTree} using $f_n, f_n^*$.
\end{itemize}
We repeated this experiment 400 times to obtain 400 samples of $\E [ \hat{\vartheta}_{T_n}(\phi) | \T ]$; the plots in Figure \ref{figSecThmStats} show empirical means and variances across these 400 samples. We used the following parameters:
\begin{itemize}
\item We set $\eta = 0.9$ to emphasize the effect of the network.
\item We let $d_{in}^A(i) = 1 + \textrm{Poisson}(\lambda_A-1)\ \forall\ i \in [n]$, so that $\E [ d_{in}^A(i) ] = \lambda_A$; we choose $\lambda_A$ independent of $n$ so that $\E [ d_{in}^A(i) ] = O(1)$, as required by \ref{assGraphDegSeq}. In particular, we choose $\lambda_A = 2.1$.
\item After realizing $\{ d_{in}^A(i) \}_{i \in [n]}$, we assign one outgoing edge to each $i \in [n]$, then assign each of the remaining $\sum_{i \in [A]} d_{in}^A(i) - n$ outgoing edges independently and uniformly at random. Note that this implies $d_{in}^A(i), d_{out}(i) > 0$ and $\sum_{i \in [n]} d_{in}^A(i) = \sum_{i \in [n]} d_{out}(i)$, as required by \eqref{eqDegSeqBasicAss}.
\item We let $d_{in}^B(i) = \textrm{Poisson}(\lambda_B)$, with $\lambda_B = \lambda_A (1-p_n) / p_n$, so that
\begin{equation}
\E d_{in}^A(i) / ( \E d_{in}^A(i) + \E d_{in}^B(i) )= \lambda_A / ( \lambda_A+\lambda_B) = 1 / ( 1 + (1-p_n)/p_n ) = p_n .
\end{equation}
(This is not precisely what we desire, since \ref{assBranchDegSeq} assumes $p_n \approx \tilde{p}_n = \E_n [ \frac{d_{in}^A(v^*)}{  d_{in}^A(v^*) + d_{in}^B(v^*) } ]$ for $v^*$ sampled proportional to out-degree; however, as shown in the second plot in Figure \ref{figSecThmStats}, this empirically yields distinct cases rates of convergence for $(1-p_n) \rightarrow 0$.)
\item We compare four cases of $p_n$: $p_n = p$ and $p_n =  1 - c_i T_n^{(-i+1)/2}$ for $i \in \{2,3,4\}$, with $p$ and $c_i$ independent of $n$. Note that the three latter cases satisfy
\begin{equation}
(1-p_n)  \propto T_n^{(-i+1)/2} \in \left\{  T_n^{-1/2}  , T_n^{-1} ,  T_n^{-3/2} \right\} ,
\end{equation}
as shown in Figure \ref{figSecThmStats}. Here $p$ and $c_i$ were chosen via trial-and-error so that all four cases behaved roughly the same at the smallest value of $n$ (as in Figure \ref{figSecThmStats}). In particular, we chose
\begin{equation}
p = 0.9, \quad c_2 = 1.3, \quad c_3 = 1.9, \quad c_4 = 2.7 .
\end{equation}
\item We let $T_n \in \{2,3,\ldots,11\}$; here the minimum of 2 was chosen since $T_n = 1$ is a trivial case and the maximum of 11 was chosen due to computational limitations.
\item Given $T_n$, we let $n = \lceil \lambda_A^{2 T_n} \rceil$. Note that this implies $T_n \approx ( \log n ) / ( 2 \log \lambda_A )$, roughly the upper bound in \ref{assGraphHorizon}. With our choice of $T_n$ and $\lambda_A$, $n$ ranged from 20 to (roughly) 12 million.
\end{itemize}

\section{Proof of Theorems \ref{thmMain} and \ref{thmSec} (outline)} \label{appMainProofOutline}

The proofs of Theorems \ref{thmMain} and \ref{thmSec} proceed in two steps. First, we show that the graph construction can be locally approximated by a certain branching process. Second, we analyze the estimates of agents in the graph by instead analyzing the estimates of agents in the tree resulting from the branching process. We note that studying tree agent estimates rather than graph agent estimates is advantageous because the tree has a comparatively simple structure that is more amenable to analysis.

The first step is identical for both theorems, while the second step requires a different analysis for each theorem. In Appendix \ref{appBranchApproxProofOutline}, we outline the first step, and in Appendices \ref{appRootBeliefProofOutline} and \ref{appRootBelief2ProofOutline}, respectively, we outline the second step for Theorems \ref{thmMain} and \ref{thmSec}, respectively. To highlight the key ideas of our analysis, we defer many details to Appendix \ref{appMainProofDetails}; in particular, proofs pertaining to Appendices \ref{appBranchApproxProofOutline}, \ref{appRootBeliefProofOutline}, and \ref{appRootBelief2ProofOutline} , respectively, can be found in Appendices \ref{appBranchApproxProofDetails}, \ref{appRootBeliefProofDetails}, and \ref{appRootBelief2ProofDetails}, respectively. Finally, we note that throughout the analysis we use $\P_n$ and $\E_n$, respectively, to denote probability and expectation, respectively, conditioned on the degree sequence $\{ d_{out}(i) , d_{in}^A(i), d_{in}^B(i) \}_{i \in [n]}$.

\subsection{Branching process approximation (Step 1 for proofs of Theorems \ref{thmMain} and \ref{thmSec})} \label{appBranchApproxProofOutline}

We first show that the estimate of any agent in the graph depends (asymptotically) only on the structure of the agent's local neighborhood and on certain signals realized within this neighborhood. This will facilitate the definition of the branching process with which we will approximate the graph construction. Importantly, the graph agent's estimate will \textit{not} depend on the prior parameters $\alpha_0, \beta_0$ (asymptotically). This is necessary as we have not specified these priors (beyond assuming they are bounded by some $\bar{\alpha}, \bar{\beta}$ independent of $n$, as discussed in Section \ref{secLearnModel}). 

To begin, we require some notation. Let $P$ denote the graph's column-normalized adjacency matrix, i.e.\ $P(i,j) = | \{ i' \rightarrow j' \in E : i' = i, j' = j \} | / d_{in}(j)$, and set $Q = (1-\eta) I + \eta P$, where $I$ is the identity matrix of appropriate dimension. (Recall from Section \ref{secGraphModel} that $E$ is in general a multi-set; hence, the numerator in $P(i,j)$ may exceed 1.) Next, for $t \in \N$, let $s_t$ denote the collection of signals $\{ s_t(i) \}_{i \in A \cup B}$ in vector form. Finally, for $i \in A$ define
\begin{equation} \label{eqVarThetaGraph}
\vartheta_{T_n}(i) = \frac{1}{T_n} \sum_{t = 0}^{T_n-1} s_{T_n-t} Q^t e_i^{\trans} .
\end{equation}
We note that \eqref{eqVarThetaGraph} can be rewritten as
\begin{equation} \label{eqVarThetaGraphExpand}
\vartheta_{T_n}(i) = \frac{1}{T_n} \sum_{t = 0}^{T_n-1} \sum_{j \in A} s_{T_n-t}(j) e_j Q^t e_i^{\trans} ,
\end{equation}
where we have used the fact that $s_t(j) = 0\ \forall\ t \in \N, j \in B$. From this expression, it is clear that $\vartheta_{T_n}(i)$ only depends on the structure of the $T_n$-step neighborhood into $i$ (since only this sub-graph affects the $e_j Q^t e_i^{\trans}$ terms) and on certain signals within this neighborhood, as mentioned above. We can then establish the following. 
\begin{lemma} \label{lemGraphBeliefApprox}
Given \ref{assBranchHorizon}, $\forall\ \epsilon > 0\ \exists\ N$ s.t. $\forall\ n \geq N$, $| \theta_{T_n}(i) - \vartheta_{T_n}(i) | < \epsilon\ a.s.\ \forall\ i \in A$.
\end{lemma}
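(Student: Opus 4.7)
Proof proposal for Lemma \ref{lemGraphBeliefApprox}.

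The plan is to write the $\alpha_t$ and $\beta_t$ updates in matrix form, unroll them from $t=0$ to $t=T_n$, and compare the resulting expression for $\theta_{T_n}(i)=\alpha_{T_n}(i)/(\alpha_{T_n}(i)+\beta_{T_n}(i))$ with the definition of $\vartheta_{T_n}(i)$ in \eqref{eqVarThetaGraph}. Treating all quantities as row vectors on $A\cup B$, the belief update \eqref{eqParamUpdateInitial} (together with the alternative bot update from Section \ref{secLearnModel}) can be rewritten as $\alpha_t=\alpha_{t-1}Q+(1-\eta)s_t$ and $\beta_t=\beta_{t-1}Q+(1-\eta)(\mathbf{1}-s_t)$, where $\mathbf{1}$ is the all-ones row vector. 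Unrolling yields
\begin{equation}
\alpha_{T_n}=\alpha_0 Q^{T_n}+(1-\eta)\sum_{t=1}^{T_n}s_t Q^{T_n-t},\qquad \alpha_{T_n}+\beta_{T_n}=(\alpha_0+\beta_0)Q^{T_n}+(1-\eta)\sum_{t=1}^{T_n}\mathbf{1} Q^{T_n-t}.
\end{equation}

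The next step uses that $Q=(1-\eta)I+\eta P$ is column-stochastic by construction of $P$, so $\mathbf{1}Q^k=\mathbf{1}$ for all $k\geq 0$. Evaluating the sum in the denominator at coordinate $i$ therefore collapses to $(1-\eta)T_n$, and using the bounds $\alpha_0(j)\leq\bar{\alpha}$, $\beta_0(j)\leq\bar{\beta}$ (with $\bar{\alpha},\bar{\beta}$ independent of $n$) together with $\mathbf{1}Q^{T_n}e_i^{\trans}=1$, the prior terms $\alpha_0 Q^{T_n}e_i^{\trans}$ and $(\alpha_0+\beta_0)Q^{T_n}e_i^{\trans}$ are bounded, almost surely, by constants $\epsilon_A\in[0,\bar{\alpha}]$ and $\epsilon_B\in[0,\bar{\alpha}+\bar{\beta}]$ that do not depend on $n$ or $i$. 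Setting $A_n(i)=(1-\eta)\sum_{t=1}^{T_n}s_t Q^{T_n-t}e_i^{\trans}$ and $B_n=(1-\eta)T_n$, this shows that $\theta_{T_n}(i)=(A_n(i)+\epsilon_A)/(B_n+\epsilon_B)$, while a change of index in \eqref{eqVarThetaGraph} gives $\vartheta_{T_n}(i)=A_n(i)/B_n$.

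Finally, since $s_t(j)\in[0,1]$ and $\mathbf{1}Q^{T_n-t}e_i^{\trans}=1$, we have $0\leq A_n(i)\leq B_n$, so
\begin{equation}
|\theta_{T_n}(i)-\vartheta_{T_n}(i)|=\left|\frac{\epsilon_A B_n-A_n(i)\epsilon_B}{B_n(B_n+\epsilon_B)}\right|\leq\frac{\epsilon_A}{B_n}+\frac{\epsilon_B}{B_n}\leq\frac{2\bar{\alpha}+\bar{\beta}}{(1-\eta)T_n}.
\end{equation}
This bound is uniform over $i\in A$ and independent of the realizations of the degree sequence and signals, and it tends to $0$ by \ref{assBranchHorizon}, which yields the claim.

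I do not expect a main obstacle: the lemma reduces to unrolling a linear recursion and exploiting column-stochasticity of $Q$. The only subtle point is keeping track that the alternative bot definition from Section \ref{secLearnModel} lets us use a single matrix equation over $A\cup B$, which is what makes $\mathbf{1}Q=\mathbf{1}$ valid and thereby makes the denominator collapse cleanly to $(1-\eta)T_n+O(1)$.
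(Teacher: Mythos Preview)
Your proposal is correct and follows essentially the same approach as the paper: unroll the linear recursion for $\alpha_t,\beta_t$ in matrix form, use column stochasticity of $Q$ to collapse the denominator to $(1-\eta)T_n+O(1)$, and bound the prior contributions by $\bar{\alpha},\bar{\beta}$. The only cosmetic difference is that the paper sandwiches $\theta_{T_n}(i)$ between upper and lower bounds (see \eqref{eqThetaVarThetaUpperAndLower}) while you compute and bound the difference directly; also note your $\epsilon_A,\epsilon_B$ do in fact depend on $i$ and $n$---only their \emph{bounds} $\bar{\alpha},\bar{\alpha}+\bar{\beta}$ are uniform---but this does not affect the argument.
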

\begin{proof}
See Appendix \ref{appProofGraphBeliefApprox}.
\end{proof} 

Before defining the aforementioned branching process, we formally define the graph construction described in Section \ref{secGraphModel}. For this, we will use the following additional notation.
\begin{itemize}
\item We let $A_l, l \in \N_0$ denote the set of agents at distance $l$ from the initial agent $i^*$, i.e.\ $i \in A_l$ means a path from $i$ to $i^*$ of length $l$ exists, but no shorter path exists (hence, $A_0 = \{i^*\}$, $A_1 = N_{in}(i^*)$, etc.). Similarly, we let $B_l , l \in \N_0$ denote the set of bots at distance $l$ from $i^*$.
\item We let $\{ (i,j) : j  \in [d_{out}(i)] \}$ denote the set of outstubs belonging to $i \in A$; we let $O_A$ denote the set of all such outstubs.
\item For each $(i,j) \in O_A$, we define a label $g( (i,j) ) \in \{1,2,3\}$ as follows:
\begin{equation} \label{eqOutstubLabels}
g( (i,j) ) = \begin{cases} 1, & \textrm{$i$ does not yet belong to graph} \\ 2, & \textrm{$i$ belongs to graph but $(i,j)$ has not been paired} \\ 3, & \textrm{$i$ belongs to graph and $(i,j)$ has been paired} \end{cases} .
\end{equation}
We will explain the utility of these labels shortly.
\end{itemize}
With this notation in place, we present the formal graph construction as Algorithm \ref{algGraph}. We offer some further comments to help explain the algorithm:
\begin{itemize}
\item The algorithm takes as input the degree sequence $\{ d_{out}(i) , d_{in}^A(i) , d_{in}^B(i) \}_{i \in A}$, which is used in Line \ref{algGraphInitSet} to define $O_A$. Also in Line \ref{algGraphInitSet}, we label all outstubs as 1 (since no agents have been added to the graph), and we initialize the set of bots to the empty set.
\item In Line \ref{algGraphFirstNode}, we sample the agent $i^*$ from which the graph construction begins. Since $i^*$ then belongs to the graph, we change the labels of its outstubs to 2.
\item For the remainder of the algorithm, we proceed in a breadth-first-search fashion, looping over distance $l$ and agents $i$ at distance $l$ from $i^*$. For each such agent, we do the following:
\begin{itemize}
\item For each of the $d_{in}^A(i)$ instubs of $i$ intended for pairing with agent outstubs, we sample an agent outstub uniformly (Line \ref{algGraphInitSample}), resampling until an unpaired outstub (i.e.\ one with label 1 or 2) has been found (Line \ref{algGraphResample}). Upon finding such an outstub, denoted $(i',j')$, we pair it with $i$'s instub to form an edge from $i'$ to $i$ (Line \ref{algGraphAgentPair}). Note that $g((i',j')) = 1$ implies $i'$ was added to the graph when edge $i' \rightarrow i$ was formed; hence, because $i \in A_l$, $i'$ is at distance $l+1$ from $i^*$ and must be added to $A_{l+1}$ (Line \ref{algGraphUpdate1}). Finally, we update the labels of the outstubs of $i'$ via \eqref{eqOutstubLabels} (Lines \ref{algGraphUpdate1}-\ref{algGraphUpdate2}). (Line \ref{algGraphTau} will be used in the branching process approximation and will be discussed shortly.)
\item For each of the $d_{in}^B(i)$ instubs of $i$ intended for pairing with bot outstubs, we add a new bot with a self-loop and an unpaired outstub to the set of bots, updating $B_{l+1}$ accordingly (Line \ref{algGraphAddBot}), and then add an edge from the new bot to $i$ (Line \ref{algGraphBotPair}). Note here that $B = \emptyset$ at the start of the construction; it follows that the $k$-th bot added to the graph is $n+k+1$, so $B = n + [ \sum_{i \in A} d_{in}^B(i) ]$ is the set of bots at the end of the construction.
\item Finally, if all agent outstubs have been paired, the construction terminates (Line \ref{algGraphTerm}).
\end{itemize}
\end{itemize}

\begin{algorithm}[t]
\DontPrintSemicolon
\caption{$\texttt{Graph-Construction}$ } \label{algGraph}


Set $O_A = \{ (i,j) : i \in A, j \in [d_{out}(i)] \}$, $g( (i,j) ) = 1\ \forall\ (i,j) \in O_A$, $B = \emptyset$ \label{algGraphInitSet}

Sample $i^*$ uniformly from $A$; set $g( (i^*,j) ) = 2\ \forall\ j \in [ d_{out}(i^*) ]$; set $A_0 = \{ i^* \}$ \label{algGraphFirstNode}

\For{$l = 0$ \KwTo $\infty$}{

Set $A_{l+1} =B_{l+1} = \emptyset$

\For{$i \in A_{l}$}{


\For{$j = 1$ \KwTo $d_{in}^A(i)$}{

Sample $(i',j')$ from $O_A$ uniformly \label{algGraphInitSample}

\lIf{$g( (i',j') ) \neq 1$ \KwAnd $\tau_n = \infty$}{set $\tau_n = l$} \label{algGraphTau}

\lWhile{$ g((i',j')) = 3$}{sample $(i',j')$ from $O_A$ uniformly} \label{algGraphResample}

Add directed edge from $i'$ to $i$ \label{algGraphAgentPair}

\lIf{$g( (i',j') ) = 1$}{set $A_{l+1} = A_{l+1} \cup \{ i' \}$, $g ( (i',j') ) = 3$, $g( (i',j'') ) = 2\ \forall\ j'' \in [ d_{out}(i') ] \setminus \{ j' \}$} \label{algGraphUpdate1}
\lElseIf{$g( (i',j') ) = 2$}{set $g( (i',j') ) = 3$} \label{algGraphUpdate2}

}

\For{$j = 1$ \KwTo $d_{in}^B(i)$}{

Add bot $b = n + |B| + 1$ with self-loop and unpaired outstub, set $B = B \cup \{b\}, B_{l+1} = B_{l+1} \cup \{b\}$ \label{algGraphAddBot}

Add directed edge from $b$ to $i$ \label{algGraphBotPair}

}

\lIf{$g((i',j')) = 3\ \forall\ (i',j') \in O_A$}{\KwReturn} \label{algGraphTerm}

}

}

\end{algorithm}

We now return to discuss Line \ref{algGraphTau} of Algorithm \ref{algGraph}. Here $\tau_n$ denotes the first iteration at which an outstub with label 2 or 3 is sampled for pairing with an instub. Put differently, $\tau_n > l$ means that for the first $l$ iterations of the construction, only outstubs with label 1 have been sampled. This has two consequences. First, no edges have been added between two nodes both at distance $\leq l$ from $i^*$, i.e.\ the $l$-step incoming neighborhood of $i^*$ is a tree (except for the self-loops attached to bots). Second, no resampling of outstubs has occurred (Line \ref{algGraphResample}); this implies that the outstub $(i',j')$ paired in Line \ref{algGraphAgentPair} is chosen uniformly from $O_A$, so the degrees $(d_{out}(i'), d_{in}^A(i'), d_{in}^B(i'))$ of $i'$ are distributed according to the out-degree distribution $f_n$ defined in \eqref{eqEmpDist}.

These observations motivate a tree construction that we define next. In particular, we will construct a tree (except for bot self-loops) with edges pointing towards the root. Agents will be added to the tree with degrees sampled from $f_n$, except for the root node, whose degrees are sampled from $f_n^*$ (also defined in \eqref{eqEmpDist}), corresponding to the degrees of $i^*$ in the graph construction.

The tree construction requires further notation. First, we let $\hat{A}_l$ ($\hat{B}_l$, respectively) denote agents (bots, respectively) at distance $l$ from the tree's root. We also set $\hat{A} = \cup_{l=0}^{\infty} \hat{A}_l, \hat{B} = \cup_{l=0}^{\infty} \hat{B}_l$. (Here and moving forward, we use $\hat{\cdot}$ to distinguish tree-related objects from similarly-defined graph-related ones.) At times, we will use branching process terminology and e.g.\ refer to $\hat{A}_l$ as the $l$-th \textit{generation} of agents. We let $\phi$ denote the root node, so that $\hat{A}_0 = \{ \phi \}$. We will denote generic node in $\hat{A}_l \cup \hat{B}_l$ as $\i \in \N^l$; here $\i = (i_1,\ldots,i_l)$ encodes the ancestry of $\i$, i.e.\ $(i_1,\ldots,i_l)$ is the child of $(i_1,\ldots,i_{l-1})$, who is in turn the child of $(i_1,\ldots,i_{l-2})$, etc. Finally, for such $\i$ and for $j \in \N$, $(\i,j) = (i_1,\ldots,i_l,j)$ is the concatenation operation and $\i | j = (i_1,\ldots,i_j)$ denotes $\i$'s ancestor in generation $j$, with $\i | 0 = \phi$ by convention (note also that $\i | l = \i$ for such $\i$).

With this notation in place, we define the tree construction in Algorithm \ref{algTree}. We offer several more explanatory comments:
\begin{itemize}
\item Lines \ref{algTreeInitWalk} and \ref{algTreeUpdateWalkStart}-\ref{algTreeUpdateWalkEnd} define a particular random walk that will be used in Appendix \ref{appRootBeliefProofOutline}; they do not affect the tree structure and we defer further explanation to Appendix \ref{appRootBeliefProofOutline}.
\item As mentioned above, the root node $\phi$ has degrees sampled from $f_n^*$ (Line \ref{algTreeInitStruct}), while all other agents have degrees sampled from $f_n$ (Line \ref{algTreeDegSample}).
\item In Line \ref{algTreeAgentPair}, a directed edge is added from $(\i,j)$ to $\i$; the other $d_{out}((\i,j))-1$ outstubs of $(\i,j)$ are left unpaired so that the tree structure is preserved (except for bot self-loops).
\item At the conclusion of the $l$-th iteration, $\i \in \hat{A}_l$ has incoming neighbor set (offspring, in the branching process terminology) $\{ (\i,j) : j \in [ d_{in}^A(\i) + d_{in}^B(\i) ] \}$. More specifically, the subset $(\i,1),\ldots,(\i,d_{in}^A(\i) )$ of $\i$'s incoming neighbors are agents (Line \ref{algTreeAgentPair}), while the subset $(\i,d_{in}^A(\i)+1),\ldots,(\i,d_{in}^A(\i)+d_{in}^B(\i))$ of $\i$'s incoming neighbors are bots (Line \ref{algTreeAddBot}).
\item Unlike the graph construction, the tree construction continues indefinitely, yielding an infinite tree (except for bot self-loops) with edges pointing towards the root node $\phi$.
\end{itemize}

\begin{algorithm}[t]
\DontPrintSemicolon
\caption{$\texttt{Tree-Construction}$ } \label{algTree}


Define $f_n, f_n^*$ via \eqref{eqEmpDist}, set $\hat{A}_0 = \{ \phi \}$, sample $(d_{out}(\phi),d_{in}^A(\phi),d_{in}^B(\phi))$ from $f_n^*$ \label{algTreeInitStruct}

Set $X_0^1 = X_0^2 = \phi$ \label{algTreeInitWalk}

\For{$l = 0$ \KwTo $\infty$}{

Set $\hat{A}_{l+1} =\hat{B}_{l+1} = \emptyset$

\For{$\i \in \hat{A}_{l}$}{ 

\For{$k \in \{1,2\}$}{ \label{algTreeUpdateWalkStart}

\If{$X_l^k = \i$}{

Sample $j^*$ from $[ d_{in}^A(\i) + d_{in}^B(\i) ]$ uniformly, set $X_{l+1}^k = (\i,j^*)$ \label{algTreeNextStep}

\lIf{$j^* > d_{in}^A(\i)$}{set $X_{l'}^k = (\i,j^*)\ \forall\ l' \in \{l+2,l+3,\cdots\}$} \label{algTreeAbsorb}

}

} \label{algTreeUpdateWalkEnd}

\For{$j = 1$ \KwTo $d_{in}^A(\i)$}{

Sample $(d_{out}((\i,j)),d_{in}^A((\i,j)),d_{in}^B((\i,j)))$ from $f_n$ \label{algTreeDegSample}

Add directed edge from $(\i,j)$ to $\i$, set $\hat{A}_{l+1} = \hat{A}_{l+1} \cup \{ (\i,j) \}$ \label{algTreeAgentPair}

}

\For{$j = 1$ \KwTo $d_{in}^B(\i)$}{

Add bot $b = (\i,d_{in}^A(\i)+j)$ with self-loop and unpaired outstub, set $\hat{B}_{l+1} = \hat{B}_{l+1} \cup \{b\}$ \label{algTreeAddBot}

Add directed edge from $b$ to $i$ \label{algTreeBotPair}

}

}

}

\end{algorithm}

Having defined the tree construction, we also define $\hat{\vartheta}_{T_n}(\phi)$ as in \eqref{eqVarThetaGraph} but using the tree from Algorithm \ref{algTree} instead of the graph from Algorithm \ref{algGraph}. Specifically, we let
\begin{equation} \label{eqVarThetaTree}
\hat{\vartheta}_{T_n}(\phi) = \frac{1}{T_n} \sum_{t = 0}^{T_n-1} \hat{s}_{T_n-t} \hat{Q}^t e_{\phi}^{\trans} = \frac{1}{T_n} \sum_{t = 0}^{T_n-1} \sum_{\i \in \hat{A}} \hat{s}_{T_n-t}(\i) e_{\i} \hat{Q}^t e_{\phi}^{\trans} ,
\end{equation}
where $\hat{s}_t(\i) \sim \textrm{Bernoulli}(\theta)\ \forall\ t \in \N, \i \in \hat{A}$; $\hat{s}_t(\i) = 0\ \forall\ t \in \N, \i \in \hat{B}$; $\hat{Q} = (1-\eta) I + \eta \hat{P}$; and $\hat{P}$ is the column-normalized adjacency matrix of the tree from Algorithm \ref{algTree}. We pause to note that
\begin{equation} \label{eqBeliefIn01}
0 \leq \hat{\vartheta}_{T_n}(\phi) \leq \frac{1}{T_n} \sum_{t=0}^{T_n-1} \mathbf{1} \hat{Q}^t e_{\phi}^{\trans} = 1 ,
\end{equation}
where the first inequality holds since \eqref{eqVarThetaTree} is a sum of nonnegative terms, the second follows since $\sum_{\i \in \hat{A}} \hat{s}_{T_n-t}(\i) e_{\i} \leq \mathbf{1}$ component-wise (where $\mathbf{1}$ is the all ones vectors) and since $\hat{Q}^t e_{\phi}^{\trans}$ is element-wise nonnegative, and the equality holds by column stochasticity of $\hat{Q}$.

We can now state Lemma \ref{lemGraphTreeRelation}, which relates the estimate of a uniformly random agent in the graph with the estimate of the root node in the tree. For the first statement in the lemma, we argue that, conditioned on $\tau_n > T_n$, the $T_n$-step neighborhood of $i^*$ in the graph and the $T_n$-step neighborhood of $\phi$ in the tree are constructed via the same procedure; since the signals are defined in the same manner as well, this implies $\vartheta_{T_n}(i^*)$ and $\hat{\vartheta}_{T_n}(\phi)$ have the same distribution. The second statement of the lemma says that the condition $\tau_n > T_n$ occurs with high probability; it is essentially implied by \cite[Lemma 5.4]{chen2017generalized}. We note that the assumptions \ref{assGraphDegSeq} and \ref{assGraphHorizon} are required for this second statement to hold, and are standard assumptions needed to locally approximate a sparse random graph construction with a branching process. Finally, we recall $\zeta < 1/2$ by \ref{assGraphHorizon}, which is why the limit shown in Lemma \ref{lemGraphTreeRelation} holds.
\begin{lemma} \label{lemGraphTreeRelation}
Assume \ref{assGraphDegSeq} and \ref{assGraphHorizon} hold, and let $\stackrel{\mathscr{D}}{=}$ denote equality in distribution. Then
\begin{equation}
\vartheta_{T_n}(i^*) | \{ \tau_n > T_n \} \stackrel{\mathscr{D}}{=} \hat{\vartheta}_{T_n}(\phi) , \quad \P(\tau_n \leq T_n | \Omega_{n,1} ) = O \left( n^{ \zeta - 1/2 } \right) \xrightarrow[n \rightarrow \infty]{} 0.
\end{equation}
\end{lemma}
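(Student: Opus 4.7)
The plan is to prove the two statements separately, using an explicit step-by-step coupling between Algorithms \ref{algGraph} and \ref{algTree} for the distributional equality, and a birthday-paradox-style union bound for the probability estimate.

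For the first statement, the key observation is that conditioned on $\{\tau_n > T_n\}$, the graph construction behaves ``cleanly'' through its first $T_n$ BFS iterations: every outstub sampled in Line \ref{algGraphInitSample} has label $1$, so (i) no resampling in Line \ref{algGraphResample} is ever triggered, and (ii) every pairing brings a fresh agent into the graph. Consequence (i) ensures the outstubs are genuinely sampled uniformly from $O_A$, which (by the definition of $f_n$ in \eqref{eqEmpDist}) implies that each newly-introduced agent has degree vector distributed as $f_n$; consequence (ii) ensures that the $T_n$-step incoming neighborhood of $i^*$ is a tree, modulo the self-loops on bots. Since $i^*$ itself has degrees distributed as $f_n^*$ (as it is sampled uniformly from $A$) and bot instubs are attached independently in both constructions, the $T_n$-step incoming neighborhood of $i^*$ in $G$ has exactly the same joint distribution as the first $T_n$ generations of the tree produced by Algorithm \ref{algTree}. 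Because the signals $\{s_t(i)\}$ and $\{\hat{s}_t(\i)\}$ are i.i.d.\ Bernoulli$(\theta)$ in both settings, and because $\vartheta_{T_n}(i^*)$ and $\hat{\vartheta}_{T_n}(\phi)$ are identical deterministic functions of neighborhood structure and signals (compare \eqref{eqVarThetaGraph} with \eqref{eqVarThetaTree}), the distributional equality follows.

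For the second statement, the event $\{\tau_n \leq T_n\}$ requires that some outstub with label $2$ or $3$ is sampled within the first $T_n$ iterations. On $\Omega_{n,1}$, standard branching-process estimates show that the total number of outstub samplings across iterations $0,\ldots,T_n-1$ is $O((\nu_3/\nu_1)^{T_n})$, since each agent's expected offspring count is $\approx \nu_3/\nu_1$ (cf.\ \eqref{eqApproxNu3Nu1}). At each sampling, the conditional probability of colliding with a previously-seen outstub is at most (number of already-seen outstubs)$/(\nu_1 n - o(n))$. A union bound then yields $\P(\tau_n \leq T_n \mid \Omega_{n,1}) = O((\nu_3/\nu_1)^{2T_n}/n)$. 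Plugging in the bound $T_n \leq \zeta \log n / \log(\nu_3/\nu_1)$ from \ref{assGraphHorizon} gives $O(n^{2\zeta - 1})$, which is contained in $O(n^{\zeta - 1/2})$ since $\zeta < 1/2$. Essentially this estimate already appears as \cite[Lemma 5.4]{chen2017generalized} for the standard DCM; the only modification needed is to observe that bot instubs are paired \emph{after} the BFS iteration on agents and never consume agent outstubs, so they do not contribute to collisions.

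The main technical obstacle is converting the ``expected number of outstub samplings is $O((\nu_3/\nu_1)^{T_n})$'' claim into a high-probability statement that is uniform over the random degree sequence conditioned on $\Omega_{n,1}$. The cleanest route is to invoke \cite[Lemma 5.4]{chen2017generalized} directly, after verifying that our bot-augmented construction leaves their counting argument intact. Once this point is settled, the remaining bookkeeping is routine: the coupling for part one is conceptually straightforward, and the collision bound for part two is a standard birthday-paradox calculation in which \ref{assGraphDegSeq} supplies the required moment control on the degree sequence.
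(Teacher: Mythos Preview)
Your proposal is correct and takes essentially the same approach as the paper: the distributional equality is established by the same coupling argument (on $\{\tau_n > T_n\}$, every sampled outstub has label $1$, so the $T_n$-step incoming neighborhood of $i^*$ is a tree with degree draws matching Algorithm \ref{algTree}), and the probability bound is obtained by noting that bot-pairing does not affect $\tau_n$ and then invoking \cite[Lemma 5.4]{chen2017generalized} directly. Your birthday-paradox sketch produces the slightly sharper estimate $O(n^{2\zeta-1})$ rather than the paper's $O(n^{\zeta-1/2})$, but as you observe, the former is contained in the latter since $\zeta<1/2$.
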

\begin{proof}
See Appendix \ref{appProofGraphTreeRelation}.
\end{proof}

We can now state and prove Lemma \ref{lemStep1Main}, which is the main result for Step 1 of the proofs of the theorems. This result will allow us to analyze convergence of $\theta_{T_n}(i^*)$ (the graph agent estimate) by instead analyzing convergence of $\hat{\vartheta}_{T_n}(\phi)$ (the tree agent estimate).
\begin{lemma} \label{lemStep1Main}
Assume \ref{assGraphDegSeq}, \ref{assGraphHorizon}, and \ref{assBranchHorizon} hold. Then $\forall\ x \in \R$ and all $n \in \N$ sufficiently large,
\begin{equation}
\P ( | \theta_{T_n}(i^*) - x | > \epsilon ) \leq \P ( | \hat{\vartheta}_{T_n}(\phi) - x | > \epsilon / 2  ) + \P ( \Omega_{n,1}^C ) + O \left( n^{ \zeta - 1/2 } \right) .
\end{equation}
\end{lemma}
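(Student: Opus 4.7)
The plan is to combine the two preceding lemmas via a triangle-inequality argument together with a short conditioning manipulation. Specifically, Lemma~\ref{lemGraphBeliefApprox} will be used to replace $\theta_{T_n}(i^*)$ by the more tractable quantity $\vartheta_{T_n}(i^*)$, and Lemma~\ref{lemGraphTreeRelation} will be used to transfer the resulting probability from the graph to the tree, at the cost of the vanishing term $O(n^{\zeta-1/2})$ and the term $\P(\Omega_{n,1}^C)$.

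First, I would fix $\epsilon > 0$ and $x \in \R$ and apply Lemma~\ref{lemGraphBeliefApprox} with tolerance $\epsilon/2$: there exists $N$ such that for all $n \geq N$, $|\theta_{T_n}(i^*) - \vartheta_{T_n}(i^*)| < \epsilon/2$ almost surely (note Lemma~\ref{lemGraphBeliefApprox} only requires \ref{assBranchHorizon}). By the triangle inequality, for such $n$ we have the almost-sure containment of events
\begin{equation}
\{ |\theta_{T_n}(i^*) - x| > \epsilon \} \subseteq \{ |\vartheta_{T_n}(i^*) - x| > \epsilon/2 \} ,
\end{equation}
so that $\P(|\theta_{T_n}(i^*)-x|>\epsilon) \leq \P(|\vartheta_{T_n}(i^*)-x|>\epsilon/2)$.

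Next, I would split the right-hand side according to whether $\tau_n > T_n$ and whether $\Omega_{n,1}$ occurs. Writing
\begin{equation}
\P(|\vartheta_{T_n}(i^*)-x|>\epsilon/2) \leq \P(|\vartheta_{T_n}(i^*)-x|>\epsilon/2 \mid \tau_n > T_n) + \P(\tau_n \leq T_n) ,
\end{equation}
the first term equals $\P(|\hat{\vartheta}_{T_n}(\phi)-x|>\epsilon/2)$ by the distributional identity in Lemma~\ref{lemGraphTreeRelation}. For the second term, further conditioning gives
\begin{equation}
\P(\tau_n \leq T_n) \leq \P(\tau_n \leq T_n \mid \Omega_{n,1}) \P(\Omega_{n,1}) + \P(\Omega_{n,1}^C) \leq O(n^{\zeta-1/2}) + \P(\Omega_{n,1}^C) ,
\end{equation}
where the final inequality is the bound from Lemma~\ref{lemGraphTreeRelation} (which requires \ref{assGraphDegSeq} and \ref{assGraphHorizon}). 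Chaining these inequalities yields the desired bound.

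I do not anticipate any serious obstacles: all of the heavy lifting has been packaged into Lemmas~\ref{lemGraphBeliefApprox} and~\ref{lemGraphTreeRelation}, and the remaining work is a careful bookkeeping exercise. The only subtlety is being explicit about the ``$n$ sufficiently large'' clause, which is precisely what is needed so that the deterministic closeness $|\theta_{T_n}(i^*) - \vartheta_{T_n}(i^*)| < \epsilon/2$ from Lemma~\ref{lemGraphBeliefApprox} holds and allows the initial replacement step.
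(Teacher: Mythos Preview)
Your proposal is correct and follows essentially the same approach as the paper's proof: apply Lemma~\ref{lemGraphBeliefApprox} to pass from $\theta_{T_n}(i^*)$ to $\vartheta_{T_n}(i^*)$ via the triangle inequality, then condition on $\{\tau_n>T_n\}$ and $\Omega_{n,1}$ and invoke both parts of Lemma~\ref{lemGraphTreeRelation}. The only cosmetic difference is that the paper combines your two conditioning steps into a single line, but the logic and the resulting bound are identical.
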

\begin{proof}
First, given $\epsilon  > 0$, we have for sufficiently large $n$,
\begin{align}
\P ( | \theta_{T_n}(i^*) - x | > \epsilon ) & \leq \P ( | \theta_{T_n}(i^*) - \vartheta_{T_n}(i^*) | + | \vartheta_{T_n}(i^*) - x | > \epsilon ) \leq \P ( | \vartheta_{T_n}(i^*) - x | > \epsilon / 2 ) ,
\end{align}
where the first inequality uses the triangle inequality and in the second we used Lemma \ref{lemGraphBeliefApprox} to bound $| \theta_{T_n}(i^*) - \vartheta_{T_n}(i^*) |$ by $\epsilon/2$ $a.s.$ Furthermore, by the law of total probability, we have
\begin{align}
\P ( | \vartheta_{T_n}(i^*) - x | > \epsilon / 2 ) & \leq \P ( | \vartheta_{T_n}(i^*) - x | > \epsilon / 2 | \tau_n > T_n ) + \P ( \tau_n \leq T_n | \Omega_{n,1} ) + \P ( \Omega_{n,1}^C ) .
\end{align}
Combining the previous two inequalities and using Lemma \ref{lemGraphTreeRelation} (which applies since \ref{assGraphDegSeq}, \ref{assGraphHorizon} are assumed to hold), we obtain
\begin{align} 
\P ( | \theta_{T_n}(i^*) - x | > \epsilon ) & \leq \P ( | \hat{\vartheta}_{T_n}(\phi) - x | > \epsilon/2 ) + O \left( n^{ \zeta - 1/2 } \right) + \P ( \Omega_{n,1}^C )  ,
\end{align}
which is what we set out to prove.
\end{proof}

Before proceeding, we state another lemma that will be used in Step 2 of the proofs for both theorems. This lemma uses the fact that each agent in the tree has a unique path to the root. As a result, we can obtain an alternate expression for the terms $e_{\i} \hat{Q}^t e_{\phi}^{\trans}$ appearing in \eqref{eqVarThetaTree}.
\begin{lemma} \label{lemTreeBeliefExpand}
For each $n \in \N$,
\begin{equation} \label{eqTreeBeliefExpand}
\hat{\vartheta}_{T_n}(\phi) = \frac{1}{T_n} \sum_{t=0}^{T_n-1} \sum_{l=0}^t { t \choose l } \eta^l (1-\eta)^{t-l}  \sum_{\i \in \hat{A}_l}  \hat{s}_{T_n-t}(\i) \prod_{j=0}^{l-1} d_{in}(\i|j)^{-1}\ a.s.,
\end{equation}
where by convention $\prod_{j=0}^{l-1} d_{in}(\i|j)^{-1} = 1$ when $l = 0$.
\end{lemma}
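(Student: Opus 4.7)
The plan is a direct linear-algebraic calculation exploiting two structural features of the tree constructed by Algorithm \ref{algTree}: (i) $I$ and $\hat{P}$ commute, so $\hat{Q}^t$ can be expanded via the binomial theorem, and (ii) between any agent $\i \in \hat{A}_l$ and the root $\phi$ there is a unique length-$l$ directed path (namely $\i \to \i|l{-}1 \to \cdots \to \i|1 \to \phi$), so the entries of $\hat{P}^l e_\phi^{\trans}$ have a clean product form.

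First, since $(1-\eta) I$ commutes with $\eta \hat{P}$, the binomial theorem gives
\begin{equation}
\hat{Q}^t = ((1-\eta) I + \eta \hat{P})^t = \sum_{l=0}^t \binom{t}{l} \eta^l (1-\eta)^{t-l} \hat{P}^l .
\end{equation}
Plugging this into \eqref{eqVarThetaTree} and moving the double sum outside, it suffices to show that for every $l \in \{0,\ldots,t\}$,
\begin{equation}
\hat{s}_{T_n-t} \hat{P}^l e_\phi^{\trans} = \sum_{\i \in \hat{A}_l} \hat{s}_{T_n-t}(\i) \prod_{j=0}^{l-1} d_{in}(\i|j)^{-1} .
\end{equation}

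Second, I would compute $\hat{P}^l e_\phi^{\trans}$ entrywise. By definition, $(\hat{P}^l e_\phi^{\trans})_{\i}$ equals the sum over length-$l$ walks $\i = k_0 \to k_1 \to \cdots \to k_l = \phi$ (following edge polarity, from child toward root) of $\prod_{m=0}^{l-1} \hat{P}(k_m, k_{m+1})$, where $\hat{P}(k_m, k_{m+1}) = 1/d_{in}(k_{m+1})$ for each such edge. For $\i \in \hat{A}_l$, the construction of Algorithm \ref{algTree} guarantees a unique such walk, namely the path $\i \to \i|l{-}1 \to \cdots \to \phi$ through the ancestors of $\i$, whose weight is exactly $\prod_{j=0}^{l-1} d_{in}(\i|j)^{-1}$. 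For $\i \in \hat{A}_{l'}$ with $l' \neq l$, no length-$l$ walk to $\phi$ exists because agent edges form a tree pointing at $\phi$ with no self-loops, so the corresponding entry is $0$.

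Third, I need to account for the bot indices. Bots have self-loops, so a bot $\i \in \hat{B}_{l'}$ with $l' \leq l$ can potentially contribute a nonzero entry to $\hat{P}^l e_\phi^{\trans}$ by using its self-loop to pad out the walk. However, since $\hat{s}_t(\i) = 0$ for every bot $\i \in \hat{B}$ (by definition of the signals in \eqref{eqVarThetaTree}), these bot entries are annihilated when taking the inner product with $\hat{s}_{T_n-t}$, so they do not appear in the final expression. Combining the binomial expansion with the path identification on $\hat{A}_l$ and substituting into \eqref{eqVarThetaTree} yields exactly \eqref{eqTreeBeliefExpand}.

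I do not anticipate a real obstacle here; the proof is essentially a bookkeeping exercise. The only subtlety worth being careful about is not accidentally dropping a contribution from bots with self-loops before invoking the vanishing of their signals, and correctly interpreting the convention $\prod_{j=0}^{-1}(\cdot) = 1$ in the $l = 0$ term (which reduces to $\hat{s}_{T_n-t}(\phi)$, matching $\hat{s}_{T_n-t} I e_\phi^{\trans}$).
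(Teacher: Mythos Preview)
Your proposal is correct and follows essentially the same approach as the paper: expand $\hat{Q}^t$ binomially and identify $e_{\i}\hat{P}^l e_\phi^{\trans}$ via the unique length-$l$ ancestral path in the tree. The only cosmetic differences are that the paper verifies the binomial expansion by a short induction rather than invoking commutativity of $I$ and $\hat{P}$, and that the paper starts from the agent-restricted form of \eqref{eqVarThetaTree} (the sum over $\i\in\hat{A}$) so it never needs your separate observation that bot self-loop contributions are killed by $\hat{s}_t(\i)=0$.
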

\begin{proof}
See Appendix \ref{appProofTreeBeliefExpand}.
\end{proof}

\subsection{Step 2 for proof of Theorem \ref{thmMain}} \label{appRootBeliefProofOutline}

Our next goal is to establish convergence of $\hat{\vartheta}_{T_n}(\phi)$, from which convergence of $\theta_{T_n}(i^*)$ will follow via Lemma \ref{lemStep1Main}. For this, we will use Chebyshev's inequality, so we begin with two lemmas describing the limiting behavior of the mean and variance of $\hat{\vartheta}_{T_n}(\phi)$. Here and moving forward, for random variables $X$ and $Y$ we use $\Var_n(X) = \E_n [ X^2 ] - ( \E_n [ X ] )^2$ and $\Cov_n(X,Y) = \E_n [ X Y ] - \E_n [ X ] \E_n [ Y ]$ to denote variance and covariance conditional on the degree sequence.
\begin{lemma} \label{lemFirstMomBelief}
Given \ref{assBranchDegSeq} and \ref{assBranchHorizon}, we have the following:
\begin{align}
\lim_{n \rightarrow \infty} T_n(1-p_n) = 0 & \Rightarrow \lim_{n \rightarrow \infty} | \E_n [ \hat{\vartheta}_{T_n}(\phi) ] - \theta | 1(\Omega_{n,2}) = 0\ a.s. \\
\lim_{n \rightarrow \infty} T_n(1-p_n) = c \in (0,\infty) & \Rightarrow \lim_{n \rightarrow \infty} \left| \E_n [ \hat{\vartheta}_{T_n}(\phi) ] - \theta \frac{1-e^{-c \eta}}{c \eta} \right| 1(\Omega_{n,2}) = 0\ a.s. \\
\lim_{n \rightarrow \infty} T_n(1-p_n) = \infty & \Rightarrow \lim_{n \rightarrow \infty} | \E_n [ \hat{\vartheta}_{T_n}(\phi) ] | 1(\Omega_{n,2}) = 0\ a.s.
\end{align}
\end{lemma}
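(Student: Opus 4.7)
\textbf{Plan for proof of Lemma \ref{lemFirstMomBelief}.} The plan is to use Lemma \ref{lemTreeBeliefExpand} to obtain a closed-form expression for $\E_n[\hat{\vartheta}_{T_n}(\phi)]$, then pass to the three asymptotic limits. Since the signals $\{\hat{s}_{T_n-t}(\i)\}$ are $\mathrm{Bernoulli}(\theta)$ and independent of the tree structure, conditioning on the tree and then on the degree sequence yields
\begin{equation}
\E_n[\hat{\vartheta}_{T_n}(\phi)] = \frac{\theta}{T_n}\sum_{t=0}^{T_n-1}\sum_{l=0}^{t}\binom{t}{l}\eta^l(1-\eta)^{t-l}\,\E_n[Y_l], \quad Y_l := \sum_{\i\in\hat{A}_l}\prod_{j=0}^{l-1}d_{in}(\i|j)^{-1}.
\end{equation}

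Next I would compute $\E_n[Y_l]$ using the branching process structure. Clearly $\E_n[Y_0]=1$. For $Y_1 = d_{in}^A(\phi)/d_{in}(\phi)$, since the root's degree is drawn from $f_n^*$, $\E_n[Y_1]=\tilde{p}_n^*$ by definition \eqref{eqWalkRVsMain}. For $l\geq 2$, write $Y_l = \sum_{\i'\in\hat{A}_{l-1}}\bigl(d_{in}^A(\i')/d_{in}(\i')\bigr)\prod_{j=0}^{l-2}d_{in}(\i'|j)^{-1}$; since the degrees of each $\i'\in\hat{A}_{l-1}$ are drawn iid from $f_n$ (and are independent of the tree up to level $l-1$), iterated conditioning yields $\E_n[Y_l\mid\T_{l-1}] = \tilde{p}_n\,Y_{l-1}$ and hence $\E_n[Y_l]=\tilde{p}_n^*\tilde{p}_n^{\,l-1}$. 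Substituting and applying the binomial theorem gives
\begin{equation}
\sum_{l=0}^{t}\binom{t}{l}\eta^l(1-\eta)^{t-l}\E_n[Y_l] = (1-\eta)^t + \frac{\tilde{p}_n^*}{\tilde{p}_n}\Bigl[(1-\eta(1-\tilde{p}_n))^t - (1-\eta)^t\Bigr],
\end{equation}
and summing the two geometric series in $t$ produces a closed form for $\E_n[\hat{\vartheta}_{T_n}(\phi)]$ whose dominant piece is $(\theta\tilde{p}_n^*/\tilde{p}_n)\cdot\bigl[1-(1-\eta(1-\tilde{p}_n))^{T_n}\bigr]/\bigl[T_n\eta(1-\tilde{p}_n)\bigr]$, the remaining pieces being $O(1/T_n)$.

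The final step is to pass to the limit on $\Omega_{n,2}$. Since $|\tilde{p}_n-p_n|<\delta_n=o(1/T_n)$, we have $T_n(1-\tilde{p}_n)\to \lim T_n(1-p_n)$ (a short argument using the mean value theorem shows we may replace $\tilde{p}_n$ by $p_n$ inside $(1-\eta(1-\cdot))^{T_n}$ without changing the limit). Writing $a_n=\eta(1-\tilde{p}_n)$, the key ratio is $(1-(1-a_n)^{T_n})/(T_n a_n)$, which tends to $1$ when $T_n a_n\to 0$, to $(1-e^{-c\eta})/(c\eta)$ when $T_n a_n\to c\eta$, and to $0$ when $T_n a_n\to\infty$. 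Combined with $\tilde{p}_n^*/\tilde{p}_n\to 1$ in the first two cases (which holds because $\tilde{p}_n\to 1$ forces $\tilde{p}_n^*\in[\tilde{p}_n,1]\to 1$), the three claimed limits follow.

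The main obstacle is the $T_n(1-p_n)\to\infty$ case, which splits into two subcases depending on whether $p_n\to 1$ or $p_n\to p<1$. When $p_n\to p<1$, the denominator $T_n\eta(1-\tilde{p}_n)$ grows like $T_n$ while $\tilde{p}_n^*/\tilde{p}_n\leq 1/\tilde{p}_n$ stays bounded, giving decay $O(1/T_n)$; when $p_n\to 1$ but $T_n a_n\to\infty$, the factor $\tilde{p}_n^*/\tilde{p}_n$ tends to $1$ but $1/a_n$ blows up, and one must use the bound $1-(1-a_n)^{T_n}\leq 1$ to conclude decay at rate $1/(T_n a_n)\to 0$. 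The remaining care is uniformity: all error bounds obtained from \ref{assBranchDegSeq} depend only on $\delta_n, p_n, T_n$ (not on the specific $\omega$), so the $1(\Omega_{n,2})$-weighted a.s.\ convergence is immediate.
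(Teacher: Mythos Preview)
Your proposal is correct and follows essentially the same route as the paper: expand via Lemma \ref{lemTreeBeliefExpand}, compute $\E_n[Y_l]=\tilde{p}_n^*\tilde{p}_n^{\,l-1}$ (the paper obtains this via the random-walk Lemma \ref{lemHitProbOneWalk}, you via direct iterated conditioning on the branching structure---these are equivalent), collapse the binomial and geometric sums to the same closed form, and then pass to the three limits using $|\tilde{p}_n-p_n|<\delta_n$ on $\Omega_{n,2}$.

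One small caveat on the $T_n(1-p_n)\to\infty$, $p_n\to p<1$ subcase: your claim that $\tilde{p}_n^*/\tilde{p}_n\leq 1/\tilde{p}_n$ ``stays bounded'' tacitly assumes $p>0$. Since \ref{assBranchDegSeq} allows $p=0$, the decomposition into ``dominant piece plus $O(1/T_n)$ remainder'' can break down there (both pieces may individually blow up, though their sum of course remains in $[0,\theta]$). The paper handles this subcase by returning to the pre-closed-form sum \eqref{eqTreeBeliefMeanWalkSub} and bounding with $\tilde{p}_n^*\leq 1$ and $\tilde{p}_n<1-\gamma_1$ directly, which avoids any division by $\tilde{p}_n$. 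This is a one-line patch, not a structural issue with your plan.
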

\begin{proof}
See Appendix \ref{appProofFirstMomBelief}.
\end{proof}

\begin{lemma} \label{lemSecondMomBelief}
Given \ref{assBranchDegSeq} and \ref{assBranchHorizon}, $\lim_{n \rightarrow \infty} \Var_n ( \hat{\vartheta}_{T_n}(\phi) ) 1(\Omega_{n,2}) = 0\ a.s.$
\end{lemma}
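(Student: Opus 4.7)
The plan is to apply the law of total variance,
\[
\Var_n(\hat{\vartheta}_{T_n}(\phi)) = \E_n\!\left[\Var_n(\hat{\vartheta}_{T_n}(\phi) \mid \T)\right] + \Var_n\!\left(\E_n[\hat{\vartheta}_{T_n}(\phi) \mid \T]\right),
\]
and show both summands vanish almost surely on $\Omega_{n,2}$. For the first (expected conditional variance), Lemma~\ref{lemTreeBeliefExpand} combined with the independence of the iid Bernoulli signals across times and agents (given $\T$) gives
\[
\Var_n(\hat{\vartheta}_{T_n}(\phi) \mid \T) = \frac{\theta(1-\theta)}{T_n^2} \sum_{t=0}^{T_n-1} \sum_{l=0}^{t} \binom{t}{l}^2 \eta^{2l}(1-\eta)^{2(t-l)} \sum_{\i \in \hat{A}_l} W_l(\i)^2,
\]
where I set $W_l(\i) := \prod_{j=0}^{l-1} d_{in}(\i|j)^{-1}$. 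Conditioning on the tree through generation $l-1$ (so that the generation-$l$ degrees are sampled iid from $f_n$) produces the recursion
\[
\E_n\!\left[\sum_{\i \in \hat{A}_{l+1}} W_{l+1}(\i)^2 \mid \textrm{tree through generation } l-1 \right] = \tilde{q}_n \sum_{\i \in \hat{A}_l} W_l(\i)^2,
\]
which iterates from $\E_n[\sum_{\i \in \hat{A}_1} W_1(\i)^2] = \tilde{q}_n^{*}$ to yield $\E_n[\sum_{\i \in \hat{A}_l} W_l(\i)^2] = \tilde{q}_n^{*}\tilde{q}_n^{l-1} \leq \tilde{q}_n^{l-1}$ for $l \geq 1$. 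Combined with the Vandermonde-type inequality $\sum_l \binom{t}{l}^2 x^l y^{t-l} \leq (\sqrt{x}+\sqrt{y})^{2t}$ (from $\binom{t}{l}^2 \leq \binom{2t}{2l}$) applied with $x = \eta^2\tilde q_n$, $y = (1-\eta)^2$, and the gap $\tilde{q}_n < 1-\xi$ on $\Omega_{n,2}$, this bounds the first summand by $O(1/T_n^2)$.

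For the variance of the conditional mean, let $U_l := \sum_{\i \in \hat{A}_l} W_l(\i)$, so that $\E_n[\hat{\vartheta}_{T_n}(\phi) \mid \T] = \theta \sum_l a_l U_l$ with probability weights $a_l := T_n^{-1}\sum_{t=l}^{T_n-1}\binom{t}{l}\eta^l(1-\eta)^{t-l}$. A parallel conditioning argument gives the martingale-type identity $\E_n[U_{k+1} \mid \textrm{tree through generation } k-1] = \tilde{p}_n U_k$, which yields $\Cov_n(U_l, U_{l'}) = \tilde{p}_n^{|l-l'|}\Var_n(U_{\min(l,l')})$ for $\min(l,l') \geq 1$ and the variance recursion $\Var_n(U_{k+1}) = \tilde{p}_n^2\Var_n(U_k) + (\tilde{r}_n - \tilde{p}_n^2)\tilde{q}_n^{*}\tilde{q}_n^{k-1}$, where $\tilde{r}_n := \E_{f_n}[(d_{in}^A/d_{in})^2]$ and $\Var_n(U_1) = \tilde{r}_n^{*} - (\tilde{p}_n^{*})^2$. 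I would then split on the limit $p$ of $\tilde{p}_n$ from \ref{assBranchDegSeq}. When $p < 1$, the bound $\Var_n(U_k) \leq \E_n[U_k] \leq \tilde{p}_n^{k-1}$ (valid since $U_k \in [0,1]$) reduces the double sum to $\sum_{l,l'} a_l a_{l'} \tilde p_n^{\max(l,l')-1} \leq \sum_l a_l \tilde{p}_n^{l-1}$, and the tail-sum identity $\sum_{t \geq l}\binom{t}{l}\eta^l(1-\eta)^{t-l} = 1/\eta$ evaluates this to $O(1/(T_n\eta(1-\tilde{p}_n))) = O(1/T_n)$. When $p = 1$, the elementary inequalities $\tilde{r}_n - \tilde{p}_n^2 \leq 1-\tilde{p}_n$ and $\tilde{r}_n^{*} - (\tilde{p}_n^{*})^2 \leq 1-\tilde{p}_n$ (the latter using $\tilde{p}_n^{*} \geq \tilde{p}_n$ from $\Omega_{n,2}$), combined with $\tilde{q}_n < 1-\xi$ to bound the geometric sum in the unrolled recursion, give $\Var_n(U_k) \leq (1-\tilde{p}_n)(1+1/\xi)$ uniformly in $k$, hence $\Var_n(\theta \sum_l a_l U_l) = O(1-\tilde{p}_n) \to 0$.

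The main obstacle is the variance-of-conditional-mean term in the regime $\tilde{p}_n \to 1$: the factor $\tilde{p}_n^{|l-l'|}$ in the covariance identity provides essentially no decay in $|l-l'|$, and the bound $\Var_n(U_k) \leq \tilde{p}_n^{k-1}$ likewise does not decay in $k$. The rescuing observation is that since $d_{in}^A/d_{in} \in [0,1]$ has mean $\tilde{p}_n \to 1$, one automatically has $\tilde{r}_n - \tilde{p}_n^2 \leq 1-\tilde{p}_n$ (and similarly for the starred analog), so every driving term in the variance recursion is itself $O(1-\tilde{p}_n)$; the uniform gap $\tilde{q}_n < 1-\xi$ from $\Omega_{n,2}$ is then exactly what tames the geometric accumulation and produces the uniform-in-$k$ bound $\Var_n(U_k) = O(1-\tilde{p}_n)$.
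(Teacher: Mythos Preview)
Your proposal is correct and follows the same high-level decomposition as the paper (law of total variance, signal term then structural term), but the execution of each piece differs in interesting ways. For the signal term, the paper simply bounds the squared weights by the weights themselves to get $O(1/T_n)$; your route via the recursion $\E_n[\sum_{\i}W_l(\i)^2]=\tilde q_n^{*}\tilde q_n^{\,l-1}$ and the Vandermonde bound is sharper, yielding $O(1/T_n^2)$. For the structural term, the paper invokes Lemma~\ref{lemHitProbTwoWalks} (two-walk hitting probabilities) to obtain the closed-form covariance and then uses the Jensen inequality $\tilde r_n\ge \tilde p_n^2-\tilde q_n$ (in the paper's notation, where $\tilde r_n=\E_{f_n}[\tfrac{j}{j+k}\tfrac{j-1}{j+k}]$) to drop a nonpositive term; you instead derive the covariance identity $\Cov_n(U_l,U_{l'})=\tilde p_n^{|l-l'|}\Var_n(U_{\min(l,l')})$ and the variance recursion directly by martingale-type conditioning, which is more elementary and avoids the two-walk machinery. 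Your $\tilde r_n$ is the second moment $\E_{f_n}[(d_{in}^A/d_{in})^2]$, i.e.\ the paper's $\tilde r_n+\tilde q_n$, but you are internally consistent. The paper dispatches the $p<1$ case upfront via the crude bound $\Var_n\le\E_n[\hat\vartheta_{T_n}(\phi)]\to 0$ (Lemma~\ref{lemFirstMomBelief}), which is quicker than your explicit computation there. In the $p=1$ case both proofs rest on the same two facts---$\tilde q_n<1-\xi$ to control the geometric accumulation, and the elementary $[0,1]$-variable bound forcing every driving term to be $O(1-\tilde p_n)$---so the essential idea is shared.
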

\begin{proof}
See Appendix \ref{appProofSecondMomBelief}.
\end{proof}

Before proceeding, we briefly describe our approach to proving these lemmas. First, we note that in analyzing the moments of $\hat{\vartheta}_{T_n}(\phi)$, the i.i.d.\ Bernoulli random variables $\hat{s}_{T_n-t}(\i)$ in \eqref{eqTreeBeliefExpand} are easily dealt with; the difficulty arises from the $\prod_{j=0}^{l-1} d_{in}(\i|j)^{-1}$ terms. Luckily, there is a simple interpretation of these terms that guides our analysis and that proceeds as follows. First, define a random walk $\{ X_l^1 \}_{l \in \N_0}$ with $X_0^1 = \phi$ and $X_l^1$ chosen uniformly from the incoming neighbors of $X_{l-1}^1$, for each $l \in \N$. Then, as shown in \eqref{eqWalkProbAppears} in Appendix \ref{appProofFirstMomBelief},
\begin{equation}
\E \sum_{\i \in \hat{A}_l} \prod_{j=0}^{l-1} d_{in}(\i|j)^{-1} = \P ( X_l^1 \in \hat{A}_l ) .
\end{equation}
In short, computing the mean of $\hat{\vartheta}_{T_n}(\phi)$ amounts to computing hitting probabilities of the form $\P ( X_l^1  \in \hat{A}_l )$. Similarly, to analyze the second moment of $\hat{\vartheta}_{T_n}(\phi)$, we compute hitting probabilities of the form $\P ( X_l^1 \in \hat{A}_l , X_l^2 \in \hat{A}_l )$, where $X_l^2$ is defined in the same manner as $X_l^1$ and is conditionally independent of $X_l^1$ given the tree structure. We note that, in principal, the $k$-th moment of $\hat{\vartheta}_{T_n}(\phi)$ can be computed by analyzing $k$ walks. However, the calculations become exceedingly complex as $k$ grows, and because we only require two moments, we do not study any case $k > 2$.

This interpretation explains Lines \ref{algTreeInitWalk} and \ref{algTreeUpdateWalkStart}-\ref{algTreeUpdateWalkEnd} of Algorithm \ref{algTree}: in Line \ref{algTreeInitWalk}, we begin two random walks at the root node $\phi$; each time Lines \ref{algTreeUpdateWalkStart}-\ref{algTreeUpdateWalkEnd} are reached, we advance the random walks one step. Importantly, we simultaneously sample the walks and construct the tree. In particular, the $l$-th step of the walk is taken at Line \ref{algTreeNextStep}, \textit{before} the degrees of the corresponding node are realized in Line \ref{algTreeDegSample}; this is crucial to our computation of the aforementioned hitting probabilities. Finally, we note that in Line \ref{algTreeAbsorb} of Algorithm \ref{algTree}, the condition $j^* > d_{in}^A(\i)$ implies the walk reaches the set of bots $\hat{B}$; since bots have self-loops but no other incoming edges, they act as absorbing states on the  walk. This is why the entire future trajectory of the walk can be defined in Line \ref{algTreeAbsorb}.

In Lemmas \ref{lemHitProbOneWalk} and \ref{lemHitProbTwoWalks}, we compute the hitting probabilities needed for the proofs of Lemmas \ref{lemFirstMomBelief} and \ref{lemSecondMomBelief}. We note that, in addition to the random variables $\tilde{p}_n, \tilde{p}_n^*, \tilde{q}_n$ defined in \eqref{eqWalkRVsMain} in Section \ref{secGraphModel}, Lemma \ref{lemHitProbTwoWalks} requires the definition of several similar random variables; we define these in \eqref{eqWalkRVs} (and also recall the definitions of $\tilde{p}_n, \tilde{p}_n^*, \tilde{q}_n$ for convenience). We discuss these in more detail shortly.
\begin{equation} \label{eqWalkRVs}
\begin{aligned}
\tilde{p}_n & = \sum_{j \in \N, k \in \N_0} \frac{j}{j+k} \sum_{i \in \N} f_n(i,j,k) \\
\tilde{q}_n & = \sum_{j \in \N, k \in \N_0} \frac{j}{j+k} \frac{1}{j+k} \sum_{i \in \N} f_n(i,j,k) \\
\tilde{r}_n & = \sum_{j \in \N, k \in \N_0} \frac{j}{j+k} \frac{j-1}{j+k} \sum_{i \in \N} f_n(i,j,k) 
\end{aligned} \quad\quad
\begin{aligned}
\tilde{p}_n^* & = \sum_{j \in \N, k \in \N_0} \frac{j}{j+k} \sum_{i \in \N} f_n^*(i,j,k) \\
\tilde{q}_n^* & = \sum_{j \in \N, k \in \N_0} \frac{j}{j+k} \frac{1}{j+k} \sum_{i \in \N} f_n^*(i,j,k) \\
\tilde{r}_n^* & = \sum_{j \in \N, k \in \N_0} \frac{j}{j+k} \frac{j-1}{j+k} \sum_{i \in \N} f_n^*(i,j,k) 
\end{aligned}
\end{equation}

\begin{lemma} \label{lemHitProbOneWalk}
We have
\begin{equation}
\P_n ( X_l^1 \in \hat{A} ) = \begin{cases} \tilde{p}_n^* \tilde{p}_n^{l-1}  , & l \in \N \\ 1 , & l = 0 \end{cases} .
\end{equation}
\end{lemma}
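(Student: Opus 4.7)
The plan is to exploit the joint construction of the tree and the walk in Algorithm \ref{algTree} to evaluate the walk's per-step survival probability in $\hat{A}$. Since bots are absorbing (self-loops and no other incoming edges), the event $\{X_l^1 \in \hat{A}\}$ is decreasing in $l$, so the chain rule gives
\begin{equation}
\P_n(X_l^1 \in \hat{A}) = \prod_{l'=1}^{l} \P_n\bigl(X_{l'}^1 \in \hat{A} \,\big|\, X_{l'-1}^1 \in \hat{A}\bigr) .
\end{equation}
The base case $l = 0$ is immediate since $X_0^1 = \phi \in \hat{A}_0$, so it suffices to show each factor above equals $\tilde{p}_n^*$ for $l' = 1$ and $\tilde{p}_n$ for $l' \geq 2$.

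For $l' = 1$, Line \ref{algTreeInitStruct} samples $\phi$'s degrees from $f_n^*$, and Line \ref{algTreeNextStep} picks $j^*$ uniformly from $[d_{in}^A(\phi) + d_{in}^B(\phi)]$ independently of those degrees. The event $\{X_1^1 \in \hat{A}\}$ corresponds to $\{j^* \leq d_{in}^A(\phi)\}$, which conditional on $\phi$'s degrees has probability $d_{in}^A(\phi)/(d_{in}^A(\phi)+d_{in}^B(\phi))$; averaging against $f_n^*$ and comparing to \eqref{eqWalkRVs} yields $\tilde{p}_n^*$. For $l' \geq 2$, I would apply the same single-step argument at $X_{l'-1}^1$: conditional on $X_{l'-1}^1 \in \hat{A}$ and its degrees, the walk's next choice is uniform on $[d_{in}^A(X_{l'-1}^1) + d_{in}^B(X_{l'-1}^1)]$, so the survival probability is again $d_{in}^A(X_{l'-1}^1)/(d_{in}^A(X_{l'-1}^1)+d_{in}^B(X_{l'-1}^1))$; after averaging against the law of $X_{l'-1}^1$'s degrees, this should yield $\tilde{p}_n$ provided these degrees are $f_n$-distributed.

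The main obstacle, and really the only point needing care, is verifying this last proviso: that $X_{l'-1}^1$'s degrees are $f_n$-distributed even after conditioning on $\{X_{l'-1}^1 \in \hat{A}\}$. I would argue this by tracking the order of operations in Algorithm \ref{algTree}. When the walk was moved to $X_{l'-1}^1$ in the previous outer iteration (applied to $X_{l'-2}^1 \in \hat{A}_{l'-2}$), the index $j^*$ was drawn uniformly on $[d_{in}^A(X_{l'-2}^1) + d_{in}^B(X_{l'-2}^1)]$ \emph{before} any of $X_{l'-2}^1$'s children had their degrees sampled at Line \ref{algTreeDegSample}. Because the $d_{in}^A(X_{l'-2}^1)$ agent children then receive i.i.d.\ $f_n$-degrees independent of $j^*$ and of all prior randomness in the construction, conditioning on $\{j^* \leq d_{in}^A(X_{l'-2}^1)\}$ (equivalently, $\{X_{l'-1}^1 \in \hat{A}\}$) leaves the marginal law of $X_{l'-1}^1$'s degrees equal to $f_n$. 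Multiplying the resulting factors then gives $\P_n(X_l^1 \in \hat{A}) = \tilde{p}_n^* \cdot \tilde{p}_n^{l-1}$ for $l \in \N$, as claimed.
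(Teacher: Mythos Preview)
Your proposal is correct and follows essentially the same approach as the paper: both use the absorbing property of $\hat{B}$ to reduce to one-step conditional survival probabilities, then identify each factor as $\tilde{p}_n^*$ or $\tilde{p}_n$ by noting that in Algorithm~\ref{algTree} the walk step at Line~\ref{algTreeNextStep} is taken \emph{before} the node's degrees are sampled at Line~\ref{algTreeDegSample}, so conditioning on $\{X_{l'-1}^1 \in \hat{A}\}$ leaves the degree law equal to $f_n$ (or $f_n^*$ at the root). Your explicit discussion of why the conditioning does not bias the degree distribution is, if anything, a bit more careful than the paper's brief justification.
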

\begin{proof}
See Appendix \ref{appProofHitProbOneWalk}.
\end{proof}

\begin{lemma} \label{lemHitProbTwoWalks}
For $l' > l$, we have
\begin{equation}
\P_n ( X_l^1 \in \hat{A} , X_{l'}^2 \in \hat{A} ) = \begin{cases} \P_n ( X_l^1 \in \hat{A} , X_l^2 \in \hat{A} ) \tilde{p}_n^{l'-l}  , & l \in \N \\ \tilde{p}_n^* \tilde{p}_n^{l'-1} , & l = 0 \end{cases}.
\end{equation}
Furthermore,
\begin{equation} \label{eqHitProbTwoWalksState}
\P_n ( X_l^1 \in \hat{A} , X_l^2 \in \hat{A} ) = \begin{cases} \tilde{r}_n^* \tilde{p}_n^{2(l-1)}  +  \sum_{t=2}^{l} \tilde{q}_n^* \tilde{q}_n^{t-2} \tilde{r}_n \tilde{p}_n^{2(l-t)} + \tilde{q}_n^* \tilde{q}_n^{l-1}  , & l \in \{2,3,\ldots\} \\ \tilde{r}_n^* + \tilde{q}_n^* , & l = 1  \\ 1 , & l = 0 \end{cases} .
\end{equation}
\end{lemma}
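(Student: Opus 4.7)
The plan is to exploit the simultaneous tree–and–walk construction of Algorithm \ref{algTree}: each time a walk steps from an agent $\i \in \hat{A}_l$ to a child in Line \ref{algTreeNextStep}, that child's degree (and hence its entire sub-tree) has not yet been realized, so the one-step distribution of the walk is governed by the out-degree distribution $f_n$ (or $f_n^*$ if $\i = \phi$), and the sub-trees emanating from two distinct children are independent. This gives us both the Markov/factorization property needed for the first identity and the bookkeeping needed to track two walks.

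For the first identity, I would argue by the strong Markov property applied to walk $2$ at time $l$. Conditional on $X_l^2 \in \hat{A}$, the walk's continuation lives entirely in the sub-tree rooted at $X_l^2$; this sub-tree is built via exactly the same rule as the one-walk case with the root distribution replaced by $f_n$ (since $X_l^2$, being an internal tree agent, has degrees drawn from $f_n$). By the same inductive computation used for Lemma \ref{lemHitProbOneWalk}, the conditional probability that walk $2$ remains in $\hat{A}$ for the next $l'-l$ steps is $\tilde p_n^{l'-l}$, independent of walk $1$'s trajectory, which gives the recursion; the boundary case $l=0$ is then Lemma \ref{lemHitProbOneWalk} applied to walk $2$ alone.

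For \eqref{eqHitProbTwoWalksState}, I would track the coupling state of the two walks. Let $e_t = \P_n(X_t^1 = X_t^2 \in \hat A)$ (``together'') and $f_t = \P_n(X_t^1 \neq X_t^2,\ X_t^1, X_t^2 \in \hat A)$ (``separated in $\hat A$''). Two key facts make the recursion tractable. First, once the walks choose different children at a common node they land in disjoint sub-trees (the tree structure precludes reunification), and each walk's future is governed independently by $f_n$-distributed agent degrees; hence $f_{t+1}$ receives a contribution $\tilde p_n^2 \cdot f_t$. Second, from a common node with in-degree vector $(j,k)$ the walks independently pick one of the $j+k$ children, so the probability of jointly landing at the same agent child is $j/(j+k)^2$ and at two distinct agent children is $j(j-1)/(j+k)^2$. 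Averaging over $f_n$ (or $f_n^*$ at the root) gives the quantities $\tilde q_n, \tilde r_n$ (or $\tilde q_n^*, \tilde r_n^*$), and yields
\begin{align}
e_0 &= 1,\quad f_0 = 0,\quad e_1 = \tilde q_n^*,\quad f_1 = \tilde r_n^*,\\
e_{t+1} &= \tilde q_n\, e_t \quad (t\ge 1),\qquad f_{t+1} = \tilde r_n\, e_t + \tilde p_n^2\, f_t \quad (t\ge 1).
\end{align}
Solving: $e_l = \tilde q_n^* \tilde q_n^{l-1}$ for $l\ge 1$, and iterating the linear recursion for $f_l$ with the source term $\tilde r_n e_t = \tilde r_n \tilde q_n^* \tilde q_n^{t-1}$ gives $f_l = \tilde r_n^* \tilde p_n^{2(l-1)} + \sum_{t=2}^{l} \tilde q_n^* \tilde q_n^{t-2} \tilde r_n \tilde p_n^{2(l-t)}$. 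Summing $e_l + f_l$ recovers \eqref{eqHitProbTwoWalksState}; the cases $l \in \{0,1\}$ are read off directly.

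The main obstacle is not the algebra but carefully justifying the two key independence claims, namely that (i) after the walks split at a common ancestor, their future trajectories truly become independent (requires that the two child sub-trees are sampled independently, which is exactly guaranteed by the simultaneous construction with $j^*$ chosen \emph{before} the sub-tree of the chosen child is built in Line \ref{algTreeDegSample}), and (ii) the one-step transition probabilities at a node $\i$ depend only on $(d_{in}^A(\i), d_{in}^B(\i))$ drawn from $f_n$ (resp.\ $f_n^*$ at $\phi$), so that averaging to obtain $\tilde p_n, \tilde q_n, \tilde r_n$ (and their starred analogues) is valid. Once these two points are pinned down, the rest of the argument reduces to the elementary recursion above.
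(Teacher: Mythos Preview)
Your proposal is correct and matches the paper's approach in all essential respects: the same independence facts (degrees of $X_{l'-1}^2$ drawn from $f_n$ after the step is taken; disjoint sub-trees after the walks split) drive both arguments. The only cosmetic difference is that the paper organizes \eqref{eqHitProbTwoWalksState} via the separation time $\tau=\inf\{t:X_t^1\neq X_t^2\}$ and sums $\P_n(\cdot,\tau=t)$ over $t$, whereas you track the state pair $(e_t,f_t)$ by a linear recursion; since $e_t=\P_n(X_t^1=X_t^2\in\hat A)=\P_n(\cdot,\tau>t)$ and $f_t=\P_n(\cdot,\tau\le t)$, the two bookkeeping schemes are interchangeable.
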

\begin{proof}
See Appendix \ref{appProofHitProbTwoWalks}.
\end{proof}
Before proceeding, we comment on the form of \eqref{eqHitProbTwoWalksState}, which helps explain the definitions in \eqref{eqWalkRVs}. Namely, in \eqref{eqHitProbTwoWalksState}, $\tilde{r}_n^* \tilde{p}_n^{2(l-1)}$ is the probability of the two random walks visiting different agents on the first step of the walk ($\tilde{r}_n^*$ term), then separately remaining in the agent set for the next $l-1$ steps of the walk ($\tilde{p}_n^{2(l-1)}$ term); similarly, $\tilde{q}_n^* \tilde{q}_n^{t-2} \tilde{r}_n \tilde{p}_n^{2(l-t)}$ is the probability of the walks visiting the same agents for $t-1$ steps ($\tilde{q}_n^* \tilde{q}_n^{t-2}$ term), then visiting a different agent on the $t$-th step ($\tilde{r}_n$ term), then separately remaining in the agent set for $l-t$ steps ($\tilde{p}_n^{2(l-t)}$ term); finally, $ \tilde{q}_n^* \tilde{q}_n^{l-1}$ is the probability of the walks remaining together and in the agent set for $l$ steps. Each of these arguments follows from \eqref{eqWalkRVs}: $\tilde{p}_n$ gives the probability of a single walk proceeding to an agent ($j/(j+k)$ term), $\tilde{q}_n$ gives the probability of two walks proceeding to the same agent ($j/(j+k)$ term for the first walk, $1/(j+k)$ term for the second walk), and $\tilde{r}_n$ gives the probability of two walks proceeding to different agents ($j/(j+k)$ term for the first walk, $(j-1)/(j+k)$ term for the second walk). Similar arguments apply to $\tilde{p}_n^*, \tilde{q}_n^*, \tilde{r}_n^*$, except these pertain to the first steps of the walks.

Equipped with Lemmas \ref{lemFirstMomBelief} and \ref{lemSecondMomBelief}, we can prove Theorem \ref{thmMain}. First, suppose $T_n(1-p_n) \rightarrow 0$. Given $\epsilon > 0$, we can use Lemma \ref{lemStep1Main} to obtain (provided the limits exist)
\begin{align} \label{eqLemStep1MainCor}
\lim_{n \rightarrow \infty} \P ( | \theta_{T_n}(i^*) - \theta | > \epsilon ) & \leq  \lim_{n \rightarrow \infty} \left( \P ( | \hat{\vartheta}_{T_n}(\phi) - \theta | > \epsilon / 2  ) + \P ( \Omega_{n,1}^C ) + O \left( n^{ \zeta - 1/2 } \right) \right) \\
& =  \lim_{n \rightarrow \infty}  \P ( | \hat{\vartheta}_{T_n}(\phi) - \theta | > \epsilon / 2  ) ,
\end{align}
where we have used $\P(\Omega_{n,1}^C) \rightarrow 0$ by \ref{assGraphDegSeq} and $\zeta < 1/2$ by \ref{assGraphHorizon}. Next, using total probability,
\begin{equation} \label{eqLimitBeliefRootExpand1}
\P ( | \hat{\vartheta}_{T_n}(\phi) - \theta | > \epsilon /2 ) \leq \P ( | \hat{\vartheta}_{T_n}(\phi) - \theta | > \epsilon/2 , \Omega_{n,2} ) + \P( \Omega_{n,2}^C ) .
\end{equation}
We can further expand the first summand in \eqref{eqLimitBeliefRootExpand1} as 
\begin{align} 
& \P ( | \hat{\vartheta}_{T_n}(\phi) - \theta | > \epsilon/2 , \Omega_{n,2} ) \leq \P ( | \hat{\vartheta}_{T_n}(\phi) - \E_n \hat{\vartheta}_{T_n}(\phi) | + | \E_n \hat{\vartheta}_{T_n}(\phi) - \theta | > \epsilon/2 , \Omega_{n,2} ) \\
& \quad \leq \P \left( | \hat{\vartheta}_{T_n}(\phi) - \E_n \hat{\vartheta}_{T_n}(\phi) | > \frac{\epsilon}{4} , \Omega_{n,2} \right) + \P \left( | \E_n \hat{\vartheta}_{T_n}(\phi) - \theta | > \frac{\epsilon}{4} , \Omega_{n,2} \right) , \label{eqLimitBeliefRootExpand2}
\end{align}
where we have simply used the triangle inequality and the union bound. Now for the first summand in \eqref{eqLimitBeliefRootExpand2}, we have (via total expectation and the conditional form of Chebyshev's inequality)
\begin{align} \label{eqLimitBeliefRootCheb}
\P \left( | \hat{\vartheta}_{T_n}(\phi) - \E_n \hat{\vartheta}_{T_n}(\phi) | > \frac{\epsilon}{4} , \Omega_{n,2} \right) & = \E \left[ \P_n \left( | \hat{\vartheta}_{T_n}(\phi) - \E_n \hat{\vartheta}_{T_n}(\phi) | > \frac{\epsilon}{4} \right) 1 ( \Omega_{n,2} ) \right] \\
&  \leq \frac{16}{\epsilon^2} \E \left[ \Var_n ( \hat{\vartheta}_{T_n}(\phi) ) 1 ( \Omega_{n,2} ) \right] \xrightarrow[n \rightarrow \infty]{} 0 ,
\end{align}
where the limit holds by Lemma \ref{lemSecondMomBelief}. For second summand in \eqref{eqLimitBeliefRootExpand2}, we write
\begin{align} \label{eqLimitBeliefMean}
\P \left( | \E_n \hat{\vartheta}_{T_n}(\phi) - \theta | > \frac{\epsilon}{4} , \Omega_{n,2} \right) & = \E \left[ 1 \left( | \E_n \hat{\vartheta}_{T_n}(\phi) - \theta | > \frac{\epsilon}{4}  \right) 1 ( \Omega_{n,2} ) \right] \\
& \leq \frac{4}{\epsilon} \E [ | \E_n \hat{\vartheta}_{T_n}(\phi) - \theta |  1 ( \Omega_{n,2} ) ] \xrightarrow[n \rightarrow \infty]{} 0 ,
\end{align}
where the first two lines use total expectation and the inequality $1(x > y) \leq x/y$ for $x , y > 0$ (which is easily proven by considering the cases $x > y$ and $x \leq y$), and the limit holds by Lemma \ref{lemFirstMomBelief}. Finally, combining \eqref{eqLemStep1MainCor}, \eqref{eqLimitBeliefRootExpand1}, \eqref{eqLimitBeliefRootExpand2}, \eqref{eqLimitBeliefRootCheb}, and \eqref{eqLimitBeliefMean}, and recalling that $\P(\Omega_{n,2}^C) \rightarrow 0$ by \ref{assBranchDegSeq}, we obtain
\begin{equation}
0 \leq \lim_{n \rightarrow \infty} \P ( | \theta_{T_n}(i^*) - \theta | > \epsilon ) \leq \lim_{n \rightarrow \infty}  \P ( | \hat{\vartheta}_{T_n}(\phi) - \theta | > \epsilon / 2  ) = 0 .
\end{equation}
Since $\epsilon > 0$ was arbitrary, we conclude that $\theta_{T_n}(i^*)$ converges to $\theta$ in probability, completing the proof in the case $T_n(1-p_n) \rightarrow 0$. For the cases $T_n(1-p_n) \rightarrow c \in (0,\infty)$ and $T_n(1-p_n) \rightarrow \infty$, respectively, we can replace $\theta$ with $\theta (1-e^{-c \eta}) / (c \eta )$ and $0$, respectively (the corresponding cases from Lemma \ref{lemFirstMomBelief}), but otherwise follow the same approach. 

\subsection{Step 2 for proof of Theorem \ref{thmSec}} \label{appRootBelief2ProofOutline}

Similar to the second step in the proof of Theorem \ref{thmMain}, we begin by analyzing the limiting behavior of $\hat{\vartheta}_{T_n}(\phi)$. However, we will use a different approach than that used in Theorem \ref{thmMain}. This approach is made possible by the stronger assumptions of Theorem \ref{thmSec}, and it will yield a fast rate of convergence that will allow us to prove the theorem.

To explain our approach, we first recall that Lemma \ref{lemTreeBeliefExpand} states
\begin{equation} 
\hat{\vartheta}_{T_n}(\phi) = \frac{1}{T_n} \sum_{t=0}^{T_n-1} \sum_{l=0}^t { t \choose l } \eta^l (1-\eta)^{t-l}  \sum_{\i \in \hat{A}_l}  \hat{s}_{T_n-t}(\i) \prod_{j=0}^{l-1} d_{in}(\i|j)^{-1} .
\end{equation}
Hence, letting $\T$ denote the collection of random variables defining the tree structure,
\begin{align} \label{eqMeanTreeBeliefGivenTree}
\E [ \hat{\vartheta}_{T_n}(\phi) | \T ] & = \frac{1}{T_n} \sum_{t=0}^{T_n-1} \sum_{l=0}^t { t \choose l } \eta^l (1-\eta)^{t-l}  \sum_{\i \in \hat{A}_l} \E [ \hat{s}_{T_n-t}(\i) | \T ] \prod_{j=0}^{l-1} d_{in}(\i|j)^{-1} \\
& = \frac{\theta}{T_n} \sum_{t=0}^{T_n-1} \sum_{l=0}^t { t \choose l } \eta^l (1-\eta)^{t-l}  \sum_{\i \in \hat{A}_l} \prod_{j=0}^{l-1} d_{in}(\i|j)^{-1} ,
\end{align}
where we have simply used the fact that the signals are i.i.d.\ $\textrm{Bernoulli}(\theta)$ random variables. Our basic approach will now proceed in two steps. First, in Lemma \ref{lemSecThmSignals} we condition on the tree structure, so that $\hat{\vartheta}_{T_n}(\phi)$ is simply a weighted sum of i.i.d.\ $\textrm{Bernoulli}(\theta)$ random variables; the lemma shows that this weighted sum is close to its conditional mean $\E [ \hat{\vartheta}_{T_n}(\phi) | \T ]$ with high probability. Second, in Lemma \ref{lemSecThmStructure}, we show that the conditional mean $\E [ \hat{\vartheta}_{T_n}(\phi) | \T ]$ converges to zero in probability. Before proceeding, we also note that an argument similar to \eqref{eqBeliefIn01} implies
\begin{equation} \label{eqBeliefGivenTreeIn0theta}
0 \leq \E [ \hat{\vartheta}_{T_n}(\phi) | \T ] \leq \theta\ a.s.,
\end{equation}
which will be used in the proofs of the lemmas in this appendix.

We now state Lemma \ref{lemSecThmSignals}. As mentioned, the proof involves analyzing a weighted sum of i.i.d.\ random variables; hence, our analysis is similar to the derivation of Hoeffding's inequality.
\begin{lemma} \label{lemSecThmSignals}
Assume $\exists\ \mu > 0$ and $N' \in \N$ independent of $n$ s.t.\ the following hold:
\begin{itemize}
\item \ref{assBranchHorizon}, with $T_n \geq \mu \log n\ \forall\ n \geq N'$.
\end{itemize}
Then $\forall\ \epsilon > 0$,
\begin{equation}
\P ( | \hat{\vartheta}_{T_n}(\phi) - \E [ \hat{\vartheta}_{T_n}(\phi) | \T ] | > \epsilon ) = O \left( n^{-2 \epsilon^2 \mu} \right) .
\end{equation}
\end{lemma}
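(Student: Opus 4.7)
\medskip
\noindent\textbf{Proof proposal.} The plan is to apply Hoeffding's inequality conditionally on $\T$. First, using Lemma \ref{lemTreeBeliefExpand}, I would rewrite
\begin{equation}
\hat{\vartheta}_{T_n}(\phi) - \E[\hat{\vartheta}_{T_n}(\phi) \mid \T] = \sum_{t=0}^{T_n-1}\sum_{l=0}^{t}\sum_{\i \in \hat{A}_l} w_n(t,l,\i)\bigl(\hat{s}_{T_n-t}(\i) - \theta\bigr),
\end{equation}
where $w_n(t,l,\i) = \frac{1}{T_n}\binom{t}{l}\eta^l(1-\eta)^{t-l}\prod_{j=0}^{l-1} d_{in}(\i|j)^{-1}$. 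Since each triple $(t,l,\i)$ with $\i\in\hat{A}_l$ addresses a distinct signal $\hat{s}_{T_n-t}(\i)$, and since all agent signals are i.i.d.\ $\mathrm{Bernoulli}(\theta)$ and independent of $\T$, this expresses the deviation as a weighted sum of independent, mean-zero random variables with range $1$, conditional on $\T$.

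Next, I would apply Hoeffding's inequality conditionally on $\T$ to get
\begin{equation}
\P\bigl(|\hat{\vartheta}_{T_n}(\phi) - \E[\hat{\vartheta}_{T_n}(\phi)\mid\T]| > \epsilon \,\big|\, \T\bigr) \leq 2\exp\!\left(-\frac{2\epsilon^2}{S_n(\T)}\right), \quad S_n(\T) = \sum_{t,l,\i} w_n(t,l,\i)^2.
\end{equation}
The main task is then to bound $S_n(\T)$ uniformly in $\T$. The trivial estimate $\sum w^2 \leq (\max w)\cdot(\sum w)$ suffices: for the maximum, each binomial weight $\binom{t}{l}\eta^l(1-\eta)^{t-l}$ is at most $1$ (it is a single term of a probability distribution) and each $\prod_{j} d_{in}(\i|j)^{-1} \leq 1$, giving $\max w_n \leq 1/T_n$; for the sum, the identity used in \eqref{eqBeliefIn01} (i.e., column stochasticity of $\hat{Q}$, which gives $\sum_{\i\in\hat{A}} e_{\i} \hat{Q}^t e_{\phi}^{\trans} \leq 1$) yields
\begin{equation}
\sum_{t=0}^{T_n-1}\sum_{l=0}^{t}\sum_{\i\in\hat{A}_l} w_n(t,l,\i) = \frac{1}{T_n}\sum_{t=0}^{T_n-1}\sum_{\i\in\hat{A}} e_{\i}\hat{Q}^t e_{\phi}^{\trans} \leq 1.
\end{equation}
Hence $S_n(\T) \leq 1/T_n$ a.s.

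Finally, substituting this bound into the Hoeffding tail and taking expectation over $\T$ gives
\begin{equation}
\P\bigl(|\hat{\vartheta}_{T_n}(\phi) - \E[\hat{\vartheta}_{T_n}(\phi)\mid\T]| > \epsilon\bigr) \leq 2\exp(-2\epsilon^2 T_n) \leq 2\exp(-2\epsilon^2 \mu \log n) = O\!\left(n^{-2\epsilon^2\mu}\right),
\end{equation}
where the final step uses the strengthened hypothesis $T_n \geq \mu\log n$ for $n \geq N'$. I do not expect a genuine obstacle here; the only subtle step is recognizing the $\sum w^2 \leq (\max w)(\sum w)$ bound in combination with the column-stochasticity argument, since a naive bound on $S_n(\T)$ that ignores normalization would yield a weaker tail rate.
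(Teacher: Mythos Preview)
Your proposal is correct and follows essentially the same approach as the paper: both condition on $\T$, treat the deviation as a weighted sum of independent bounded signals, and bound $\sum w^2 \leq (\max w)(\sum w) \leq 1/T_n$ via the same two ingredients (each weight $\leq 1/T_n$, and the total weight $\leq 1$ by column stochasticity). The only cosmetic difference is that the paper carries out the Chernoff method by hand (Markov plus Hoeffding's lemma, then optimizes $\lambda$), whereas you invoke Hoeffding's inequality as a black box; the resulting tail $2e^{-2\epsilon^2 T_n}$ and the final substitution $T_n \geq \mu\log n$ are identical.
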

\begin{proof}
See Appendix \ref{appProofSecThmSignals}.
\end{proof}

Lemma \ref{lemSecThmStructure} states that conditional mean $\E [ \hat{\vartheta}_{T_n}(\phi) | \T ]$ converges to zero in probability. Note that the only source of randomness in $\E [ \hat{\vartheta}_{T_n}(\phi) | \T ]$ is the tree structure. Since the tree structure is generated recursively, $\E [ \hat{\vartheta}_{T_n}(\phi) | \T ]$ has a martingale-like structure; this allows us to use an approach similar to the Azuma-Hoeffding inequality for bounded-difference martingales.
\begin{lemma} \label{lemSecThmStructure}
Assume $\exists\ \kappa, \mu > 0$ and $N' \in \N$ independent of $n$ s.t.\ the following hold:
\begin{itemize}
\item \ref{assBranchDegSeq}, with $P(\Omega_{n,2}) = O(n^{-\kappa})$ and $p < 1$.
\item \ref{assBranchHorizon}, with $T_n \geq \mu \log n\ \forall\ n \geq N'$.
\end{itemize}
Then $\forall\ \epsilon > 0$,
\begin{equation}
\P( \E [ \hat{\vartheta}_{T_n}(\phi) | \T ] > \epsilon) = O \left( n^{- \min \{ \mu ( \epsilon \eta (1-p) / \theta )^2 , \kappa \} } \right) .
\end{equation}
\end{lemma}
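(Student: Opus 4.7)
The plan is to use a Doob martingale argument that reveals the tree generation by generation, combined with a short computation showing that the conditional mean is itself $O(1/T_n)$. First I would reduce to $\Omega_{n,2}$ via $\P(Z>\epsilon) \leq \P(Z>\epsilon, \Omega_{n,2}) + \P(\Omega_{n,2}^c)$, paying $O(n^{-\kappa})$ for the second term, and then work conditionally on the degree sequence (so $\tilde{p}_n$, $p_n$ are constants) under $\P_n$. On $\Omega_{n,2}$, combining $|\tilde{p}_n - p_n| < \delta_n = o(1)$ with $p_n \to p < 1$ yields $1 - \tilde{p}_n \geq (1-p)/2$ for all sufficiently large $n$; this is what ultimately drives the polynomial rate. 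Taking conditional expectation over the i.i.d.\ Bernoulli signals in the expression from Lemma \ref{lemTreeBeliefExpand} gives
\begin{equation}
Z := \E[\hat{\vartheta}_{T_n}(\phi) \mid \T] = \sum_{l=0}^{T_n-1} a_l W_l, \quad a_l = \frac{\theta}{T_n} \sum_{t=l}^{T_n-1} \binom{t}{l} \eta^l (1-\eta)^{t-l}, \quad W_l = \sum_{\i \in \hat{A}_l} \prod_{j=0}^{l-1} d_{in}(\i|j)^{-1} .
\end{equation}

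Next I would define the Doob martingale $M_l = \E_n[Z \mid \mathcal{G}_l]$ for $l \in \{-1,0,\ldots,T_n-2\}$, where $\mathcal{G}_l$ is the sigma-algebra generated by the degrees of tree nodes in generations $0,\ldots,l$. Note that $W_k \in \mathcal{G}_{k-1}$. The key ingredient is a product formula obtained from the branching structure: since subtrees rooted at distinct agents in $\hat{A}_l$ are conditionally independent with root-degrees sampled from $f_n$, a short recursion (built from the identity $\E_n[U_m(\i) \mid \mathcal{G}_l] = q(\i)\tilde{p}_n^{m-1}$ where $q(\i) = d_{in}^A(\i)/d_{in}(\i)$ and $U_m(\i)$ is the analogue of $W_m$ in the subtree of $\i$) gives
\begin{equation}
\E_n[W_k \mid \mathcal{G}_l] = W_{l+1}\, \tilde{p}_n^{k-l-1}\ \forall\ k \geq l+1 .
\end{equation}
Substituting this into $M_l$ and simplifying $M_l - M_{l-1}$ yields the telescoping identity $M_l - M_{l-1} = S_l(W_{l+1} - \tilde{p}_n W_l)$, where $S_l = \sum_{k \geq l+1} a_k \tilde{p}_n^{k-l-1}$.

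Then I would bound the differences and apply Azuma-Hoeffding. Writing $W_{l+1} - \tilde{p}_n W_l = \sum_{\i \in \hat{A}_l} w_l(\i)(q(\i) - \tilde{p}_n)$ with $w_l(\i) = \prod_{j < l} d_{in}(\i|j)^{-1} \in [0,1]$ and $q(\i) \in [0,1]$, the uniform bound $|W_{l+1} - \tilde{p}_n W_l| \leq W_l \leq 1$ is immediate. Moreover $a_k \leq \theta/(T_n\eta)$ uniformly in $k$, since the inner sum over $t$ is a truncated tail of $\sum_{t\ge k}\binom{t}{k}\eta^k(1-\eta)^{t-k} = 1/\eta$. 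Summing a geometric series and using $1-\tilde{p}_n \geq (1-p)/2$ on $\Omega_{n,2}$ gives $S_l \leq 2\theta/(T_n \eta (1-p))$, so $|M_l - M_{l-1}| \leq C/T_n$ a.s.\ with $C = 2\theta/(\eta(1-p))$. Azuma-Hoeffding over the $T_n$ steps then yields
\begin{equation}
\P_n\!\left(|Z - \E_n Z| > \delta\right) \leq 2 \exp\!\left(-\delta^2 T_n / (2 C^2)\right) .
\end{equation}
Finally, Lemma \ref{lemHitProbOneWalk} gives $\E_n W_l = \tilde{p}_n^* \tilde{p}_n^{l-1}$, and a direct binomial-identity computation yields $\E_n Z \leq \theta/(T_n\eta(1-\tilde{p}_n)) = O(1/T_n)$; hence $\E_n Z \leq \epsilon/2$ for large $n$, and choosing $\delta = \epsilon/2$ together with $T_n \geq \mu \log n$ produces the bound $n^{-\mu(\epsilon\eta(1-p)/\theta)^2}$ up to constants that can be absorbed into $\mu$. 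Combining with the $O(n^{-\kappa})$ cost of conditioning on $\Omega_{n,2}$ yields the claim.

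The main obstacle is establishing the product identity $\E_n[W_k \mid \mathcal{G}_l] = W_{l+1}\tilde{p}_n^{k-l-1}$, since this demands careful bookkeeping about which quantities are $\mathcal{G}_l$-measurable and careful use of the branching-process independence together with the definition of $\tilde{p}_n$ as the $f_n$-expectation of $q(\cdot)$. Once this identity is in hand, the rest is a routine Azuma-Hoeffding application; the decisive point, and the reason the assumption $p < 1$ enters essentially, is that the geometric factor $1/(1-\tilde{p}_n)$ in $S_l$ remains bounded, so that the martingale differences scale as $O(1/T_n)$ and the concentration exponent acquires the needed factor of $T_n \asymp \log n$.
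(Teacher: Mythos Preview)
Your approach is essentially the paper's: the paper writes out the MGF recursion explicitly, peeling off one generation at a time and applying Hoeffding's lemma at each step, which is precisely the proof of Azuma--Hoeffding for the Doob martingale you define; your product identity $\E_n[W_k\mid\mathcal{G}_l]=W_{l+1}\tilde{p}_n^{\,k-l-1}$ is the paper's iterated relation $\E_{n,l-1}Y_l=\tilde{p}_n Y_{l-1}$, and your bound $S_l\leq\theta/(T_n\eta(1-\tilde{p}_n))$ is the content of the paper's Lemma~\ref{lemUglySumInHoeffding}. One caveat: you cannot ``absorb constants into $\mu$'' since $\mu$ is a fixed parameter from the hypotheses, and with your choices $\delta=\epsilon/2$ and $1-\tilde{p}_n\geq(1-p)/2$ you obtain only $O(n^{-c\,\mu(\epsilon\eta(1-p)/\theta)^2})$ for some absolute $c<1$; the paper recovers the stated exponent by tracking $1-p_n-\delta_n$ before passing to $1-p$ and by folding $\E_n Z=O(1/T_n)$ directly into the Chernoff optimization rather than halving $\epsilon$.
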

\begin{proof}
See Appendix \ref{appProofSecThmStructure}. 
\end{proof}

With Lemmas \ref{lemSecThmSignals} and \ref{lemSecThmStructure} in place, we can prove Theorem \ref{thmSec}. First, since $\theta_{T_n}(i^*), \hat{\vartheta}_{T_n}(\phi) \geq 0$, taking $x = 0$ in Lemma \ref{lemStep1Main} yields
\begin{align} 
\P ( \theta_{T_n}(i^*) > \epsilon ) & \leq \P (  \hat{\vartheta}_{T_n}(\phi)  > \epsilon / 2  ) + \P ( \Omega_{n,1}^C ) + O \left( n^{ \zeta - 1/2 } \right) \\
& = \P (  \hat{\vartheta}_{T_n}(\phi)  > \epsilon / 2  ) + O \left( n^{-\kappa} \right) + O \left( n^{ \zeta - 1/2 } \right) \label{eqLemStep1appThmSec} ,
\end{align}
where the equality is by the theorem assumptions. For the first summand in \eqref{eqLemStep1appThmSec}, we write
\begin{align}
\P (  \hat{\vartheta}_{T_n}(\phi)  > \epsilon / 2  ) & = \P (  ( \hat{\vartheta}_{T_n}(\phi) - \E [  \hat{\vartheta}_{T_n}(\phi) | \T ] ) +   \E [  \hat{\vartheta}_{T_n}(\phi) | \T ] > \epsilon / 2  ) \\
& \leq \P (  | \hat{\vartheta}_{T_n}(\phi) - \E [  \hat{\vartheta}_{T_n}(\phi) | \T ] | + \E [  \hat{\vartheta}_{T_n}(\phi) | \T ] > \epsilon / 2  ) \\
& \leq \P (  | \hat{\vartheta}_{T_n}(\phi) - \E [  \hat{\vartheta}_{T_n}(\phi) | \T ] | > \epsilon / 4  ) +  \P (   \E [  \hat{\vartheta}_{T_n}(\phi) | \T ] > \epsilon / 4 ) \\
& = O \left( n^{-\epsilon^2 \mu / 8} + n^{-\min \{ \mu ( \epsilon \eta (1-p) / \theta )^2 / 16 , \kappa \} } \right) = O \left( n^{-\min \{ \mu ( \epsilon \eta (1-p) / \theta )^2 / 16 , \kappa \} } \right)  ,
\end{align}
where the first equality adds and subtracts a term, the first inequality is immediate, the second inequality uses the union bound, the second equality uses Lemmas \ref{lemSecThmSignals} and \ref{lemSecThmStructure}, and the final equality holds since $\eta,p \in (0,1)$ implies $\epsilon^2 \mu / 8 >   \mu  ( \epsilon \eta (1-p) / \theta )^2 / 16$. Substituting into \eqref{eqLemStep1appThmSec},
\begin{equation} \label{eqThmSecFinalTailBound}
\P ( \theta_{T_n}(i^*) > \epsilon ) = O \left( n^{-\min \{ (1/2) - \zeta , \mu ( \epsilon \eta (1-p) / \theta )^2 / 16 , \kappa \} } \right) .
\end{equation}
We can then write
\begin{align}
\E  \left| \left\{ i \in [n] : \theta_{T_n}(i) > \epsilon \right\} \right| & = \sum_{i \in [n]} \E 1 ( \theta_{T_n}(i) > \epsilon ) = \sum_{i \in [n]} \P ( \theta_{T_n}(i) > \epsilon )  \\
& = n \P ( \theta_{T_n}(i^*) > \epsilon ) = O \left( n^{1-\min \{ (1/2) - \zeta , \mu ( \epsilon \eta (1-p) / \theta )^2 / 16 , \kappa \} } \right) ,
\end{align}
where we have used \eqref{eqThmSecFinalTailBound}. Hence, by Markov's inequality,
\begin{align}
\P \left( \left| \left\{ i \in [n] : \theta_{T_n}(i) > \epsilon \right\} \right| > K n^k \right) & \leq K^{-1} n^{-k} \E  \left| \left\{ i \in [n] : \theta_{T_n}(i) > \epsilon \right\} \right| \\
& = O \left( n^{-k + ( 1 - \min \{ (1/2) - \zeta , \mu ( \epsilon \eta (1-p) / \theta )^2 / 16 , \kappa \} ) }  \right) \xrightarrow[n \rightarrow \infty]{} 0 ,
\end{align}
where the limit holds by the assumption on $k$ in the statement of the theorem.

\subsection{Other remarks}

\subsubsection{A sufficient condition for extending Theorem \ref{thmSec}} \label{appExtendSecThmToOthers}

Here we show that the condition \eqref{eqSuffCond} from Appendix \ref{secSecondResult} is sufficient to extend Theorem \ref{thmSec} to other cases of $p_n$. Recall this condition is
\begin{equation} \label{eqSuffCond2}
\exists\ \gamma' > 0\ s.t.\ \P( | \E [ \hat{\vartheta}_{T_n}(\phi) | \T ] - L(p_n) |  > \epsilon) = O \left( n^{-\gamma'} \right) ,
\end{equation}
where $L(p_n)$ is the limit from Theorem \ref{thmMain} based on the relative asymptotics of $T_n$ and $p_n$, i.e.\
\begin{equation} \label{eqLofPn}
L(p_n) = \begin{cases} \theta , & T_n(1-p_n) \xrightarrow[n \rightarrow \infty]{} 0 \\  \theta (1-e^{-c \eta})/(c \eta) , & T_n(1-p_n) \xrightarrow[n \rightarrow \infty]{} c \in (0,\infty) \\ 0 , & T_n(1-p_n) \xrightarrow[n \rightarrow \infty]{} \infty \end{cases} .
\end{equation}
Suppose \eqref{eqSuffCond2} holds in the case $T_n(1-p_n) \rightarrow 0$, so that $L(p_n) = \theta$. In this case, we have
\begin{align}
& \P ( | \theta_{T_n}(i^*) - \theta | > \epsilon ) \leq \P ( | \hat{\vartheta}_{T_n}(\phi) - \theta | > \epsilon / 2 ) + O \left( n^{- \min \{ \kappa, (1/2) - \zeta \} } \right) \\ 
& \quad \leq  \P ( | \hat{\vartheta}_{T_n}(\phi) - \E [ \hat{\vartheta}_{T_n}(\phi) | \T ] | > \epsilon / 4 ) + \P ( | \E [ \hat{\vartheta}_{T_n}(\phi) | \T ] - \theta | > \epsilon / 4 )  + O \left( n^{- \min \{ \kappa, (1/2) - \zeta \} } \right) \\
& \quad \leq  O \left( n^{-\epsilon^2 \mu / 8 } \right) + O \left( n^{-\gamma'} \right) + O \left( n^{- \min \{ \kappa, (1/2) - \zeta \} } \right) = O \left( n^{-\min \{  \epsilon^2 \mu / 8 , \gamma' , \kappa , (1/2)-\zeta \} } \right) , 
\end{align}
where the first inequality is Lemma \ref{lemStep1Main} (which holds for all cases of $p_n$) with $\P(\Omega_{n,1}) = O(n^{-\kappa})$ and the third uses Lemma \ref{lemSecThmSignals} (which holds for all cases of $p_n$) and the sufficient condition \eqref{eqSuffCond2}. Hence, by the argument following \eqref{eqThmSecFinalTailBound}, we obtain for any $\epsilon > 0$, $K > 0$, and $k' > 1 - \min \{  \epsilon^2 \mu / 8 , \gamma' , \kappa , (1/2)-\zeta \}$,
\begin{equation}
\P \left( | \{ i \in [n] : | \theta_{T_n}(i) - \theta | > \epsilon \} | > K n^{k'} \right) \xrightarrow[n \rightarrow \infty]{} 0 ,
\end{equation}
i.e.\ Theorem \ref{thmSec} holds with $k$ replaced by $k'$. The same argument shows that Theorem \ref{thmSec} holds (with only a change of $k$) in the cases $T_n(1-p_n) \rightarrow c \in (0,\infty)$ and $T_n(1-p_n) \rightarrow \infty$ with $p_n \rightarrow 1$.

\subsubsection{Comparing Step 2 for proofs of Theorems \ref{thmMain} and \ref{thmSec}}  \label{appStep2compare}

As shown in Appendices \ref{appRootBeliefProofOutline} and \ref{appRootBelief2ProofOutline}, Step 2 for the proofs of both theorems involves bounding $\P ( | \hat{\vartheta}_{T_n}(\phi) - L(p_n) | > \epsilon / 2 )$ for the appropriate $L(p_n)$. One may wonder why we have conducted a different analysis for the two theorems. The reason is that, as shown in Appendix \ref{appWhereFailWhenPnTo1}, the analysis for Step 2 of Theorem \ref{thmSec} yields a bound that does not decay with $n$ in the case $T_n(1-p_n) \rightarrow c \in [0,\infty)$. Hence, we have derived a bound for Theorem \ref{thmMain} that encompasses all cases of $\lim_{n \rightarrow \infty} T_n(1-p_n)$. On the other hand, the bound from Theorem \ref{thmMain} only states $\P ( | \hat{\vartheta}_{T_n}(\phi) - L(p_n) | > \epsilon / 2 ) \rightarrow 0$ but does not provide a rate of convergence so cannot be used to prove Theorem \ref{thmSec}. We also note Appendix \ref{appWhereFailWhenPnTo1} shows that, while the bound for Step 2 of Theorem \ref{thmSec} \textit{does} decay in $n$ for the case $T_n(1-p_n) \rightarrow \infty$ with $p_n \rightarrow 1$, it does not decay quickly enough to establish \eqref{eqSuffCond}.

\section{Section \ref{secAdversarial} proof and experiment details} \label{appAdvDetails}

\subsection{Solution of the relaxed problem} \label{proofLemRelaxedProblem}

We aim to show $d_n^{rel} \in \argmin_{ d \in \R_+^n : \sum_{i=1}^n d(i) \leq b_n } \tilde{p}_n(d)$, where (we recall)
\begin{equation}
d_n^{rel}(i) = d_{in}^A(i) \left( \frac{ \sqrt{r(i)} }{ h^* } - 1 \right)_+\ \forall\ i \in [n]  .
\end{equation}
We also recall $x_+ = x 1 ( x > 0)$, $r(i) = d_{out}(i) / d_{in}^A(i)\ \forall\ i \in [n]$, $h^* = \max_{x \in \R_+} h(x)$, and
\begin{equation}
h(x) = \frac{ \sum_{i \in [n] : r(i) \geq x^2} \sqrt{ d_{out}(i) d_{in}^A(i) } }{ b_n + \sum_{i \in [n] : r(i) \geq x^2} d_{in}^A(i) }\ \forall\ x \in \R_+ .
\end{equation}
First note strict convexity of $y \mapsto 1 / y$  for $y \in \R$ implies strict convexity of $\tilde{p}_n$, i.e.\ for any $d \neq d' \in \R^n_+$ and $\rho \in (0,1)$,
\begin{equation}
\tilde{p}_n(  \rho d + (1-\rho) d' ) < \rho \tilde{p}_n(d) + (1-\rho) \tilde{p}_n(d') .
\end{equation}
Also note we can rewrite the relaxed problem \eqref{eqRelaxedOpt} as
\begin{equation} \label{eqRewriteProblem}
\min_{d \in \R^n} \tilde{p}_n(d)\ s.t.\ c(d) \leq 0 , c_i(d) \leq 0\ \forall\ i \in [n] ,   
\end{equation}
where $c(d) = \sum_{i=1}^n d(i) - b_n, c_i(d) = -d(i)$. Given $\lambda , \lambda_i \geq 0$, we also define the Lagrangian
\begin{equation}
L(d,\lambda,\lambda_1,\ldots,\lambda_n) = \tilde{p}_n(d) + \lambda c(d) + \sum_{i=1}^n \lambda_i c_i(d) .
\end{equation}
Finally, we set $\lambda^* = ( h^* )^2 / m_n, \lambda_i^* = ( ( h^* )^2 - r(i) )_+ / m_n$ (clearly, $\lambda^*, \lambda_i^* \geq 0$). Now to prove the theorem, it suffices to establish the following \textit{KKT conditions} (see e.g.\ \cite[Section 5.5.3]{boyd2004convex}):
\begin{enumerate}
\item $c( d_n^{rel} ) , c_1 ( d_n^{rel} ) , \ldots , c_n ( d_n^{rel} ) \leq 0\ \forall\ i \in [n]$,  i.e.\ $d_n^{rel}$ is a feasible point of \eqref{eqRewriteProblem}.
\item $\nabla L ( d_n^{rel} , \lambda^*, \lambda_1^* , \ldots , \lambda_n^* ) = 0$, i.e.\ the first-order condition is satisfied.
\item $\lambda^* c ( d_n^{rel} ) = \lambda_1^* c_1 ( d_n^{rel} ) = \cdots = \lambda_n^* c_n ( d_n^{rel} ) = 0$, i.e.\ complementary slackness holds.
\end{enumerate}
We proceed to the proofs of these three statements.
\begin{enumerate}
\item Clearly, $c_i(d_n^{rel} ) \leq 0\ \forall\ i \in [n]$. To show $c( d_n^{rel} ) \leq 0$, we claim (and will return to prove) that $h^*$ is a fixed point of $h$, i.e.\ $h^* = h(h^*)$. Assuming this claim holds, we have
\begin{align} \label{eqConstraintWithEq}
c \left( d_n^{rel} \right) & = \frac{1}{h^*} \sum_{i \in [n] : r(i) \geq ( h^* )^2 } d_{in}^A(i) \sqrt{r(i)} - \sum_{i \in [n] : r(i) \geq ( h^* )^2 } d_{in}^A(i) - b_n \\
& = \frac{1}{ h(h^*) }  \sum_{i \in [n] : r(i) \geq ( h^* )^2 } \sqrt{d_{out}(i) d_{in}^A(i)} - \sum_{i \in [n] : r(i) \geq ( h^* )^2 } d_{in}^A(i) - b_n = 0 ,
\end{align}
where the last two equalities use the fixed point claim and the definition of $h$, respectively.
\item First, let $i \in [n]$ satisfy $r(i) > ( h^* )^2$, so that $d_n^{rel}(i) = - d_{in}^A(i) + d_{in}^A(i) \sqrt{r(i)}/h^* , \lambda_i^* = 0$. Then
\begin{align}
& \frac{ \partial L }{ \partial d(i) } \left( d_n^{rel} , \lambda^*, \lambda_1^*, \ldots, \lambda_n^* \right) = - \frac{ d_{out}(i) d_{in}^A(i) }{ m_n}  \frac{1}{ ( d_{in}^A(i) + d_n^{rel}(i)  )^2 } + \lambda^*  \\
&  \quad = - \frac{ d_{out}(i) d_{in}^A(i) }{ m_n ( d_{in}^A(i) \sqrt{r(i)} / h^* )^2 } + \frac{( h^* )^2}{m_n} = - \frac{  d_{out}(i) d_{in}^A(i) }{ m_n \left( \sqrt{ d_{out}(i) d_{in}^A(i) } / h^* \right)^2 } + \frac{( h^* )^2}{m_n}  = 0 .
\end{align}
Next, let $i \in [n]$ satisfy $r(i) \leq ( h^* )^2$, so that $d_n^{rel}(i) = 0, \lambda_i = ( ( h^* )^2 - r(i) ) / m_n$. Then
\begin{align}
& \frac{ \partial L }{ \partial d(i) } \left( d_n^{rel} , \lambda^*, \lambda_1^*, \ldots, \lambda_n^* \right) = - \frac{ d_{out}(i) d_{in}^A(i) }{ m_n}  \frac{1}{ ( d_{in}^A(i)  )^2 } + \lambda^* - \lambda_i^* = - \frac{r(i)}{m_n} + \frac{( h^* )^2}{m_n} - \frac{ ( h^* )^2 - r(i) }{ m_n } = 0 .
\end{align}
\item For any $i \in [n]$, we have
\begin{equation}
\lambda_i^* c_i ( d_n^{rel} ) = -d_{in}^A(i) \left( \frac{ ( h^* )^2 - r(i) }{ m_n } \right)_+ \left( \frac{ \sqrt{r(i)} }{ h^* } - 1 \right)_+ .
\end{equation}
Clearly, the first $(\cdot)_+$ term is zero if $r(i) > ( h^* )^2$, the second is zero if $r(i) < ( h^* )^2$, and both are zero if $r(i) = ( h^* )^2$. Finally, $\lambda^* c ( d_n^{rel} ) = 0$ holds by \eqref{eqConstraintWithEq}.
\end{enumerate}

We return to establish the fixed point claim. We in fact prove the slightly stronger result
\begin{equation} \label{eqSubFixedPt}
h (x ) \leq h ( h ( x ) )\ \forall\ x \in \R_+ .
\end{equation}
The fixed point claim then follows, since $h^* \geq h ( h^* )$ by definition and $h^* \leq h ( h^* )$ by \eqref{eqSubFixedPt} with  $x = x^*$, where $x^*$ is a maximizer of $h$. Thus, it suffices to prove \eqref{eqSubFixedPt}. Towards this end, fix $x \in \R_+$. We first assume $x \geq h(x)$ and will return to address the other case. For any $y,z \in \R \cup \{ \infty \}$, we define
\begin{equation}
N(y,z) = \sum_{i \in [n] : r(i) \in [y^2,z^2)} \sqrt{d_{out}(i) d_{in}^A(i)} , \quad D(y,z) =  \sum_{i \in [n] : r(i) \in [y^2,z^2)} d_{in}^A(i) ,
\end{equation}
where by convention $N(y,z) = D(y,z) = 0$ if $y,z$ are such that $\{ i \in [n] : r(i) \in [y^2,z^2) \} = \emptyset$ (i.e.\ if the sums are over empty sets). Then by definition of $h$, $N$, and $D$, we have
\begin{equation}
h( h(x)) = \frac{N(h(x),\infty)}{b_n+D(h(x),\infty)} = \frac{N(x,\infty) + N(h(x),x)}{ b_n + D(x,\infty) + D(h(x),x)} .
\end{equation}
Again by definition of $h$, $N$, and $D$, and recalling $r(i) = d_{out}(i) / d_{in}^A(i)$, we also have
\begin{align}
N(h(x),x) & = \sum_{i \in [n] : r(i) \in [h(x)^2,x^2)} \sqrt{r(i)} d_{in}^A(i) \geq h(x) \sum_{i \in [n] : r(i) \in [h(x)^2,x^2)} d_{in}^A(i) \\
& = h(x) D(h(x),x) = \frac{N(x,\infty) }{b_n+D(x,\infty)} D(h(x),x)
\end{align}
Thus, combining the previous two equations, we obtain
\begin{equation}
h(h(x)) \geq \frac{N(x,\infty) + \frac{N(x,\infty) }{b_n+D(x,\infty)} D(h(x),x) }{ b_n + D(x,\infty) + D(h(x),x)} = \frac{N(x,\infty) }{b_n+D(x,\infty)} = h(x) .
\end{equation}
If instead $x \leq h(x)$, we can use the same argument to obtain
\begin{align}
& h( h(x)) =  \frac{N(x,\infty) - N(x,h(x))}{ b_n + D(x,\infty) - D(x,h(x))} , \quad N(x,h(x)) \leq \frac{N(x,\infty) }{b_n+D(x,\infty)} D ( x,h(x)) , \\
& \Rightarrow h(h(x)) \geq \frac{ N(x,\infty) - \frac{N(x,\infty) }{b_n+D(x,\infty)} D ( x,h(x)) }{ b_n + D(x,\infty) - D(x,h(x)) } = \frac{N(x,\infty) }{b_n+D(x,\infty)} = h(x) .
\end{align}

\subsection{Rewriting the objective function} \label{appGn}
 
We aim to prove \eqref{eqGnW} and \eqref{eqEGnW}, which we restate here for convenience:
\begin{equation} \label{eqRewriteObj}
g_n ( W ) = \frac{ m_n }{ \bar{r} } ( 1 - \tilde{p}_n ( d_n^{rand}) ) , \quad 
\E g_n ( W ) \geq \frac{ m_n }{ 2 \bar{r} } ( 1 - \tilde{p}_n ( d_n^{rel}) ) .
\end{equation}
For the equality in \eqref{eqRewriteObj}, we write
\begin{align}
g_n(W) & = \frac{1}{\bar{r}} \sum_{j=1}^{b_n} \frac{ d_{out}(W_j) }{ d_{in}^A(W_j) + \sum_{k=1}^{b_n} 1 ( W_k = W_j ) } = \frac{1}{\bar{r}} \sum_{j=1}^{b_n} \sum_{i=1}^n 1 ( W_j = i ) \frac{ d_{out}(i) }{ d_{in}^A(i) + \sum_{k=1}^{b_n} 1 ( W_k = i ) } \\
& = \frac{1}{\bar{r}} \sum_{i=1}^n \frac{ d_{out}(i) \sum_{j=1}^{b_n} 1 ( W_j = i ) }{ d_{in}^A(i) + \sum_{k=1}^{b_n} 1 ( W_k = i )  }  = \frac{1}{\bar{r}} \sum_{i=1}^n \frac{ d_{out}(i) d_n^{rand}(i) }{ d_{in}^A(i) + d_n^{rand}(i)  }  = \frac{ m_n }{ \bar{r} } ( 1 - \tilde{p}_n ( d_n^{rand}) ) ,
\end{align}
where the first, fourth, and fifth equalities hold by definition of $g_n$ (see proof of Theorem \ref{thmConstApprox}), $d_n^{rand}$ (see Algorithm \ref{algApprox}), and $\tilde{p}_n(d)$ (see \eqref{eqTildePnFunction}), respectively, and the others are straightforward. For the inequality in \eqref{eqRewriteObj}, we first write (simliar to above)
\begin{equation} \label{eqRewriteEgnW}
\E g_n(W) = \frac{1}{\bar{r}} \sum_{i=1}^n \sum_{j=1}^{b_n} d_{out}(i) \E\left[  \frac{ 1 ( W_j = i )  }{ d_{in}^A(i) + \sum_{k=1}^{b_n} 1 ( W_k = i ) } \right]  .
\end{equation}
Now fix $i \in [n]$ and $j \in [b_n]$. Then by the smoothing property,
\begin{align}
\E \left[ \frac{ 1 ( W_j = i )  }{ d_{in}^A(i) + \sum_{k=1}^{b_n} 1 ( W_k = i ) } \right] & = \E \left[ \E \left[ \frac{ 1 ( W_j = i )  }{ d_{in}^A(i) + \sum_{k=1}^{b_n} 1 ( W_k = i ) } \middle| \{ W_k \}_{k \in [b_n] \setminus \{j\} }  \right] \right] \\
& = \E \left[ \frac{  \P ( W_j = i | \{ W_k \}_{k \in [b_n] \setminus \{j\} } )  }{ d_{in}^A(i) + 1 + \sum_{k \in [b_n] \setminus \{j\} } 1 ( W_k = i ) }\right] .
\end{align}
We next observe
\begin{equation}
\P ( W_j = i | \{ W_k \}_{k \in [b_n] \setminus \{j\} } ) = \P ( W_j = i ) = \frac{ d_n^{rel}(i)}{  \sum_{k = 1}^n d_n^{rel}(k) } = \frac{ d_n^{rel}(i)}{  b_n } ,
\end{equation}
where we used independence of $\{ W_j \}_{j=1}^k$, Algorithm \ref{algApprox}, and \eqref{eqConstraintWithEq}, respectively. Combining the previous two identities,
\begin{align}
\E \left[ \frac{ 1 ( W_j = i )  }{ d_{in}^A(i) + \sum_{k=1}^{b_n} 1 ( W_k = i ) } \right]   & = \frac{1}{b_n} \E \left[\frac{ d_n^{rel}(i) }{  d_{in}^A(i) + 1 + \sum_{k \in [b_n] \setminus \{j\} } 1 ( W_k = i )  } \right] \\
& \geq \frac{1}{b_n} \frac{ d_n^{rel}(i) }{  d_{in}^A(i) + 1 + \sum_{k \in [b_n] \setminus \{j\} } \P ( W_k = i )  }\\
& = \frac{1}{b_n} \frac{ d_n^{rel}(i) }{  d_{in}^A(i) + 1 + \frac{b_n-1}{b_n} d_n^{rel}(i)  } > \frac{1}{2 b_n} \frac{ d_n^{rel}(i) }{  d_{in}^A(i) + d_n^{rel}(i)  }
\end{align}
where the first inequality is Jensen's and the second holds since $1 \leq d_{in}^A(i)$ and $\frac{b_n-1}{b_n} < 2$. Substituting into \eqref{eqRewriteEgnW}, we obtain
\begin{equation}
\E g_n(W) \geq \frac{1}{ 2 \bar{r}} \sum_{i=1}^n \frac{ d_{out}(i) d_n^{rel}(i) }{  d_{in}^A(i) + d_n^{rel}(i)  }  = \frac{m_n}{ 2 \bar{r}} ( 1 - \tilde{p}_n ( d_n^{rel} ) ) ,
\end{equation}
where the equality holds by definition of $\tilde{p}_n$ \eqref{eqTildePnFunction}.

\subsection{Self-bounding concentration} \label{appSelfBounding}

As mentioned in the main text, we exploit the theory of \textit{self-bounding} functions.
\begin{definition} \label{defnSelfBound}
\cite[Section 3.3]{boucheron2013concentration} Let $\mathcal{X}$ be a measurable space, $l \in \N$, and $f : \mathcal{X}^l \rightarrow [0,\infty)$. We say $f$ is a \textit{self-bounding} function if there exists auxiliary functions $f_{-i} : \mathcal{X}^{l-1} \rightarrow \R, i \in [l]$ such that, for any $x = ( x_1, \ldots, x_l ) \in \mathcal{X}^l$,
\begin{gather}
0 \leq f(x) - f_{-i}(x_{-i}) \leq 1\ \forall\  i \in [l] , \quad \sum_{i=1}^l \left( f(x) - f_{-i}(x_{-i}) \right) \leq f(x) ,
\end{gather}
where $x_{-i} = ( x_1, \ldots , x_{i-1} , x_{i+1} , \ldots , x_l )\ \forall\ i \in [l]$.
\end{definition}

\begin{theorem} \label{thmSelfBounding}
\cite[Theorem 6.12]{boucheron2013concentration} Let $X_1, \ldots, X_l$ be independent $\mathcal{X}$-valued random variables, define $X = (X_1,\ldots,X_l)$, and let $f : \mathcal{X}^l \rightarrow [0,\infty)$ be self-bounding. Then for any $t \in (0, \E f(X) ]$,
\begin{equation}
\P ( f(X) \leq \E f(X) - t ) \leq \exp \left( - \frac{t^2}{2 \E f(X) } \right) .
\end{equation}
\end{theorem}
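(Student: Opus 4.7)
The plan is to prove this lower-tail concentration bound via the classical entropy method for self-bounding functions. Writing $Y = f(X)$ and $Y_{-i} = f_{-i}(X_{-i})$, and noting that $Y_{-i}$ is constant with respect to $\E^{(i)}$ (the conditional expectation given $X_{-i}$), I would first reduce the problem to bounding the centered log moment generating function $\psi(\lambda) = \log \E e^{\lambda(Y - \E Y)}$ for $\lambda \leq 0$: by the Chernoff bound, $\P(Y \leq \E Y - t) \leq \exp(\lambda t + \psi(\lambda))$, so it suffices to establish $\psi(\lambda) \leq \lambda^2 \E Y / 2$ for all $\lambda \leq 0$, after which choosing $\lambda = -t/\E Y$ delivers the claimed $\exp(-t^2/(2\E Y))$.

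To control $\psi$, the key is a modified log-Sobolev inequality of the form $\textrm{Ent}(e^{\lambda Y}) \leq (\lambda^2/2) \E[Y e^{\lambda Y}]$. I would derive this in two stages. First, by Han's tensorization inequality for entropy, $\textrm{Ent}(e^{\lambda Y}) \leq \sum_{i=1}^l \E[\textrm{Ent}^{(i)}(e^{\lambda Y})]$; and by the variational characterization of entropy relative to the $X_{-i}$-measurable reference $c = e^{\lambda Y_{-i}}$, each summand satisfies
\[\textrm{Ent}^{(i)}(e^{\lambda Y}) \leq \E^{(i)}\!\bigl[e^{\lambda Y} \phi(-\lambda(Y - Y_{-i}))\bigr], \quad \phi(u) = e^u - 1 - u \geq 0.\]
Second, for $\lambda \leq 0$ the argument $u = -\lambda(Y - Y_{-i})$ lies in $[0, -\lambda]$, and combining a suitable pointwise estimate on $\phi$ with the bound $(Y - Y_{-i})^2 \leq Y - Y_{-i}$ (which follows from $Y - Y_{-i} \in [0,1]$) produces a linear factor $(Y - Y_{-i})$ inside $\E^{(i)}[\cdot]$ that telescopes under $\sum_i (Y - Y_{-i}) \leq Y$ to give the claimed modified log-Sobolev inequality.

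Herbst's argument then converts this into the target bound on $\psi$. Writing $F(\lambda) = \E e^{\lambda Y}$, so that $\textrm{Ent}(e^{\lambda Y}) = \lambda F'(\lambda) - F(\lambda) \log F(\lambda)$ and $\E[Y e^{\lambda Y}] = F'(\lambda)$, the log-Sobolev inequality becomes the first-order differential inequality
\[\lambda F'(\lambda) - F(\lambda) \log F(\lambda) \leq (\lambda^2/2) F'(\lambda).\]
Dividing by $\lambda^2 F(\lambda)$ and integrating the resulting ODE for $G(\lambda) = \lambda^{-1} \log F(\lambda)$ from $\lambda = 0^-$ (with initial value $G(0^-) = \E Y$ by L'Hospital) yields $\psi(\lambda) \leq \lambda^2 \E Y / 2$ on $\lambda \leq 0$, completing the proof when combined with the Chernoff reduction above.

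The main obstacle I anticipate is the second stage of the log-Sobolev derivation: one has to find exactly the right pointwise bound on $\phi(u) = e^u - 1 - u$ for $u \in [0, -\lambda]$ so that, after multiplication by $e^{\lambda Y}$ and use of $(Y - Y_{-i})^2 \leq Y - Y_{-i}$, the sum over $i$ collapses cleanly to $(\lambda^2/2) \E[Y e^{\lambda Y}]$ with the constant $1/2$ intact (rather than a weaker constant or an unwanted multiplicative factor such as $e^{-\lambda}$ arising from replacing $e^{\lambda Y_{-i}}$ by $e^{\lambda Y}$). The tensorization step and the Herbst integration are then essentially mechanical.
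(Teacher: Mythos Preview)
The paper does not prove this theorem; it is quoted verbatim from \cite[Theorem~6.12]{boucheron2013concentration} and used as a black box in Appendix~\ref{appSelfBounding}. There is therefore no ``paper's own proof'' to compare against. Your outline---tensorization of entropy, a modified log-Sobolev inequality exploiting $0 \leq Y - Y_{-i} \leq 1$ and $\sum_i (Y - Y_{-i}) \leq Y$, followed by the Herbst integration argument---is exactly the route taken in the cited reference, so in that sense you have reconstructed the intended proof.

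One technical remark on the obstacle you flag: for the lower tail ($\lambda \leq 0$) the clean way to handle $\phi(-\lambda(Y - Y_{-i}))$ is not via $(Y - Y_{-i})^2 \leq Y - Y_{-i}$ but via the convexity bound $\phi(-\lambda v) \leq v\,\phi(-\lambda)$ for $v \in [0,1]$ (since $\phi$ is convex with $\phi(0)=0$). This gives $\textrm{Ent}(e^{\lambda Y}) \leq \phi(-\lambda)\,\E[Y e^{\lambda Y}]$ directly, and after Herbst one obtains $\log \E e^{\lambda(Y-\E Y)} \leq \phi(\lambda)\,\E Y$ for $\lambda \leq 0$; optimizing in $\lambda$ then yields the stated bound $\exp(-t^2/(2\E Y))$ for $t \in (0,\E Y]$. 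Your target inequality $\psi(\lambda) \leq \lambda^2 \E Y/2$ is slightly stronger than what the convexity route gives and is not needed---the weaker $\phi(\lambda)\,\E Y$ bound already suffices after the Chernoff optimization.
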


Assuming for the moment that $g_n$ is self-bounding, we can apply the theorem with $t = \delta \E g_n(W) / (2+\delta)$ to obtain
\begin{equation} 
\P \left( g_n(W) \leq \frac{2 \E g_n(W)}{2+\delta} \right) = \P \left( g_n(W) \leq \E g_n(W) - \frac{\delta \E g_n(W)}{2+\delta} \right)  \leq \exp \left( \frac{ - ( \frac{\delta \E g_n(W)}{2+\delta}  )^2 }{ 2 \E g_n(W) } \right) = \exp \left( - 2 c_{\delta} \E g_n(W) \right) ,
\end{equation}
where $c_{\delta} = \frac{ \delta^2 }{ 4(2+\delta)^2}$ as in Theorem \ref{thmConstApprox}. This completes the proof of \eqref{eqGnTail} from the main text.

To verify $g_n$ is self-bounding, we use the most obvious choice of auxiliary functions: let
\begin{equation}
g_{n,-i} ( w_{-i} ) = \frac{1}{ \max_{j \in [n]} r(j) } \sum_{j=1, j \neq i}^{b_n} \frac{ d_{out}(w_j) }{ d_{in}^A(w_j) + \sum_{k=1, k \neq i}^{b_n} 1 ( w_k = w_j ) } ,
\end{equation}
where $w_{-i} = ( w_1, \ldots, w_{i-1}, w_{i+1}, \ldots , w_{b_n} )$ for $w \in (w_1,\ldots,w_{b_n}) \in [n]^{b_n}$, i.e.\  we simply ignore the $i$-th coordinate of $w$. Towards bounding $g_n(w) - g_{n,-i}(w_{-i})$, we first observe
\begin{align}
& \sum_{j=1, j \neq i}^{b_n} d_{out}(w_j) \left( \frac{1}{d_{in}^A(w_j) + \sum_{k=1}^{b_n} 1 ( w_k = w_j ) } - \frac{1}{ d_{in}^A(w_j) + \sum_{k=1, k \neq i}^{b_n} 1 ( w_k = w_j ) } \right) \label{eqSelfBoundUglyTerm1} \\
& \quad = \sum_{j=1, j \neq i}^{b_n} \frac{ - 1 ( w_i = w_j ) d_{out}(w_j) }{   (d_{in}^A(w_j) + \sum_{k=1}^{b_n} 1 ( w_k = w_j )) ( d_{in}^A(w_j) + \sum_{k=1,k\neq i}^{b_n} 1 ( w_k = w_j ) )   }  \label{eqSelfBoundUglyTerm2}  \\
& \quad = \sum_{j=1, j \neq i}^{b_n} \frac{ - 1 ( w_i = w_j ) d_{out}(w_i) }{   (d_{in}^A(w_i) + \sum_{k=1}^{b_n} 1 ( w_k = w_i )) ( d_{in}^A(w_i) + \sum_{k=1,k\neq i}^{b_n} 1 ( w_k = w_i ) )   }  \label{eqSelfBoundUglyTerm3}  \\
& \quad =\frac{ -  d_{out}(w_i)  }{   d_{in}^A(w_i) + \sum_{k=1}^{b_n} 1 ( w_k = w_i ) } \times \frac{ \sum_{k=1,k\neq i}^{b_n} 1 ( w_k = w_i ) }{  d_{in}^A(w_i) + \sum_{k=1,k\neq i}^{b_n} 1 ( w_k = w_i )    } \label{eqSelfBoundUglyTerm4}  \\
& \quad \in \left( \frac{ -  d_{out}(w_i) }{ d_{in}^A(w_i) + \sum_{k=1}^{b_n} 1 ( w_k = w_i ) } , 0 \right) \label{eqSelfBoundUglyTerm5}  ,
\end{align}
where in \eqref{eqSelfBoundUglyTerm2} we computed the difference of fractions in \eqref{eqSelfBoundUglyTerm1}, in \eqref{eqSelfBoundUglyTerm3} we replaced $w_j$ by $w_i$ (which is permitted due to the indicator $1(w_i = w_j)$), and in \eqref{eqSelfBoundUglyTerm4} we rearranged the expression; the upper bound in \eqref{eqSelfBoundUglyTerm5} is obvious, while the lower bound holds since the second factor in \eqref{eqSelfBoundUglyTerm4} is less than 1. Using the upper bound in \eqref{eqSelfBoundUglyTerm5}, we can then obtain
\begin{align}
g_n(w) - g_{n,-i}(w_{-i}) & = \frac{  \sum_{j=1, j \neq i}^{b_n} d_{out}(w_j) \left( \frac{1}{d_{in}^A(w_j) + \sum_{k=1}^{b_n} 1 ( w_k = w_j ) } - \frac{1}{ d_{in}^A(w_j) + \sum_{k=1, k \neq i}^{b_n} 1 ( w_k = w_j ) } \right)  }{ \max_{j \in [n]} r(j) } \label{eqSelfBoundDiff1} \\
& \quad\quad + \frac{ 1}{ \max_{j \in [n]} r(j) }  \frac{ d_{out}(w_i) }{ d_{in}^A(w_i) + \sum_{k=1}^{b_n} 1 ( w_k = w_i ) }  \label{eqSelfBoundDiff2} \\ 
& < \frac{ 1}{ \max_{j \in [n]} r(j) }  \frac{ d_{out}(w_i) }{ d_{in}^A(w_i) + \sum_{k=1}^{b_n} 1 ( w_k = w_i ) } < \frac{ r(w_i) }{ \max_{j \in [n]} r(j) } \leq 1 . \label{eqSelfBoundDiffUpper}
\end{align}
On the other hand, using the lower bound in \eqref{eqSelfBoundUglyTerm5}, along with \eqref{eqSelfBoundDiff1}-\eqref{eqSelfBoundDiff2}, we immediately obtain $g_n(w) - g_{n,-i}(w_{-i})  > 0$. Together with \eqref{eqSelfBoundDiffUpper}, the first condition in Definition \ref{defnSelfBound} holds. To verify the second condition in Definition \ref{defnSelfBound}, we use the leftmost expression in \eqref{eqSelfBoundDiffUpper} to obtain
\begin{align}
\sum_{i=1}^{b_n} \left( g_n(w) - g_{n,-i}(w_{-i}) \right) < \frac{ 1}{ \max_{j \in [n]} r(j) } \sum_{i=1}^{b_n} \frac{ d_{out}(w_i) }{ d_{in}^A(w_i) + \sum_{k=1}^{b_n} 1 ( w_k = w_i ) } = g_n(w) .
\end{align}

\subsection{Proof of Corollary \ref{corConstApprox}} \label{proofCorConstApprox}

Let $\mathcal{I}_n = \{ i \in [n] : r(i) \geq \epsilon \bar{r} \}$; recall $|\mathcal{I}_n| \rightarrow \infty$ as $n \rightarrow \infty$ by assumption. Define $d_n \in \R_+^n$ by
\begin{equation}
d_n(i) =  \frac{d_{out}(i) 1 ( i \in \mathcal{I}_n ) b_n }{ \sum_{j \in \mathcal{I}_n } d_{out}(j) }  .
\end{equation}
Clearly, $\sum_{i=1}^n d_n(i) = b_n$, so $\tilde{p}_n ( d_n^{rel} ) \leq \tilde{p}_n ( d_n )$ (see Appendix \ref{proofLemRelaxedProblem}). Hence, by Theorem \ref{thmConstApprox}, we aim to find $\{ \delta_n \}_{n \in \N}$ s.t.\
\begin{equation} \label{eqCorProofGoal}
\lim_{n \rightarrow \infty} \delta_n = 0 , \quad \lim_{n \rightarrow \infty} c_{\delta_n} \frac{m_n}{ \bar{r} } ( 1 - \tilde{p}_n ( d_n ) ) = \infty ,
\end{equation}
where $c_{\delta_n} = \frac{\delta_n^2}{4(2+\delta_n)^2}$ as in Theorem \ref{thmConstApprox}. In fact, it suffices to show $m_n ( 1 - \tilde{p}_n ( d_n ) ) / \bar{r} \rightarrow \infty$, since then we choose $\delta_n$ such that (for example) $c_{\delta_n} = \sqrt{ \frac{ \bar{r} }{ m_n ( 1 - \tilde{p}_n ( d_n ) ) } }$ to ensure \eqref{eqCorProofGoal} holds. Toward this end, first note that for any $i \in \mathcal{I}_n$,
\begin{equation}
d_n(i) = \frac{ r(i) d_{in}^A(i) b_n }{ \sum_{j \in \mathcal{I}_n } d_{out}(j) } \quad \Rightarrow \quad \frac{ d_n(i) }{ d_n(i) + d_{in}^A(i) } = \frac{ 1 }{ 1 + \frac{ \sum_{j \in \mathcal{I}_n } d_{out}(j) }{ r(i) b_n }  }  \geq \frac{ 1 }{ 1 + \frac{ \sum_{j \in \mathcal{I}_n } d_{out}(j) }{ \epsilon \bar{r} b_n }  } .
\end{equation}
Hence, by definition of $\tilde{p}_n$ \eqref{eqTildePnFunction},
\begin{equation} \label{eqComputePnDn}
\frac{m_n}{ \bar{r} } ( 1 - \tilde{p}_n ( d_n ) ) = \frac{1}{ \bar{r} } \sum_{i \in [n]}  \frac{ d_{out} (i) d_n(i) }{ d_n(i) + d_{in}^A(i) } \geq  \frac{1}{ \bar{r} } \sum_{i \in \mathcal{I}_n } \frac{ d_{out}(i) }{ 1 + \frac{ \sum_{j \in \mathcal{I}_n } d_{out}(j) }{ \epsilon \bar{r} b_n }  } = \frac{ 1  }{ \frac{ \bar{r} }{ \sum_{i \in \mathcal{I}_n } d_{out}(i) }  + \frac{1}{ \epsilon b_n }   } .
\end{equation}
Recall $| \mathcal{I}_n | \rightarrow \infty$ by assumption, so
\begin{equation}
\frac{ \sum_{i \in \mathcal{I}_n } d_{out}(i) }{ \bar{r} } = \frac{ \sum_{i \in \mathcal{I}_n } d_{in}^A(i) r(i) }{ \bar{r} } \geq \epsilon \sum_{i \in \mathcal{I}_n } d_{in}^A(i) \geq \epsilon | \mathcal{I}_n | \rightarrow \infty .
\end{equation}
Since also $\epsilon b_n \rightarrow \infty$ by assumption, the final expression in \eqref{eqComputePnDn} diverges, as desired.

\subsection{Other algorithmic details} \label{appAlgDetails}

We first show Algorithm \ref{algExact} solves \eqref{eqOriginalOpt}. We require a basic fact about discrete convexity.
\begin{definition}
\cite[Section 1.4.2]{murota2003discrete} Let $f : \Z^n \rightarrow \R \cup \{ \infty \}$ and $dom(f) = \{ x \in \Z^n : f(x) \in \R \}$. Then $f$ is called \textit{M-convex} if for any $x, y \in dom(f)$ and any $i \in [n]$ satisfying $x(i) > y(i)$, there exists $j \in [n]$ satisfying
\begin{equation}
y(j) > x(j), \quad f(x) + f(y) \geq f ( x - e_i + e_j )  + f ( y + e_i - e_j ) .
\end{equation}
\end{definition}
\begin{theorem} \label{thmMoptimality}
\cite[Theorem 6.26]{murota2003discrete} Let $f$ be M-convex, and let $x \in dom(f)$. Then
\begin{equation}
f(x) \leq f(y)\ \forall\ y \in \Z^n  \Leftrightarrow  f(x) \leq f(x-e_i+e_j)\ \forall\ i,j \in [n] .
\end{equation}
\end{theorem}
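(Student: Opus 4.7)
The plan is to prove the nontrivial $(\Leftarrow)$ direction by an exchange-based monotone descent argument. The $(\Rightarrow)$ direction is immediate: if $x$ is a global minimum then $f(x) \leq f(z)$ for every $z \in \Z^n$, in particular for $z = x - e_i + e_j$ for each $i,j$. For the backward direction, the hypothesis is exchange-local optimality of $x$, and the goal is $f(x) \leq f(y)$ for every $y \in \Z^n$; since $f(y) = \infty$ outside $dom(f)$, only $y \in dom(f)$ needs to be considered.

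A preliminary observation is that $dom(f)$ lies in a hyperplane $\{z \in \Z^n : \sum_k z(k) = c\}$, which is a standard consequence of the M-convex exchange property (the exchange $z \mapsto z - e_i + e_j$ preserves the coordinate sum, and iteratively applying the exchange property shows any two points in $dom(f)$ are reachable from one another by such exchanges). Consequently, any $y \in dom(f)$ with $y \neq x$ must contain both an index $i$ with $x(i) > y(i)$ and an index with $x(j) < y(j)$. I would then induct on the (even) quantity $\|x - y\|_1$. The base case $y = x$ is trivial. For the inductive step, pick any $i$ with $x(i) > y(i)$; M-convexity yields some $j$ with $y(j) > x(j)$ and
\[
f(x) + f(y) \geq f(x - e_i + e_j) + f(y + e_i - e_j) .
\]
The exchange-local optimality hypothesis $f(x) \leq f(x - e_i + e_j)$ combined with this inequality gives $f(y) \geq f(y + e_i - e_j)$. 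The new point $y' := y + e_i - e_j$ satisfies $\|x - y'\|_1 = \|x - y\|_1 - 2$ (its $i$-th coordinate moved toward $x(i)$ and its $j$-th coordinate moved toward $x(j)$) and lies in $dom(f)$, since $f(y') \leq f(x) + f(y) - f(x - e_i + e_j) < \infty$. The inductive hypothesis applied to $y'$ gives $f(x) \leq f(y')$, and chaining the inequalities yields $f(x) \leq f(y') \leq f(y)$, completing the induction.

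The main obstacle is the careful bookkeeping around domain feasibility: the exchange index $j$ is supplied by the M-convex property as a function of the chosen descent direction $i$ (so one cannot pick $j$ freely), and one must verify $y + e_i - e_j \in dom(f)$ from the finiteness of the right-hand side of the exchange inequality for the induction to continue. A secondary subtlety worth flagging is that the exchange-local optimality assumption is used only at the \emph{fixed} point $x$, never at the intermediate iterates; this is precisely why the induction proceeds by moving $y$ toward $x$ rather than the reverse, and why $\|x-y\|_1$ is the natural monovariant.
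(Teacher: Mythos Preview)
The paper does not supply its own proof of this statement: Theorem~\ref{thmMoptimality} is quoted directly from \cite[Theorem 6.26]{murota2003discrete} and used as a black box to justify Algorithm~\ref{algExact}. So there is nothing in the paper to compare against line by line.

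That said, your argument is correct and is essentially the standard proof of this optimality criterion. The forward direction is trivial, and your backward direction via induction on $\|x-y\|_1$ is exactly the right idea: apply the M-convex exchange axiom at the pair $(x,y)$ to produce $j$, use the local-optimality hypothesis $f(x)\le f(x-e_i+e_j)$ to cancel and deduce $f(y)\ge f(y+e_i-e_j)$, and then recurse on $y':=y+e_i-e_j$, which is strictly closer to $x$ in $\ell_1$. Your handling of the two subtleties you flagged is also right: finiteness of $f(x-e_i+e_j)$ and $f(y')$ follows because the left side $f(x)+f(y)$ of the exchange inequality is finite and $f$ takes values in $\R\cup\{\infty\}$; and the hyperplane observation (that $dom(f)$ has constant coordinate sum, hence $y\neq x$ forces both an $i$ with $x(i)>y(i)$ and a $j$ with $y(j)>x(j)$) is indeed a direct consequence of iterating the exchange axiom. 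One small stylistic remark: the hyperplane fact is not strictly needed as a separate lemma, since once you pick $i$ with $x(i)>y(i)$ the exchange axiom itself hands you the required $j$; but stating it up front does make the existence of the initial index $i$ transparent.
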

In words, the theorem says that $x$ minimizes $f$ if and only if $f$ cannot be decreased by an ``exchange,'' wherein $x$ is replaced by $x-e_i+e_j$. Note that Algorithm \ref{algExact} terminates precisely when this criteria is satisfied, so if we can show that \eqref{eqRewriteOpt} is M-convex, we obtain as a corollary that Algorithm \ref{algExact} solves \eqref{eqOriginalOpt}. 

To show M-convexity, let $d,d' \in dom(\hat{p}_n), i \in [n]$ s.t.\ $d(i) > d'(i)$. Then since $\sum_{k=1}^n d(k) = \sum_{k=1}^n d'(k) = b_n$, we clearly have $d'(j) > d(j)$ for some $j \in [n]$. From $\sum_{k=1}^n d(k) = \sum_{k=1}^n d'(k) = b_n$ and $d(i), d'(j) \geq 1$, it is also clear that $d - e_i + e_j , d' + e_i - e_j \in dom(\hat{p}_n)$. Hence, letting $\mu(k) = d_{out}(k) d_{in}^A(k) / m_n$,
\begin{align} \label{eqEfficientObjectiveUpdate}
\hat{p}_n ( d - e_i + e_j ) & = \sum_{k \in [n] \setminus \{i,j\} } \frac{ \mu(k) }{ d_{in}^A(k) + d(k) } + \frac{ \mu(i) }{ d_{in}^A(i) + d(i) - 1 }  + \frac{ \mu(j) }{ d_{in}^A(j) + d(j) +1 } \\
&  = \hat{p}_n ( d) +  \frac{ \mu(i) }{ ( d_{in}^A(i) + d(i) - 1) ( d_{in}^A(i) + d(i) ) } - \frac{ \mu(j) }{ ( d_{in}^A(j) + d(j) +1 ) ( d_{in}^A(j) + d(j) ) } , 
\end{align}
where we have simply used the definitions of $\hat{p}_n, \tilde{p}_n$. Similarly, we obtain
\begin{equation}
\hat{p}_n ( d' + e_i - e_j ) = \hat{p}_n ( d' ) - \frac{ \mu(i) }{ ( d_{in}^A(i) + d'(i) + 1) ( d_{in}^A(i) + d'(i) ) } + \frac{ \mu(j) }{ ( d_{in}^A(j) + d'(j) -1 ) ( d_{in}^A(j) + d'(j) ) } .
\end{equation}
Adding the previous two equations, and using the inequalities $d(i) \geq d'(i)+1, d'(j) \geq d(j)+1$ (where the first holds since $d(i) > d'(i)$ and $d(i), d'(i) \in \Z$, and the second holds similarly) gives $\hat{p}_n ( d - e_i + e_j ) + \hat{p}_n ( d' + e_i - e_j ) \leq \hat{p}_n(d) + \hat{p}_n(d')$.

For the runtime of Algorithm \ref{algExact}, we note the following:
\begin{itemize}
\item The complexity of each iteration is dominated by the computation of $\{ \hat{p}_n ( d - e_{i} + e_{j} ) \}_{i,j \in [n] }$. By \eqref{eqEfficientObjectiveUpdate}, we can compute $\hat{p}_n ( d - e_{i} + e_{j} )$ in $O(1)$ time per $i,j$ pair, which yields $O(n^2)$ complexity per iteration.
\item In the best case, the initial choice of $d$ is actually a solution. However, it still requires one iteration to verify this, so the best-case complexity is $O(n^2)$.
\item In the general case, \cite[Section 10.1.1]{murota2003discrete} provides a tie-breaking rule for the choice of $(i^*,j^*)$ that guarantees termination in $\max \{ \| d - d' \|_1 : d, d' \in dom ( \hat{p}_n ) \} = O(b_n)$ iterations, which means $O(n^2 b_n)$ complexity.
\end{itemize}

For the randomized scheme (Algorithm \ref{algApprox}), first observe that by definition of $h$, $\{ h(x) \}_{x \in \R_+} = \{ h ( \sqrt{r(i)} ) \}_{ i \in [n] }$. Furthermore, $\{ h ( \sqrt{r(i)} ) \}_{ i \in [n] }$, and thus $\{ h(x) \}_{x \in \R_+}$, can be computed in time $O(n \log n)$ as follows: 
\begin{itemize}
\item Compute a vector containing $\{ r(i) \}_{ i \in [n] }$ sorted in decreasing order ($O(n \log n)$ time).
\item Iteratively compute the sums in \eqref{eqHmain} at each $x \in \{ \sqrt{r(i)} \}_{ i \in [n] }$ ($O(n)$ time).
\item Compute $\{ h ( \sqrt{r(i)} ) \}_{ i \in [n] }$ ($O(n)$ time).
\end{itemize}
In summary, $\{ h(x) \}_{x \in \R_+}$ (which contains at most $n$ elements) can be computed in $O(n \log n)$ time. After computing this set, $h^*$, and subsequently $d_n^{rel}$, can each be computed in linear time. Thus, computing the relaxed solution \eqref{eqRelaxedSolnMain} requires $O(n \log n)$ complexity. Finally, assuming we can obtain one sample from $d_n^{rel}$ in $O(1)$ time after $O(n \log n)$ pre-processing time (using e.g.\ the alias method \cite[Section 3.4.1]{knuth2014art}), Algorithm \ref{algApprox} has total complexity $O(n \log n + b_n)$.

\subsection{Additional experiments} \label{appOtherBudgets}

Figure \ref{fig_timeVsBelMore} shows an analogue of Figure \ref{fig_timeVsBel} with budget $b_n = \lceil \tilde{b} |E_n| \rceil$ for each $\tilde{b} \in \{ \frac{1}{1600} ,\frac{1}{800} , \frac{1}{200} , \frac{1}{100}  \}$. The results are qualitatively similar to Figure \ref{fig_timeVsBel} (Algorithm \ref{algExact} outperforms Algorithm \ref{algApprox}, which itself outperforms the heuristics). We also observe the gap between between the heuristics and our algorithms generally increases as the budget decreases for a fixed social network. Put differently, if an adversary with a limited budget spends this budget intelligently (i.e.\ using our proposed solutions), they can still disrupt learning; in contrast, an adversary with a large budget need not be as careful.

\begin{figure}
\centering
\begin{subfigure}{\textwidth}
\centering \includegraphics[height=2in]{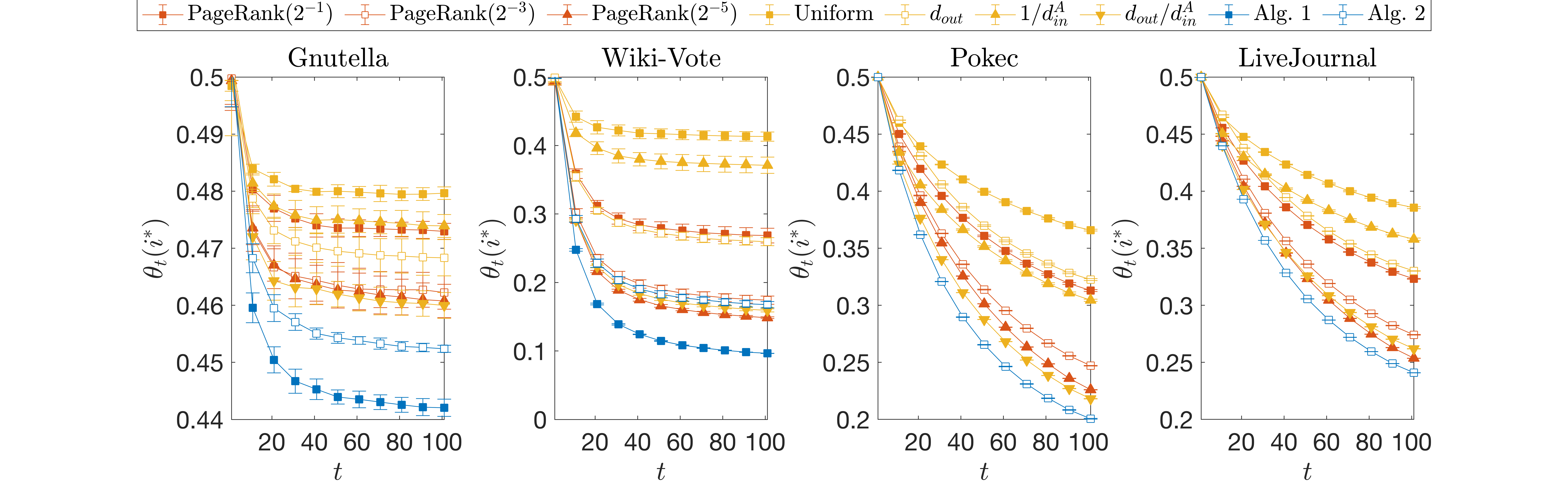} \caption{$\tilde{b} = 1/100$}
\end{subfigure}
\begin{subfigure}{\textwidth}
\centering \includegraphics[height=2in]{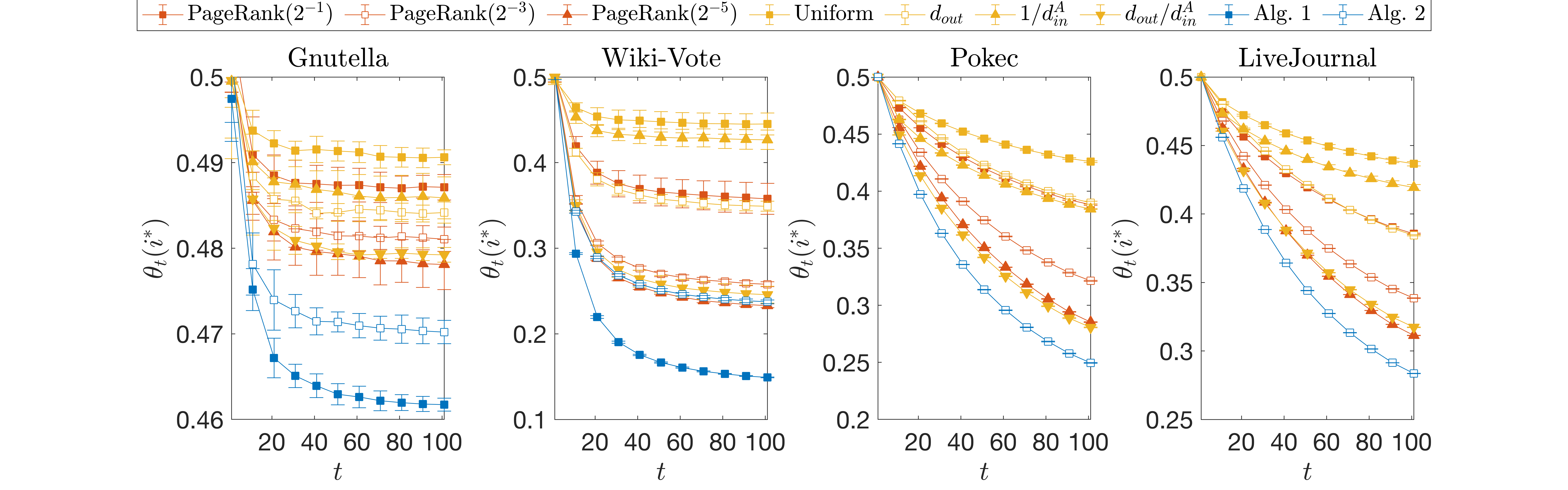}  \caption{$\tilde{b} = 1/200$}
\end{subfigure}
\begin{subfigure}{\textwidth}
\centering \includegraphics[height=2in]{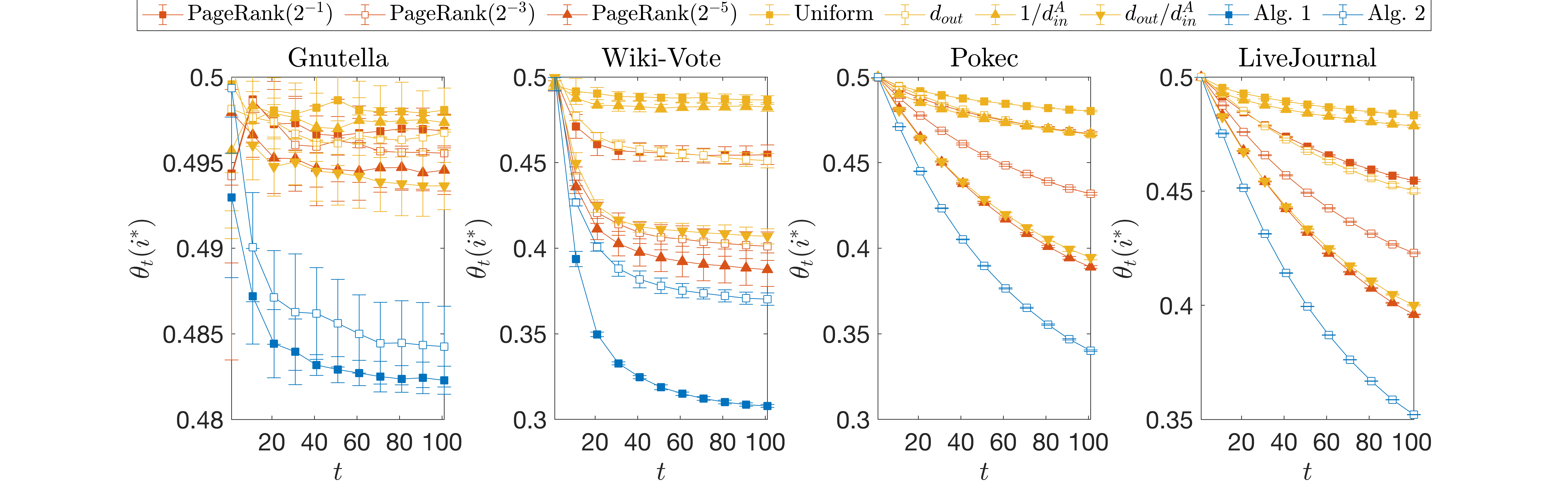}  \caption{$\tilde{b} = 1/800$}
\end{subfigure}
\begin{subfigure}{\textwidth}
\centering \includegraphics[height=2in]{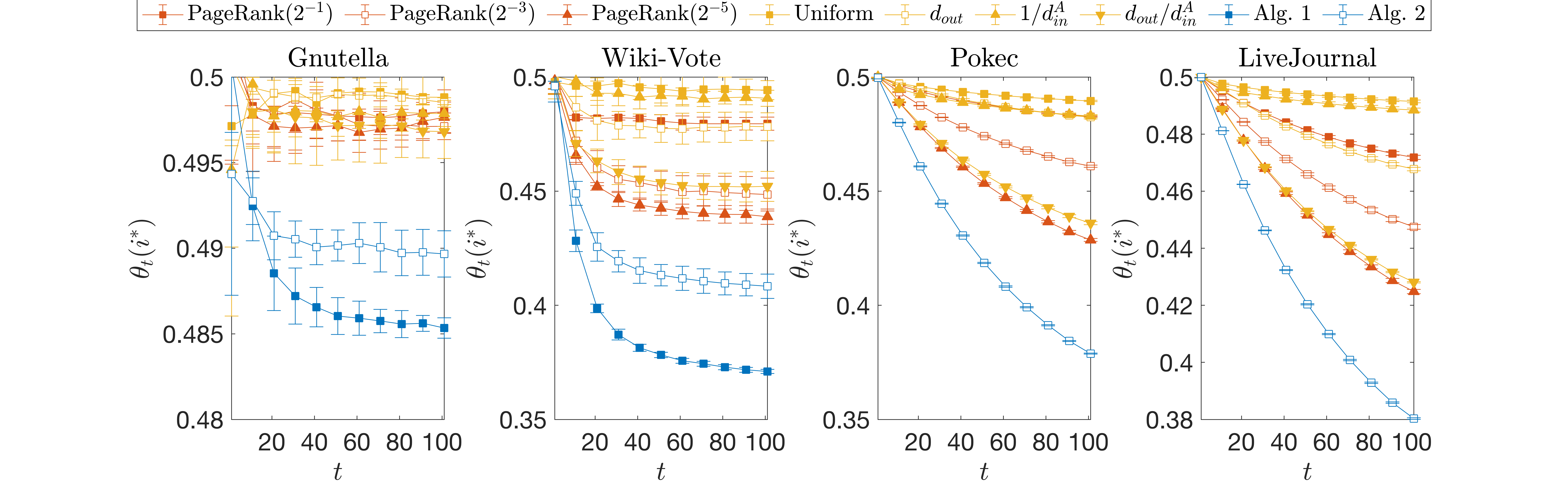}  \caption{$\tilde{b} = 1/1600$}
\end{subfigure}
\caption{Varying budget for Figure \ref{fig_timeVsBel} experiment.} \label{fig_timeVsBelMore}
\end{figure}

\section{Proof of Theorems \ref{thmMain} and \ref{thmSec} (details)} \label{appMainProofDetails} 

\subsection{Branching process approximation (Step 1 for proofs of Theorems \ref{thmMain} and \ref{thmSec})} \label{appBranchApproxProofDetails}

\subsubsection{Proof of Lemma \ref{lemGraphBeliefApprox}} \label{appProofGraphBeliefApprox}

For $t \in \N_0$, let $\alpha_t, \beta_t$ denote the parameters $\{ \alpha_t(i) \}_{i \in A \cup B}, \{ \beta_t(i) \}_{i \in A \cup B}$ in vector form, and let $\mathbf{1}$ denote the all ones vector. We claim
\begin{gather} \label{eqGraphBeliefMatrixForm}
\alpha_t = (1-\eta) \sum_{\tau=1}^{t} s_{\tau} Q^{t-\tau}  + \alpha_0 Q^t  , \quad \beta_t = (1-\eta) \sum_{\tau=1}^{t} (\mathbf{1}-s_{\tau}) Q^{t-\tau}  + \beta_0 Q^t \quad \forall\ t \in \N .
\end{gather}
We prove \eqref{eqGraphBeliefMatrixForm} for $\alpha_t$; the proof for $\beta_t$ follows the same approach. First, we use the parameter update equations \eqref{eqParamUpdateInitial}, and the definitions of $P$ and $Q$ from Appendix \ref{appBranchApproxProofOutline} ($P$ being the column-normalized adjacency matrix and $Q = (1-\eta)I+\eta P$) to write the parameter update equation in vector form as
\begin{equation} \label{eqParamUpdateVector}
\alpha_t = (1-\eta) ( \alpha_t  + s_t ) +\eta \alpha_{t-1} P = (1-\eta) s_t + \alpha_{t-1} Q .
\end{equation}
We next use induction. For $t = 1$, \eqref{eqGraphBeliefMatrixForm} is equivalent to \eqref{eqParamUpdateVector}. Assuming \eqref{eqGraphBeliefMatrixForm} holds for $t-1$, we have
\begin{align}
\alpha_t & = (1-\eta) s_t + \alpha_{t-1} Q = (1-\eta) s_t + \left( (1-\eta) \sum_{\tau=1}^{t-1} s_{\tau} Q^{(t-1)-\tau} + \alpha_0 Q^{t-1} \right) Q \\
& = (1-\eta) s_t  + (1-\eta) \sum_{\tau=1}^{t-1} s_{\tau} Q^{t-\tau} + \alpha_0 Q^t = (1-\eta) \sum_{\tau=1}^{t} s_{\tau} Q^{t-\tau}  + \alpha_0 Q^t ,
\end{align}
which completes the proof. Next, recalling $e_i$ is the vector with 1 in the $i$-th position and 0 elsewhere,
\begin{align}
\theta_{T_n}(i) & = \frac{\alpha_{T_n}(i)}{\alpha_{T_n}(i)+\beta_{T_n}(i)} = \frac{ (1-\eta) \sum_{\tau=1}^{T_n} s_{\tau} Q^{T_n-\tau} e_i^{\trans}  + \alpha_0 Q^{T_n} e_i^{\trans} }{ (1-\eta) \sum_{\tau=1}^{T_n} \mathbf{1} Q^{T_n-\tau} e_i^{\trans} + ( \alpha_0 + \beta_0 ) Q^{T_n} e_i^{\trans} } \\
& = \frac{ (1-\eta) \sum_{\tau=1}^{T_n} s_{\tau} Q^{T_n-\tau} e_i^{\trans}  + \alpha_0 Q^{T_n} e_i^{\trans} }{ (1-\eta) T_n + ( \alpha_0 + \beta_0 ) Q^{T_n} e_i^{\trans} } = \frac{ \frac{1}{T_n} \sum_{\tau=1}^{T_n} s_{\tau} Q^{T_n-\tau} e_i^{\trans}  + \frac{1}{(1-\eta)T_n} \alpha_0 Q^{T_n} e_i^{\trans} }{ 1 + \frac{1}{(1-\eta)T_n} ( \alpha_0 + \beta_0 ) Q^{T_n} e_i^{\trans} } ,
\end{align}
where the equalities hold by definition, by \eqref{eqGraphBeliefMatrixForm}, since the columns of $Q$ sum to 1 by definition, and by multiplying numerator and denominator by $\frac{1}{(1-\eta)T_n}$, respectively. Next, recall from Section \ref{secLearnModel} that $\alpha_0(j) \in [0,\bar{\alpha}]\ \forall\ j \in A \cup B$ for some $\bar{\alpha} > 0$. Hence, $\alpha_0$ is element-wise upper bounded by $\bar{\alpha} \mathbf{1}$, so $\alpha_0 Q^{T_n} e_i^{\trans} \leq \bar{\alpha} \mathbf{1} Q^{T_n} e_i^{\trans} = \bar{\alpha}$, where we have used column stochasticity of $Q$. Additionally, $\alpha_0 Q^{T_n} e_i^{\trans} \geq 0$ (since the three terms in the product are elementwise nonnegative). By a similar argument, $0 \leq \beta_0 Q^{T_n} e_i^{\trans} \leq \bar{\beta}$. Taken together, we can use the previous equation to obtain
\begin{equation} \label{eqThetaVarThetaUpperAndLower}
 \frac{ \frac{1}{T_n} \sum_{\tau=1}^{T_n} s_{\tau} Q^{T_n-\tau} e_i^{\trans}   }{ 1 + \frac{\bar{\alpha} + \bar{\beta} }{(1-\eta)T_n}  }  \leq \theta_{T_n}(i) \leq  \frac{1}{T_n} \sum_{\tau=1}^{T_n} s_{\tau} Q^{T_n-\tau} e_i^{\trans}  + \frac{\bar{\alpha}}{(1-\eta)T_n} .
\end{equation}
Finally, recall from Section \ref{secLearnModel} that $\bar{\alpha}$ and $\bar{\beta}$ are independent of $n$. Hence, because $T_n \rightarrow \infty$ as $n \rightarrow \infty$ (by \ref{assBranchHorizon} in the statement of the lemma), $\bar{\alpha} / T_n, \bar{\beta} / T_n \rightarrow 0$ as $n \rightarrow \infty$. It follows that, for given $\epsilon > 0$ and $n$ sufficiently large, $| \theta_{T_n}(i) -  \frac{1}{T_n} \sum_{\tau=1}^{T_n} s_{\tau} Q^{T_n-\tau} e_i^{\trans} | < \epsilon$. Finally, by changing the index of summation, it is clear that $\frac{1}{T_n} \sum_{\tau=1}^{T_n} s_{\tau} Q^{T_n-\tau} e_i^{\trans} = \vartheta_{T_n}(i)$, completing the proof.

\subsubsection{Proof of Lemma \ref{lemGraphTreeRelation}} \label{appProofGraphTreeRelation}

We begin by arguing $\vartheta_{T_n}(i^*) | \{ \tau_n > T_n \} \stackrel{\mathscr{D}}{=} \hat{\vartheta}_{T_n}(\phi)$. For this, first consider the sub-graph containing only edges between two agents formed during the first $T_n$ iterations of Algorithm \ref{algGraph}. Conditioned on $\tau_n > T_n$, this sub-graph is constructed as follows:
\begin{itemize}
\item The initial agent $i^*$ is sampled uniformly from $A$ (Line \ref{algGraphFirstNode}), which implies its degrees $( d_{out}(i^*),$ $d_{in}^A(i^*), d_{in}^B(i^*) )$ are distributed as $f_n^*$. 
(In fact, this holds even if $\tau_n \leq T_n$.)
\item Each time an edge is added to the sub-graph (Line \ref{algGraphAgentPair}), the paired outstub $(i',j')$ is sampled uniformly from $O_A$ (else, $\tau_n > T_n$ is contradicted by Line \ref{algGraphTau}-\ref{algGraphResample}), so the degrees $( d_{out}(i'), d_{in}^A(i'),$ $d_{in}^B(i') )$ of the corresponding agent $i'$ are distributed as $f_n$.
\item The initial agent $i^*$ has no paired outstubs, while all other agents in the sub-graph have one paired outstub (otherwise, an outstub with label 2 was paired within the first $T_n$ iterations, contradicting $\tau_n > T_n$ by Line \ref{algGraphTau}); in particular, the sub-graph has $| \cup_{l=0}^{T_n} A_l |$ nodes and $| \cup_{l=0}^{T_n} A_l | - 1$ edges. Also, every agent in the sub-graph has a path to $i^*$ by the breadth-first-search nature of the construction, so, neglecting edge polarities, we obtain a connected graph with $| \cup_{l=0}^{T_n} A_l |$ nodes and $| \cup_{l=0}^{T_n} A_l | - 1$ edges, i.e.\ a tree. Finally, since all edges point towards $i^*$ (see Line \ref{algGraphAgentPair}), the sub-graph is a directed tree pointed towards $i^*$.
\end{itemize}
In summary, the sub-graph is a directed tree pointing towards an agent with degrees distributed as $f_n^*$, in which all other nodes have degrees distributed as $f_n$. This is precisely the procedure used to construct the sub-graph of agents during the first $T_n$ iterations of Algorithm \ref{algTree}. Additionally, Algorithms \ref{algGraph} and \ref{algTree} add bots in the same manner (Lines \ref{algGraphAddBot}-\ref{algGraphBotPair} in Algorithm \ref{algGraph}, Lines \ref{algTreeAddBot}-\ref{algTreeBotPair} in Algorithm \ref{algTree}). Taken together, we conclude that, conditioned on $\tau_n > T_n$, the $T_n$-step neighborhood into $i^*$ is constructed in the same manner in Algorithm \ref{algGraph} as the $T_n$-step neighborhood into $\phi$ is constructed in Algorithm \ref{algTree}. Furthermore, by \eqref{eqVarThetaGraphExpand} and \eqref{eqVarThetaTree}, it is clear that $\vartheta_{T_n}(i)$ and $\hat{\vartheta}_{T_n}(\phi)$, respectively, depend only on these respective neighborhoods, and on the signals $s_{T_n-t}(i)$ and $\hat{s}_{T_n-t}(\i)$, respectively. Since the signals $s_{T_n-t}(i)$ and $\hat{s}_{T_n-t}(\i)$ are also defined in the same manner ($s_{T_n-t}(i), \hat{s}_{T_n-t}(\i) \sim\textrm{Bernoulli}(\theta)$ for $i \in A, \i \in \hat{A}$; $s_{T_n-t}(i) = \hat{s}_{T_n-t}(\i) = 0$ for $i \in B, \i \in \hat{B}$), we ultimately conclude that $\vartheta_{T_n}(i^*)$ and $\hat{\vartheta}_{T_n}(\phi)$ have the same distribution when $\tau_n > T_n$ holds.

We next argue $\{ \tau_n > T_n\}$ occurs with high probability when $\Omega_{n,1}$ holds. For this, we note that Algorithm \ref{algGraph} is nearly identical to the graph construction described in \cite[Section 5.2]{chen2017generalized}. More specifically, the only difference is that the construction in \cite{chen2017generalized} does not include the pairing of agent instubs with bots in Lines \ref{algGraphAddBot}-\ref{algGraphBotPair} of Algorithm \ref{algGraph}. However, these lines do not affect $\tau_n$. Moreover, when \ref{assGraphDegSeq} holds, the assumptions of \cite[Lemma 5.4]{chen2017generalized} are satisfied. This lemma states that, if $t_n < ( \log n ) / ( 2 \log (\nu_3 / \nu_1) )$ and $\nu_3 > \nu_1$ (with $\nu_1, \nu_3$ defined as in \ref{assGraphDegSeq}), then $P(\tau_n \leq t_n | \Omega_{n,1} ) = O ( (\nu_3/\nu_1)^{t_n} / \sqrt{n} )$. In particular, by \ref{assGraphHorizon} we have $T_n \leq  \zeta \log ( n ) /  \log ( \nu_3 / \nu_1  )$ for $n$ sufficiently large, with $\zeta \in (0,1/2)$ independent of $n$; substituting gives
\begin{equation}
\P(\tau_n \leq T_n | \Omega_{n,1} ) = O \left( \frac{(\nu_3/\nu_1)^{ \zeta \log ( n ) /  \log ( \nu_3 / \nu_1  )  }}{ \sqrt{n} } \right) = O \left( n^{ \zeta - 1/2 } \right) .
\end{equation}

\subsubsection{Proof of Lemma \ref{lemTreeBeliefExpand}} \label{appProofTreeBeliefExpand}

We first claim that for $l \in \N_0$ and $\i \in \hat{A}_l$,
\begin{equation} \label{eqTreeBeliefExpandId2}
e_{\i} \hat{P}^{l'} e_{\phi} = \begin{cases} \prod_{j=0}^{l-1} d_{in}(\i|j)^{-1} , & l' = l \\ 0 , & l' \in \N_0 \setminus \{l\} \end{cases} .
\end{equation}
(Recall $\hat{P}$ is the column-normalized adjacency matrix.) We prove \eqref{eqTreeBeliefExpandId2} separately for $l = 0$ and $l \in \N$. When $l = 0$, the only case is $\i = \phi$ (since $\hat{A}_0 = \{ \phi \}$); if $l' = 0$, the left side is clearly 1 and the right side is 1 by convention; if $l' \in \N$, the left side is 0 since $e_{\phi} \hat{P}^{l'} = 0$ ($\phi$ has no outgoing neighbors in the tree). Next, we aim to prove \eqref{eqTreeBeliefExpandId2} for $\i \in \hat{A}_l$ and $l \in \N$. For such $\i$, there is a unique path from $\i$ to $\phi$ with length $l$ that visits the nodes $\i|l = \i , \i | l-1 , \ldots , \i | 0 = \phi$. By definition of $\hat{P}$, it follows that
\begin{equation}
e_{\i} \hat{P}^{l} e_{\phi} = \hat{P}( \i | l , \i | l-1 ) \hat{P} ( \i | l-1 , \i | l-2 ) \cdots \hat{P} ( \i | 1 , \i | 0 ) = \frac{1}{d_{in}(\i|l-1)} \frac{1}{d_{in}(\i|l-2)} \cdots \frac{1}{d_{in}(\phi)} .
\end{equation}
On the other hand, if $l' \neq l$, no path of length $l'$ from $\i$ to $\phi$ exists, so $e_{\i} \hat{P}^{l'} e_{\phi} = 0$. This proves \eqref{eqTreeBeliefExpandId2}.

Recalling that $\hat{Q} = (1-\eta)I+\eta \hat{P}$, we next claim that $\forall\ t \in \N_0$,
\begin{equation} \label{eqTreeBeliefExpandId1}
\hat{Q}^t = \sum_{l=0}^t { t \choose l } \eta^l (1-\eta)^{t-l} \hat{P}^l .
\end{equation}
We prove \eqref{eqTreeBeliefExpandId1} inductively: both sides equal $I$ when $t = 0$; assuming \eqref{eqTreeBeliefExpandId1} is true for $t$, we have
\begin{align}
\hat{Q}^{t+1} & = ( (1-\eta) I + \eta \hat{P} ) \sum_{l=0}^t { t \choose l } \eta^l (1-\eta)^{t-l} \hat{P}^l \\ 
& = \sum_{l=0}^t {t \choose l} \eta^l (1-\eta)^{t+1-l} \hat{P}^l + \sum_{l=1}^{t+1} {t \choose l-1} \eta^{l} (1-\eta)^{t+1-l} \hat{P}^{l} \\ 
& = (1-\eta)^{t+1} I + \sum_{l=1}^{t} \left( {t \choose l} + {t \choose l-1}  \right) \eta^{l} (1-\eta)^{t+1-l} \hat{P}^{l} + \eta^{t+1} \hat{P}^{t+1} \\
& = (1-\eta)^{t+1} I + \sum_{l=1}^{t} {t+1 \choose l} \eta^{l} (1-\eta)^{t+1-l} \hat{P}^{l} + \eta^{t+1} \hat{P}^{t+1} ,
\end{align}
where in the first line we have used the definition of $\hat{Q}$ and the inductive hypothesis, the second line simply uses the distributive property, the third rearranges summations, and the fourth uses Pascal's rule ($[t+1]$ has ${t+1 \choose l}$ subsets of cardinality $l$; ${t \choose l-1}$ that contain 1 and ${t \choose l}$ that do not contain 1). This completes the proof of \eqref{eqTreeBeliefExpandId1}. 

Having established \eqref{eqTreeBeliefExpandId1} and \eqref{eqTreeBeliefExpandId2}, we can combine them to obtain $\forall\ t \in \N_0, \i \in \hat{A}_l$,
\begin{equation}
e_{\i} \hat{Q}^{t} e_{\phi} =  \begin{cases} { t \choose l } \eta^{l} (1-\eta)^{t-l} \prod_{j=0}^{l-1} d_{in}(\i|j)^{-1} , &  l \leq t \\ 0 , &  l > t \end{cases}.
\end{equation}
Finally, substituting the previous equation into \eqref{eqVarThetaTree}, and recalling $\hat{A} = \cup_{l=0}^{\infty} \hat{A}_l$, we obtain
\begin{equation}
\hat{\vartheta}_{T_n}(\phi) =  \frac{1}{T_n} \sum_{t =0}^{T_n-1} \sum_{l=0}^t \sum_{\i \in \hat{A}_l} \hat{s}_{T_n-t}(\i) { t \choose l } \eta^{l} (1-\eta)^{t-l} \prod_{j=0}^{l-1} d_{in}(\i|j)^{-1} ,
\end{equation}
which completes the proof.

\subsection{Step 2 for proof of Theorem \ref{thmMain}} \label{appRootBeliefProofDetails}

\subsubsection{Proof of Lemma \ref{lemFirstMomBelief}} \label{appProofFirstMomBelief}

First, letting $\D$ denote the degree sequence and $\T$ denote the set of random variables defining the tree structure, we can use Lemma \ref{lemTreeBeliefExpand} to write
\begin{align}
& \E_n [ \hat{\vartheta}_{T_n}(\phi) ] = \frac{1}{T_n} \sum_{t=0}^{T_n-1} \sum_{l=0}^t { t \choose l } \eta^l (1-\eta)^{t-l} \E_n \left[ \sum_{\i \in \hat{A}_l} \E [ \hat{s}_{T_n-t}(\i) | \D,\mathcal{T} ]  \prod_{j=0}^{l-1} d_{in}(\i|j)^{-1} \right] \\
& \quad  = \frac{\theta}{T_n} \sum_{t=0}^{T_n-1} \sum_{l=0}^t { t \choose l } \eta^l (1-\eta)^{t-l} \E_n \left[ \sum_{\i \in \hat{A}_l} \prod_{j=0}^{l-1} d_{in}(\i|j)^{-1}  \right] = \frac{\theta}{T_n} \sum_{t=0}^{T_n-1} \sum_{l=0}^t { t \choose l } \eta^l (1-\eta)^{t-l} \P_n ( X_l^1 \in \hat{A} ) , \label{eqFirstMomPreWalks}
\end{align}
where the first equality uses the tower property of conditional expectation and the fact that $\hat{A}_l$ and $d(\i | j)^{-1}$ are fixed given the tree structure, the second uses the fact that $\hat{s}_{T_n-t}(\i) \sim \textrm{Bernoulli}(\theta)$, and the third holds by the tower property and the definition of $X_l^1$, i.e.\
\begin{equation} \label{eqWalkProbAppears}
\P_n ( X_l^1 \in \hat{A} ) = \E_n [ \P(  X_l^1 \in \hat{A} | \D, \T ) ] = \E_n \left[ \sum_{\i \in \hat{A}_l} \prod_{j=0}^{l-1} d_{in}(\i|j)^{-1}  \right] .
\end{equation}
Here we have also used the fact that $\{ X_l^1 \}_{l \in \N}$ is a random walk starting at the root of a directed tree; hence, for $\i \in \hat{A}_l$, $\P ( X_l^1 = \i | \D, \T )$ is the probability of the lone path from $\phi$ to $\i$, which is $\prod_{j=0}^{l-1} d_{in}(\i|j)^{-1} $, and $X_l^1 \in \hat{A} \Leftrightarrow X_l^1 = \i$ for some $\i \in \hat{A}_l$. Next, using \eqref{eqFirstMomPreWalks} and Lemma \ref{lemHitProbOneWalk}, we obtain
\begin{equation} \label{eqTreeBeliefMeanWalkSub}
\E_n [ \hat{\vartheta}_{T_n}(\phi) ] =  \frac{\theta}{T_n} \sum_{t=0}^{T_n-1} \left( \sum_{l=1}^t { t \choose l } \eta^l (1-\eta)^{t-l} \tilde{p}_n^* \tilde{p}_n^{l-1} + (1-\eta)^t \right) ,
\end{equation}
where by convention the summation over $l$ is zero when $t = 0$. Adding and subtracting $(1-\eta)^t \tilde{p}_n^* / \tilde{p}_n$, the previous equation can be rewritten as
\begin{align} \label{eqTreeBeliefMeanBinomGeo}
\E_n [ \hat{\vartheta}_{T_n}(\phi) ] & = \frac{\theta}{T_n}  \frac{\tilde{p}_n^*}{\tilde{p}_n} \sum_{t=0}^{T_n-1} \sum_{l=0}^t { t \choose l } (\eta \tilde{p}_n)^l (1-\eta)^{t-l} + \frac{\theta}{T_n} \left( 1 - \frac{\tilde{p}_n^*}{\tilde{p}_n} \right) \sum_{t=0}^{T_n-1} (1-\eta)^t  \\
& = \frac{\theta}{T_n}  \frac{\tilde{p}_n^*}{\tilde{p}_n} \sum_{t=0}^{T_n-1} ( 1 - \eta ( 1 - \tilde{p}_n ) )^t + \frac{\theta}{T_n} \left( 1 - \frac{\tilde{p}_n^*}{\tilde{p}_n} \right) \frac{1-(1-\eta)^{T_n}}{\eta}  \\
& = \frac{\theta}{T_n} \frac{\tilde{p}_n^*}{\tilde{p}_n} \frac{ 1 - (1- \eta ( 1 - \tilde{p}_n ) )^{T_n} }{ \eta (1-\tilde{p}_n) } + \frac{\theta}{T_n} \left( 1 - \frac{\tilde{p}_n^*}{\tilde{p}_n} \right) \frac{1-(1-\eta)^{T_n}}{\eta} ,
\end{align}
where we have simply used the binomial theorem and computed two geometric series. 

Next, we assume temporarily that $p_n \rightarrow 1$ as $n \rightarrow \infty$. By \ref{assBranchDegSeq}, we have for $\omega \in \Omega_{n,2}$
\begin{equation}
\tilde{p}_n(\omega) \in (p_n-\delta_n,p_n+\delta_n) .
\end{equation}
Hence, by $p_n \rightarrow 1$, and since $\delta_n \rightarrow 0$ by \ref{assBranchDegSeq}, we have for $\gamma_1 > 0$, $n$ sufficiently large, and such $\omega$
\begin{equation}
1 - \gamma_1 < \frac{\tilde{p}_n^*(\omega)}{\tilde{p}_n(\omega)} < 1 + \gamma_1 ,
\end{equation}
where we have also used the fact that $1 \geq \tilde{p}_n^* \geq \tilde{p}_n$ on $\Omega_{n,2}$ by \ref{assBranchDegSeq}. Also, by \ref{assBranchHorizon}, it is clear that $( 1 - (1-\eta) )^{T_n} / T_n  \rightarrow 0$, so for given $\gamma_2 > 0$ and $n$ sufficiently large,
\begin{equation}
0 < \frac{\theta}{T_n} \frac{1-(1-\eta)^{T_n}}{\eta} < \gamma_2 .
\end{equation}
Combining the previous four equations implies that for $n$ sufficiently large and $\omega \in \Omega_{n,2}$,
\begin{align} \label{eqMeanBeforeSpeciazling}
\E_n [ \hat{\vartheta}_{T_n}(\phi) ] (\omega) & < (1+\gamma_1) \frac{\theta}{T_n} \frac{ 1 - (1 - \eta ( 1 - p_n - \delta_n ) )^{T_n} }{ \eta (1-p_n - \delta_n) } + \gamma_1 \gamma_2 ,\\
\E_n [ \hat{\vartheta}_{T_n}(\phi) ] (\omega) & > (1-\gamma_1) \frac{\theta}{T_n} \frac{ 1 - (1- \eta ( 1 - p_n + \delta_n ))^{T_n} }{ \eta (1-p_n + \delta_n) } - \gamma_1 \gamma_2  .
\end{align}
We complete the proof for the case $T_n(1-p_n) \rightarrow 0$; the proof for the other two cases is similar. In this case, we can use Lemma \ref{lemTnPnDnasymp} from Appendix \ref{appAuxiliary} to obtain for any $\gamma_3> 0$ and for $n$ large enough
\begin{gather}
1 - \gamma_3 < \frac{ 1 - (1-\eta ( 1 - p_n - \delta_n ) )^{T_n} }{ T_n \eta (1-p_n - \delta_n) } < 1 + \gamma_3 , \\
1 - \gamma_3 < \frac{ 1 - (1-\eta ( 1 - p_n + \delta_n ))^{T_n} }{ T_n \eta (1-p_n + \delta_n) } < 1 + \gamma_3 .
\end{gather}
Combining the previous two equations gives for $n$ large and $\omega \in \Omega_{n,2}$
\begin{align}
\E_n [ \hat{\vartheta}_{T_n}(\phi) ] (\omega) & < \theta (1+\gamma_1)  (1+\gamma_3) + \gamma_1 \gamma_2 = \theta + \theta( \gamma_1  + \gamma_3 + \gamma_1 \gamma_3 ) + \gamma_1 \gamma_2 ,\\
\E_n [ \hat{\vartheta}_{T_n}(\phi) ] (\omega) & > \theta (1-\gamma_1) (1-\gamma_3) - \gamma_1 \gamma_2  = \theta - \theta ( \gamma_1 + \gamma_3 - \gamma_1 \gamma_3 ) - \gamma_1 \gamma_2 .
\end{align}
Hence, for given $\gamma > 0$, we can find $\gamma_1,\gamma_2,\gamma_3$ sufficiently small and $n$ sufficiently large such that, for $\omega \in \Omega_{n,2}$, $| \E_n [ \hat{\vartheta}_{T_n}(\phi) ] (\omega) - \theta | < \gamma$. This clearly also implies $| \E_n [ \hat{\vartheta}_{T_n}(\phi) ] (\omega) - \theta | 1(\Omega_{n,2})(\omega) < \gamma$ for such $\omega$. On the other hand, for $\omega \notin \Omega_{n,2}$, it is trivial that $| \E_n [ \hat{\vartheta}_{T_n}(\phi) ] (\omega) - \theta | 1 ( \Omega_{n,2} )(\omega) = 0 < \gamma$. This completes the proof for the case $T_n(1-p_n) \rightarrow 0$.

We now return to the case $p_n \rightarrow p \in [0,1)$. In this case, it follows from \ref{assBranchHorizon} that $T_n(1-p_n) \rightarrow [0,\infty)$ cannot occur, i.e.\ we need only consider the case $T_n(1-p_n) \rightarrow \infty$. First, note that since $p_n \rightarrow p < 1$ and $\delta_n \rightarrow 0$, we have $p_n + \delta_n < 1 - \gamma_1$ for some $\gamma_1 > 0$ and $n$ sufficiently large. For such $n$, and for $\omega \in \Omega_{n,2}$, we then obtain $\tilde{p}_n(\omega) < 1 - \gamma_1$; substituting into \eqref{eqTreeBeliefMeanWalkSub} (evaluated at $\omega$) gives
\begin{align} \label{eqMeanConvCase4}
& \E_n [ \hat{\vartheta}_{T_n}(\phi) ](\omega) <  \frac{\theta}{T_n} \sum_{t=0}^{T_n-1} \left( \sum_{l=1}^t { t \choose l } \eta^l (1-\eta)^{t-l} (1-\gamma_1)^{l-1} + (1-\eta)^t \right) \\
& \quad < \frac{\theta}{T_n} \frac{1}{1-\gamma_1} \sum_{t=0}^{T_n-1} \sum_{l=0}^t { t \choose l } \eta^l (1-\eta)^{t-l} (1-\gamma_1)^{l} = \frac{\theta}{T_n} \frac{1}{1-\gamma_1} \frac{1 - (1- \eta \gamma_1)^{T_n} }{ \eta \gamma_1 } < \frac{\theta}{T_n} \frac{1}{1-\gamma_1} \frac{1}{\eta \gamma_1} 
\end{align}
where in the first inequality we used $\tilde{p}_n(\omega) < 1 - \gamma_1$ and $\tilde{p}_n^*(\omega) \leq 1$, in the second we used $1-\gamma_1 \in (0,1)$ (so that $(1-\eta)^t < (1-\eta)^t / (1-\gamma_1)$), for the equality we used the binomial theorem and computed a geometric series, and the final inequality is immediate. Since $\theta, \eta, \gamma_1$ are independent of $n$, while $T_n \rightarrow \infty$ as $n \rightarrow \infty$ by \ref{assBranchHorizon}, it is clear from this final expression that, for given $\gamma > 0$, $n$ sufficiently large, and $\omega \in \Omega_{n,2}$, $0 \leq \E_n [ \hat{\vartheta}_{T_n}(\phi) ](\omega)  < \gamma$. It follows that $| \E_n [ \hat{\vartheta}_{T_n}(\phi) ] | 1(\Omega_{n,2}) \rightarrow 0\ a.s.$, completing the proof.

\subsubsection{Proof of Lemma \ref{lemSecondMomBelief}} \label{appProofSecondMomBelief} 

First, suppose $p_n \rightarrow p \in [0,1)$. Then, since $\hat{\vartheta}_{T_n}(\phi) \leq 1\ a.s.$ (see \eqref{eqBeliefIn01} and the following argument), $\Var_n ( \hat{\vartheta}_{T_n}(\phi) ) \leq \E_n \hat{\vartheta}_{T_n}(\phi)^2 \leq \E_n \hat{\vartheta}_{T_n}(\phi)$. Furthermore, since $T_n \rightarrow \infty$ by \ref{assBranchHorizon}, the fact that $p_n \rightarrow p \in [0,1)$ means only the case $T_n(1-p_n) \rightarrow \infty$ can occur. In this case, since $\E_n [ \hat{\vartheta}_{T_n}(\phi) ] 1 ( \Omega_{n,2} ) \rightarrow 0\ a.s.$ by Lemma \ref{lemFirstMomBelief}, we immediately obtain from $\Var_n ( \hat{\vartheta}_{T_n}(\phi) ) \leq \E_n [ \hat{\vartheta}_{T_n}(\phi) ]$ that $\Var_n ( \hat{\vartheta}_{T_n}(\phi) ) 1 ( \Omega_{n,2} ) \rightarrow 0\ a.s.$ as well. Hence, it only remains to prove the lemma in the case $p_n \rightarrow 1$, which we assume to hold for the remainder of the proof.

Towards this end, letting $\D$ denote the degree sequence and $\T$ denote the set of random variables defining the tree structure (as in Appendix \ref{appProofFirstMomBelief}), we have
\begin{align} \label{eqTotalVar}
\Var_n ( \hat{\vartheta}_{T_n}(\phi) ) 
& = \E_n [ \Var (\hat{\vartheta}_{T_n}(\phi) | \D , \mathcal{T} ) ] + \Var_n ( \E [ \hat{\vartheta}_{T_n}(\phi) | \D , \mathcal{T} ] ) .
\end{align}
We next consider the two summands in \eqref{eqTotalVar} in turn. In particular, we aim to show that each summand multiplied by $1(\Omega_{n,2})$ tends to zero $a.s.$ as $n$ tends to infinity.

For the first summand in \eqref{eqTotalVar}, we use the fact that the signals are i.i.d.\ Bernoulli($\theta$) given the tree structure, as well as Lemma \ref{lemTreeBeliefExpand}, to write
\begin{align}
\Var ( \hat{\vartheta}_{T_n}(\phi) | \D , \mathcal{T} ) & =  \frac{1}{T_n^2} \sum_{t=0}^{T_n-1} \sum_{l=0}^t \sum_{\i \in \hat{A}_l} \Var( \hat{s}_{T_n-t}(\i) | \D,\mathcal{T} ) \left( { t \choose l } \eta^l (1-\eta)^{t-l} \prod_{j=0}^{l-1} d_{in}(\i|j)^{-1} \right)^{2} \\
& = \frac{1}{T_n^2} \sum_{t=0}^{T_n-1} \sum_{l=0}^t \sum_{\i \in \hat{A}_l} \theta(1-\theta) \left( { t \choose l } \eta^l (1-\eta)^{t-l} \prod_{j=0}^{l-1} d_{in}(\i|j)^{-1} \right)^{2} \\
& \leq \frac{1}{T_n^2} \sum_{t=0}^{T_n-1} \sum_{l=0}^t   { t \choose l } \eta^l (1-\eta)^{t-l} \sum_{\i \in \hat{A}_l} \prod_{j=0}^{l-1} d_{in}(\i|j)^{-1} \leq \frac{1}{T_n} ,
\end{align}
where in the final step we have used $\sum_{\i \in \hat{A}_l} \prod_{j=0}^{l-1} d_{in}(\i|j)^{-1} \leq 1$ and $\sum_{l=0}^t   { t \choose l } \eta^l (1-\eta)^{t-l} = 1$. It immediately follows that $0 \leq \E_n [ \Var (\hat{\vartheta}_{T_n}(\phi) | \D , \mathcal{T} ) ] 1(\Omega_{n,2}) \leq 1 / T_n\ a.s.$ Hence, because $T_n \rightarrow \infty$ as $n \rightarrow \infty$ by \ref{assBranchHorizon}, analysis of the first summand in \eqref{eqTotalVar} is complete.

For the second summand in \eqref{eqTotalVar}, we first use the argument of \eqref{eqFirstMomPreWalks} to write
\begin{align}
\E [ \hat{\vartheta}_{T_n}(\phi) | \D , \mathcal{T} ] 
& = \frac{\theta}{T_n} \sum_{t=0}^{T_n-1} \sum_{l=0}^t   { t \choose l } \eta^l (1-\eta)^{t-l} \sum_{\i \in \hat{A}_l} \prod_{j=0}^{l-1} d_{in}(\i|j)^{-1} \\
& = \frac{\theta}{T_n} \sum_{l=0}^{T_n-1} \sum_{\i \in \hat{A}_l} \prod_{j=0}^{l-1} d_{in}(\i|j)^{-1} \sum_{t=l}^{T_n-1}   { t \choose l } \eta^l (1-\eta)^{t-l}   \triangleq \frac{\theta}{T_n} \sum_{l=0}^{T_n-1} Y_l u_{T_n,l} ,
\end{align}
where we have defined $Y_l = \sum_{\i \in \hat{A}_l} \prod_{j=0}^{l-1} d_{in}(\i|j)^{-1}$ and $u_{T_n,l} = \sum_{t=l}^{T_n-1}   { t \choose l } \eta^l (1-\eta)^{t-l}$. Therefore,
\begin{align} \label{eqTotalVarTerm2}
\Var_n ( \E [ \hat{\vartheta}_{T_n}(\phi) | \D , \mathcal{T} ]  ) & = \frac{\theta^2}{T_n^2} \left( \sum_{l=0}^{T_n-1} u_{T_n,l}^2 \Var_n ( Y_l  ) + 2 \sum_{l=0}^{T_n-1} u_{T_n,l} \sum_{l' = l+1}^{T_n-1} u_{T_n,l'} \Cov_n (Y_l, Y_{l'} ) \right) .
\end{align}
It remains to compute the variance and covariance terms in \eqref{eqTotalVarTerm2}. First, for any $l,l' \in \N$, we note
\begin{align} \label{eqYlToTwoWalks}
\E_n [ Y_l Y_{l'}  ] & = \E_n \left[ \P ( X_l^1 \in \hat{A}  | \D, \T ) \P ( X_{l'}^2 \in \hat{A}  | \D, \T ) \right] \\
&  = \E_n \left[ \P ( X_l^1 \in \hat{A} , X_{l'}^2 \in \hat{A}  | \D, \T ) \right]  = \P_n ( X_l^1 \in \hat{A} , X_{l'}^2 \in \hat{A}  ) ,
\end{align}
where we have used the argument of \eqref{eqWalkProbAppears} and the fact that $\{ X_i^1 \}_{i=1}^{\infty}$ and $\{ X_i^2 \}_{i=1}^{\infty}$ are independent random walks given the tree structure. By a similar argument, $\E_n [ Y_l ] = \P_n (X_l^1 \in \hat{A})$. Hence, using Lemmas \ref{lemHitProbOneWalk} and \ref{lemHitProbTwoWalks}, and assuming for the moment that $l> 1$, we have
\begin{align}
\Var_n(Y_l) & = \P_n ( X_l^1 \in \hat{A} , X_l^2 \in \hat{A}  ) - ( \P_n ( X_l^1 \in \hat{A} ) )^2 \\
& = \tilde{r}_n^* \tilde{p}_n^{2(l-1)}  +  \sum_{t=2}^{l} \tilde{q}_n^* \tilde{q}_n^{t-2} \tilde{r}_n \tilde{p}_n^{2(l-t)} + \tilde{q}_n^* \tilde{q}_n^{l-1} - ( \tilde{p}_n^* \tilde{p}_n^{l-1} )^2 \\
& = \frac{\tilde{r}_n^*}{\tilde{p}_n^2} \tilde{p}_n^{2l} + \frac{\tilde{q}_n^* \tilde{r}_n}{\tilde{p}_n^4} \tilde{p}_n^{2l} \sum_{l=0}^{l-2} \left( \frac{\tilde{q}_n}{\tilde{p}_n^2} \right)^t + \frac{\tilde{q}_n^*}{\tilde{q}_n} \tilde{q}_n^l - \frac{ (\tilde{p}_n^*)^2 }{\tilde{p}_n^2} \tilde{p}_n^{2l} \\
& = \frac{\tilde{r}_n^*}{\tilde{p}_n^2} \tilde{p}_n^{2l} + \frac{\tilde{q}_n^* \tilde{r}_n}{\tilde{p}_n^4} \tilde{p}_n^{2l} \frac{1-(\tilde{q}_n/\tilde{p}_n^2)^{l-1}}{1- (\tilde{q}_n/\tilde{p}_n^2) }+ \frac{\tilde{q}_n^*}{\tilde{q}_n} \tilde{q}_n^l - \frac{ (\tilde{p}_n^*)^2 }{\tilde{p}_n^2} \tilde{p}_n^{2l} \\
& = \frac{\tilde{r}_n^*}{\tilde{p}_n^2} \tilde{p}_n^{2l} + \frac{\tilde{q}_n^* \tilde{r}_n}{\tilde{p}_n^2} \tilde{p}_n^{2l} \frac{1-(\tilde{q}_n/\tilde{p}_n^2)^{l-1}}{\tilde{p}_n^2- \tilde{q}_n }+ \frac{\tilde{q}_n^*}{\tilde{q}_n} \tilde{q}_n^l - \frac{ (\tilde{p}_n^*)^2 }{\tilde{p}_n^2} \tilde{p}_n^{2l} \\
& = \frac{\tilde{r}_n^*}{\tilde{p}_n^2} \tilde{p}_n^{2l} + \frac{\tilde{q}_n^* \tilde{r}_n}{\tilde{p}_n^2 ( \tilde{p}_n^2 - \tilde{q}_n ) } \tilde{p}_n^{2l} - \frac{\tilde{q}_n^* \tilde{r}_n}{\tilde{q}_n ( \tilde{p}_n^2 - \tilde{q}_n ) } \tilde{q}_n^l   + \frac{\tilde{q}_n^*}{\tilde{q}_n} \tilde{q}_n^l - \frac{ (\tilde{p}_n^*)^2 }{\tilde{p}_n^2} \tilde{p}_n^{2l} \\
& = \frac{1}{\tilde{p}_n^2} \left( \tilde{r}_n^* + \frac{\tilde{q}_n^* \tilde{r}_n}{\tilde{p}_n^2 - \tilde{q}_n} - ( \tilde{p}_n^* )^2 \right) \tilde{p}_n^{2l} + \frac{\tilde{q}_n^*}{\tilde{q}_n} \left( 1 - \frac{\tilde{r}_n}{\tilde{p}_n^2 - \tilde{q}_n} \right) \tilde{q}_n^l . \label{eqVarYlTwoSummands}
\end{align}
Next, using \eqref{eqWalkRVs} and Jensen's inequality, we have
\begin{equation} \label{eqRnJensen}
\tilde{r}_n = \sum_{j \in \N, k \in \N_0} \left( \frac{j}{j+k} \right)^2 \sum_{i \in \N} f_n(i,j,k) - \tilde{q}_n \geq \left( \sum_{j \in \N, k \in \N_0}  \frac{j}{j+k} \sum_{i \in \N} f_n(i,j,k) \right)^2 - \tilde{q}_n = \tilde{p}_n^2 - \tilde{q}_n ,
\end{equation}
and so $1 - \tilde{r}_n / (\tilde{p}_n^2 - \tilde{q}_n) \leq 0$, i.e.\ the second term in \eqref{eqVarYlTwoSummands} is non-positive, so $\forall\ l > 1$,
\begin{equation} \label{eqVarYlRv}
\Var_n(Y_l) \leq \frac{1}{\tilde{p}_n^2} \left( \tilde{r}_n^* + \frac{\tilde{q}_n^* \tilde{r}_n}{\tilde{p}_n^2 - \tilde{q}_n} - ( \tilde{p}_n^* )^2 \right) \tilde{p}_n^{2l} .
\end{equation}
In the case $l = 1$, we have (again by Lemmas \ref{lemHitProbOneWalk} and \ref{lemHitProbTwoWalks})
\begin{align}
\Var_n(Y_l) & =  ( \tilde{r}_n^* + \tilde{q}_n^* ) - \tilde{p}_n^* \leq \tilde{r}_n^* + \frac{\tilde{q}_n^* \tilde{r}_n}{\tilde{p}_n^2 - \tilde{q}_n} - ( \tilde{p}_n^* )^2 = \frac{1}{\tilde{p}_n^2} \left( \tilde{r}_n^* + \frac{\tilde{q}_n^* \tilde{r}_n}{\tilde{p}_n^2 - \tilde{q}_n} - ( \tilde{p}_n^* )^2 \right) \tilde{p}_n^{2l} ,
\end{align}
where the inequality is \eqref{eqRnJensen} and $\tilde{p}_n^* \leq 1$; hence, \eqref{eqVarYlRv} holds for $l = 1$ as well. Finally, since $Y_0 = 1\ a.s.$, it is immediate that \eqref{eqVarYlRv} also holds for $l =0$. We next analyze the covariance terms in \eqref{eqTotalVarTerm2}. First, if $l' > l > 0$, we can use \eqref{eqYlToTwoWalks} and Lemmas \ref{lemHitProbOneWalk} and \ref{lemHitProbTwoWalks} to obtain
\begin{gather} 
\E_n [ Y_l Y_{l'}  ] = \P_n ( X_l^1 \in \hat{A} , X_{l'}^2 \in \hat{A}  ) = \tilde{p}_n^{l'-l} \P_n ( X_l^1 \in \hat{A} , X_l^2 \in \hat{A} ) = \tilde{p}_n^{l'-l} \E_n [ Y_l^2 ] , \\
\E_n [ Y_{l'} ] = \P ( X_{l'}^2 \in \hat{A} ) = \tilde{p}_n^* \tilde{p}_n^{l'-1} = \tilde{p}_n^* \tilde{p}_n^{l-1} \tilde{p}_n^{l'-l} = \P ( X_{l}^1 \in \hat{A} ) \tilde{p}_n^{l'-l} = \E_n [ Y_{l} ] \tilde{p}_n^{l'-l} , \\
\Rightarrow \Cov_n(Y_l,Y_{l'}) = \tilde{p}_n^{l'-l} \left( \E_n [ Y_l ^2 ] - ( \E_n [ Y_l ] )^2 \right) = \tilde{p}_n^{l'-l} \Var_n(Y_l) .
\end{gather}
On the other hand, if $l' > l = 0$, we have $Y_l = 1\ a.s.$, so $\Cov_n(Y_l,Y_{l'}) = 0 = \tilde{p}_n^{l'} \Var_n(Y_0)$. Hence, combined with \eqref{eqVarYlRv}, we have argued
\begin{equation} \label{eqYlCovariances}
\Cov_n(Y_l,Y_{l'}) = \tilde{p}_n^{l'-l} \Var_n(Y_l) \leq \frac{1}{\tilde{p}_n^2} \left( \tilde{r}_n^* + \frac{\tilde{q}_n^* \tilde{r}_n}{\tilde{p}_n^2 - \tilde{q}_n} - ( \tilde{p}_n^* )^2 \right) \tilde{p}_n^{l+l'}\ \forall\ l \in \N_0, l' > l .
\end{equation}
Hence, combining \eqref{eqTotalVarTerm2}, \eqref{eqVarYlRv}, and \eqref{eqYlCovariances}, we obtain
\begin{align} 
& \Var_n ( \E [ \hat{\vartheta}_{T_n}(\phi) | \D , \mathcal{T} ]  )  \\
& \quad\quad \leq \frac{1}{\tilde{p}_n^2} \left( \tilde{r}_n^* + \frac{\tilde{q}_n^* \tilde{r}_n}{\tilde{p}_n^2 - \tilde{q}_n} - ( \tilde{p}_n^* )^2 \right) \frac{\theta^2}{T_n^2} \left( \sum_{l=0}^{T_n-1} u_{T_n,l}^2 \tilde{p}_n^{2l} + 2 \sum_{l=0}^{T_n-1} u_{T_n,l} \sum_{l' = l+1}^{T_n-1} u_{T_n,l'} \tilde{p}_n^{l+l'} \right) \\
& \quad\quad \leq \frac{1}{\tilde{p}_n^2} \left( \tilde{r}_n^* + \frac{\tilde{q}_n^* \tilde{r}_n}{\tilde{p}_n^2 - \tilde{q}_n} - ( \tilde{p}_n^* )^2 \right) \frac{1}{T_n^2} \left( \sum_{l=0}^{T_n-1} u_{T_n,l}^2  + 2 \sum_{l=0}^{T_n-1} u_{T_n,l} \sum_{l' = l+1}^{T_n-1} u_{T_n,l'}  \right) \\
& \quad\quad = \frac{1}{\tilde{p}_n^2} \left( \tilde{r}_n^* + \frac{\tilde{q}_n^* \tilde{r}_n}{\tilde{p}_n^2 - \tilde{q}_n} - ( \tilde{p}_n^* )^2 \right) \left( \frac{1}{T_n}  \sum_{l=0}^{T_n-1} u_{T_n,l}  \right)^2 = \frac{1}{\tilde{p}_n^2} \left( \tilde{r}_n^* + \frac{\tilde{q}_n^* \tilde{r}_n}{\tilde{p}_n^2 - \tilde{q}_n} - ( \tilde{p}_n^* )^2 \right) ,
\end{align}
where the second inequality is simply $\theta, \tilde{p}_n \leq 1$, the first equality is immediate, and the second equality holds by definition of $u_{T_n,l}$. It clearly follows that
\begin{equation} \label{eqFinalVarRv}
\Var_n ( \E [ \hat{\vartheta}_{T_n}(\phi) | \D , \mathcal{T} ]  )  1 ( \Omega_{n,2} ) \leq \frac{1}{\tilde{p}_n^2} \left( \tilde{r}_n^* + \frac{\tilde{q}_n^* \tilde{r}_n}{\tilde{p}_n^2 - \tilde{q}_n} - ( \tilde{p}_n^* )^2 \right) 1 ( \Omega_{n,2} ) ,
\end{equation}
and so we can complete the proof by showing the right side of \eqref{eqFinalVarRv} tends to zero $a.s.$ Clearly, the right side is zero if $\omega \notin \Omega_{n,2}$; we aim to also show that, given $\gamma > 0$, $\exists\ N$ s.t.\ for $n > N$ and $\omega \in \Omega_{n,2}$, 
\begin{equation} \label{eqFinalStepVar}
\frac{1}{\tilde{p}_n(\omega)^2} \left( \tilde{r}_n^*(\omega) + \frac{\tilde{q}_n^*(\omega) \tilde{r}_n(\omega)}{\tilde{p}_n(\omega)^2 - \tilde{q}_n(\omega)} -  \tilde{p}_n^*(\omega)^2 \right) < \gamma .
\end{equation}
To prove \eqref{eqFinalStepVar}, we first recall that by \ref{assBranchDegSeq}, we have for $\omega \in \Omega_{n,2}$, $\tilde{p}_n^*(\omega) \geq \tilde{p}_n(\omega) > p_n - \delta_n$. Hence, since we are assuming $p_n \rightarrow 1$, and since $\delta_n \rightarrow 0$ by \ref{assBranchDegSeq}, we have for $\gamma' > 0$, $n$ sufficiently large, and such $\omega$, $\tilde{p}_n(\omega)^2 , \tilde{p}_n^*(\omega)^2 > 1 - \gamma'$. We thus obtain for $n$ large and $\omega \in \Omega_{n,2}$,
\begin{equation} \label{eqFinalStepVarPterms}
\frac{1}{\tilde{p}_n(\omega)^2} \left( \tilde{r}_n^*(\omega) + \frac{\tilde{q}_n^*(\omega) \tilde{r}_n(\omega)}{\tilde{p}_n(\omega)^2 - \tilde{q}_n(\omega)} -  \tilde{p}_n^*(\omega)^2 \right) < \frac{1}{1-\gamma'} \left( \tilde{r}_n^*(\omega) + \frac{\tilde{q}_n^*(\omega) \tilde{r}_n(\omega)}{1-\gamma' - \tilde{q}_n(\omega)} - (1-\gamma') \right) .
\end{equation}
To further upper bound the right side of \eqref{eqFinalStepVarPterms}, we note $\tilde{r}_n \leq 1 - \tilde{q}_n\ a.s.$ by the first equality in \eqref{eqRnJensen}. The same argument gives $\tilde{r}_n^* \leq 1 - \tilde{q}_n^*\ a.s.$ Note, however, that to use the second bound, we must ensure $1-\gamma' - \tilde{q}_n(\omega) > 0$. To this end, recall that $\tilde{q}_n(\omega) < 1 - \xi$ for $\omega \in \Omega_{n,2}$ by \ref{assBranchDegSeq}. Hence, assuming we choose $\gamma' < \xi$, we obtain $1-\gamma' - \tilde{q}_n(\omega) > 0$ for such $\omega$. Thus,
\begin{align} 
& \frac{1}{\tilde{p}_n(\omega)^2} \left( \tilde{r}_n^*(\omega) + \frac{\tilde{q}_n^*(\omega) \tilde{r}_n(\omega)}{\tilde{p}_n(\omega)^2 - \tilde{q}_n(\omega)} -  \tilde{p}_n^*(\omega)^2 \right) \\
& \quad\quad  < \frac{1}{1-\gamma'} \left( (1-\tilde{q}_n^*(\omega))+ \frac{\tilde{q}_n^*(\omega) (1-\tilde{q}_n(\omega))}{1-\gamma' - \tilde{q}_n(\omega)} - (1-\gamma') \right) \\
& \quad\quad = \frac{1}{1-\gamma'} \left( \tilde{q}_n^*(\omega) \left( \frac{ 1-\tilde{q}_n(\omega)}{1-\gamma' - \tilde{q}_n(\omega)} - 1 \right) + \gamma' \right) = \frac{1}{1-\gamma'} \left( \tilde{q}_n^*(\omega) \left( \frac{\gamma'}{1-\gamma' - \tilde{q}_n(\omega)} \right) + \gamma' \right)  \\
& \quad\quad = \frac{\gamma'}{1-\gamma'} \left( \frac{\tilde{q}_n^*(\omega)}{1-\gamma'-\tilde{q}_n(\omega)} + 1 \right) < \frac{\gamma'}{1-\gamma'} \left( \frac{\tilde{q}_n^*(\omega)}{\xi-\gamma'} + 1 \right) \leq \frac{\gamma'}{1-\gamma'} \left( \frac{1}{\xi-\gamma'} + 1 \right) \label{eqFinalStepVarDone} ,
\end{align}
where the first inequality uses \eqref{eqFinalStepVarPterms} and the bounds from the previous paragraph, the equalities are straightforward, the second inequality uses $\tilde{q}_n(\omega) < 1 - \xi$ for $\omega \in \Omega_{n,2}$ by \ref{assBranchDegSeq}, and the third uses $\tilde{q}_n^*(\omega) \leq 1$ (recall we have chosen $\gamma' < \xi$). Finally, it is straightforward to see the final bound in \eqref{eqFinalStepVarDone} tends to zero with $\gamma'$. Hence, for sufficiently small $\gamma'$, \eqref{eqFinalStepVar} follows, completing the proof.

\subsubsection{Notation for proofs of Lemmas \ref{lemHitProbOneWalk} and \ref{lemHitProbTwoWalks}}

In the next two subsections, we prove Lemmas \ref{lemHitProbOneWalk} and \ref{lemHitProbTwoWalks}. For these proofs, we let $\D$ denote the degree sequence $\{ d_{out}(i) , d_{in}^A(i) , d_{in}^B(i) \}_{i \in [n]}$, and we let $D$ denote a realization of this set. Note that the random variables defined in \eqref{eqWalkRVs} are all functions of $\D$; for a realization $D$ of $\D$, we let e.g.\ $\tilde{p}_{n,D}$ denote the realization of $\tilde{p}_n$. We similarly define $f_{n,D}, f_{n,D}^*$ for realizations of $f_n, f_n^*$, defined in \eqref{eqEmpDist}. Finally, letting $g(D) = \P(\cdot | \D = D)$, we have $\P_n(\cdot) = g(\D)$ by definition of $\P_n$. Hence, to prove Lemma \ref{lemHitProbOneWalk}, it suffices to show
\begin{equation}
\P ( X_l \in \hat{A} | \D = D ) = \begin{cases} \tilde{p}_{n,D}^* \tilde{p}_{n,D}^{l-1}  , & l \in \N \\ 1 , & l = 0 \end{cases} .
\end{equation}
while to prove Lemma \ref{lemHitProbTwoWalks}, it suffices to show
\begin{gather}
\P ( X_l^1 \in \hat{A} , X_{l'}^2 \in \hat{A} | \D = D ) = \begin{cases} \P ( X_l^1 \in \hat{A} , X_l^2 \in \hat{A} | \D = D ) \tilde{p}_{n,D}^{l'-l}  , & l \in \N \\ \tilde{p}_{n,D}^* \tilde{p}_{n,D}^{l'-1} , & l = 0 \end{cases} , \label{eqLemHitProbTwoWalksState1} \\ 
\P ( X_l^1 \in \hat{A} , X_l^2 \in \hat{A} | \D = D ) = \begin{cases} \tilde{r}_{n,D}^* \tilde{p}_{n,D}^{2(l-1)}  +  \sum_{t=2}^{l} \tilde{q}_{n,D}^* \tilde{q}_{n,D}^{t-2} \tilde{r}_{n,D} \tilde{p}_{n,D}^{2(l-t)} + \tilde{q}_{n,D}^* \tilde{q}_{n,D}^{l-1}  , & l \in \{2,3,\ldots\} \\ \tilde{r}_{n,D}^* + \tilde{q}_{n,D}^* , & l = 1  \\ 1 , & l = 0 \end{cases} . \label{eqLemHitProbTwoWalksState2}
\end{gather}

\subsubsection{Proof of Lemma \ref{lemHitProbOneWalk}} \label{appProofHitProbOneWalk}

The $l = 0$ case is trivial, since $X_0^1 = \phi \in \hat{A}$, so we assume $l \in \N$ moving forward. First, since $\hat{A}^C = \hat{B}$ is an absorbing set, we have $X_l^1 \in \hat{A} \Rightarrow X_{l-1}^1 \in \hat{A}$, so
\begin{align} \label{eq1stMomTotalProb}
\P ( X_l^1 \in \hat{A} | \D = D ) & = \P ( X_l^1 \in \hat{A} | X_{l-1}^1 \in \hat{A} , \D = D ) \P ( X_{l-1}^1 \in \hat{A} | \D = D ) .
\end{align}
For the first term in \eqref{eq1stMomTotalProb}, we have
\begin{align} \label{eq1stMomCondOnDeg}
& \P ( X_l^1 \in \hat{A} | X_{l-1}^1 \in \hat{A} , \D = D ) \\
& \quad = \sum_{j \in \N, k \in \N_0} \P ( X_l^1 \in \hat{A} | d_{in}^A(X_{l-1}^1) = j , d_{in}^B(X_{l-1}^1) = k , X_{l-1}^1 \in \hat{A} , \D = D ) \\
& \quad\quad\quad \quad\quad\quad \times \P ( d_{in}^A(X_{l-1}^1) = j , d_{in}^B(X_{l-1}^1) = k | X_{l-1}^1 \in \hat{A} , \D = D ) \\
& \quad = \begin{cases} \sum_{j \in \N, k \in \N_0} \frac{j}{j+k} \sum_{i \in \N} f_{n,D}(i,j,k) = \tilde{p}_{n,D} , & l \in \{2,3,\ldots\} \\ \sum_{j \in \N, k \in \N_0} \frac{j}{j+k} \sum_{i \in \N} f_{n,D}^*(i,j,k) = \tilde{p}_{n,D}^* , & l = 1 \end{cases} ,
\end{align}
where the second equality holds by Algorithm \ref{algTree}. More specifically, for $l > 1$, the degrees of $X_{l-1}^1$ are sampled from $f_{n,D}$ (Line \ref{algTreeDegSample} in Algorithm \ref{algTree}) after realizing $X_{l-1}^1$ (Line \ref{algTreeNextStep}), yielding the $\sum_{i \in \N} f_{n,D}(i,j,k)$ term; further, $X_l^1$ is chosen uniformly from the incoming neighbors of $X_{l-1}^1$ (Line \ref{algTreeNextStep}) after realizing the degrees of $X_{l-1}^1$, yielding the $j / (j+k)$ term (the $l=1$ case is similarly justified). Combining \eqref{eq1stMomTotalProb} and \eqref{eq1stMomCondOnDeg}, and using the fact that $X_0^1 = \phi \in \hat{A}$ by definition, completes the proof in the case $l = 1$. For $l > 1$, we again use \eqref{eq1stMomTotalProb} and \eqref{eq1stMomCondOnDeg} to obtain
\begin{equation}
\P ( X_l^1 \in \hat{A} | \D = D ) = \tilde{p}_{n,D} \P ( X_{l-1}^1 \in \hat{A} | \D = D ) = \cdots = \tilde{p}_{n,D}^{l-1} \P ( X_1^1 \in \hat{A} | \D = D ) = \tilde{p}_{n,D}^{l-1} \tilde{p}_{n,D}^* ,
\end{equation}
which completes the proof.

\subsubsection{Proof of Lemma \ref{lemHitProbTwoWalks}} \label{appProofHitProbTwoWalks}

We begin by proving the first statement in the lemma, i.e.\ \eqref{eqLemHitProbTwoWalksState1}. First, we note that for the $l = 0$ case, $X_0 = \phi \in \hat{A}$ by definition, so $\P ( X_l^1 \in \hat{A} , X_{l'}^2 \in \hat{A} | \D = D ) = \P ( X_{l'}^2 \in \hat{A} | \D = D )$, and the statement holds by Lemma \ref{lemHitProbOneWalk}. For the $l \in \N$ case, we first write
\begin{align}
\P ( X_l^1 \in \hat{A} , X_{l'}^2 \in \hat{A} | \D = D ) & = \P ( X_l^1 \in \hat{A} , X_{l'-1}^2 \in \hat{A} , X_{l'}^2 \in \hat{A} | \D = D ) \\
& = \P ( X_{l'}^2 \in \hat{A} | X_l^1 \in \hat{A} , X_{l'-1}^2 \in \hat{A} , \D = D )  \P ( X_l^1 \in \hat{A} , X_{l'-1}^2 \in \hat{A} | \D = D ) ,
\end{align}
where the first equality holds since $\hat{A}^C = \hat{B}$ is an absorbing set (i.e.\ $X_{l'}^2 \in \hat{A} \Rightarrow X_{l'-1}^2 \in \hat{A}$) and the second simply rewrites a conditional probability. Next, by the same argument as \eqref{eq1stMomCondOnDeg},
\begin{align}
\P ( X_{l'}^2 \in \hat{A} | X_l^1 \in \hat{A} , X_{l'-1}^2 \in \hat{A} , \D = D ) = \tilde{p}_{n,D} ,
\end{align}
where we have used the $l' > 1$ case of \eqref{eq1stMomCondOnDeg}, since $l' > l \geq 1$. Hence, the previous two equations give
\begin{align}
\P ( X_l^1 \in \hat{A} , X_{l'}^2 \in \hat{A} | \D = D ) & =  \tilde{p}_{n,D} \P ( X_l^1 \in \hat{A} , X_{l'-1}^2 \in \hat{A} | \D = D )\\
& = \cdots = \tilde{p}_{n,D}^{l'-l}  \P ( X_l^1 \in \hat{A} , X_l^2 \in \hat{A} | \D = D ) .
\end{align}
This completes the proof of \eqref{eqLemHitProbTwoWalksState1}. For the second statement, i.e.\ \eqref{eqLemHitProbTwoWalksState2}, the $l = 0$ case is trivial, since $X_0^1 = X_0^2 = \phi \in \hat{A}$ by definition, so we assume $l \in \N$ for the remainder of the proof. First, let $\tau = \inf \{ t \in \N_0 : X_t^1 \neq X_t^2 \}$ denote the first step at which the two walks diverge. Note that $X_0^1 = X_0^2 = \phi$ by definition, so $\tau \in \N\ a.s.$; also, due to the tree structure, the walks remain apart forever after diverging, i.e.\ $X_{\tau+1}^1 \neq X_{\tau+1}^2, X_{\tau+2}^1 \neq X_{\tau+1}^2, \ldots\ a.s.$ Next, for $l \in \N$, we write
\begin{align} \label{eq2ndMomTotalProb}
& \P ( X_l^1 \in \hat{A} , X_l^2 \in \hat{A} | \D = D ) \\
& \quad\quad = \sum_{t=1}^l \P ( X_l^1 \in \hat{A} , X_l^2 \in \hat{A} , \tau = t | \D = D ) + \P ( X_l^1 \in \hat{A} , X_l^2 \in \hat{A} , \tau > l | \D = D )
\end{align}
We begin by computing the second term in \eqref{eq2ndMomTotalProb}. Here we have
\begin{align} \label{eq2ndMomTauGreaterInit}
& \P ( X_l^1 \in \hat{A} , X_l^2 \in \hat{A} , \tau > l | \D = D )  \\
& \quad = \P ( X_l^1 \in \hat{A} , X_l^2 \in \hat{A} , X_l^1 = X_l^2 , X_{l-1}^1 = X_{l-1}^2 , \ldots , X_1^1 = X_1^2 | \D = D ) \\
& \quad = \P ( X_l^1 \in \hat{A} , X_l^2 \in \hat{A} , X_l^1 = X_l^2 , X_{l-1}^1 \in \hat{A}, X_{l-1}^2 \in \hat{A} , X_{l-1}^1 = X_{l-1}^2 , \ldots , X_1^1 = X_1^2 | \D = D )  \\
& \quad = \P ( X_l^1 \in \hat{A} , X_l^2 \in \hat{A} , X_l^1 = X_l^2 | X_{l-1}^1 \in \hat{A}, X_{l-1}^2 \in \hat{A} , X_{l-1}^1 = X_{l-1}^2 , \ldots , X_1^1 = X_1^2 , \D = D ) \\
& \quad\quad\quad \times \P ( X_{l-1}^1 \in \hat{A}, X_{l-1}^2 \in \hat{A} , X_{l-1}^1 = X_{l-1}^2 , \ldots , X_1^1 = X_1^2 | \D = D ) \\
& \quad = \P ( X_l^1 \in \hat{A} , X_l^2 \in \hat{A} , X_l^1 = X_l^2 | X_{l-1}^1 \in \hat{A}, X_{l-1}^2 \in \hat{A} , X_{l-1}^1 = X_{l-1}^2 , \ldots , X_1^1 = X_1^2 , \D = D ) \\
& \quad\quad\quad \times \P ( X_{l-1}^1 \in \hat{A}, X_{l-1}^2 \in \hat{A} , \tau > l-1 | \D = D ) ,
\end{align}
where the first and last equalities hold by definition of $\tau$ and the second holds since $\hat{A}^C = \hat{B}$ is an absorbing set. Now for $l > 1$, we obtain
\begin{align} \label{eq2ndMomTauGreaterToQhat}
& \P ( X_l^1 \in \hat{A} , X_l^2 \in \hat{A} , X_l^1 = X_l^2 | X_{l-1}^1 \in \hat{A}, X_{l-1}^2 \in \hat{A} , X_{l-1}^1 = X_{l-1}^2 , \ldots , X_1^1 = X_1^2 , \D = D ) \\
& \quad = \P ( X_l^1 \in \hat{A} , X_l^1 = X_l^2 | X_{l-1}^1 \in \hat{A},  X_{l-1}^1 = X_{l-1}^2 , \D = D ) \\
& \quad = \sum_{j \in \N, k \in \N_0} \P ( X_l^1 \in \hat{A} , X_l^1 = X_l^2 | d_{in}^A(X_{l-1}^1) = j, d_{in}^B(X_{l-1}) = k , X_{l-1}^1 \in \hat{A},  X_{l-1}^1 = X_{l-1}^2 , \D = D ) \\
& \quad\quad\quad\quad\quad\quad\quad\quad\quad \times \P ( d_{in}^A(X_{l-1}^1) = j, d_{in}^B(X_{l-1}) = k | X_{l-1}^1 \in \hat{A},  X_{l-1}^1 = X_{l-1}^2 , \D = D ) \\
& \quad = \sum_{j \in \N, k \in \N_0} \frac{j}{j+k} \frac{1}{j+k} \sum_{i \in \N} f_{n,D}(i,j,k) = \tilde{q}_{n,D} ,
\end{align}
where the first equality uses independence and eliminates repetitive events, and the third follows an argument similar to that following \eqref{eq1stMomCondOnDeg}. Combining \eqref{eq2ndMomTauGreaterInit} and \eqref{eq2ndMomTauGreaterToQhat},
\begin{align} \label{eq2ndMomTauGreaterRecursion}
\P ( X_l^1 \in \hat{A} , X_l^2 \in \hat{A} , \tau > l | \D = D ) & = \tilde{q}_{n,D} \P ( X_{l-1}^1 \in \hat{A}, X_{l-1}^2 \in \hat{A} , \tau > l-1 | \D = D ) \\
& = \cdots = \tilde{q}_{n,D}^{l-1} \P ( X_1^1 \in \hat{A}, X_1^2 \in \hat{A} , \tau > 1 | \D = D ) .
\end{align}
Finally, by an argument similar to \eqref{eq2ndMomTauGreaterToQhat}, we have
\begin{align} \label{eq2ndMomTauGreaterToQhatStar}
& \P ( X_1^1 \in \hat{A}, X_1^2 \in \hat{A} , \tau > 1 | \D = D )  = \P ( X_1^1 \in \hat{A} , X_1^1 = X_1^2 | \D = D ) \\
& \quad = \sum_{j \in \N, k \in \N_0} \P ( X_1^1 \in \hat{A} , X_1^1 = X_1^2 | d_{in}^A(\phi) = j , d_{in}^B(\phi) = k , \D = D ) \P ( d_{in}^A(\phi) = j , d_{in}^B(\phi) = k | \D = D ) \\
& \quad = \sum_{j \in \N, k \in \N_0} \frac{j}{j+k} \frac{1}{j+k} \sum_{i \in \N} f_{n,D}^*(i,j,k) = \tilde{q}_{n,D}^* .
\end{align}
Hence, combining \eqref{eq2ndMomTauGreaterRecursion} and \eqref{eq2ndMomTauGreaterToQhatStar} gives
\begin{equation} \label{eq2ndMomTauGreaterFinal} 
\P ( X_l^1 \in \hat{A} , X_l^2 \in \hat{A} , \tau > l | \D = D ) = \tilde{q}_{n,D}^* \tilde{q}_{n,D}^{l-1}\ \forall\ l \in \N .
\end{equation}
For the first term in \eqref{eq2ndMomTotalProb}, we first consider the $t = l$ summand. For $l > 1$, similar to \eqref{eq2ndMomTauGreaterToQhat},
\begin{align}
& \P ( X_l^1 \in \hat{A} , X_l^2 \in \hat{A} , \tau = l | \D = D ) \\
& \quad = \P ( X_l^1 \in \hat{A} , X_l^2 \in \hat{A} , X_l^1 \neq X_l^2 , X_{l-1}^1 = X_{l-1}^2 , \ldots , X_1^1 = X_1^2 | \D = D ) \\
& \quad = \P ( X_l^1 \in \hat{A} , X_l^2 \in \hat{A} , X_l^1 \neq X_l^2 , X_{l-1}^1 \in \hat{A} , X_{l-1}^1 = X_{l-1}^2 , \ldots , X_1^1 = X_1^2 | \D = D ) \\
& \quad = \P ( X_l^1 \in \hat{A} , X_l^2 \in \hat{A} , X_l^1 \neq X_l^2 | X_{l-1}^1 \in \hat{A} , X_{l-1}^1 = X_{l-1}^2 , \ldots , X_1^1 = X_1^2 , \D = D ) \\
& \quad\quad\quad \times \P ( X_{l-1}^1 \in \hat{A} , X_{l-1}^1 = X_{l-1}^2 , \ldots , X_1^1 = X_1^2 | \D = D ) \\
& \quad = \sum_{j \in \N, k \in \N_0} \frac{j}{j+k} \frac{j-1}{j+k} \sum_{i \in \N} f_{n,D}(i,j,k) \P ( X_{l-1}^1 \in \hat{A}, X_{l-1}^2 \in \hat{A} , \tau > l-1 | \D = D )  = \tilde{r}_{n,D} \tilde{q}_{n,D}^{l-2} \tilde{q}_{n,D}^* ,
\end{align}
where in the final step we have also used \eqref{eq2ndMomTauGreaterFinal}. Similarly, for $l=1$,
\begin{align}
\P ( X_1^1 \in \hat{A} , X_1^2 \in \hat{A} , \tau = 1 | \D = D ) & = \P ( X_1^1 \in \hat{A} , X_1^2 \in \hat{A} , X_1^1 \neq X_1^2 | \D = D ) \\
&  = \sum_{j \in \N, k \in \N_0} \frac{j}{j+k} \frac{j-1}{j+k} \sum_{i \in \N} f_{n,D}^* (i,j,k) = \tilde{r}_{n,D}^* .
\end{align}
To summarize, we have shown
\begin{equation} \label{eq2ndMomSummandL}
\P ( X_l^1 \in \hat{A} , X_l^2 \in \hat{A} , \tau = l | \D = D ) = \begin{cases} \tilde{q}_{n,D}^*  \tilde{q}_{n,D}^{l-2} \tilde{r}_{n,D}  , &  l \in \{2,3,\ldots\} \\ \tilde{r}_{n,D}^* , &  l = 1 \end{cases} .
\end{equation}
Next, we consider the $t < l$ summands in \eqref{eq2ndMomTotalProb} (such summands are present only for $l > 1$). We have
\begin{align}
& \P ( X_l^1 \in \hat{A} , X_l^2 \in \hat{A} , \tau = t | \D = D ) \\
& \quad = \P ( X_l^1 \in \hat{A} , X_l^2 \in \hat{A} , X_{l-1}^1 \in \hat{A} , X_{l-1}^2 \in \hat{A} , X_{l-1}^1 \neq X_{l-1}^2 , \tau = t | \D = D ) \\
& \quad = \P ( X_l^1 \in \hat{A} , X_l^2 \in \hat{A} | X_{l-1}^1 \in \hat{A} , X_{l-1}^2 \in \hat{A} , X_{l-1}^1 \neq X_{l-1}^2 , \tau = t , \D = D ) \\
& \quad\quad\quad \times \P( X_{l-1}^1 \in \hat{A} , X_{l-1}^2 \in \hat{A} , X_{l-1}^1 \neq X_{l-1}^2 , \tau = t | \D = D ) \\
& \quad = \prod_{h=1}^2 \P ( X_l^h \in \hat{A} | X_{l-1}^1 \in \hat{A} , X_{l-1}^2 \in \hat{A} , X_{l-1}^1 \neq X_{l-1}^2 , \tau = t , \D = D ) \\
& \quad\quad\quad \times \P( X_{l-1}^1 \in \hat{A} , X_{l-1}^2 \in \hat{A} , \tau = t | \D = D ) ,
\end{align}
where in the first equality we used the fact that $\hat{A}^C = \hat{B}$ is an absorbing set and the fact that once the walks diverge they remain apart; in the second equality we used the fact that $X_l^1$ and $X_l^2$ are conditionally independent given the event $X_{l-1}^1 \neq X_{l-1}^2$. Further, for $h \in \{1,2\}$, 
\begin{align}
& \P ( X_l^h \in \hat{A}  | X_{l-1}^1 \in \hat{A} , X_{l-1}^2 \in \hat{A} , X_{l-1}^1 \neq X_{l-1}^2 , \tau = t , \D = D )  = \sum_{j \in \N, k \in \N_0} \frac{j}{j+k} \sum_{i \in \N} f_{n,D}(i,j,k) = \tilde{p}_{n,D} ,
\end{align}
and so, combining the previous two equations and applying recursively yields
\begin{align} \label{eq2ndMomSummandT}
\P ( X_l^1 \in \hat{A} , X_l^2 \in \hat{A} , \tau = t | \D = D ) & = \tilde{p}_{n,D}^2 \P( X_{l-1}^1 \in \hat{A} , X_{l-1}^2 \in \hat{A} , \tau = t | \D = D ) \\
& = \cdots = \tilde{p}_{n,D}^{2(l-t)} \P( X_{t}^1 \in \hat{A} , X_{t}^2 \in \hat{A} , \tau = t | \D = D ) \\
& = \begin{cases}  \tilde{q}_{n,D}^* \tilde{q}_{n,D}^{t-2} \tilde{r}_{n,D}  \tilde{p}_{n,D}^{2(l-t)}  , & t \in \{2,3,\ldots,l-1\} \\ \tilde{r}_{n,D}^* \tilde{p}_{n,D}^{2(l-1)}  , & t = 1 \end{cases}\ \forall\ l \in \{2,3,\ldots\} .
\end{align}
where the final equality uses \eqref{eq2ndMomSummandL}. Finally, combining \eqref{eq2ndMomTotalProb}, \eqref{eq2ndMomTauGreaterFinal}, \eqref{eq2ndMomSummandL}, and \eqref{eq2ndMomSummandT} yields
\begin{equation}
\P ( X_l^1 \in \hat{A} , X_l^2 \in \hat{A} | \D = D ) = \begin{cases} \tilde{r}_{n,D}^* \tilde{p}_{n,D}^{2(l-1)}  +  \sum_{t=2}^{l} \tilde{q}_{n,D}^* \tilde{q}_{n,D}^{t-2} \tilde{r}_{n,D} \tilde{p}_{n,D}^{2(l-t)} + \tilde{q}_{n,D}^* \tilde{q}_{n,D}^{l-1}  , & l \in \{2,3,\ldots\} \\ \tilde{r}_{n,D}^* + \tilde{q}_{n,D}^* , & l = 1  \end{cases} ,
\end{equation}
which is what we set out to prove.

\subsection{Step 2 for proof of Theorem \ref{thmSec}} \label{appRootBelief2ProofDetails}

\subsubsection{Proof of Lemma \ref{lemSecThmSignals}} \label{appProofSecThmSignals}

We first write
\begin{align} 
& \P \left( \left| \hat{\vartheta}_{T_n}(\phi) - \E [ \hat{\vartheta}_{T_n}(\phi) | \T ] \right| > \epsilon \right) = \E \left[ \P \left( \left| \hat{\vartheta}_{T_n}(\phi) - \E [ \hat{\vartheta}_{T_n}(\phi) | \T ] \right| > \epsilon \middle| \T \right) \right] \\
& \quad = \E \left[ \P \left( \hat{\vartheta}_{T_n}(\phi) - \E [ \hat{\vartheta}_{T_n}(\phi) | \T ] > \epsilon \middle| \T \right) + \P \left( \E [ \hat{\vartheta}_{T_n}(\phi) | \T ] - \hat{\vartheta}_{T_n}(\phi) > \epsilon \middle| \T \right) \right] \label{eqSpOnlySignalsChernoff}
\end{align}
where the first equality uses the law of total expectation and the second is immediate. For the first summand in the expectation in \eqref{eqSpOnlySignalsChernoff}, we fix $\lambda > 0$ and write
\begin{align}
& \P \left( \hat{\vartheta}_{T_n}(\phi) - \E [ \hat{\vartheta}_{T_n}(\phi) | \T ] > \epsilon \middle| \T \right) = \P \left( \exp( \lambda( \hat{\vartheta}_{T_n}(\phi) - \E [ \hat{\vartheta}_{T_n}(\phi) | \T ] )) > e^{- \lambda \epsilon} \middle| \T \right) \\
& \quad \leq e^{-\lambda \epsilon} \E \left[ \exp ( \lambda ( \hat{\vartheta}_{T_n}(\phi) - \E [ \hat{\vartheta}_{T_n}(\phi) | \T ] ) \middle| \T \right] \\
& \quad = e^{-\lambda \epsilon} \prod_{t=0}^{T_n-1} \prod_{l=0}^t \prod_{\i \in \hat{A}_l} \E \left[ \exp \left(  \frac{\lambda}{T_n} {t \choose l} \eta^l (1-\eta)^{t-l} \prod_{j=0}^{l-1} d_{in}(\i | j )^{-1} ( \hat{s}_{T_n-t}(\i) - \theta ) \right) \middle| \T \right] \\
& \quad \leq e^{-\lambda \epsilon} \prod_{t=0}^{T_n-1} \prod_{l=0}^t \prod_{\i \in \hat{A}_l} \exp \left( \frac{1}{8} \left( \frac{\lambda}{T_n} {t \choose l} \eta^l (1-\eta)^{t-l} \prod_{j=0}^{l-1} d_{in}(\i | j )^{-1} \right)^2 \right) \\
& \quad \leq e^{-\lambda \epsilon} \prod_{t=0}^{T_n-1} \prod_{l=0}^t \prod_{\i \in \hat{A}_l} \exp \left( \frac{\lambda^2}{8 T_n^2} {t \choose l} \eta^l (1-\eta)^{t-l} \prod_{j=0}^{l-1} d_{in}(\i | j )^{-1}  \right) \\
& \quad = \exp \left( -\lambda \epsilon + \frac{\lambda^2}{8 T_n} \frac{1}{T_n} \sum_{t=0}^{T_n-1} \sum_{l=0}^t \sum_{\i \in \hat{A}_l} {t \choose l} \eta^l (1-\eta)^{t-l} \prod_{j=0}^{l-1} d_{in}(\i | j )^{-1}  \right) \\
& \quad = \exp \left( -\lambda \epsilon +  \frac{\lambda^2}{8 T_n \theta} \E [ \hat{\vartheta}_{T_n}(\phi) | \T ]  \right) \leq \exp \left( - \lambda \epsilon + \frac{\lambda^2}{8 T_n} \right) . \label{eqSpOnlySignalsChernoffBeforeMin}
\end{align}
Here the first equality holds by monotonicity of $x \mapsto e^{\lambda x}$, the first inequality is Markov's, the second equality holds by \eqref{eqMeanTreeBeliefGivenTree}, the second inequality uses Lemma \ref{lemHoeffding} from Appendix \ref{appAuxiliary}, the third inequality uses ${t \choose l} \eta^l (1-\eta)^{t-l}, \prod_{j=0}^{l-1} d_{in}(\i | j )^{-1} \leq 1$, the third equality is immediate, the fourth equality again uses \eqref{eqMeanTreeBeliefGivenTree}, and the fourth inequality uses \eqref{eqBeliefGivenTreeIn0theta}. Since the preceding argument holds $\forall\ \lambda > 0$, we choose $\lambda = 4 \epsilon T_n$ to minimize the bound. Upon substituting into \eqref{eqSpOnlySignalsChernoffBeforeMin}, we obtain $e^{-2 \epsilon^2 T_n}$. The same argument holds for the second summand in the expectation of \eqref{eqSpOnlySignalsChernoff}. We also note that the bound $e^{-2 \epsilon^2 T_n}$ is non-random, so we may discard the expectation. In summary, we have shown
\begin{equation}
\P \left( \left| \hat{\vartheta}_{T_n}(\phi) - \E [ \hat{\vartheta}_{T_n}(\phi) | \T ] \right| > \epsilon \right) \leq 2 e^{-2 \epsilon^2 T_n} .
\end{equation}
Hence, for $n$ sufficiently large, we have by assumption on $T_n$
\begin{equation}
\P \left( \left| \hat{\vartheta}_{T_n}(\phi) - \E [ \hat{\vartheta}_{T_n}(\phi) | \T ] \right| > \epsilon \right) \leq 2 e^{-2 \epsilon^2 \mu \log n} = 2 n^{-2 \epsilon^2 \mu} = O \left( n^{-2 \epsilon^2 \mu} \right) ,
\end{equation}
which is what we set out to prove.

\subsubsection{Proof of Lemma \ref{lemSecThmStructure}} \label{appProofSecThmStructure}

We begin by deriving a bound conditioned on the degree sequence. First, we fix $\tilde{\lambda} > 0$ and use monotonicity of $x \mapsto e^{\tilde{\lambda} x}$ and Markov's inequality to write
\begin{equation} \label{eqStructureInitChernoff}
\P_n ( \E [ \hat{\vartheta}_{T_n}(\phi) | \T ] > \epsilon  ) \leq e^{-\tilde{\lambda} \epsilon} \E_n \exp ( \tilde{\lambda} \E [ \hat{\vartheta}_{T_n}(\phi) | \T ] ) .
\end{equation}
The bulk of the proof will involve bounding the expectation term. For this, we first note
\begin{align}
& \E_n \exp ( \tilde{\lambda} \E [ \hat{\vartheta}_{T_n}(\phi) | \T ] ) = \E_n \exp \left( \tilde{\lambda} \frac{\theta}{T_n} \sum_{t=0}^{T_n-1} \sum_{l=0}^t { t \choose l } \eta^l (1-\eta)^{t-l}  \sum_{\i \in \hat{A}_l} \prod_{j=0}^{l-1} d_{in}(\i|j)^{-1} \right)  \\
& \quad = \E_n \exp \left(  \frac{\tilde{\lambda} \theta}{T_n} \sum_{l=0}^{T_n-1} \left( \sum_{t=l}^{T_n-1} { t \choose l } \eta^l (1-\eta)^{t-l}  \right) \left( \sum_{\i \in \hat{A}_l} \prod_{j=0}^{l-1} d_{in}(\i|j)^{-1} \right) \right)  = \E_n \prod_{l=0}^{T_n-1} \exp ( \lambda u_{T_n,l} Y_l ) ,
\end{align}
where the first equality holds by \eqref{eqMeanTreeBeliefGivenTree}, the second rearranges summations, and in the third we have defined $\lambda = \tilde{\lambda} \theta / T_n$, $u_{T_n,l} = \sum_{t=l}^{T_n-1} { t \choose l } \eta^l (1-\eta)^{t-l}$, and $Y_l = \sum_{\i \in \hat{A}_l} \prod_{j=0}^{l-1} d_{in}(\i|j)^{-1}$. For the remainder of the proof, we use $\E_{n,l}$ to denote conditional expectation with respect to the degree sequence \textit{and} the set of random variables realized during the first $l$ iterations of Algorithm \ref{algTree} (i.e.\ the random variables defining the first $l$ generations of the tree). Using this notation, we have
\begin{align} \label{eqHoeffFirstRecursion}
& \E_n \Big[ \prod_{l=0}^{T_n-1} \exp ( \lambda u_{T_n,l} Y_l ) \Big] = \E_n \Big[ \E_{n,T_n-2} \Big[ \prod_{l=0}^{T_n-1} \exp ( \lambda u_{T_n,l} Y_l ) \Big] \Big] \\
& \quad = \E_n \Big[ \prod_{l=0}^{T_n-2} \exp ( \lambda u_{T_n,l} Y_l ) \E_{n,T_n-2} \Big[ \exp ( \lambda u_{T_n,T_n-1} Y_{T_n-1} ) \Big] \Big] \\
& \quad = \E_n \Big[ \prod_{l=0}^{T_n-3} \exp ( \lambda u_{T_n,l} Y_l ) \exp ( \lambda ( u_{T_n,T_n-2} + u_{T_n,T_n-1} \tilde{p}_n ) Y_{T_n-2} ) \\
& \quad\quad \times \E_{n,T_n-2} \Big[  \exp ( \lambda u_{T_n,T_n-1} ( Y_{T_n-1} - \tilde{p}_n Y_{T_n-2} ) ) \Big] \Big] ,
\end{align}
where in the third equality we have multiplied and divided $\exp ( \lambda u_{T_n,T_n-1} \tilde{p}_n Y_{T_n-2} )$. Next, we note
\begin{align} \label{eqYlToYlmin1final}
& Y_{T_n-1} = \sum_{\i' \in \hat{A}_{T_n-2}} \sum_{\i \in \hat{A}_{T_n-1} : \i | (T_n-2) = \i' } \prod_{j=0}^{T_n-2} d_{in}(\i|j)^{-1}  = \sum_{\i' \in \hat{A}_{T_n-2}} \sum_{\i \in \hat{A}_{T_n-1} : \i | (T_n-2) = \i' } \prod_{j=0}^{T_n-2} d_{in}(\i'|j)^{-1} \\
& \quad = \sum_{\i' \in \hat{A}_{T_n-2}} \prod_{j=0}^{T_n-2} d_{in}(\i'|j)^{-1} | \{ \i \in \hat{A}_{T_n-1} : \i | (T_n-2) = \i' \} | = \sum_{\i' \in \hat{A}_{T_n-2}} \prod_{j=0}^{T_n-3} d_{in}(\i'|j)^{-1} d_{in}(\i')^{-1} d_{in}^A(\i') , 
\end{align}
where in the first equality we rewrote the sum based on the construction of $\hat{A}_{T_n-1}$ in Algorithm \ref{algTree}, in the second we have used the fact that $\i | j = \i' | j$ for $j \in \{0,\ldots,T_n-2\}$ by Algorithm \ref{algTree} (in words, $\i$ and $\i'$ share the same ancestry in the tree), in the third we have recognized that the $\i$-th summand does not depend on $\i$, and in the fourth we have used $\i' | (T_n-2) = \i'$ (since $\i' \in \hat{A}_{T_n-2}$) and the construction of the agent offspring of $\i'$ in Algorithm \ref{algTree}. It follows that
\begin{equation}
\E_{n,T_n-2} Y_{T_n-1} = \sum_{\i' \in \hat{A}_{T_n-2}} \prod_{j=0}^{T_n-3} d_{in}(\i'|j)^{-1} \E_{n,T_n-2} ( d_{in}^A(\i') / d_{in}(\i') ) = \prod_{j=0}^{T_n-3} d_{in}(\i'|j)^{-1} \tilde{p}_n = Y_{T_n-2} \tilde{p}_n ,
\end{equation}
where $\E_{n,T_n-2} ( d_{in}^A(\i') / d_{in}(\i') ) = \tilde{p}_n$ holds by definition of $d_{in}^A(\i'), d_{in}(\i')$ in Algorithm \ref{algTree} and of $\tilde{p}_n$ from \eqref{eqWalkRVs}. In summary, we have argued $\E_{n,T_n-2} ( Y_{T_n-1} -  Y_{T_n-2} \tilde{p}_n ) = 0$. On the other hand, we note $0 \leq Y_{T_n-1} \leq Y_{T_n-2} \leq \cdots \leq Y_0 = 1$, where the first inequality holds since $Y_{T_n-1}$ is a sum of nonnegative terms and the second holds by \eqref{eqYlToYlmin1final} (using $d_{in}(\i') = d_{in}^A(\i') + d_{in}^B(\i') \geq d_{in}^A(\i')$), and where $Y_0 = 1$ by definition. Hence, we can use Lemma \ref{lemHoeffding} from Appendix \ref{appAuxiliary} to obtain
\begin{equation} \label{eqSpStructureFirstHoeffdingApp}
\E_{n,T_n-2} \exp ( \lambda u_{T_n,T_n-1} ( Y_{T_n-1} - \tilde{p}_n Y_{T_n-2} ) ) \leq e^{\lambda^2 u_{T_n,T_n-1}^2/8} .
\end{equation}
Substituting into \eqref{eqHoeffFirstRecursion} then yields
\begin{align} \label{eqSubFirstRecursion}
& \E_n \Big[ \prod_{l=0}^{T_n-1} \exp ( \lambda u_{T_n,l} Y_l ) \Big] \\
& \quad \leq \E_n \Big[ \prod_{l=0}^{T_n-3} \exp ( \lambda u_{T_n,l} Y_l ) \exp ( \lambda ( u_{T_n,T_n-2} + u_{T_n,T_n-1} \tilde{p}_n ) Y_{T_n-2} ) \Big] \exp \left( \frac{\lambda^2}{8} u_{T_n,T_n-1}^2 \right) .
\end{align}
We can then iteratively apply the preceding argument. Namely, we have
\begin{align}
& \E_n \Big[ \prod_{l=0}^{T_n-3} \exp ( \lambda u_{T_n,l} Y_l ) \exp ( \lambda ( u_{T_n,T_n-2} + u_{T_n,T_n-1} \tilde{p}_n ) Y_{T_n-2} ) \Big] \exp \left( \frac{\lambda^2}{8} u_{T_n,T_n-1}^2 \right)  \\
& \quad = \E_n \Big[ \prod_{l=0}^{T_n-4} \exp ( \lambda u_{T_n,l} Y_l ) \exp ( \lambda ( u_{T_n,T_n-3} + u_{T_n,T_n-2} \tilde{p}_n + u_{T_n,T_n-1} \tilde{p}_n^2 ) Y_{T_n-3} ) \\
& \quad\quad \times \E_{n,T_n-3} \Big[ \exp ( \lambda ( u_{T_n,T_n-2} + u_{T_n,T_n-1} \tilde{p}_n ) ( Y_{T_n-2} - \tilde{p}_n Y_{T_n-3} ) ) \Big] \Big] \exp \left( \frac{\lambda^2}{8} u_{T_n,T_n-1}^2 \right) \\
& \quad \leq \E_n \Big[ \prod_{l=0}^{T_n-4} \exp ( \lambda u_{T_n,l} Y_l ) \exp ( \lambda ( u_{T_n,T_n-3} + u_{T_n,T_n-2} \tilde{p}_n + u_{T_n,T_n-1} \tilde{p}_n^2 ) Y_{T_n-3} ) \Big] \label{eqVerify1} \\
& \quad\quad \times \exp \left( \frac{\lambda^2}{8} \left( ( u_{T_n,T_n-2} + u_{T_n,T_n-1} \tilde{p}_n )^2 + u_{T_n,T_n-1}^2 \right) \right) \label{eqVerify2} \\
& \quad \leq \cdots \leq \E_n \Big[ \exp ( \lambda u_{T_n,0} Y_0 ) \exp \left( \lambda \sum_{l=1}^{T_n-1} u_{T_n,l} \tilde{p}_n^{l-1} Y_1 \right) \Big] \exp \left( \frac{\lambda^2}{8} \sum_{l=2}^{T_n-1} \left( \sum_{l'=l}^{T_n-1} u_{T_n,l'} \tilde{p}_n^{l'-l} \right)^2 \right) \label{eqVerify3} .
\end{align}
(The precise form of the summations in \eqref{eqVerify3} can be verified by considering the case $T_n = 4$ in \eqref{eqVerify1} and \eqref{eqVerify2}.) Note that the final step of the iteration is slightly different; this is because the root node has degrees sampled from $f_n^*$ (the uniform distribution) instead of $f_n$ (the size-biased distribution) in Algorithm \ref{algTree}. Nevertheless, a similar argument holds: here we have $\E_{n,0} Y_1 = \tilde{p}_n^* Y_0$ and $Y_1 \in [0,1]\ a.s.$, so by an argument similar to that leading to \eqref{eqSpStructureFirstHoeffdingApp},
\begin{align}
& \E_n \Big[ \exp ( \lambda u_{T_n,0} Y_0 ) \exp \left( \lambda \sum_{l=1}^{T_n-1} u_{T_n,l} \tilde{p}_n^{l-1} Y_1 \right) \Big] \\
& \quad = \E_n \Big[ \exp \left( \lambda \left( u_{T_n,0} + \tilde{p}_n^* \sum_{l=1}^{T_n-1} u_{T_n,l} \tilde{p}_n^{l-1} \right) Y_0 \right) \E_{n,0} \Big[ \exp \left( \lambda \sum_{l=1}^{T_n-1} u_{T_n,l} \tilde{p}_n^{l-1} ( Y_1 - \tilde{p}_n^* Y_0 ) \right) \Big] \Big] \\
& \quad \leq \E_n \Big[ \exp \left( \lambda \left( u_{T_n,0} + \tilde{p}_n^* \sum_{l=1}^{T_n-1} u_{T_n,l} \tilde{p}_n^{l-1} \right) Y_0 \right) \Big] \exp \left( \frac{\lambda^2}{8} \left(  \sum_{l=1}^{T_n-1} u_{T_n,l} \tilde{p}_n^{l-1} \right)^2 \right) .
\end{align}
Combining the previous inequality with \eqref{eqSubFirstRecursion} and \eqref{eqVerify3} then yields
\begin{align}
& \E_n \Big[ \prod_{l=0}^{T_n-1} \exp ( \lambda u_{T_n,l} Y_l ) \Big] \\
& \quad \leq \E_n \Big[ \exp \left( \lambda \left( u_{T_n,0} + \tilde{p}_n^* \sum_{l=1}^{T_n-1} u_{T_n,l} \tilde{p}_n^{l-1} \right) Y_0 \right) \Big] \exp \left( \frac{\lambda^2}{8} \sum_{l=1}^{T_n-1} \left( \sum_{l'=l}^{T_n-1} u_{T_n,l'} \tilde{p}_n^{l'-l} \right)^2 \right)
\end{align}
Next, we recall $Y_0 = 1$ by definition. Additionally, we have
\begin{align}
u_{T_n,0} + \tilde{p}_n^* \sum_{l=1}^{T_n-1} u_{T_n,l} \tilde{p}_n^{l-1} & = \sum_{t=0}^{T_n-1} (1-\eta)^t + \tilde{p}_n^* \sum_{l=1}^{T_n-1} \sum_{t=l}^{T_n-1} {t \choose l} \eta^l (1-\eta)^{t-l} \tilde{p}_n^{l-1} \\
& = \sum_{t=0}^{T_n-1} \left( \sum_{t=1}^l {t \choose l} \eta^l (1-\eta)^{t-l} \tilde{p}_n^* \tilde{p}_n^{l-1} + (1-\eta)^t \right) = \frac{T_n}{\theta} \E_n [ \hat{\vartheta}_{T_n}(\phi) ] ,
\end{align}
where the first equality uses the definition of $u_{T_n,l}$, the second rearranges summations, and the third uses \eqref{eqTreeBeliefMeanWalkSub}. Combining the previous two equations therefore yields
\begin{equation}
\E_n \Big[ \prod_{l=0}^{T_n-1} \exp ( \lambda u_{T_n,l} Y_l ) \Big] \leq \exp \left( \lambda \frac{T_n}{\theta} \E_n [ \hat{\vartheta}_{T_n}(\phi) ] + \frac{\lambda^2}{8} \sum_{l=1}^{T_n-1} \left( \sum_{l'=l}^{T_n-1} u_{T_n,l'} \tilde{p}_n^{l'-l} \right)^2 \right) .
\end{equation}
Hence, recalling that $\lambda = \tilde{\lambda} \theta / T_n$, and substituting into \eqref{eqStructureInitChernoff}, we have shown
\begin{equation} \label{eqWhereItFails}
\P_n ( \E [ \hat{\vartheta}_{T_n}(\phi) | \T ] > \epsilon  ) \leq \exp \left( - \tilde{\lambda} \epsilon + \tilde{\lambda} \E_n [ \hat{\vartheta}_{T_n}(\phi) ] + \frac{\tilde{\lambda}^2 \theta^2}{8 T_n^2} \sum_{l=1}^{T_n-1} \left( \sum_{l'=l}^{T_n-1} u_{T_n,l'} \tilde{p}_n^{l'-l} \right)^2 \right) .
\end{equation}
Clearly, this inequality still holds if we multiply both sides by $1(\Omega_{n,2})$. Additionally, by \ref{assBranchDegSeq}, $\tilde{p}_n(\omega) < p_n + \delta_n$ for $\omega \in \Omega_{n,2}$, where $p_n \rightarrow p$ and $\delta_n \rightarrow 0$; since we additionally assume $p < 1$ in the statement of the lemma, we conclude $\tilde{p}_n(\omega) < p_n + \delta_n < 1$ for $\omega \in \Omega_{n,2}$ and $n$ sufficiently large. For such $n$, we can therefore write
\begin{align}
& \P_n ( \E [ \hat{\vartheta}_{T_n}(\phi) | \T ] > \epsilon  ) 1(\Omega_{n,2}) \\
& \quad \leq \exp \left( - \tilde{\lambda} \epsilon + \tilde{\lambda} \E_n [ \hat{\vartheta}_{T_n}(\phi) ] + \frac{\tilde{\lambda}^2 \theta^2}{8 T_n^2} \sum_{l=1}^{T_n-1} \left( \sum_{l'=l}^{T_n-1} u_{T_n,l'} (p_n+\delta_n)^{l'-l} \right)^2 \right) 1(\Omega_{n,2}) \\
& \quad \leq \exp \left( - \tilde{\lambda} \epsilon + \tilde{\lambda} \E_n [ \hat{\vartheta}_{T_n}(\phi) ] + \frac{\tilde{\lambda}^2 \theta^2}{8 T_n \eta^2 ( 1 -(p_n+\delta_n) )^2} \right) 1(\Omega_{n,2}) ,
\end{align}
where the second inequality uses Lemma \ref{lemUglySumInHoeffding} from Appendix \ref{appAuxiliary}. Additionally, since $p_n \rightarrow p < 1$, we can use the argument leading to \eqref{eqMeanConvCase4} to obtain $\E_n [ \hat{\vartheta}_{T_n}(\phi)](\omega)  < c / T_n$ (for some $c$ independent of $n$) whenever $\omega \in \Omega_{n,2}$ and $n$ is sufficiently large. For such $n$, we obtain
\begin{equation} \label{eqAfterUglySumGone}
\P_n ( \E [ \hat{\vartheta}_{T_n}(\phi) | \T ] > \epsilon  ) 1(\Omega_{n,2}) \leq \exp \left( - \tilde{\lambda} \epsilon + \frac{\tilde{\lambda} c}{T_n} + \frac{\tilde{\lambda}^2 \theta^2}{8 T_n \eta^2 ( 1 -(p_n+\delta_n) )^2} \right) 1(\Omega_{n,2}) ,
\end{equation}
Now since $\tilde{\lambda} > 0$ was arbitrary, we can choose $\tilde{\lambda} = 4 T_n \epsilon  \eta^2 ( 1 -(p_n+\delta_n) )^2 / \theta^2$. Upon substituting into the exponent in the previous equation, this exponent becomes
\begin{align}
& - \tilde{\lambda} \epsilon + \frac{\tilde{\lambda}^2}{8 T_n \eta^2 ( 1 -(p_n+\delta_n) )^2} + \frac{\tilde{\lambda} c}{T_n} = -2 T_n \epsilon^2 \eta^2 ( 1 -(p_n+\delta_n) )^2 / \theta^2 + 4  c \epsilon \eta^2 ( 1 -(p_n+\delta_n) )^2 / \theta^2 \\
& \quad = - 2 T_n \epsilon^2  \eta^2 \left( ( 1- p_n )^2 - 2(1-p_n) \delta_n + \delta_n^2 \right) / \theta^2 + 4 c \epsilon \eta^2 ( 1 -(p_n+\delta_n) )^2 / \theta^2 \\
& \quad = - 2 T_n \epsilon^2  \eta^2 ( 1- p_n )^2 / \theta^2 + 2 T_n \epsilon^2  \eta^2 \delta_n ( 2(1-p_n) - \delta_n ) / \theta^2 + 4 c \epsilon \eta^2 ( 1 -(p_n+\delta_n) )^2 / \theta^2 \\
& \quad \leq - 2 T_n \epsilon^2  \eta^2 ( 1- p_n )^2 / \theta^2 + 4 T_n \epsilon^2  \eta^2 \delta_n / \theta^2 + 4 c \epsilon \eta^2 / \theta^2  \label{eqBoundingExponent} ,
\end{align}
where the inequality simply uses $p_n , \delta_n > 0$ and $p_n+\delta_n \in (0,1)$ (for large $n$). Now note that since $p_n \rightarrow p$, we have (for example) $(1-p_n)^2 > (1-p)^2/2$ for $n$ sufficiently large. Additionally, since $\delta_n = o ( 1 / T_n )$, we have (for example) $T_n \delta_n < c / \epsilon$ for $n$ sufficiently large. Combining these observations, we can upper bound \eqref{eqBoundingExponent} as
\begin{equation}
- 2 \epsilon^2 T_n \eta^2 ( 1- p_n )^2 / \theta^2 + 4 \eta^2 \epsilon^2 T_n \delta_n / \theta^2 + 4 \eta^2 \epsilon c / \theta^2 \leq - (  \epsilon \eta (1-p) )^2 T_n / \theta^2 + 8 c \epsilon \eta^2 / \theta^2 .
\end{equation}
Hence, substituting into \eqref{eqAfterUglySumGone} gives
\begin{equation} \label{eqFinalStructuralOnDegSeq}
\P_n ( \E [ \hat{\vartheta}_{T_n}(\phi) | \T ] > \epsilon  ) 1(\Omega_{n,2}) \leq \exp ( 8 c \epsilon \eta^2 / \theta^2 ) \exp( - ( \epsilon \eta (1-p) / \theta )^2 T_n ) 1 ( \Omega_{n,2}) .
\end{equation}
Finally, we write
\begin{align}
\P ( \E [ \hat{\vartheta}_{T_n}(\phi) | \T ] > \epsilon  ) & = \E [ \P_n ( \E [ \hat{\vartheta}_{T_n}(\phi) | \T ] > \epsilon ) 1(\Omega_{n,2}) + \P_n ( \E [ \hat{\vartheta}_{T_n}(\phi) | \T ] > \epsilon ) 1(\Omega_{n,2}^C)  ] \\
& \leq O \left( e^{ - ( \epsilon \eta (1-p) / \theta )^2 T_n }  \right) + \P ( \Omega_{n,2}^C ) = O \left( e^{ - ( \epsilon \eta (1-p) / \theta )^2 \mu \log n } + n^{-\kappa} \right) ,
\end{align}
where the first equality is the law of total expectation, the inequality uses \eqref{eqFinalStructuralOnDegSeq} and upper bounds a probability by 1, and the second equality uses the assumptions in the statement of the lemma.

\subsubsection{Where the proof fails in the general case} \label{appWhereFailWhenPnTo1}

As shown in Appendix \ref{appExtendSecThmToOthers}, extending Theorem \ref{thmSec} to the case $p_n \rightarrow 1$ amounts to showing that for some $\gamma' > 0$,
\begin{equation} \label{eqSuffCond3}
\P( | \E [ \hat{\vartheta}_{T_n}(\phi) | \T ] - L(p_n) |  > \epsilon) = O \left( n^{-\gamma'} \right) ,
\end{equation}
where $L(p_n)$ is the appropriate limit from \eqref{eqLofPn}. Here we show (roughly) why the approach from the preceding proof fails to establish \eqref{eqSuffCond3} in the case $p_n \rightarrow 1$. To begin, we note we first used the assumption $p_n \rightarrow p < 1$ following \eqref{eqWhereItFails}. Hence, in the case $p_n \rightarrow 1$, we can still follow the approach leading to \eqref{eqWhereItFails} to obtain the (one-sided) bound
\begin{align}
& \P( \E [ \hat{\vartheta}_{T_n}(\phi) | \T ] - L(p_n)  > \epsilon) 1(\Omega_{n,2}) \leq \exp ( - \tilde{\lambda} ( \epsilon + L(p_n) ) \E \exp ( \tilde{\lambda} \E [ \hat{\vartheta}_{T_n}(\phi) | \T ] ) 1(\Omega_{n,2})  \\
& \quad \leq \exp \left( - \tilde{\lambda} \epsilon + \tilde{\lambda} \left( - L(p_n)+ \E_n [ \hat{\vartheta}_{T_n}(\phi) ] \right) + \frac{\tilde{\lambda}^2 \theta^2}{8 T_n^2} \sum_{l=1}^{T_n-1} \left( \sum_{l'=l}^{T_n-1} u_{T_n,l'} \tilde{p}_n^{l'-l} \right)^2 \right) 1(\Omega_{n,2}) \\
& \quad \approx \exp \left( - \tilde{\lambda} \epsilon + \frac{\tilde{\lambda}^2 \theta^2}{8 T_n^2} \sum_{l=1}^{T_n-1} \left( \sum_{l'=l}^{T_n-1} u_{T_n,l'} \tilde{p}_n^{l'-l} \right)^2 \right) 1(\Omega_{n,2}) \label{eqFirstBoundForFail} ,
\end{align}
where the approximate equality uses $\E_n [ \hat{\vartheta}_{T_n}(\phi) ] \approx L(p_n)$ on $\Omega_{n,2}$ by Lemma \ref{lemFirstMomBelief}. We next note
\begin{align}
& \sum_{l=1}^{T_n-1} \left( \sum_{l'=l}^{T_n-1} u_{T_n,l'} \tilde{p}_n^{l'-l} \right)^2 \geq \left( \sum_{l'=1}^{T_n-1} u_{T_n,l'} \tilde{p}_n^{l'-1} \right)^2  = \left( \sum_{l'=1}^{T_n-1} \left( \sum_{t=l'}^{T_n-1} {t \choose l'} \eta^{l'} (1-\eta)^{t-l'} \right) \tilde{p}_n^{l'-1} \right)^2 \\
& \quad = ( \tilde{p}_n^* )^{-2} \left( \sum_{t=1}^{T_n-1} \sum_{l'=1}^t  {t \choose l'} \eta^{l'} (1-\eta)^{t-l'} \tilde{p}_n^*  \tilde{p}_n^{l'-1} \right)^2 = ( \tilde{p}_n^* )^{-2} \left( \frac{T_n}{\theta} \E_n [ \hat{\vartheta}_{T_n}(\phi) ] - \frac{1-(1-\eta)^{T_n}}{\eta} \right)^2 ,
\end{align}
where the inequality discards nonnegative terms, the first equality is by definition of $u_{T_n,l'}$, the second rearranges summations and multiplies/divides by $(\tilde{p}_n^* )^2$,  and the third uses \eqref{eqTreeBeliefMeanWalkSub}. Hence, we have shown \eqref{eqFirstBoundForFail} is (roughly) lower bounded by
\begin{equation}
\exp \left( - \tilde{\lambda} \epsilon + \frac{\tilde{\lambda}^2}{8} \left( \E_n [ \hat{\vartheta}_{T_n}(\phi) ] - \frac{\theta (  1-(1-\eta)^{T_n} )}{T_n \eta} \right)^2 \right) 1(\Omega_{n,2}) ,
\end{equation}
where we have also used $\tilde{p}_n^* \approx 1$ for large $n$ on $\Omega_{n,2}$ when $p_n \rightarrow 1$ by \ref{assBranchDegSeq}. Now we consider three cases for the exponent in the previous expression:
\begin{itemize}
\item $T_n(1-p_n) \rightarrow 0$: Here Lemma \ref{lemFirstMomBelief} states $\E_n [ \hat{\vartheta}_{T_n}(\phi) ]  \approx \theta$ for large $n$ on $\Omega_{n,2}$; for such $n$, the exponent is roughly
\begin{equation}
- \tilde{\lambda} \epsilon + \frac{\tilde{\lambda}^2 \theta^2}{8} \left( 1 - \frac{\theta (  1-(1-\eta)^{T_n} )}{T_n \eta} \right)^2 \geq - \tilde{\lambda} \epsilon + \frac{\tilde{\lambda}^2 \theta^2}{16} = - \frac{4 \epsilon^2}{\theta^2} ,
\end{equation}
where the inequality holds for large $n$ (so that $\theta (  1-(1-\eta)^{T_n} ) / ( T_n \eta ) < 1 - 1 / \sqrt{2}$, which holds since $T_n \rightarrow \infty$) and the equality holds by choosing the minimizing $\tilde{\lambda}$ (namely, $\tilde{\lambda} = 8 \epsilon / \theta^2$). Since this lower bound is constant in $n$, \eqref{eqFirstBoundForFail} does not decay as $n$ grows.
\item $T_n(1-p_n) \rightarrow c \in (0,\infty)$: Here Lemma \ref{lemFirstMomBelief} states $\E_n [ \hat{\vartheta}_{T_n}(\phi) ]  \approx \theta (1-e^{-c\eta})/(c \eta)$ for large $n$ on $\Omega_{n,2}$. An argument similar to the previous case shows \eqref{eqFirstBoundForFail} does not decay as $n$ grows.
\item $T_n(1-p_n) \rightarrow \infty$ with $p_n \rightarrow 1$: Here we consider an example to show \eqref{eqFirstBoundForFail} does not decay sufficiently quickly for the general case. In particular, we assume $T_n = \bar{c} \log n$ for some constant $\bar{c}$ that satisfies the theorem assumptions and we set $p_n = 1 - ( \log n )^{-0.9}$. Then since $\delta_n =o ( (\log n)^{-1} )$ per \ref{assBranchDegSeq}, we have e.g.\ $1-p_n + \delta_n < (1-p_n)/2$ for large $n$. Hence,
\begin{align}
\E_n [ \hat{\vartheta}_{T_n}(\phi) ] & \gtrsim  \frac{\theta(1-(1-\eta(1-p_n+\delta_n))^{T_n})}{\eta T_n (1-p_n+\delta_n)} \\
& > \frac{\theta ( 1 - ( 1 - (\eta/2) ( \log n )^{-0.9} )^{\bar{c} \log n} )}{ (\bar{c} \eta/2) ( \log n )^{0.1} } > \frac{\tilde{c}}{ ( \log n )^{0.1} } ,
\end{align}
where the first inequality holds by \eqref{eqMeanBeforeSpeciazling} in Appendix \ref{appProofFirstMomBelief} (where $\gamma_1,\gamma_2$ are arbitrarily small, hence the approximate inequality), the second holds for our chosen $T_n, p_n, \delta_n$, and the third holds for some constant $\tilde{c}$ and for large $n$. Hence, the exponent is (roughly) lower bounded by
\begin{equation}
- \tilde{\lambda} \epsilon + \frac{\tilde{\lambda}^2}{8} \frac{\tilde{c}^2}{ ( \log n )^{0.2} } = - \frac{2 \epsilon^2}{\tilde{c}^2} ( \log n )^{0.2} , 
\end{equation}
where the equality holds for the minimizer $\tilde{\lambda} = ( 4 \epsilon / \tilde{c}^2 ) ( \log n )^{0.2}$. From here it follows that \eqref{eqFirstBoundForFail} cannot be $O(n^{-\gamma'})$: if it is, we have for all large $n$ and for some constant $\tilde{C}$,
\begin{equation}
\exp \left( - \frac{2 \epsilon^2}{\tilde{c}^2} ( \log n )^{0.2} \right) < \tilde{C} n^{-\gamma'} \Rightarrow \exp \left( - \frac{2 \epsilon^2}{\tilde{c}^2} ( \log n )^{0.2} + \gamma' \log n  \right) < \tilde{C} .
\end{equation}
The final inequality is a contradiction, since $- (2 \epsilon^2 / \tilde{c}^2) ( \log n )^{0.2} + \gamma' \log n \rightarrow \infty$ as $n \rightarrow \infty$.
\end{itemize}

\subsection{Auxiliary results} \label{appAuxiliary}

In this appendix, we collect several auxiliary results used in other proofs. (These results are either cited from other sources, or their proofs are computationally heavy but elementary, so we collect them here to avoid cluttering other parts of our analysis.) 

\begin{lemma} \label{lemTnPnDnasymp}
For $T_n \rightarrow \infty$, $p_n \rightarrow 1$, and $\delta_n \rightarrow 0$ s.t.\ $\delta_n = o(1/T_n)$, we have
\begin{align}
& \frac{ 1 - (1-\eta ( 1 - p_n - \delta_n ) )^{T_n} }{ \eta T_n  (1-p_n - \delta_n) }  \xrightarrow[n \rightarrow \infty]{} \begin{cases} 1 , & T_n(1-p_n) \xrightarrow[n \rightarrow \infty]{} 0 \\ ( 1 - e^{-c \eta} ) / (c \eta) , & T_n(1-p_n) \xrightarrow[n \rightarrow \infty]{}  c \in (0,\infty) \\ 0 , & T_n(1-p_n) \xrightarrow[n \rightarrow \infty]{}  \infty \end{cases} , \label{eqTnPnDnasympMin} \\
& \frac{ 1 - (1-\eta ( 1 - p_n + \delta_n ))^{T_n} }{ \eta T_n  (1-p_n + \delta_n) } \xrightarrow[n \rightarrow \infty]{} \begin{cases} 1 , & T_n(1-p_n) \xrightarrow[n \rightarrow \infty]{} 0 \\ ( 1 - e^{-c \eta} ) / (c \eta) , & T_n(1-p_n) \xrightarrow[n \rightarrow \infty]{}  c \in (0,\infty) \\ 0 , & T_n(1-p_n) \xrightarrow[n \rightarrow \infty]{}  \infty \end{cases}  \label{eqTnPnDnasympPlus} .
\end{align}
\end{lemma}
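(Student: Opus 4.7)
The plan is to unify both \eqref{eqTnPnDnasympMin} and \eqref{eqTnPnDnasympPlus} under a single analysis by setting $a_n^{\pm} = 1 - p_n \pm \delta_n$ and $x_n^{\pm} = \eta T_n a_n^{\pm}$. Both expressions then take the common form $\psi(x_n^{\pm}, T_n)$, where
\begin{equation}
\psi(x,T) = \frac{1 - (1-x/T)^T}{x} .
\end{equation}
The goal reduces to identifying $\lim_n x_n^{\pm}$ under each of the three hypotheses on $T_n(1-p_n)$, and then evaluating $\lim_n \psi(x_n^{\pm},T_n)$.

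First I would observe that since $\delta_n = o(1/T_n)$, we have $T_n \delta_n \to 0$. Consequently, writing $y_n = T_n(1-p_n)$, we obtain $\eta y_n - \eta T_n \delta_n \leq x_n^{\pm} \leq \eta y_n + \eta T_n \delta_n$, so $x_n^{\pm}$ has the same limit (in $[0,\infty]$) as $\eta y_n$. Thus the three cases on $T_n(1-p_n)$ translate directly into $x_n^{\pm} \to 0$, $x_n^{\pm} \to c\eta \in (0,\infty)$, and $x_n^{\pm} \to \infty$. I would also note that since $p_n \to 1$, $\delta_n \to 0$, and $T_n \to \infty$, we have $x_n^{\pm}/T_n = \eta a_n^{\pm} \to 0$, so $(1-x_n^{\pm}/T_n)^{T_n}$ is well-defined and lies in $(0,1)$ for all sufficiently large $n$.

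Next I would analyze $\psi(x_n^{\pm},T_n)$ case by case using the expansion $\log(1-u) = -u + O(u^2)$ as $u \to 0$. In every case $u_n := x_n^{\pm}/T_n \to 0$, so
\begin{equation}
T_n \log(1 - x_n^{\pm}/T_n) = -x_n^{\pm} + O\!\left( (x_n^{\pm})^2 / T_n \right).
\end{equation}
\textbf{Case $x_n^{\pm} \to 0$:} The error term is $o(1)$, so $(1-x_n^{\pm}/T_n)^{T_n} = e^{-x_n^{\pm}(1+o(1))} = 1 - x_n^{\pm}(1+o(1))$, yielding $\psi(x_n^{\pm},T_n) \to 1$. \textbf{Case $x_n^{\pm} \to c\eta$:} The error term $(x_n^{\pm})^2/T_n = O(1/T_n) \to 0$, so $T_n\log(1-x_n^{\pm}/T_n) \to -c\eta$, giving $\psi(x_n^{\pm},T_n) \to (1-e^{-c\eta})/(c\eta)$. \textbf{Case $x_n^{\pm} \to \infty$:} Here I would simply use $0 < 1 - (1-x_n^{\pm}/T_n)^{T_n} < 1$ to conclude $0 < \psi(x_n^{\pm},T_n) < 1/x_n^{\pm} \to 0$.

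There is no real obstacle here; the argument is a direct asymptotic calculation. The only subtlety is ensuring that the perturbation $\delta_n$ is negligible in all three regimes, which is precisely what the hypothesis $\delta_n = o(1/T_n)$ secures, uniformly across the cases. In the intermediate regime $T_n(1-p_n) \to c \in (0,\infty)$ one must also verify that the $O((x_n^{\pm})^2/T_n)$ remainder in the Taylor expansion vanishes, which follows since $x_n^{\pm}$ is bounded while $T_n \to \infty$.
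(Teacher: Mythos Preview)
Your proposal is correct and follows the same overall strategy as the paper: a case split on the limit of $T_n(1-p_n)$, after absorbing the perturbation $\delta_n$ via $T_n\delta_n\to 0$. The packaging is somewhat different, however. The paper handles the three cases with distinct tools: for $T_n(1-p_n)\to\infty$ it uses the same sandwich $0<\psi<1/x_n$ that you use; for $T_n(1-p_n)\to c$ it runs an explicit $\epsilon$-$N$ argument to show $(1-\eta(1-p_n\pm\delta_n))^{T_n}\to e^{-c\eta}$; and for $T_n(1-p_n)\to 0$ it uses the geometric-series identity $\frac{1-(1-u)^{T_n}}{T_n u}=\frac{1}{T_n}\sum_{t=0}^{T_n-1}(1-u)^t\le 1$ for the upper bound and a binomial expansion for the lower bound. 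Your route via the single expansion $T_n\log(1-x_n^{\pm}/T_n)=-x_n^{\pm}+O((x_n^{\pm})^2/T_n)$ dispatches the first two cases uniformly and is shorter; it also treats \eqref{eqTnPnDnasympMin} and \eqref{eqTnPnDnasympPlus} simultaneously through $x_n^{\pm}$, whereas the paper proves one and declares the other analogous. Both are elementary; yours is the more economical write-up.
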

\begin{proof}
We consider the three cases of \eqref{eqTnPnDnasympMin} in turn; the proof of \eqref{eqTnPnDnasympPlus} follows the same approach.

First, suppose $\lim_{n \rightarrow \infty} T_n(1-p_n) = \infty$. Then since $T_n \delta_n \rightarrow 0$ and $T_n(1-p_n) \rightarrow \infty$, we have $T_n \delta_n < 1 < T_n(1-p_n)$ for sufficiently large $n$, which implies $(1-p_n-\delta_n) > 0$ for such $n$. Clearly, we also have $(1-p_n-\delta_n) < 1$ for all $n$. Taken together, it follows that $1 - (1-\eta ( 1 - p_n - \delta_n ) )^{T_n} \in (0,1)$ for $n$ large. For such $n$, we can then write
\begin{equation} \label{eqGeomTailLimitAdoptCase}
0 < \frac{ 1 - (1-\eta ( 1 - p_n - \delta_n ) )^{T_n} }{ \eta T_n  (1-p_n - \delta_n) } < \frac{ 1}{ \eta T_n  (1-p_n - \delta_n) } ,
\end{equation}
where we used $(1-p_n-\delta_n) > 0$ in the denominator. Now since $T_n(1-p_n) \rightarrow \infty$ and $T_n \delta_n \rightarrow 0$, $T_n  (1-p_n - \delta_n) \rightarrow \infty$, so taking $n \rightarrow \infty$ in the above inequality gives the result. 

Next, suppose $\lim_{n \rightarrow \infty} T_n(1-p_n) = c \in (0,\infty)$. Since $\eta T_n  (1-p_n - \delta_n) \rightarrow \eta c$ by $T_n(1-p_n) \rightarrow c$ and $T_n \delta_n \rightarrow 0$, it suffices to show $(1-\eta ( 1 - p_n - \delta_n ) )^{T_n} \rightarrow e^{-\eta c}$ as $n \rightarrow \infty$. First, since $T_n(1-p_n) \rightarrow c$, $\forall\ \epsilon_1 > 0\ \exists\ N_1$ s.t.\ $c-\epsilon_1 < T_n(1-p_n) <c+\epsilon_1\ \forall\ n \geq N_1$. Further, since $T_n \delta_n \rightarrow 0$, $\forall\ \epsilon_2 > 0\ \exists\ N_2$ s.t.\ $-\epsilon_2 < T_n \delta_n <\epsilon_2\ \forall\ n \geq N_2$. Hence, $\forall\ n \geq \max \{N_1,N_2\}$,
\begin{equation}
\left( 1 - \frac{\eta (c+\epsilon_1+\epsilon_2)}{T_n} \right)^{T_n} < (1-\eta ( 1 - p_n - \delta_n ) )^{T_n} < \left( 1 - \frac{\eta (c-\epsilon_1-\epsilon_2)}{T_n} \right)^{T_n} .
\end{equation}
Next, we note
\begin{equation}
\lim_{n \rightarrow \infty} \left( 1 - \frac{\eta (c+\epsilon_1+\epsilon_2)}{T_n} \right)^{T_n} = e^{-\eta(c+\epsilon_1+\epsilon_2)} ,  \lim_{n \rightarrow \infty} \left( 1 - \frac{\eta (c-\epsilon_1-\epsilon_2)}{T_n} \right)^{T_n} = e^{-\eta(c-\epsilon_1-\epsilon_2)} .
\end{equation}
Hence, $\forall\ \epsilon_3 > 0\ \exists\ N_3$ s.t.\ $\forall\ n \geq N_3$, 
\begin{equation}
e^{-\eta(c+\epsilon_1+\epsilon_2)} - \epsilon_3 < \left( 1 - \frac{\eta (c+\epsilon_1+\epsilon_2)}{T_n} \right)^{T_n}   ,  \left( 1 - \frac{\eta (c-\epsilon_1)}{T_n} \right)^{T_n} < e^{-\eta(c-\epsilon_1-\epsilon_2)} + \epsilon_3 .
\end{equation}
Combining these arguments, we obtain $\forall\ n \geq \max \{N_1,N_2,N_3\}$
\begin{equation}
e^{-\eta(c+\epsilon_1+\epsilon_2)} - \epsilon_3 < (1-\eta ( 1 - p_n - \delta_n ) )^{T_n} < e^{-\eta(c-\epsilon_1-\epsilon_2)} + \epsilon_3 .
\end{equation}
Since both bounds converge to $e^{-\eta c}$ as $\epsilon_1,\epsilon_2, \epsilon_3 \rightarrow 0$, $(1-\eta ( 1 - p_n - \delta_n  ) )^{T_n} \rightarrow e^{-\eta c}$ follows. 

Finally, suppose $\lim_{n \rightarrow \infty} T_n(1-p_n) = 0$. First, we observe
\begin{equation} \label{eqTnPnDnCase3_1}
\frac{ 1 - (1-\eta ( 1 - p_n - \delta_n ) )^{T_n} }{ \eta T_n  (1-p_n - \delta_n) } = \frac{1}{T_n} \sum_{t=0}^{T_n-1} (1-\eta ( 1 - p_n - \delta_n ) )^t \leq 1,
\end{equation}
where the inequality holds for $n$ s.t.\ $( 1 - p_n - \delta_n ) > 0$ (which indeed occurs for large $n$; see proof of $T_n(1-p_n) \rightarrow \infty$ case), since then the sum is over $T_n$ terms, each upper bounded by 1. On the other hand, we can use the binomial theorem to write
\begin{align} \label{eqTnPnDnCase3_2}
& \frac{ 1 - (1-\eta ( 1 - p_n - \delta_n ) )^{T_n} }{ \eta T_n  (1-p_n - \delta_n) } = \frac{ 1 - \sum_{t=0}^{T_n} {T_n \choose t} (-\eta(1-p_n-\delta_n))^t }{ \eta T_n  (1-p_n - \delta_n) } \\
& \quad = 1 -  \sum_{t=2}^{T_n} \frac{(T_n-1) \cdots (T_n-t+1) (-1)^t (\eta(1-p_n-\delta_n))^{t-1} }{t!}   .
\end{align}
Next, we observe (assuming $(1-p_n-\delta_n) > 0)$ as above)
\begin{align} \label{eqTnPnDnCase3_3}
& \sum_{t=2}^{T_n} \frac{(T_n-1) \cdots (T_n-t+1) (-1)^t (\eta(1-p_n-\delta_n))^{t-1} }{t!} \\
& \quad < \sum_{t=2}^{T_n} \frac{(T_n-1) \cdots (T_n-t+1)  (\eta(1-p_n-\delta_n))^{t-1} }{t!}  < \sum_{t=2}^{T_n}  \frac{(T_n (1-p_n-\delta_n))^{t-1} }{(t-2)!} \\
& \quad = T_n (1-p_n-\delta_n) \sum_{t=0}^{T_n-2} \frac{(T_n (1-p_n-\delta_n))^{t} }{t!}  <  T_n (1-p_n-\delta_n) e^{T_n (1-p_n-\delta_n)} , 
\end{align}
where the first inequality replaces negative terms with positive ones; the second inequality uses $\eta < 1$, $(t-2)! < t!$, and $(T_n-j) < T_n$ for $j > 0$; and the third inequality upper bounds the summation by replacing its upper limit with infinity. Hence, \eqref{eqTnPnDnCase3_1}, \eqref{eqTnPnDnCase3_2}, and \eqref{eqTnPnDnCase3_3} yield
\begin{align}
& 1 \geq \frac{ 1 - (1-\eta ( 1 - p_n - \delta_n ) )^{T_n} }{ \eta T_n  (1-p_n - \delta_n) } > 1 - T_n (1-p_n-\delta_n) e^{T_n (1-p_n-\delta_n)} \\
&  \Rightarrow 1 \geq \lim_{n \rightarrow \infty} \frac{ 1 - (1-\eta ( 1 - p_n - \delta_n ) )^{T_n} }{ \eta T_n  (1-p_n - \delta_n) } \geq 1 - \lim_{n \rightarrow \infty} T_n (1-p_n-\delta_n) e^{T_n (1-p_n-\delta_n)} = 1 ,
\end{align}
where the final equality holds since $T_n(1-p_n), T_n \delta_n \rightarrow 0$ by assumption.
\end{proof}

\begin{lemma} \label{lemUglySumInHoeffding}
Let $u_{T_n,l} = \sum_{t=l}^{T_n-1} { t \choose l } \eta^l (1-\eta)^{t-l}$. Then for any $x \in (0,1)$,
\begin{equation}
\sum_{l=1}^{T_n-1} \left( \sum_{l'=l}^{T_n-1} u_{T_n,l'} x^{l'-l} \right)^2 \leq  \frac{T_n}{\eta^2(1-x)^2}  .
\end{equation}
\end{lemma}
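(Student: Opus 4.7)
The plan is to obtain a uniform upper bound on $u_{T_n,l'}$, then handle the inner geometric sum, and finally sum the squares trivially. The key preliminary observation is that although $u_{T_n,l'}$ is a sum of up to $T_n$ terms, each term is a binomial probability, and summing over $t$ rather than $l'$ yields a cancellation that makes $u_{T_n,l'}$ bounded by $1/\eta$ independently of $T_n$.

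More concretely, first I would write
\begin{equation}
u_{T_n,l'} \;=\; \eta^{l'} \sum_{t=l'}^{T_n-1} \binom{t}{l'} (1-\eta)^{t-l'} \;\leq\; \eta^{l'} \sum_{t=l'}^{\infty} \binom{t}{l'} (1-\eta)^{t-l'} \;=\; \eta^{l'} \cdot \frac{1}{\eta^{l'+1}} \;=\; \frac{1}{\eta},
\end{equation}
where the final equality uses the standard negative-binomial identity $\sum_{t=l'}^{\infty} \binom{t}{l'} y^{t-l'} = (1-y)^{-(l'+1)}$ applied with $y=1-\eta \in (0,1)$.

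Second, plugging this uniform bound into the inner sum and summing the geometric series (which converges since $x \in (0,1)$) gives
\begin{equation}
\sum_{l'=l}^{T_n-1} u_{T_n,l'} x^{l'-l} \;\leq\; \frac{1}{\eta} \sum_{l'=l}^{T_n-1} x^{l'-l} \;\leq\; \frac{1}{\eta} \sum_{k=0}^{\infty} x^{k} \;=\; \frac{1}{\eta (1-x)}.
\end{equation}

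Third, squaring this bound (which is independent of $l$) and summing over $l = 1,\ldots,T_n-1$ yields
\begin{equation}
\sum_{l=1}^{T_n-1} \left( \sum_{l'=l}^{T_n-1} u_{T_n,l'} x^{l'-l} \right)^2 \;\leq\; (T_n-1) \cdot \frac{1}{\eta^2 (1-x)^2} \;\leq\; \frac{T_n}{\eta^2 (1-x)^2},
\end{equation}
which is the desired inequality. There is no real obstacle here; the only non-obvious step is the uniform bound $u_{T_n,l'} \leq 1/\eta$, and once that is in hand the rest is a one-line geometric-series calculation.
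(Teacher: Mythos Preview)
Your proof is correct and in fact more elementary than the paper's. Both arguments end up bounding the inner sum $\sum_{l'=l}^{T_n-1} u_{T_n,l'} x^{l'-l}$ by $1/(\eta(1-x))$ uniformly in $l$, then square and sum over $l$ to pick up the factor $T_n$. The difference lies in how that inner bound is obtained. The paper first establishes that $u_{T_n,l'}$ is monotone decreasing in $l'$ (via an inductive combinatorial identity), uses this to show that $w_l := \sum_{l'=l}^{T_n-1} u_{T_n,l'} x^{l'-l}$ is decreasing in $l$, bounds $\sum_l w_l^2 \le T_n w_0^2$, and finally computes $w_0$ exactly by swapping the order of summation and applying the binomial theorem to get $w_0 = \sum_{t=0}^{T_n-1}(1-\eta(1-x))^t \le 1/(\eta(1-x))$. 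You bypass all the monotonicity machinery by directly bounding each $u_{T_n,l'} \le 1/\eta$ via the negative-binomial identity $\sum_{t \ge l'} \binom{t}{l'}(1-\eta)^{t-l'} = \eta^{-(l'+1)}$, after which the inner geometric sum and the outer count are immediate. Your route is shorter and requires no auxiliary lemma on the monotonicity of $u_{T_n,l'}$; the paper's route has the minor advantage of yielding the exact value of $w_0$ (rather than a term-by-term bound), but that precision is not used anywhere.
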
 
\begin{proof}
For $l \in \N_0$, define $w_l = \sum_{l'=l}^{T_n-1} u_{T_n,l'} x^{l'-l}$. Then
\begin{equation}
w_l = u_{T_n,l} + x  \sum_{l'=l+1}^{T_n-1} u_{T_n,l'} x^{l'-(l+1)} = u_{T_n,l} + x w_{l+1} .
\end{equation}
Assuming temporarily that $u_{T_n,l'} \geq u_{T_n,l''}$ whenever $l' \leq l''$ (which we will return to prove),
\begin{equation}
w_{l+1} \leq u_{T_n,l} \sum_{l'=l+1}^{T_n-1} x^{l'-(l+1)} = u_{T_n,l} \sum_{l'=0}^{T_n-l-2}  x^{l'} \leq u_{T_n,l} \sum_{l'=0}^{\infty}  x^{l'} = \frac{u_{T_n,l}}{1-x} .
\end{equation}
Hence, using the previous two equations, we obtain $w_{l+1} - w_l = (1-x) w_{l+1} - u_{T_n-l} \leq 0$, i.e.\ the sequence $\{ w_l \}$ decreases in $l$. It is also clearly nonnegative. Therefore,
\begin{equation}
\sum_{l=1}^{T_n-1} \left( \sum_{l'=l}^{T_n-1} u_{T_n,l'} x^{l'-l} \right)^2 = \sum_{l=1}^{T_n-1} w_l^2 \leq T_n w_0^2 .
\end{equation}
To further bound the right hand side, we note
\begin{align}
w_0 & =  \sum_{l'=0}^{T_n-1} \left( \sum_{t=l'}^{T_n-1} {t \choose l'} \eta^{l'} (1-\eta)^{t-l'} \right) x^{l'} = \sum_{t=0}^{T_n-1} \sum_{l'=0}^t {t \choose l'} ( \eta x )^{l'} (1-\eta)^{t-l'} \\
& = \sum_{t=0}^{T_n-1} ( \eta x + (1-\eta) )^t = \sum_{t=0}^{T_n-1} ( 1 - \eta (1-x) )^t \leq \sum_{t=0}^{\infty} ( 1 - \eta (1-x) )^t = \frac{1}{\eta(1-x)} ,
\end{align}
where the first equality uses the definition of $u_{T_n,l'}$, the second rearranges summations, the third uses the binomial theorem, the fourth is immediate, the inequality is immediate, and the final equality computes a geometric series. Combining the previous two inequalities proves the lemma.

We return to prove $u_{T_n,l'} \geq u_{T_n,l''}$ whenever $l' \leq l''$. For this, we first claim
\begin{equation} \label{eqStrongerUmono}
\sum_{t=l}^{t^*} {t \choose l} \eta^l (1-\eta)^{t-l} - \sum_{t=l+1}^{t^*+1} {t \choose l+1} \eta^{l+1} (1-\eta)^{t-(l+1)} = {t^*+1 \choose l+1} \eta^l (1-\eta)^{t^*+1-l}\ \forall\ t^* \in \N, l \in \{1,\ldots,t^*\} .
\end{equation}
We prove \eqref{eqStrongerUmono} by induction on $t^*$. First, when $t^* = 1$, the only case to prove is $l=1$; when $t^* = l = 1$, it is immediate that both sides of \eqref{eqStrongerUmono} equal $\eta(1-\eta)$. Next, assume \eqref{eqStrongerUmono} holds for $t^*-1$. If $l = t^*$, both sides of \eqref{eqStrongerUmono} equal $\eta^{t^*}(1-\eta)$. If $l \in \{1,\ldots,t^*-1\}$, we write
\begin{align}
& \sum_{t=l}^{t^*} {t \choose l} \eta^l (1-\eta)^{t-l} - \sum_{t=l+1}^{t^*+1} {t \choose l+1} \eta^{l+1} (1-\eta)^{t-(l+1)}  \\
& \quad = \left( \sum_{t=l}^{t^*-1} {t \choose l} \eta^l (1-\eta)^{t-l} - \sum_{t=l+1}^{t^*} {t \choose l+1} \eta^{l+1} (1-\eta)^{t-(l+1)}  \right) \\
& \quad\quad +  {t^* \choose l} \eta^l (1-\eta)^{t^*-l} - {t^*+1 \choose l+1} \eta^{l+1} (1-\eta)^{t^*+1-(l+1)}  \\
& \quad = \left( {t^* \choose l+1} \eta^l (1-\eta)^{t^*-l} \right) + {t^* \choose l} \eta^l (1-\eta)^{t^*-l} - {t^*+1 \choose l+1} \eta^{l+1} (1-\eta)^{t^*-l} \\
& \quad = \eta^l (1-\eta)^{t^*-l} \left( {t^* \choose l+1} + {t^* \choose l} - \eta {t^*+1 \choose l+1} \right) \\
& \quad = \eta^l (1-\eta)^{t^*-l} \left( {t^*+1 \choose l+1} - \eta {t^*+1 \choose l+1} \right) = {t^*+1 \choose l+1} \eta^l ( 1- \eta )^{t^*+1-l} ,
\end{align}
where the first equality simply writes the final summands separately, the second uses the inductive hypothesis on the term in parentheses, the third is immediate, the fourth uses Pascal's rule ($[t^*+1]$ has ${t^*+1 \choose l+1}$ subsets of cardinality $l+1$; ${t^* \choose l}$ that contain 1 and ${t^* \choose l+1}$ that do not contain 1), and the fifth is immediate. This establishes \eqref{eqStrongerUmono}. We then write
\begin{align}
& u_{T_n,l'} - u_{T_n,l'+1} = \sum_{t=l'}^{T_n-1} {t \choose l'} \eta^{l'} (1-\eta)^{t-l'} - \sum_{t=l'+1}^{T_n-1} {t \choose l'+1} \eta^{l'+1} (1-\eta)^{t-(l'+1)} \\
& \quad = \sum_{t=l'}^{T_n-1} {t \choose l'} \eta^{l'} (1-\eta)^{t-l'} - \sum_{t=l'+1}^{T_n} {t \choose l'+1} \eta^{l'+1} (1-\eta)^{t-(l'+1)}+ {T_n \choose l'+1} \eta^{l'+1} (1-\eta)^{T_n-(l'+1)} \\
& \quad = { T_n \choose l'+1 } \eta^{l'} (1-\eta)^{T_n-l'} + {T_n \choose l'+1} \eta^{l'+1} (1-\eta)^{T_n-(l'+1)}  = { T_n \choose l'+1 } \eta^{l'} (1-\eta)^{T_n-(l'+1)} \geq 0 ,
\end{align}
where the first equality holds by definition of $u_{T_n,l'}$, the second adds and subtracts a term, and the third uses \eqref{eqStrongerUmono}. This shows $u_{T_n,l'} \geq u_{T_n,l'+1}$, iterating gives $u_{T_n,l'} \geq u_{T_n,l''}$ whenever $l' \leq l''$.
\end{proof}

\begin{lemma} \label{lemHoeffding}
Let $Z$ be a random variable satisfying $\E Z = 0$ and $Z \in [a,b]\ a.s.$, and let $\lambda > 0$. Then
\begin{equation}
\E e^{\lambda Z} \leq e^{\lambda^2 (b-a)^2 / 8} .
\end{equation}
\end{lemma}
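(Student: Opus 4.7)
The plan is to follow the classical proof of Hoeffding's lemma, which leverages convexity of the exponential together with a careful Taylor expansion. First I would exploit the fact that $x \mapsto e^{\lambda x}$ is convex on $\R$, so that for every $z \in [a,b]$ one has the chord bound
\begin{equation}
e^{\lambda z} \leq \frac{b-z}{b-a} e^{\lambda a} + \frac{z-a}{b-a} e^{\lambda b}.
\end{equation}
Taking expectations and using $\E Z = 0$ collapses the right side to $\frac{b}{b-a} e^{\lambda a} - \frac{a}{b-a} e^{\lambda b}$, which depends only on the endpoints $a,b$ and on $\lambda$. Note this step requires $a \leq 0 \leq b$ (else $\E Z = 0$ is impossible with $Z \in [a,b]$ a.s.), so the fractions $-a/(b-a)$ and $b/(b-a)$ are genuine probabilities.

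Next I would recast the bound by introducing $h = \lambda(b-a)$ and $p = -a/(b-a) \in [0,1]$, so that the right side above equals $\exp(\varphi(h))$ with
\begin{equation}
\varphi(h) = -p h + \log\bigl(1 - p + p e^{h}\bigr).
\end{equation}
Thus it suffices to show $\varphi(h) \leq h^2/8$ for all $h \in \R$, since substituting $h = \lambda(b-a)$ then yields the claim.

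To bound $\varphi$, I would use second-order Taylor expansion around $h=0$. A direct computation gives $\varphi(0) = 0$, $\varphi'(0) = 0$, and
\begin{equation}
\varphi''(h) = \frac{p(1-p) e^{h}}{(1-p+pe^{h})^2} = u(h)(1 - u(h)), \quad u(h) := \frac{p e^{h}}{1-p+pe^{h}} \in (0,1).
\end{equation}
The key observation is that $u(1-u) \leq 1/4$ by AM-GM (the maximum of $x \mapsto x(1-x)$ on $[0,1]$ is $1/4$ attained at $x = 1/2$), so $\varphi''(h) \leq 1/4$ uniformly in $h$. Taylor's theorem with remainder then gives $\varphi(h) = \tfrac{1}{2}\varphi''(\xi) h^2 \leq h^2/8$ for some $\xi$ between $0$ and $h$, completing the proof.

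No step is really an obstacle here; the whole argument is standard and elementary. The only mildly delicate point is verifying the uniform bound $\varphi''(h) \leq 1/4$, but this reduces cleanly to the one-line AM-GM estimate above.
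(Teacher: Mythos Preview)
Your proof is correct and is exactly the standard textbook derivation of Hoeffding's lemma. The paper itself does not prove this lemma at all; it simply cites \cite[Lemma 5.1]{dubhashi2009concentration}, so your argument actually supplies more than the paper does.
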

\begin{proof}
See e.g.\ \cite[Lemma 5.1]{dubhashi2009concentration}.
\end{proof}

\section{Belief convergence results} \label{appProofBeliefs}

In this appendix, we prove two belief convergence results. We first discuss some basic tools used in both proofs. To begin, fix $p \geq 1$ and $y \in [0,1]$, and let $\mu$ be a probability measure over $[0,1]$. Then clearly,
\begin{equation} \label{eqWassDirac}
W_p(\mu,\delta_y)^p = \inf_{ (X,Y) : X \sim \mu , Y \sim \delta_y } \E | X - Y |^p = \E | X - y |^p .
\end{equation}
where $X$ is distributed as $\mu$ in the final expression. Also, for any $\upsilon > 0$, since $|X-y| \leq 1$, we clearly have
\begin{align} \label{eqExpXminyP}
\E | X - y |^p = \E [ | X - y |^p 1 ( | X - y |^p \leq \upsilon ) ] + \E [ | X - y |^p 1 ( | X - y |^p > \upsilon ) ]  \leq \upsilon + \P (  | X - y |^p > \upsilon ) . 
\end{align}
Furthermore, for any $z \in [0,1]$ such that $|y-z|^p \leq 2^{-p} \upsilon$, by convexity,
\begin{equation}
| X - y |^p = 2^p \left| \frac{X-z}{2} + \frac{z-y}{2} \right|^p \leq 2^p \left( \frac{|X-z|^p}{2} + \frac{|z-y|^p}{2} \right) = 2^{p-1} ( |X-z|^p + |z-y|^p ) \leq 2^{p-1} |X-z|^p + \frac{\upsilon}{2} ,
\end{equation}
which implies
\begin{equation}
\P (  | X - y |^p > \upsilon ) \leq \P ( 2^{p-1} |X-z|^p + \upsilon / 2 > \upsilon ) = \P ( |X-z|^p > 2^{-p} \upsilon ) .
\end{equation}
Combined with \eqref{eqWassDirac} and \eqref{eqExpXminyP}, we obtain
\begin{equation} \label{eqWpBound}
W_p(\mu,\delta_y)^p \leq \upsilon + \P ( | X-z |^p > 2^{-p} \upsilon )\ \forall\ y,z \in [0,1]\ s.t.\ |y-z|^p \leq 2^{-p} \upsilon .
\end{equation}

Next, we recall some notation and basic results from Appendix \ref{appBranchApproxProofOutline} and \ref{appProofGraphBeliefApprox}. 
Denote by $\alpha_t$ and $\beta_t$ the parameters $\{ \alpha_t(i) \}_{i \in A \cup B}$ and $\{ \beta_t(i) \}_{i \in A \cup B}$ in vector form. Set $Q = (1-\eta ) I + \eta P$, where $P$ is the graph's column-normalized adjacency matrix (normalized so each column sums to $1$). Let $\mathbf{1}$ be the all ones vector. Then (see \eqref{eqGraphBeliefMatrixForm} in Appendix \ref{appProofGraphBeliefApprox})
\begin{gather} \label{eqRecallGraphBeliefMatrixForm}
\alpha_t = (1-\eta) \sum_{\tau=1}^{t} s_{\tau} Q^{t-\tau}  + \alpha_0 Q^t  , \quad \beta_t = (1-\eta) \sum_{\tau=1}^{t} (\mathbf{1}-s_{\tau}) Q^{t-\tau}  + \beta_0 Q^t \quad \forall\ t \in \N .
\end{gather}
Hence, by column stochasticity of $Q$, we obtain the following componentwise inequality:
\begin{equation} \label{eqAlphaPlusBeta}
\alpha_t + \beta_t = (1-\eta) \sum_{\tau=1}^{t} \mathbf{1} Q^{t-\tau} + ( \alpha_0 + \beta_0 ) Q^t = (1-\eta) t \mathbf{1}  + ( \alpha_0 + \beta_0 ) Q^t \geq (1-\eta) t \mathbf{1} .
\end{equation}

Finally, for any $\alpha,\beta \in (0,\infty)$, we let $X(\alpha,\beta)$ denote a $\text{Beta}(\alpha,\beta)$ random variable. Thus, recalling the expressions for the mean and variance of the beta distribution, for any $\upsilon > 0$, Chebyshev's inequality implies
\begin{equation}
\P \left( \left| X(\alpha,\beta) - \frac{ \alpha  }{ \alpha + \beta } \right| > 2^{-p} \upsilon \right) = \P ( | X(\alpha,\beta) - \E  X(\alpha,\beta) | > 2^{-p} \upsilon ) \leq \frac{ \text{Var}(X(\alpha,\beta)) }{ (2^{-p} \upsilon)^2 } = \frac{ 2^{2p} \alpha \beta }{ \upsilon^2 ( \alpha + \beta )^2 ( \alpha + \beta + 1 )^2 } .
\end{equation}
In particular, for any $i \in [n]$ and $t \in \N$, since $\theta_t(i) = \frac{\alpha_t(i)}{  \alpha_t(i) + \beta_t(i) }$ by definition, the previous two inequalities imply
\begin{equation} \label{eqChebshevCond}
\P ( | X(\alpha_t(i),\beta_t(i)) - \theta_t(i)| > 2^{-p} \upsilon | \alpha_t(i),\beta_t(i) )  \leq \frac{ 2^{2p} \alpha_t(i) \beta_t(i) }{ \upsilon^2 ( \alpha_t(i) + \beta_t(i)  )^2 ( \alpha_t(i) + \beta_t(i) +1 ) } \leq \frac{2^{2p}}{ \upsilon^2 ( 1- \eta )  t } .
\end{equation}
Hence, because $\mu_t(i)$ is the $\text{Beta}(\alpha_t(i),\beta_t(i))$ distribution, we can use \eqref{eqWpBound} to obtain the following:
\begin{equation} \label{eqApplyWpBound}
W_p ( \mu_t(i) , \delta_y )^p \leq \upsilon + \frac{2^{2p}}{ \upsilon^2 (1-\eta) t }\ \forall\ y \in [0,1]\ s.t.\ | y - \theta_t(i) | \leq 2^{-p} \upsilon .
\end{equation}

\subsection{Proof of Proposition \ref{propBeliefToZero}}

Fix $i \in A$. We first show $\theta_t(i) \rightarrow 0$ Let $e_A = \sum_{i' \in A} e_{i'}$. Recall $s_\tau(i') \in \{0,1\}$ for $i' \in A$ and $s_\tau(i') = 0$ for $i' \in B$, so $s_\tau \leq e_A$ componentwise. Combined with \eqref{eqRecallGraphBeliefMatrixForm} and \eqref{eqAlphaPlusBeta}, we obtain
\begin{equation}
0 \leq \theta_t(i) = \frac{ \alpha_t(i) }{ \alpha_t(i) + \beta_t(i) } \leq \frac{ 1}{t} \sum_{\tau=1}^t e_A Q^{t-\tau} e_i^\trans + \frac{ \alpha_0 Q^t e_i^\trans  }{ (1-\eta) t } .
\end{equation}
Hence, because $\alpha_0 Q^t e_i^\trans$ is bounded independent of $t$, it suffices to show that for any $\epsilon > 0$ and all $t$ large,
\begin{equation}\label{eqAbsorbMeanGoal}
\epsilon > \frac{ 1}{t} \sum_{\tau=1}^t e_A Q^{t-\tau} e_i^\trans = \frac{ 1}{t} \sum_{\tau=0}^{t-1} e_A Q^\tau e_i^\trans .
\end{equation}
Toward this end, we first observe that $Q(i',i') = 1$ for any $i' \in B$, i.e., $B$ is a set of absorbing states in the Markov chain with transition matrix $Q$. By the assumption of the proposition, all agents can reach this set. Taken together, we have an absorbing Markov chain with absorbing states $B$ and non-absorbing states $A$. It follows that $e_A Q^\tau e_i^\trans \rightarrow 0$ as $\tau \rightarrow \infty$. Hence, we can find $T_\epsilon$ such that $e_A Q^\tau e_i^\trans  < \frac{\epsilon}{2}$ whenever $\tau \geq T_\epsilon$. Thus, for any $t \geq 2 T_\epsilon / \epsilon$, we obtain the desired inequality \eqref{eqAbsorbMeanGoal}:
\begin{equation} 
\frac{ 1}{t} \sum_{\tau=0}^{t-1} e_A Q^\tau e_i^\trans = \frac{ 1}{t} \sum_{\tau=0}^{  T_\epsilon -1} e_A Q^\tau e_i^\trans + \frac{ 1}{t} \sum_{\tau= T_\epsilon }^{t-1} e_A Q^\tau e_i^\trans < \frac{ T_\epsilon }{ t } + \frac{ \epsilon }{ 2 } \leq \epsilon .
\end{equation}

Next, we show $W_p ( \mu_t(i) , \delta_0 ) \rightarrow 0$ Fix $\epsilon > 0$. Since $\theta_t(i) \rightarrow 0$, we can find $T(\epsilon)$ such that
\begin{equation} \label{eqTepsPrime}
|\theta_t(i)| < 2^{-(p+1)} \epsilon^p , \quad \frac{ 2^{2p} }{ ( \epsilon^p / 2 )^2 (1-\eta ) t } < \frac{\epsilon^p}{2}\ \forall\ t \geq T(\epsilon) .
\end{equation}
Hence, if we let $y = 0$ and $\upsilon = \epsilon^p/2$, then $|y - \theta_t(i)| = |\theta_t(i)| < 2^{-(p+1)} \epsilon^p = 2^{-p} \upsilon$, so by \eqref{eqApplyWpBound} and \eqref{eqTepsPrime}, for any $t \geq T(\epsilon)$,
\begin{equation}
W_p ( \mu_t(i) , \delta_0 )^p \leq \frac{ \epsilon^p }{ 2 } + \frac{2^{2p}}{ ( \epsilon^p / 2 )^2 (1-\eta) t } < \epsilon^p \quad \Rightarrow \quad W_p ( \mu_t(i) , \delta_0 ) < \epsilon .
\end{equation}

\subsection{Proof of Corollary \ref{corBeliefs}} 

Fix $p \geq 1$ and $\epsilon > 0$. Similar to the proof of Proposition \ref{propBeliefToZero}, we set $y = L(p_n)$ and $\upsilon = \epsilon^p / 2$. Then by \eqref{eqApplyWpBound},
\begin{equation}
| L(p_n) - \theta_{T_n}(i^*) | \leq 2^{-(p+1)} \epsilon^p = 2^{-p} \upsilon  \quad \Rightarrow \quad W_p ( \mu_{T_n}(i^*) , \delta_{L(p_n)} )^p \leq \frac{ \epsilon^p }{ 2 } + \frac{2^{2p}}{ ( \epsilon^p / 2 )^2 (1-\eta) T_n } .
\end{equation}
Hence, because $T_n \rightarrow \infty$ as $n \rightarrow \infty$ by \ref{assBranchHorizon}, we conclude that for all $n$ sufficiently large,
\begin{equation}
\left\{ W_p ( \mu_{T_n}(i^*) , \delta_{L(p_n)} )^p > \epsilon^p , | L(p_n) - \theta_{T_n}(i^*) | \leq 2^{-(p+1)} \epsilon^p \right\} \subset \left\{ \frac{ \epsilon^p }{ 2 } + \frac{2^{2p}}{ ( \epsilon^p / 2 )^2 (1-\eta) T_n } > \epsilon^p \right\} = \emptyset .
\end{equation}
Combined with Theorem \ref{thmMain}, we thus obtain
\begin{align}
\lim_{n \rightarrow \infty} \P ( W_p ( \mu_{T_n}(i^*) , \delta_{L(p_n)} ) > \epsilon ) & = \lim_{n \rightarrow \infty} \P ( W_p ( \mu_{T_n}(i^*) , \delta_{L(p_n)} )^p > \epsilon^p , | L(p_n) - \theta_{T_n}(i^*) | > 2^{-(p+1)} \epsilon^p ) \\
& \quad\quad + \lim_{n \rightarrow \infty} \P ( W_p ( \mu_{T_n}(i^*) , \delta_{L(p_n)} )^p > \epsilon^p , | L(p_n) - \theta_{T_n}(i^*) | \leq 2^{-(p+1)} \epsilon^p ) \\
& \leq \lim_{n \rightarrow \infty} \P \left( | L(p_n) - \theta_{T_n}(i^*)  | > 2^{-(p+1)} \epsilon^p \right)  = 0 .
\end{align}

\end{appendices}

}

\end{document}